\tikzset{curve/.style={settings={#1},to path={(\tikztostart)
    .. controls ($(\tikztostart)!\pv{pos}!(\tikztotarget)!\pv{height}!270:(\tikztotarget)$)
    and ($(\tikztostart)!1-\pv{pos}!(\tikztotarget)!\pv{height}!270:(\tikztotarget)$)
    .. (\tikztotarget)\tikztonodes}},
    settings/.code={\tikzset{quiver/.cd,#1}
        \def\pv##1{\pgfkeysvalueof{/tikz/quiver/##1}}},
    quiver/.cd,pos/.initial=0.35,height/.initial=0}
\definecolor{darkred}{rgb}{0.8,0.1,0.1}
\theoremstyle{plain}
\newtheorem{theo}{Theorem}[section]
\newtheorem{lem}[theo]{Lemma}
\newtheorem{propo}[theo]{Proposition}
\newtheorem{cor}[theo]{Corollary}
\theoremstyle{definition}
\newtheorem{defi}[theo]{Definition}
\newtheorem{assu}[theo]{Assumption}
\newenvironment{ex}
{\pushQED{\qed}\exx}
{\popQED\endexx}
\newenvironment{rem}
{\pushQED{\qed}\remm}
{\popQED\endremm}
\newenvironment{constr}
{\pushQED{\qed}\constrr}
{\popQED\endconstrr}
\numberwithin{equation}{section}
\renewcommand{\emptyset}{\varnothing}
\def\nn{\nonumber}
\def\bbK{\mathbb{K}}
\def\bbC{\mathbb{C}}
\def\ii{{\,{\rm i}\,}}
\def\Hom{\mathrm{Hom}}
\def\id{\mathrm{id}}
\def\Id{\mathrm{Id}}
\def\vol{\mathrm{vol}}
\def\cl{\mathrm{cl}}
\def\cc{\mathrm{c}}
\def\1{I}
\def\oone{\mathbbm{1}}
\def\op{\mathrm{op}}
\def\pr{\mathrm{pr}}
\def\Mor{\operatorname{Mor}}
\def\Loc{\mathbf{Loc}}
\def\Lan{\operatorname{Lan}}
\def\lan{\operatorname{lan}}
\def\COpen{\mathbf{COpen}}
\def\RC{\mathbf{RC}}
\def\Alg{\mathbf{Alg}}
\def\Vec{\mathbf{Vec}}
\def\CC{\mathbf{C}}
\def\DD{\mathbf{D}}
\def\EE{\mathbf{E}}
\def\FF{\mathbf{F}}
\def\TT{\mathbf{T}}
\def\Cat{\mathbf{Cat}}
\def\CAT{\mathbf{CAT}}
\def\AQFT{\mathbf{AQFT}}
\def\Pr{\mathbf{Pr}}
\def\Fun{\mathbf{Fun}}
\def\AAA{\mathfrak{A}}
\def\LLL{\mathfrak{L}}
\def\BBB{\mathfrak{B}}
\def\RRR{\mathfrak{R}}
\def\K{\mathcal{K}}
\def\O{\mathcal{O}}
\def\U{\mathcal{U}}
\def\V{\mathcal{V}}
\def\KG{\mathrm{KG}}
\def\HK{\mathsf{HK}}
\def\calHK{\mathcal{HK}}
\def\colim{\mathrm{colim}}
\def\bilim{\mathrm{bilim}}
\def\bicolim{\mathrm{bicolim}}
\newcommand\und[1]{\underline{#1}}
\newcommand\ovr[1]{\overline{#1}}
\DeclareMathOperator*{\Motimes}{\text{\raisebox{0.25ex}{\scalebox{0.8}{$\bigotimes$}}}}
\def\sk{\vspace{2mm}}
\let\@fnsymbol\@alph
\title{%
Haag-Kastler stacks
}
\author{%
Marco Benini$^{1,2,a}$, Alastair Grant-Stuart$^{3,b}$\ and\ Alexander Schenkel$^{3,c}$\vspace{4mm}\\
{\small ${}^1$ Dipartimento di Matematica, Dipartimento di Eccellenza 2023-27, Universit\`a di Genova,}\\
{\small Via Dodecaneso 35, 16146 Genova, Italy.}\vspace{2mm}\\
{\small ${}^2$ INFN, Sezione di Genova,}\\
{\small Via Dodecaneso 33, 16146 Genova, Italy.}\vspace{2mm}\\
{\small ${}^3$ School of Mathematical Sciences, University of Nottingham,}\\
{\small University Park, Nottingham NG7 2RD, United Kingdom.}\vspace{4mm}\\
{\small \begin{tabular}{ll}
Email: & ${}^a$~\href{mailto:marco.benini@unige.it}{\texttt{marco.benini@unige.it}}\\
&${}^b$~\href{mailto:alastair.grant-stuart@nottingham.ac.uk}{\texttt{alastair.grant-stuart@nottingham.ac.uk}}\\
& ${}^c$~\href{mailto:alexander.schenkel@nottingham.ac.uk}{\texttt{alexander.schenkel@nottingham.ac.uk}}
\vspace{2mm}
\end{tabular}
}
}
\date{November 2025}
\begin{document}

\maketitle

\begin{abstract}
\noindent This paper provides an alternative implementation of the principle of general local covariance for algebraic quantum field theories (AQFTs) which is more flexible than the original one by Brunetti, Fredenhagen and Verch. This is realized by considering the $2$-functor $\mathsf{HK} : \mathbf{Loc}^\mathrm{op} \to \mathbf{CAT}$ which assigns to each Lorentzian manifold $M$ the category $\mathsf{HK}(M)$ of Haag-Kastler-style AQFTs over $M$ and to each embedding $f:M\to N$ a pullback functor $f^\ast = \mathsf{HK}(f) : \mathsf{HK}(N) \to \mathsf{HK}(M)$ restricting theories from $N$ to $M$. Locally covariant AQFTs are recovered as the points of the $2$-functor $\mathsf{HK}$. The main advantages of this new perspective are: 1.)~It leads to technical simplifications, in particular with regard to the time-slice axiom, since global problems on $\mathbf{Loc}$ become families of simpler local problems on individual Lorentzian manifolds. 2.)~Some aspects of the Haag-Kastler framework which previously got lost in locally covariant AQFT, such as a relative compactness condition on the open subsets in a Lorentzian manifold $M$, are reintroduced. 3.)~It provides a radically new perspective on descent conditions in AQFT, i.e.\ local-to-global conditions which allow one to recover a global AQFT on a Lorentzian manifold $M$ from its local data in an open cover $\{U_i \subseteq M\}$.
\end{abstract}
\vspace{-1mm}

\paragraph*{Keywords:} algebraic quantum field theory, locally covariant quantum field theory, Lorentzian geometry, stacks, descent conditions, locally presentable categories
\vspace{-2mm}

\paragraph*{MSC 2020:} 81Txx, 53C50, 18F20, 18N10
\vspace{-2mm}

\tableofcontents


\section{\label{sec:intro}Introduction and summary}
In its traditional form as proposed by Haag and Kastler \cite{HaagKastler},
algebraic quantum field theory (AQFT) studies nets of operator algebras
which are defined on suitable open subsets of the Minkowski spacetime
and are endowed with an action of the Poincar{\'e} group, i.e.\ the group of automorphisms
of the Minkowski spacetime. 
This axiomatic approach therefore combines the key principles of quantum theory and special relativity, 
leading to a powerful framework in which one can prove a variety of 
model-independent results about quantum field theories, such as the CPT 
and spin-statistics theorems, as well as general results about
scattering theory, see e.g.\ Haag's monograph \cite{Haag}.
Much of this power however gets lost when one replaces the Minkowski spacetime
by an arbitrary (oriented, time-oriented and globally hyperbolic) Lorentzian manifold $M$,
because the latter will in general have no non-trivial geometric automorphisms.
This issue led Brunetti, Fredenhagen and Verch to propose their
principle of general local covariance for AQFTs \cite{BFV}. This principle was implemented 
by replacing the Haag-Kastler point of view, i.e.\ that an AQFT is
formulated with respect to open subsets in an individual Lorentzian manifold,
with the proposal that a locally covariant AQFT is a functor $\AAA : \Loc\to \Alg$ 
out of the category of \textit{all}
(oriented, time-oriented and globally hyperbolic) Lorentzian manifolds $M$
with morphisms given by (isometric, causal and open) embeddings $f:M\to N$,
which is subject to some physically motivated conditions such as causality and the time-slice axiom.
See also \cite{FewsterVerch} for a more modern presentation of these ideas.
Through this enhanced functoriality, locally covariant AQFT regains much of 
the power of Haag-Kastler AQFT on the Minkowski spacetime, leading to a variety
of model-independent results for AQFTs on Lorentzian manifolds 
which extend the traditional ones on the Minkowski spacetime, see e.g.\ 
\cite{FewsterVerchSPASS,FewsterSpinStatistics,Fewster}. Furthermore, 
the principle of general local covariance is the key to developing
renormalization techniques for AQFTs on Lorentzian manifolds
\cite{Hollands1,Hollands2,Hollands3,Khavkine}, see also Rejzner's monograph \cite{Rejzner},
and it plays a pivotal role in applications to physics, 
see e.g.\ \cite{Hack,FewsterVerchMeasurement}.
\sk

The aim of this paper is to provide an alternative and 
more flexible implementation of the 
principle of general local covariance for AQFTs which 
is compatible with, but considerably enhances, the original 
Haag-Kastler viewpoint that AQFTs are defined on
suitable open subsets of a Lorentzian manifold $M\in\Loc$.
Starting from the latter viewpoint, there is a category $\HK(M)$ whose objects are all
Haag-Kastler-style AQFTs over $M$ and whose morphisms are natural transformations.
The key idea is to consider not only one of these categories, but rather
the whole family of categories $\{\HK(M):M\in\Loc\}$,
for all Lorentzian manifolds $M\in\Loc$, and endow it with additional
structure describing the behavior of Haag-Kastler-style AQFTs under $\Loc$-morphisms $f:M\to N$.
We assemble these structures into a (contravariant) $2$-functor $\HK : \Loc^\op\to \CAT$
from the category $\Loc$ of Lorentzian manifolds and their embeddings
to the $2$-category $\CAT$ of categories, functors and natural transformations.
This $2$-functor assigns to each Lorentzian manifold $M\in \Loc$ the category
$\HK(M)$ of all Haag-Kastler-style AQFTs over $M$ and to each 
embedding $f:M\to N$ in $\Loc$ a pullback functor $f^\ast = \HK(f) : \HK(N)\to \HK(M)$
which describes the restriction along $f$ of AQFTs over $N$ to AQFTs over $M$.
See Definition \ref{def:HK2functor} for the precise definition of this $2$-functor.
Our approach identifies 
the locally covariant AQFTs from \cite{BFV,FewsterVerch} with the points of the $2$-functor $\HK$,
i.e.\ pseudo-natural transformations $\AAA : \Delta\mathbf{1}\Rightarrow \HK$
from the constant $2$-functor $\Delta\mathbf{1} : \Loc^\op\to \CAT$
assigning the one-object category $\mathbf{1}\in\CAT$, 
see Theorem \ref{theo:LCQFTvsHK} and Corollary \ref{cor:LCQFTvsHKW}. This identification
gives a precise meaning to the following intuitive slogan: ``A locally covariant AQFT is the same datum
as a natural family of Haag-Kastler-style AQFTs over all $M\in\Loc$.''
It also shows that our framework is richer than locally covariant AQFT,
since the $2$-functor $\HK$ contains more structure than its category of points 
$\Delta\mathbf{1}\Rightarrow \HK$, see e.g.\ Remark \ref{rem:(op)laxpoints} for an example.
\sk

Even though our approach might superficially look more complicated 
than locally covariant AQFT, owing to its use of some
$2$-categorical concepts and techniques, it actually leads to 
certain technical simplifications. Loosely speaking, the origin
of these simplifications lies in the fact that our approach turns 
complicated global problems for the category $\Loc$ into 
families of simpler local problems for the categories of causally convex
open subsets $\COpen(M)$ in all individual Lorentzian manifolds $M\in\Loc$.
An example for this is given by the time-slice axiom, which from a structural point of view
corresponds to a localization of the relevant spacetime category at all Cauchy morphisms,
see e.g.\ \cite{BSWoperad} and \cite{BSreview,BSchapter} for reviews.
The localization of the category $\Loc$ at all Cauchy morphisms 
is in general notoriously difficult to understand and describe, with notable
exceptions given by the very special cases of $1$-dimensional Lorentzian manifolds \cite{BCStimeslice}
and of $2$-dimensional conformal Lorentzian manifolds \cite{BGS},
while the localizations of the categories $\COpen(M)$ at all Cauchy morphisms
admit very simple and intuitive models, see \cite{BDSboundary} 
and also Example \ref{ex:localizations} and Appendix \ref{app:localization}.
When combined with the identification from Corollary \ref{cor:LCQFTvsHKW}
between locally covariant AQFTs and points of the $2$-functor $\HK$, 
this leads to a technically 
useful perspective on the time-slice axiom in locally covariant AQFT.
These novel ideas and techniques have recently found a nontrivial
application in the context of the $\infty$-categorical equivalence
problem between AQFTs and factorization algebras  \cite{tPFAAQFTinfty}.
\sk

A notable feature of our framework is that it allows us to reintroduce
some aspects of Haag-Kastler-style AQFTs which previously got lost in
the generalization of \cite{BFV} to locally covariant AQFT.
For instance, in the original framework \cite{HaagKastler}
one assigns algebras only to \textit{relatively compact} open subsets of the Minkowski spacetime,
but there is no remnant of this restriction in locally covariant AQFT \cite{BFV}.
In our approach one can easily include such relative compactness conditions
by considering the $2$-functor $\HK^{\mathrm{rc}} : \Loc^\op\to \CAT$
that assigns to each Lorentzian manifold $M\in\Loc$ the category
$\HK^{\mathrm{rc}}(M)$ of all Haag-Kastler-style AQFTs on $M$ which are modeled
on the full subcategory $\RC(M)\subseteq \COpen(M)$ of relatively compact causally convex opens in $M$.
See Definition \ref{def:RCHK2functor} for the precise definition of this $2$-functor.
One can then show that relative compactness does interplay well with the time-slice axiom,
leading to very simple and intuitive models for the localizations of the categories $\RC(M)$
at all Cauchy morphisms, see Example \ref{ex:localizations} and Appendix \ref{app:localization}.
An interesting observation is that the points $\Delta\mathbf{1} \Rightarrow \HK^{\mathrm{rc}}$ 
of the relatively compact $2$-functor $\HK^{\mathrm{rc}}$ are related to, but not exactly the same as
additive locally covariant AQFTs, see Corollaries \ref{cor:additiveLCQFT} and \ref{cor:WadditiveLCQFT}.
This indicates that the relatively compact $2$-functor $\HK^{\mathrm{rc}}$ 
implements a weakened variant of the additivity property of locally covariant AQFTs.
\sk

Another significant novelty of our framework is that it provides
a radically new approach to the issue of descent conditions in AQFT,
i.e.\ local-to-global conditions which allow one to recover 
the global datum of an AQFT on a complicated Lorentzian manifold $M$
from its local data in simpler regions. Descent conditions have been 
relatively little studied in the context of AQFT, even though they are 
technically powerful and also conceptually interesting as they provide mathematical
realizations of the slogan that ``quantum field theory should be local''.
One of the reasons why descent conditions did not receive much attention in AQFT 
could be that in their most basic form, which is given by a cosheaf
condition with respect to causally convex open covers $\{U_i\subseteq M\}$
for the underlying functors $\AAA : \Loc\to \Alg$,
they are \textit{not} satisfied even by the simplest examples, see \cite[Appendix A]{BSreview}.
An alternative to such cover-type descent conditions is given by Fredenhagen's universal
algebra \cite{Fredenhagen1,Fredenhagen3,Fredenhagen2}, which can be realized 
in terms of a left Kan extension of AQFTs along the functor $\Loc_{\diamond}\to \Loc$
from contractible Lorentzian manifolds to all Lorentzian manifolds,
see \cite{Lang} for the latter point of view and also \cite{BSWoperad} for a refinement using operad theory.
Descent conditions which are based on Fredenhagen's universal algebra are 
however conceptually very different to cover-type
descent conditions: Instead of reconstructing the global datum of an AQFT on $M$
from local data in \textit{any choice} of cover $\{U_i\subseteq M\}$, one must exhaust $M$ 
by all possible embeddings $D\to M$ of contractibles and glue together the local data on all these subsets.
This means that these descent conditions are less flexible and  practical, and hence also less powerful, than 
cover-type descent conditions.
\sk

In our framework there are two different layers of cover-type 
descent conditions, which are related to each other, as we shall explain below. 
The more abstract top layer is to ask whether or not the $2$-functor
$\HK : \Loc^\op\to \CAT$ satisfies the descent conditions of a 
\textit{stack of categories}, which means whether or not the 
category $\HK(M)$ of global Haag-Kastler-style AQFTs on $M$ can be recovered 
from the categories of local Haag-Kastler-style AQFTs on the regions
of a causally
convex open cover $\{U_i\subseteq M\}$. See Definition \ref{def:stack} for the 
precise definition of a stack. We will show in Propositions \ref{prop:HKnotastack}, 
\ref{prop:HKtimeslicenotastack}, \ref{prop:RCHKnotastack}
and \ref{prop:RCHKtimeslicenotastack} that these descent conditions are \textit{not} 
automatically satisfied by the Haag-Kastler $2$-functor $\HK : \Loc^\op\to \CAT$
and all of its variations arising from relative compactness and/or the time-slice axiom.
Using more sophisticated technology from the theory of locally presentable
categories, we are able to pin down the origin of this violation
and provide an interpretation in terms of `bad objects' in the categories
$\HK(M)$ which do not have appropriately local behavior, 
see Theorems \ref{theo:precostack} and \ref{theo:Wprecostack}.
Discarding these `bad objects', we introduce improvements of the 
Haag-Kastler $2$-functors, see Definitions \ref{def:improvedHKpseudofunctor} 
and \ref{def:WimprovedHKpseudofunctor} for the details. The selection criterion 
(see Definitions \ref{def:improvedHK} and \ref{def:WimprovedHK}) 
for the full subcategories $\calHK(M)\subseteq \HK(M)$ of 
`good objects' which are assigned by the improved pseudo-functors 
can be understood as a second layer of descent conditions. The latter demand that
the selected `good' AQFTs over $M$
satisfy suitable cover-type descent conditions,
i.e.\ they are recovered by gluing their local data on any cover $\{U_i\subseteq M\}$.
Such a gluing construction for AQFTs,
which has been recently considered also in \cite{AB}, 
can be exploited also more constructively to build new `good' AQFTs on $M$ by gluing simpler `good' AQFTs
on the regions of a causally convex open cover $\{U_i\subseteq M\}$,
see Propositions \ref{prop:caldescentmapadjoint} and \ref{prop:Wcaldescentmapadjoint}. 
We will show in Theorems \ref{theo:RCHKstack} and \ref{theo:WRCHKstack}
that our improvement construction yields stacks $\calHK^{\mathrm{rc}}$
and $\calHK^{\mathrm{rc},W}$ of relatively compact Haag-Kastler-style AQFTs,
with or without the time-slice axiom, but we currently do not know if similar
results hold true without relative compactness. The question of
whether or not our AQFT-inspired `descent improvement' construction
realizes an abstract stackification, defined through a universal property, is currently
open, see Remark \ref{rem:stackification} for more detailed comments.
In Subsection \ref{subsec:KGexample}, we verify that the usual examples of free 
(i.e.\ non-interacting) AQFTs, such as the Klein-Gordon quantum field, satisfy our descent conditions,
hence they define points of the stacks $\calHK^{\mathrm{rc}}$ and $\calHK^{\mathrm{rc},W}$.
This is in stark contrast to the more elementary cosheaf-type descent conditions
discussed above, which are violated for such examples, see \cite[Appendix A]{BSreview}.
\sk

The framework and results of this paper suggest various avenues for future research.
On the one hand, the focus of our present paper is on the
case where the collection of all AQFTs over $M$ assembles into a $1$-category $\HK(M)$,
which is however inadequate for gauge theories as these are 
described by higher-categorical AQFTs which assemble into an $\infty$-category, see
e.g.\ \cite{BSWhomotopy,BPSW,Yau,Carmona} and also the reviews \cite{BSreview,BSchapter}.
It would be interesting to generalize our framework and results to this $\infty$-categorical
context and explore if the resulting higher-categorical descent conditions
are satisfied by examples of free quantum gauge theories as in \cite{BBS,BMS}.
The first steps of these developments, which correspond
to an $\infty$-categorical generalization of the main content of Section \ref{sec:HK2functor}, 
have been carried out recently in \cite{tPFAAQFTinfty}.
On the other hand, the key idea to consider $2$-functors $\HK : \Loc^\op\to \CAT$ 
which assign to each Lorentzian manifold $M\in\Loc$ the category of all AQFTs over $M$ is transferable
to other axiomatizations of quantum field theory, in particular to 
(pre)factorization algebras \cite{CG1,CG2} which are usually described
on the open subsets of a fixed manifold $M$,
similar in perspective to Haag-Kastler-style AQFTs.
It would be interesting to understand if the $2$-functor which assigns
to each manifold $M$ (potentially endowed with geometry) its category of prefactorization algebras over $M$
can be improved to a stack by adapting our constructions from Subsection \ref{subsec:HKcostackification}.
This would provide an alternative to the Weiss cover descent conditions from \cite{CG1,CG2}.
It is worthwhile to mention that locally constant prefactorization algebras, which describe
topological quantum field theories, automatically assemble into an $\infty$-stack \cite{Matsuoka,Scheimbauer}. 
However, we expect this to be a special feature of the topological nature 
of such theories, and that prefactorization algebras over manifolds endowed with 
(Riemannian, complex or Lorentzian) geometry will require an improvement construction
as in Subsection \ref{subsec:HKcostackification}.
\sk

The outline of the remainder of this paper is as follows:
In Section \ref{sec:prelim} we recall some preliminaries
which are needed to state and prove the results of the present paper.
Subsection \ref{subsec:AQFT} covers some basic aspects of 
the theory of orthogonal categories and AQFTs. More details can be
found in \cite{BSWoperad}, see also the reviews \cite{BSreview} and \cite{BSchapter}.
Subsection \ref{subsec:stacks} gives a brief introduction to pseudo-functors
and stacks, focusing mainly on the classes of examples appearing in our work. 
Subsection \ref{subsec:Pr} recalls some well-known aspects
of the theory of locally presentable categories which are needed for 
developing our Haag-Kastler stacks in the last Section \ref{sec:HKstack}.
Readers who are mostly 
interested in our more elementary Haag-Kastler $2$-functors from Section \ref{sec:HK2functor}
can skip this  subsection.
In Section \ref{sec:HK2functor} we develop the concept of Haag-Kastler $2$-functors
and prove the results announced in the paragraphs above, except
the ones concerning descent which will be presented afterwards 
in Section \ref{sec:HKstack}. This section is split
into Subsections \ref{subsec:HK2functorCOpen} and \ref{subsec:HK2functorRC} which cover
separately the case of Haag-Kastler-style AQFTs modeled over all causally convex opens and
the case modeled over all relatively compact causally convex opens. In Section \ref{sec:HKstack}
we study descent conditions of the Haag-Kastler $2$-functors from Section \ref{sec:HK2functor}.
Subsection \ref{subsec:HKnotations} starts with some general observations and constructions
to streamline the presentation. Subsection \ref{subsec:HKcostack}
uses techniques from the theory of locally presentable categories
to prove that suitable adjoints of our Haag-Kastler $2$-functors automatically
satisfy the weaker codescent conditions of a precostack, see Theorems \ref{theo:precostack}
and \ref{theo:Wprecostack} for the main results. Subsection \ref{subsec:HKcostackification}
develops our improvement construction for the Haag-Kastler $2$-functors
which consists of selecting suitable full subcategories of AQFTs that
satisfy cover-type descent conditions, see Definitions \ref{def:improvedHK} and \ref{def:WimprovedHK}.
We then prove that, under certain additional hypotheses, this construction yields
stacks, see Theorems \ref{theo:HKstacks} and \ref{theo:WHKstacks}. It is shown
in Theorems \ref{theo:RCHKstack} and \ref{theo:WRCHKstack} that these additional
hypotheses hold true for the case of the relatively compact Haag-Kastler
$2$-functor, with or without the time-slice axiom, which yields stacks $\calHK^{\mathrm{rc}}$ 
and $\calHK^{\mathrm{rc},W}$. Subsection \ref{subsec:KGexample} verifies
our descent conditions for the typical examples of free (i.e.\ non-interacting) AQFTs,
leading to the main result (see Theorem \ref{theo:KGpoints}) 
of this subsection that the Klein-Gordon quantum field
defines a point in both stacks $\calHK^{\mathrm{rc}}$ and $\calHK^{\mathrm{rc},W}$.
This paper includes four appendices. Appendix \ref{app:operadicLKE} recalls
some aspects of operadic left Kan extensions which are needed for some of our proofs.
Appendix \ref{app:localization} develops explicit
models for the localizations at all Cauchy morphisms
of the categories of causally convex opens $\COpen(M)$ and of relatively compact causally convex opens 
$\RC(M)$ in any Lorentzian manifold $M\in\Loc$.
Appendix \ref{app:bilimit} provides the details for the computation of 
a bicategorical limit which is needed in Subsection \ref{subsec:HKcostackification}.
Appendix \ref{app:Lorentz_geometry_details} proves some results about
Cauchy development stable covers which are needed for the proof of Theorem \ref{theo:WRCHKstack}.

\paragraph{Remarks about AQFTs taking values in $C^\ast$-algebras:} In this paper
we consider by default AQFTs taking values in the category $\Alg_{\mathsf{uAs}}(\TT)$
of unital associative algebras in some cocomplete closed symmetric monoidal category $\TT$.
This includes the cases of plain algebras for $\TT=\Vec$ 
the category of vector spaces, Banach algebras for $\TT = \mathbf{Ban}$ the category Banach spaces,
and (complete) bornological algebras for $\TT=\mathbf{(c)Born}$ the category
of (complete) bornological vector spaces.
The main reason for this choice is that it makes available to us 
powerful operadic technology \cite{BSWoperad} to describe universal constructions for AQFTs, 
which we will use frequently in our constructions and proofs. It is worthwhile
to highlight that the category of $C^\ast$-algebras $C^\ast\Alg$, which is often used
to model analytic features of AQFTs, is \textit{not} of this type, hence
$C^\ast$-algebraic AQFTs are a priori \textit{not} covered by this paper.
\sk

Disentangling precisely the results of our paper which do or do not adapt 
to $C^\ast$-algebraic AQFTs is difficult, because we are freely using 
operadic technology whenever it is convenient to carry out constructions or proofs. 
However, we can offer two general remarks which we believe are valuable
for readers interested in $C^\ast$-algebraic AQFTs: 
1.)~The main result of Section \ref{sec:HK2functor} is Theorem \ref{theo:LCQFTvsHK}
where it is proven that locally covariant AQFTs can be described equivalently
in terms of points of the Haag-Kastler $2$-functor. This result carries over ad verbum
to $C^\ast$-algebraic AQFTs, see Remark \ref{rem:CastHKvLCQFT}.
2.)~The key technical hypothesis for our Haag-Kastler stack constructions
in Section \ref{sec:HKstack} is that the Haag-Kastler $2$-functors 
take values in the $2$-category $\Pr^R$ of locally presentable categories, 
right adjoint functors and natural transformations. This is also true
for $C^\ast$-algebraic AQFTs, see Remark \ref{rem:Castpresentability},
hence there are no fundamental obstructions which prevent the adaption
of these ideas to this case. The actual complications are more 
of a technical nature since in the $C^\ast$-algebraic context 
one loses a priori the practically very useful possibility 
to model the relevant left adjoint functors in terms of operadic left Kan extensions.
This means that further work is required to find out 
whether or not the results of Section \ref{sec:HKstack} generalize to $C^\ast$-algebraic AQFTs.


\section{\label{sec:prelim}Preliminaries} 
In this section we recollect some background material
which is needed to state and prove the results of this paper.
We try to be as concise as possible and provide the reader
with references in which additional information and more details can be found.

\subsection{\label{subsec:AQFT}Orthogonal categories and AQFTs}
Orthogonal categories are an abstraction of categories of spacetimes
which are endowed with a notion of independent pairs
of subspacetimes $f_1 : M_1\rightarrow N \leftarrow M_2 :f_2$.
The following definition originated in \cite{BSWoperad}, see also \cite{Grant-Stuart}
for subsequent developments.
\begin{defi}\label{def:OCat}
\begin{itemize}
\item[(a)] An \textit{orthogonal category} is a pair $\ovr{\CC}= (\CC,\perp_\CC^{})$
consisting of a small category $\CC$ and a subset
${\perp_\CC^{}} \subseteq \Mor \CC\,{}_{\mathsf{t}}{\times}_{\mathsf{t}}\Mor \CC $
(called orthogonality relation) of the set of pairs of morphisms to a common target, 
which satisfies the following conditions:
\begin{itemize}
\item[(i)] Symmetry: $(f_2,f_1)\in{\perp_\CC^{}}$ for all $(f_1,f_2)\in{\perp_\CC^{}}$.
\item[(ii)] Composition stability: $(g\,f_1\,h_1, g\,f_2\,h_2)\in{\perp_{\CC}^{}}$
for all $(f_1,f_2)\in{\perp_\CC^{}}$ and all composable $\CC$-morphisms $g,h_1,h_2$.
\end{itemize}
We often write $f_1\perp_\CC^{} f_2$ to denote orthogonal pairs $(f_1,f_2)\in{\perp_\CC^{}}$.

\item[(b)] An \textit{orthogonal functor} $F : \ovr{\CC}\to \ovr{\DD}$
is a functor $F : \CC\to\DD$ between the underlying categories which
preserves orthogonal pairs, i.e.\ $F(f_1)\perp_\DD^{}F(f_2)$ for all $f_1\perp_\CC^{} f_2$.

\item[(c)] We denote by $\Cat^\perp$ the $2$-category whose objects
are all orthogonal categories, $1$-morphisms are all
orthogonal functors and $2$-morphisms are all 
natural transformations between orthogonal functors.

\end{itemize}
\end{defi}

\begin{ex}\label{ex:OCat}
The following orthogonal categories and functors are pivotal for our work:
\begin{itemize}
\item[(1)] Denote by $\Loc$ the category whose objects are oriented, time-oriented
and globally hyperbolic Lorentzian manifolds $M$ of a fixed 
dimension\footnote{More pedantically, one should write $\Loc_m$
to make explicit the dimension $m$ of the manifolds.
Since $m$ will be fixed but arbitrary throughout 
our whole paper, we ease notation by simply writing $\Loc$.}
$m\geq 1$ and whose morphisms $f:M\to N$ are orientation and time-orientation preserving isometric embeddings
with causally convex and open image $f(M)\subseteq N$. The orthogonal category $\ovr{\Loc}$
is defined by equipping $\Loc$ with the following orthogonality relation:
$(f_1:M_1\to N)\perp(f_2:M_2\to N)$ if and only if the images $f_1(M_1)\subseteq N$ and
$f_2(M_2)\subseteq N$ are causally disjoint in $N$. This orthogonal category features in locally covariant
AQFT \cite{BFV,FewsterVerch,BSWoperad}.

\item[(2)] Choose any oriented, time-oriented and globally hyperbolic Lorentzian
manifold $M\in\Loc$. Denote by $\COpen(M)$ the category whose objects are
all non-empty causally convex open subsets $U\subseteq M$ and whose morphisms $U\to V$ are
subset inclusions $U\subseteq V$. The orthogonal category $\ovr{\COpen(M)}$
is defined by equipping $\COpen(M)$ with the following orthogonality relation:
$(U_1\subseteq V)\perp (U_2\subseteq V)$ if and only if $U_1$ and $U_2$ are causally disjoint 
in $V$, or equivalently in $M$. Restricting to causally convex opens 
$U\subseteq M$ that are relatively compact, i.e.\ the closure $\mathrm{cl}(U)\subseteq M$
is a compact subset of $M$, defines a full orthogonal subcategory which we denote by
$\ovr{\RC(M)}\subseteq \ovr{\COpen(M)}$. These 
orthogonal categories feature in Haag-Kastler-style AQFT \cite{HaagKastler} on a fixed $M\in\Loc$,
where the relative compactness condition generalizes the concept of bounded regions in Minkowski spacetime.

\item[(3)] Consider the functor $k_M : \COpen(M)\to \Loc$ which is given by assigning to 
an object $U\subseteq M$ the object $U\in\Loc$ (with orientation, time-orientation
and metric induced by restricting those of $M\in\Loc$) and to a morphism
$U\subseteq V$ the canonical inclusion morphism $\iota_U^V : U\to V$ in $\Loc$.
This defines an orthogonal functor $k_M : \ovr{\COpen(M)}\to \ovr{\Loc}$ with respect to 
the orthogonality relations defined in items (1) and (2) above. The restriction
to relatively compact subsets defines an orthogonal functor which we denote with a slight
abuse of notation by the same symbol $k_M : \ovr{\RC(M)}\to \ovr{\Loc}$. \qedhere
\end{itemize}
\end{ex}

Associated to every orthogonal category $\ovr{\CC}$ is a concept of
AQFTs over $\ovr{\CC}$. These admit a concise and powerful 
description in terms of algebras over the AQFT operads $\O_{\ovr{\CC}}$
from \cite{BSWoperad}, see also Definition \ref{def:AQFToperad} and
\cite{BSreview,BSchapter} for reviews. In particular, this operadic perspective is crucial
to prove the key results in Propositions \ref{propo:operadicLKE} and \ref{propo:Ffullorthogonal}
below. In order to simplify
the presentation of our present paper, we will not recall this operadic
approach to AQFT and we provide instead an equivalent, but more elementary definition.
For this we fix any cocomplete closed symmetric monoidal category $\TT$
and denote by $\Alg_\mathsf{uAs}(\TT)$ the category of unital associative algebras 
in $\TT$.\footnote{In applications, one often chooses $\TT=\Vec_\bbK$ to be 
the cocomplete closed symmetric monoidal category of vector spaces over a field $\bbK$. 
In this case $\Alg_\mathsf{uAs}(\TT)$ is the usual category of unital associative
$\bbK$-algebras.}
\begin{defi}\label{def:AQFT}
Let $\ovr{\CC}$ be an orthogonal category and $\TT$ a cocomplete
closed symmetric monoidal category. The category of \textit{$\TT$-valued AQFTs over $\ovr{\CC}$}
is defined as the full subcategory
\begin{flalign}
\AQFT(\ovr{\CC})\,\subseteq\,\Fun\big(\CC,\Alg_\mathsf{uAs}(\TT)\big)
\end{flalign}
consisting of all functors $\AAA : \CC\to \Alg_\mathsf{uAs}(\TT)$ which satisfy
the following $\perp$-commutativity axiom: For every orthogonal pair 
$(f_1:M_1\to N)\perp_\CC^{}(f_2:M_2\to N)$, the diagram
\begin{flalign}
\begin{gathered}
\xymatrix@C=5em{
\AAA(M_1)\otimes \AAA(M_2) \ar[r]^-{\AAA(f_1)\otimes \AAA(f_2)}\ar[d]_-{\AAA(f_1)\otimes \AAA(f_2)}~&~ \AAA(N)\otimes\AAA(N)\ar[d]^-{\mu_N} \\
\AAA(N)\otimes \AAA(N)\ar[r]_-{\mu_N^\op}~&~\AAA(N)
}
\end{gathered}
\end{flalign}
in $\TT$ commutes, where $\mu_N^{(\op)}$ denotes the (opposite)
multiplication in the algebra $\AAA(N)$.
\end{defi}

\begin{rem}
For the orthogonal categories from Example \ref{ex:OCat},
the $\perp$-commutativity axiom gives precisely the Einstein causality axiom
of locally covariant \cite{BFV,FewsterVerch} or Haag-Kastler-style \cite{HaagKastler} AQFTs.
The implementation of the time-slice axiom will be explained in Proposition \ref{prop:timeslice}
and Remark \ref{rem:timeslice} below.
\end{rem}

Given any orthogonal functor $F : \ovr{\CC}\to\ovr{\DD}$, there 
exists an associated pullback functor
\begin{flalign}\label{eqn:Fpullbackfunctor}
F^\ast\,:\, \AQFT(\ovr{\DD})~&\longrightarrow~\AQFT(\ovr{\CC})
\end{flalign}
between the corresponding AQFT categories. Explicitly, this pullback functor 
acts on objects $\AAA\in \AQFT(\ovr{\DD})$ by precomposition  $F^\ast(\AAA):= \AAA\,F$
of the underlying functor, which defines an object in 
$\AQFT(\ovr{\CC})$ since orthogonal functors preserve orthogonal pairs and hence the $\perp$-commutativity axiom.
On morphisms $\zeta : \AAA\Rightarrow \BBB$ in $\AQFT(\ovr{\DD})$, which are natural transformations
between the underlying functors, the pullback functor 
is given by whiskering $F^\ast (\zeta):= \zeta\,F : \AAA\,F\Rightarrow\BBB\,F$.
The following result is a non-trivial
consequence of the operadic description of AQFTs,
see e.g.\ \cite[Theorem 2.11]{BSWoperad} for a proof.
\begin{propo}\label{propo:operadicLKE}
For every orthogonal functor $F :\ovr{\CC}\to \ovr{\DD}$, the pullback 
functor in \eqref{eqn:Fpullbackfunctor} admits a left adjoint, i.e.\ one has an adjunction
\begin{flalign}\label{eqn:Fadjunction}
\xymatrix{
F_!\,:\, \AQFT(\ovr{\CC}) \ar@<0.75ex>[r]~&~\ar@<0.75ex>[l] \AQFT(\ovr{\DD})\,:\,F^\ast
}
\end{flalign}
between the corresponding AQFT categories. The left adjoint $F_!$ is 
called operadic left Kan extension along $F$, see also Appendix \ref{app:operadicLKE}
for an explicit model.
\end{propo}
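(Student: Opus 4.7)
The plan is to reduce this statement to a standard fact about algebras over colored operads. By the operadic description of AQFTs developed in BSWoperad, for any orthogonal category $\ovr{\CC}$ there is a colored operad $\O_{\ovr{\CC}}$ (whose colors are the objects of $\CC$) such that one has an equivalence of categories $\AQFT(\ovr{\CC})\simeq \Alg_\TT(\O_{\ovr{\CC}})$ between $\TT$-valued AQFTs and $\TT$-valued algebras over $\O_{\ovr{\CC}}$. The construction $\ovr{\CC}\mapsto \O_{\ovr{\CC}}$ is itself functorial: an orthogonal functor $F:\ovr{\CC}\to\ovr{\DD}$ induces a morphism of colored operads $\O_F:\O_{\ovr{\CC}}\to\O_{\ovr{\DD}}$, essentially because $F$ preserves both compositions and the orthogonality relations that generate $\O_{\ovr{\CC}}$.

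First I would verify that, under the above equivalence, the pullback functor $F^\ast$ defined in \eqref{eqn:Fpullbackfunctor} corresponds to the restriction-of-scalars functor $(\O_F)^\ast$ on algebra categories. On the level of colors this restriction sends an algebra (i.e.\ a color-indexed family of objects in $\TT$ equipped with operations) to its precomposition with $F$, which matches the definition $F^\ast(\AAA)=\AAA\,F$ on underlying functors; the compatibility with operations on both sides then forces the whiskering formula for morphisms.

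Next I would invoke the classical fact that, for any morphism of colored operads valued in a cocomplete closed symmetric monoidal category $\TT$, the restriction functor between categories of algebras admits a left adjoint. This left adjoint can be constructed either via an explicit operadic coend / bar-resolution formula (the operadic left Kan extension), or more abstractly by noting that both algebra categories are monadic over the base categories of color-indexed families in $\TT$ and that the restriction functor commutes with the forgetful functors, so the adjoint lifting theorem produces the desired left adjoint. Either route requires only the standing hypothesis that $\TT$ is cocomplete closed symmetric monoidal, which in particular ensures that the category $\Alg_\mathsf{uAs}(\TT)$ (and more generally all operadic algebra categories) is itself cocomplete.

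The main obstacle is not the existence of the adjunction itself, which is formal, but producing an explicit and computable model for $F_!$ that can be used in later proofs. The standard coend formula for an operadic LKE involves a colimit over a certain category of trees with colors labeled by $\CC$ and $\DD$, and one has to check that this colimit interacts correctly with the $\perp$-commutativity axiom so as to land in $\AQFT(\ovr{\DD})$ rather than in some larger category of functors. This is precisely what is recorded in Appendix \ref{app:operadicLKE}, which I would cite rather than redevelop here.
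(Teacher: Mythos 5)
Your proposal is correct and follows essentially the same route as the paper, which defers the proof to the operadic description of AQFTs (the equivalence $\AQFT(\ovr{\CC})\cong\Alg_{\O_{\ovr{\CC}}}(\TT)$, identification of $F^\ast$ with restriction along $\O_F$, and the standard existence of left adjoints to restriction of algebras over colored operads in a cocomplete closed symmetric monoidal target), citing \cite[Theorem 2.11]{BSWoperad} and recording the explicit colimit model over the comma categories $\O_F^\otimes/K$ of the monoidal envelopes in Appendix \ref{app:operadicLKE}. Your only stated ``obstacle''---checking that the colimit lands in $\AQFT(\ovr{\DD})$ rather than a larger functor category---is in fact automatic once one works at the level of operad algebras, since the $\perp$-commutativity axiom is encoded in the operads themselves via the relation $\sim_{\perp}$.
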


This result can be strengthened in specific cases where the orthogonal functor 
has additional properties, see \cite[Sections 4.2--4.4]{BSWoperad}. For our present
work, the following strengthening will be crucial.
\begin{propo}\label{propo:Ffullorthogonal}
Suppose that the orthogonal functor $F : \ovr{\CC} \to \ovr{\DD}$
is fully faithful and reflects orthogonality, i.e.\
$F(f_1)\perp_{\DD}^{}F(f_2)$ if and only if $f_1\perp_\CC^{}f_2$.
Then the adjunction \eqref{eqn:Fadjunction} exhibits $\AQFT(\ovr{\CC})$
as a coreflective full subcategory of $\AQFT(\ovr{\DD})$, i.e.\ the left adjoint
$F_!$ is fully faithful or, equivalently, the adjunction unit
$\eta : \id_{\AQFT(\ovr{\CC})} \Rightarrow F^\ast\,F_!$ is a natural isomorphism.
\end{propo}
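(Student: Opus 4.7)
My plan is to pass to the operadic description of $\AQFT(\ovr{\CC})$ from \cite{BSWoperad} that underlies Proposition \ref{propo:operadicLKE}. Under this description, $\AQFT(\ovr{\CC})$ is the category of $\TT$-valued algebras over the AQFT operad $\O_{\ovr{\CC}}$, the orthogonal functor $F$ induces an operad morphism $\O_F : \O_{\ovr{\CC}} \to \O_{\ovr{\DD}}$, the pullback $F^\ast$ is restriction along $\O_F$, and the left adjoint $F_!$ is the operadic left Kan extension along $\O_F$, for which an explicit coend model is provided in Appendix \ref{app:operadicLKE}. To prove the proposition, it suffices to verify that the unit $\eta_{\AAA} : \AAA \Rightarrow F^\ast F_! \AAA$ is a natural isomorphism for every $\AAA \in \AQFT(\ovr{\CC})$. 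Since the forgetful functor $\AQFT(\ovr{\CC}) \to \Fun(\CC, \TT)$ is conservative (algebra isomorphisms are detected objectwise in $\TT$), this reduces further to checking that, for every object $c \in \CC$, the component $\eta_{\AAA,c} : \AAA(c) \to (F_!\AAA)(F(c))$ is an isomorphism in $\TT$.

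\textbf{Key step.} The heart of the argument is to show that the hypotheses on $F$ translate into levelwise fully faithfulness of the operad morphism $\O_F$: for every arity $n\geq 0$ and every tuple of objects $c_1,\dots,c_n, c$ in $\CC$, the map
\[
\O_{\ovr{\CC}}(c_1,\dots,c_n;c) \longrightarrow \O_{\ovr{\DD}}\bigl(F(c_1),\dots,F(c_n); F(c)\bigr)
\]
induced by $F$ is a bijection. Using the explicit description of the AQFT operad components as permutations paired with $n$-tuples of morphisms with common target, modulo the equivalence relation generated by transpositions of adjacent orthogonal pairs, this is a direct verification: fully faithfulness of $F$ yields a bijection on the underlying data $\prod_i \CC(c_i,c) \cong \prod_i \DD(F(c_i), F(c))$, while the hypothesis that $F$ reflects orthogonality ensures that the generating relations coincide on both sides, so that the induced map on quotients is a bijection.

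\textbf{Conclusion and main obstacle.} Once $\O_F$ is levelwise fully faithful, one plugs this into the coend formula for the operadic left Kan extension recalled in Appendix \ref{app:operadicLKE}. Evaluating $F_!\AAA$ at $F(c)$ produces a colimit indexed by a symmetric-sequence analogue of the under-category $F \downarrow F(c)$, whose arity-$n$ component involves the operad hom-sets $\O_{\ovr{\DD}}(F\underline{c}; F(c))$. By the previous step these are canonically identified with their $\ovr{\CC}$-counterparts, so the coend collapses via the standard operadic absorption identity $\O_{\ovr{\CC}} \circ_{\O_{\ovr{\CC}}} \AAA \cong \AAA$ for algebras. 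Tracing through the construction of $\eta$ in terms of the unit of this coend expression identifies the resulting isomorphism with $\eta_{\AAA,c}$. I expect the main obstacle to be the bookkeeping in this final computation: the coend formula ranges over all arities and requires compatibility with the symmetric-group actions and the $\perp$-equivalence relation, so one must verify that the terminal-object argument that is immediate for ordinary Kan extensions along fully faithful functors genuinely carries over to the multi-ary operadic setting. Most of this bookkeeping is however already packaged into the operadic framework of \cite{BSWoperad}, whose relevant lemmas we can cite once levelwise fully faithfulness of $\O_F$ has been established.
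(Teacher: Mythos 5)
Your proposal is correct and follows essentially the same route as the proof the paper defers to (\cite[Section 4.3]{BSWoperad}): the hypotheses that $F$ is fully faithful and reflects orthogonality make $\O_F$ a levelwise bijection on operation sets, whence the comma category $\O_F^\otimes/F(c)$ is isomorphic to $\O_{\ovr{\CC}}^\otimes/c$, which has the terminal object $\big((c),[e,\id_c]\big)$, so the pointwise colimit formula for $F_!$ collapses and the unit is an isomorphism. The ``main obstacle'' you flag is exactly this terminal-object identification, and it does go through as you anticipate.
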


\begin{ex}\label{ex:RCinCOpen}
For every object $M\in\Loc$, the full orthogonal subcategory 
inclusion $i_M : \ovr{\RC(M)}\to \ovr{\COpen(M)}$
from item (2) of Example \ref{ex:OCat}
satisfies the hypotheses of Proposition \ref{propo:Ffullorthogonal}.
Hence, we obtain an adjunction
\begin{flalign}\label{eqn:iMadjunction}
\xymatrix{
i_{M\,!}\,:\, \AQFT(\ovr{\RC(M)}) \ar@<0.75ex>[r]~&~\ar@<0.75ex>[l] \AQFT(\ovr{\COpen(M)})\,:\,i_M^\ast
}
\end{flalign}
which exhibits $\AQFT(\ovr{\RC(M)})$ as a coreflective full subcategory
of $\AQFT(\ovr{\COpen(M)})$. This adjunction restricts to an adjoint equivalence
\begin{flalign}
\xymatrix{
i_{M\,!}\,:\, \AQFT(\ovr{\RC(M)}) \ar@<0.75ex>[r]_-{\sim}~&~\ar@<0.75ex>[l] \AQFT(\ovr{\COpen(M)})^{\epsilon\mathrm{-iso}}\,:\,i_M^\ast
}
\end{flalign}
between the category $\AQFT(\ovr{\RC(M)})$ and the full subcategory 
$\AQFT(\ovr{\COpen(M)})^{\epsilon\mathrm{-iso}}\subseteq\AQFT(\ovr{\COpen(M)})$ consisting of all objects
$\AAA\in \AQFT(\ovr{\COpen(M)})$ for which the 
counit $\epsilon_\AAA : i_{M\,!}i_M^\ast(\AAA)\stackrel{\cong}{\Longrightarrow}\AAA$
is an isomorphism. 
\sk

One can characterize the latter property very explicitly
by observing that $i_M$ satisfies the $j$-closedness
property from \cite[Definition 5.3]{BSWoperad}, which by
\cite[Corollary 5.5]{BSWoperad} implies that 
the left adjoint $i_{M\,!}$ can be modeled by 
a categorical (in contrast to operadic) left Kan extension.
Via the usual colimit formula for categorical left Kan extensions, see e.g.\ \cite[Chapter 6.2]{Riehl},
we have explicitly that, for every $\BBB\in  \AQFT(\ovr{\RC(M)})$, 
\begin{flalign}
i_{M\,!}(\BBB)(U)\,=\,\colim\Big(\RC(M)/U \longrightarrow \RC(M)\stackrel{\BBB}{\longrightarrow}
\Alg_{\mathsf{uAs}}(\TT)\Big)\quad,
\end{flalign}
for all $U\in \COpen(M)$, where the comma category $\RC(M)/U$ describes all relatively compact 
causally convex opens in $M$ which are also contained in $U$.
The component of the counit $\epsilon_\AAA$ at $U\in\COpen(M)$
is then given by the canonical map
\begin{flalign}\label{eqn:epsilonadd}
(\epsilon_\AAA)_U\,:\,\colim\Big(
\RC(M)/U \longrightarrow \COpen(M) \stackrel{\AAA}{\longrightarrow} \Alg_{\mathsf{uAs}}(\TT)
\Big) ~\longrightarrow~\AAA(U)\quad.
\end{flalign}
This implies that $\epsilon_\AAA$ is an isomorphism 
if and only if the value of $\AAA\in \AQFT(\ovr{\COpen(M)})$ 
on any causally convex open $U\in\COpen(M)$ can be 
recovered via the colimit \eqref{eqn:epsilonadd} from the 
restriction of $\AAA$ to the comma category $\RC(M)/U$,
i.e.\ to open subsets of $U$ that are relatively compact and causally convex with respect to $M$.
\sk

The property that $\epsilon_\AAA$ is an isomorphism is therefore a particular
kind of \textit{additivity} property on $\AAA \in \AQFT(\ovr{\COpen(M)})$.
In Subsection \ref{subsubsec:additivity}, we will compare this 
concept to an alternative additivity property used in locally covariant settings \cite[Definition 2.16]{BPS}.
\end{ex}

To conclude this subsection, we shall briefly explain how the time-slice
axiom of AQFT can be encoded in the framework presented above.
The key tool is given by the concept of localizations of orthogonal categories,
see \cite{BCStimeslice} for the technical details.
This is similar to localizations of ordinary (non-orthogonal) categories,
which allow one to universally ``add inverses'' to a chosen collection of morphisms in a category.
\begin{propo}\label{prop:timeslice}
Let $\ovr{\CC}$ be an orthogonal category and $W\subseteq \Mor\CC$
a subset of the set of morphisms in $\CC$. Then an orthogonal localization
functor $L : \ovr{\CC}\to \ovr{\CC}[W^{-1}]$ exists and it 
induces via pullback \eqref{eqn:Fpullbackfunctor} an equivalence
\begin{flalign}
L^\ast\,:\, \AQFT\big(\ovr{\CC}[W^{-1}]\big) ~\stackrel{\simeq}{\longrightarrow} ~\AQFT(\ovr{\CC})^W
\end{flalign}
between the full subcategory $\AQFT(\ovr{\CC})^W\subseteq \AQFT(\ovr{\CC})$ consisting 
of all AQFTs over $\ovr{\CC}$ which send all $W$-morphisms to isomorphisms and the category 
$\AQFT\big(\ovr{\CC}[W^{-1}]\big)$ of AQFTs over the localized orthogonal category $\ovr{\CC}[W^{-1}]$.
\end{propo}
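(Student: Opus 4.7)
The plan is to split the claim into two parts: (i) constructing the orthogonal localization $\ovr{\CC}[W^{-1}]$ together with its universal property, and (ii) deducing the equivalence of AQFT categories from that universal property.

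For (i), I would take the underlying category of $\ovr{\CC}[W^{-1}]$ to be the ordinary $1$-categorical localization $\CC[W^{-1}]$ with its localization functor $L : \CC \to \CC[W^{-1}]$, and endow it with the \emph{smallest} orthogonality relation $\perp_{\CC[W^{-1}]}$ that makes $L$ an orthogonal functor. Concretely, $\perp_{\CC[W^{-1}]}$ is obtained from the image $\{(L(f_1),L(f_2)) : f_1 \perp_\CC f_2\}$ by closing under symmetry and composition stability, as required in Definition \ref{def:OCat}(a). By construction, $L : \ovr{\CC}\to \ovr{\CC}[W^{-1}]$ is orthogonal, sends $W$-morphisms to isomorphisms, and is initial with these properties in $\Cat^{\perp}$: any orthogonal functor $F : \ovr{\CC}\to \ovr{\DD}$ inverting $W$ factors uniquely through $L$ by the $1$-categorical universal property of $\CC[W^{-1}]$, and the induced functor $\tilde F : \CC[W^{-1}]\to\DD$ is orthogonal because the generating orthogonal pairs $(L(f_1),L(f_2))$ are sent to the orthogonal pairs $(F(f_1),F(f_2))$ in $\ovr{\DD}$, and orthogonality in $\ovr{\DD}$ is itself closed under symmetry and composition stability.

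For (ii), I first note that the pullback $L^\ast \AAA = \AAA\, L$ of any $\AAA \in \AQFT(\ovr{\CC}[W^{-1}])$ inverts $W$-morphisms, so $L^\ast$ factors through the full subcategory $\AQFT(\ovr{\CC})^W\subseteq \AQFT(\ovr{\CC})$. Essential surjectivity: given $\AAA\in \AQFT(\ovr{\CC})^W$, the $1$-categorical universal property of $\CC[W^{-1}]$ produces a unique functor $\tilde\AAA : \CC[W^{-1}]\to \Alg_{\mathsf{uAs}}(\TT)$ with $\tilde\AAA\, L = \AAA$. It then suffices to verify the $\perp$-commutativity axiom of Definition \ref{def:AQFT} for $\tilde\AAA$. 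This holds for the generating pairs $(L(f_1),L(f_2))$ directly by $\perp$-commutativity of $\AAA$ at $(f_1,f_2)$, and propagates to all pairs in $\perp_{\CC[W^{-1}]}$ obtained by symmetry (trivial) and composition stability (using that, for $\tilde f_i = g\, L(f_i)\, h_i$, one can push the common pre- and post-composition factors $\tilde\AAA(g)$ and $\tilde\AAA(h_i)$ across the multiplications $\mu_N$, $\mu_N^{\op}$ and reduce to the axiom for $(f_1,f_2)$). Full faithfulness: morphisms in $\AQFT(\ovr{\CC})$ are natural transformations between the underlying functors, and natural transformations from $\tilde\AAA$ to $\tilde\BBB$ over $\CC[W^{-1}]$ are in bijection, via whiskering with $L$, with natural transformations from $\AAA$ to $\BBB$ over $\CC$ (again by the $1$-categorical universal property); since no extra axiom constrains morphisms of AQFTs beyond naturality, this bijection restricts to an isomorphism on hom-sets.

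The step I expect to require the most care is verifying that the generated orthogonality relation on $\CC[W^{-1}]$ is \emph{precisely} the right one, in the sense that $\perp$-commutativity for $\tilde\AAA$ at arbitrary pairs in $\perp_{\CC[W^{-1}]}$ is forced by $\perp$-commutativity of $\AAA$ at generators. The subtlety is that composition stability in $\CC[W^{-1}]$ involves zigzags of $W$-morphisms, so one must check that inserting inverses of $L(w)$ for $w\in W$ between an orthogonal pair and the flanking morphisms still yields an expression whose image under $\tilde\AAA$ satisfies the commuting square of Definition \ref{def:AQFT}. This is clean in the operadic reformulation, where $\AQFT(\ovr{\CC})$ is the category of algebras over the AQFT operad $\O_{\ovr{\CC}}$ and the whole statement reduces to the claim that $L$ induces a localization of operads $\O_{\ovr{\CC}}\to \O_{\ovr{\CC}[W^{-1}]}$ whose category of algebras is the full subcategory of $\O_{\ovr{\CC}}$-algebras inverting $W$; this operadic viewpoint, which is the one adopted in \cite{BCStimeslice}, is how I would organize the final write-up.
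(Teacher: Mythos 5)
Your proof is correct and follows essentially the same route as the paper, which itself defers the details to \cite{BCStimeslice}: there the orthogonal localization is constructed exactly as you propose, namely the Gabriel--Zisman localization equipped with the orthogonality relation pushed forward along $L$ (cf.\ Appendix \ref{app:localization} of the paper, which cites \cite[Proposition 2.11]{BCStimeslice} for this). The two points you flag as delicate are handled correctly: $\perp$-commutativity propagates from the generating pairs $(L(f_1),L(f_2))$ to their closure under symmetry and composition stability because algebra morphisms preserve commuting pairs, and whiskering with $L$ is bijective on natural transformations by the strict universal property of the $1$-categorical localization.
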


\begin{rem}\label{rem:timeslice}
In the context of Example \ref{ex:OCat}, one chooses $W$ to be the set of all Cauchy morphisms
in, respectively, $\Loc$, $\COpen(M)$ or $\RC(M)$, 
i.e.\ morphisms with image containing a Cauchy surface of the codomain.
The property of an AQFT sending all $W$-morphisms to isomorphisms is then precisely
the time-slice axiom of locally covariant or, respectively, Haag-Kastler-style AQFTs.
Proposition \ref{prop:timeslice} implies that the time-slice axiom can be implemented
either as an additional property or, equivalently, as 
a structure by replacing the orthogonal categories $\ovr{\Loc}$, $\ovr{\COpen(M)}$ and $\ovr{\RC(M)}$
by their orthogonal localizations at all Cauchy morphisms.
\end{rem}

\begin{ex}\label{ex:localizations}
We present explicit models for the orthogonal localizations $\ovr{\COpen(M)}[W_M^{-1}]$ 
and $\ovr{\RC(M)}[W_{\mathrm{rc},M}^{-1}]$ of the orthogonal 
categories $\ovr{\COpen(M)}$ and $\ovr{\RC(M)}$ at all Cauchy morphisms. 
These models are obtained from a calculus of fractions and details
are explained in Appendix \ref{app:localization}.
\begin{itemize}
\item[(1)] The orthogonal category $\ovr{\COpen(M)}[W_M^{-1}]$ has as objects
all non-empty causally convex opens $U\subseteq M$ and there exists
at most one morphism $U \to U^\prime$ for any objects $U,U^\prime$.
The morphism $U\to U^\prime$ exists if and only if $U\subseteq D_M(U^\prime)$
is contained in the Cauchy development
of $U^\prime\subseteq M$ in $M$. Two morphisms are orthogonal
$(U_1\to U^\prime)\perp (U_2\to U^\prime)$
if and only if $(U_1\subseteq M)\perp (U_2\subseteq M)$ are causally disjoint in $M$.
The orthogonal localization functor $L_M : \ovr{\COpen(M)} \to \ovr{\COpen(M)}[W_M^{-1}]$
acts as the identity on objects and it sends a subset inclusion $U\subseteq U^\prime$
to the unique morphism $U\to U^\prime$ in the localized orthogonal category.

\item[(2)] The orthogonal category $\ovr{\RC(M)}[W_{\mathrm{rc},M}^{-1}]$ has as objects
all non-empty relatively compact causally convex opens $U\subseteq M$ and there exists
at most one morphism $U \to U^\prime$ for any objects $U,U^\prime$.
The morphism $U\to U^\prime$ exists if and only if $U\subseteq D_M(U^\prime)$
is contained in the Cauchy development
of $U^\prime\subseteq M$ in $M$. Two morphisms are orthogonal
$(U_1\to U^\prime)\perp (U_2\to U^\prime)$
if and only if $(U_1\subseteq M)\perp (U_2\subseteq M)$ are causally disjoint in $M$.
The orthogonal localization functor $L_{\mathrm{rc},M} : \ovr{\RC(M)} \to \ovr{\RC(M)}[W_{\mathrm{rc},M}^{-1}]$
acts as the identity on objects and it sends a subset inclusion $U\subseteq U^\prime$
to the unique morphism $U\to U^\prime$ in the localized orthogonal category.
\end{itemize}
We would like to note that an alternative but equivalent model 
for $\ovr{\COpen(M)}[W_M^{-1}]$ has been presented in \cite[Proposition 3.3]{BDSboundary}
in terms of a reflective orthogonal localization,
however this model does not generalize to the relatively compact case
$\ovr{\RC(M)}[W_{\mathrm{rc},M}^{-1}]$. In the present paper, we prefer
to work with our two models from above because they treat uniformly 
the non-relatively compact and the relatively compact case.
\end{ex}

\subsection{\label{subsec:stacks}Pseudo-functors and stacks}
We assume that the reader has some familiarity with
elementary $2$-categorical concepts, such as
(strict) $2$-categories, pseudo-functors, pseudo-natural 
transformations, modifications and bilimits. Complete definitions
and explanations of these concepts can be found for instance in the book \cite{Yau2Cats},
see also \cite{Lack} for a concise introduction. 
\sk

Since the particular case of pseudo-functors $X : \CC^\op \to \K$ from the opposite of a small $1$-category $\CC$ 
to a $2$-subcategory $\K\subseteq \CAT$ of the $2$-category $\CAT$ of (not necessarily small) categories, 
functors and natural transformations will appear very frequently in our work, we spell
this concept out in detail.
\begin{defi}\label{def:pseudofunctor}
Let $\CC$ be a small $1$-category and $\K\subseteq \CAT$ a $2$-subcategory. 
A \textit{pseudo-functor} $X : \CC^\op \to\K$ is given by the following data:
\begin{itemize}
\item[(1)] For each object $M\in\CC$, a category $X(M)$ in $\K$.

\item[(2)] For each morphism $f:M\to N$ in $\CC$, a functor
$X(f) : X(N)\to X(M)$ in $\K$.

\item[(3)] For each pair of composable morphisms $f : M\to N$
and $g: N\to O$ in $\CC$, a natural isomorphism
$X_{g,f} : X(f)\,X(g)\Rightarrow X(g\,f)$ in $\K$.

\item[(4)] For each object $M\in\CC$, a natural isomorphism
$X_M : \id_{X(M)}\Rightarrow X(\id_M)$ in $\K$.
\end{itemize}
These data have to satisfy the following axioms:
\begin{itemize}
\item[(i)] For all triples of composable morphisms 
$f : M\to N$, $g: N\to O$ and $h: O\to P$ in $\CC$, the diagram of natural transformations
\begin{flalign}
\begin{gathered}
\xymatrix@C=4em{
\ar@{=>}[d]_-{\Id \mathop{\ast} X_{h,g}}
X(f)\, X(g)\, X(h)
\ar@{=>}[r]^-{X_{g,f} \mathop{\ast} \Id}
~&~ X(g\,f)\,X(h)
\ar@{=>}[d]^-{X_{h,gf}}\\
X(f)\,X(h\,g)
\ar@{=>}[r]_-{X_{h g,f}}
~&~X(h\,g\,f)
}
\end{gathered}
\end{flalign}
commutes, where $\Id$ denotes the identity natural transformations and $\ast$ denotes
horizontal composition of natural transformations.

\item[(ii)] For all morphisms $f:M\to N$ in $\CC$, the two diagrams of natural transformations
\begin{flalign}
\begin{gathered}
\xymatrix@C=3em{
\ar@{=>}[d]_-{X_M\mathop{\ast} \Id}\id_{X(M)}\,X(f) \ar@{=}[rd]~&~ ~&~
\ar@{=>}[d]_-{\Id \mathop{\ast} X_{N}} X(f)\,\id_{X(N)} \ar@{=}[rd]~&~ \\
X(\id_M)\,X(f)\ar@{=>}[r]_-{X_{f,\id_M}}~&~ X(f\,\id_M)~&~
X(f)\,X(\id_{N})\ar@{=>}[r]_-{X_{\id_{N},f}}~&~ X(\id_{N}\,f)
}
\end{gathered}
\end{flalign}
commute.
\end{itemize}
To ease our notations, we will often suppress the coherence natural isomorphisms
$X_{g,f}$ and $X_M$ by simply writing $\cong$. 
\end{defi}

In the case where $\CC$ comes endowed with a Grothendieck topology, i.e.\ there exists a 
concept of coverings for objects $M\in\CC$, one can demand that the pseudo-functor $X :\CC^\op\to\K$ 
satisfies a suitable descent condition with respect to these coverings. This leads to the notion
of a \textit{stack}, see e.g.\ \cite{Vistoli} for the details. For the purpose of our work,
we do not have to introduce the concepts of Grothendieck topologies
and stacks in full generality. It suffices instead to discuss
the examples which will be relevant later. Let us recall from Example \ref{ex:OCat} 
the category $\Loc$ of oriented, time-oriented and globally hyperbolic Lorentzian manifolds 
of a fixed dimension $m\geq 1$. 
\begin{defi}\label{def:cover}
Given any object $M\in\Loc$, we say that a 
family of subsets $\U:=\{U_i\subseteq M\}$ is a \textit{causally convex open cover}  of $M$
if each $U_i\subseteq M$ is a non-empty causally 
convex open subset and $\bigcup_iU_i = M$. 
A causally convex open cover $\U=\{U_i\subseteq M\}$ is called \textit{$D$-stable}
if each $U_i$ coincides with its Cauchy development in $M$, i.e.\ $D_M(U_i)=U_i$ for all $i$.
\end{defi}

\begin{rem}\label{rem:inclusionLocmorphisms}
Note that each causally convex open cover $\U=\{U_i\subseteq M\}$ 
defines, through the canonical inclusion morphisms, a family of $\Loc$-morphisms $\{\iota_{U_i}^M : U_i\to M\}$.
Since intersections $U_{ij}:= U_i\cap U_j\subseteq M$ of causally convex open subsets 
are either causally convex open or empty, we further obtain canonical inclusion $\Loc$-morphisms
$\iota_{U_{ij}}^{U_i} : U_{ij}\to U_i$ and $\iota_{U_{ij}}^{U_j} : U_{ij}\to U_j$,
for all $i,j$ with $U_{ij}\neq \emptyset$.
A similar statement holds true for triple intersections $U_{ijk}:= U_i\cap U_j\cap U_k\subseteq M$,
which come with canonical inclusion $\Loc$-morphisms $\iota_{U_{ijk}}^{U_{ij}} : U_{ijk}\to U_{ij}$,
$\iota_{U_{ijk}}^{U_{ik}} : U_{ijk}\to U_{ik}$ and $\iota_{U_{ijk}}^{U_{jk}} : U_{ijk}\to U_{jk}$, 
for all $i,j,k$ with $U_{ijk}\neq \emptyset$.
Since $U_{ii} = U_i$, there are also canonical $\Loc$-morphisms $\id_{U_i} : U_i \to U_{ii}$.
\sk

Note that in the case where $\U = \{U_i \subseteq M\}$ is $D$-stable,
all intersections and triple intersections inherit $D$-stability.
Indeed, one observes that $U_{ij} \subseteq D_M(U_{ij}) \subseteq D_M(U_i) \cap D_M(U_j) = U_{ij}$
and similar for triple intersections.
\end{rem}

Let us now suppose that the $2$-subcategory $\K\subseteq \CAT$ admits all small bilimits.\footnote{In
the terminology of \cite{Yau2Cats}, our bilimits are pseudo bilimits.}
Given any pseudo-functor $X : \Loc^\op\to\K$, we can then define 
for each causally convex open cover $\U=\{U_i\subseteq M\}$ of an object $M\in\Loc$
the \textit{descent category}
\begin{flalign}\label{eqn:descentcat}
X(\U)\,:=\,\bilim\bigg(
\xymatrix{
\prod\limits_{i}X(U_i) \ar@<0.75ex>[r]\ar@<-0.75ex>[r]~&~\ar[l]\prod\limits_{ij}X(U_{ij})
\ar@<1.5ex>[r]\ar@<-1.5ex>[r] \ar[r]~&~\ar@<0.75ex>[l]\ar@<-0.75ex>[l]\prod\limits_{ijk}X(U_{ijk})
}\bigg)\,\in\,\K\quad,
\end{flalign}
where the second product runs over all non-empty intersections $U_{ij}\neq \emptyset$ and 
the third product runs over all non-empty triple intersections $U_{ijk}\neq \emptyset$.
The arrows in this diagram are given by applying the pseudo-functor
$X$ to the canonical $\Loc$-morphisms from Remark \ref{rem:inclusionLocmorphisms}.
Due to the universal property of bilimits, 
there exists a canonical functor
\begin{flalign}\label{eqn:functortodescentcat}
X(M) ~\longrightarrow~X(\U)
\end{flalign}
in $\K$ from $X(M)$ to the descent category, which is defined
by applying $X$ to the canonical inclusion morphisms $\iota_{U_i}^M : U_i\to M$.
\begin{defi}\label{def:stack}
Suppose that the $2$-subcategory $\K\subseteq \CAT$ admits all small bilimits.
A pseudo-functor $X : \Loc^\op\to \K$ is called a \textit{$\K$-valued stack} with respect
to the ($D$-stable) causally convex open Grothendieck topology on $\Loc$
if it satisfies the following descent conditions:
For every object $M\in\Loc$ and every ($D$-stable)
causally convex open cover $\U=\{U_i\subseteq M\}$,
the canonical functor $X(M) \to X(\U)$
of \eqref{eqn:functortodescentcat} is an equivalence in $\K$.
\end{defi}

\begin{rem}\label{rem:prestack}
Following the terminology of \cite[Definition 4.6]{Vistoli},
we call the pseudo-functor $X : \Loc^\op\to \K$ a \textit{prestack}
whenever the functor $X(M) \to X(\U)$
of \eqref{eqn:functortodescentcat} is fully faithful, for
every object $M\in\Loc$ and every ($D$-stable)
causally convex open cover $\U=\{U_i\subseteq M\}$.
Note that a prestack is a categorified variant of a 
separated presheaf, and not of an arbitrary presheaf, 
which makes this terminology slightly ambiguous.
\end{rem}

\begin{rem}
Note that a stack $X$ formalizes a 
local-to-global property since its `global'
value $X(M)$ on an object $M\in\Loc$ is determined (up to equivalence) 
via \eqref{eqn:functortodescentcat} from its `local' values \eqref{eqn:descentcat} 
on a cover $\U=\{U_i\subseteq M\}$. A prestack formalizes a weaker local-to-global
property for the morphisms in the categories assigned by $X$, but not necessarily for the objects.
\end{rem}

\begin{rem}\label{rem:descentexplicit}
In the case where $\K=\CAT$ is the $2$-category of all categories, functors and natural transformations,
the bilimit \eqref{eqn:descentcat} which defines the descent category $X(\U)$ 
for a cover $\U=\{U_i\subseteq M\}$ can be computed explicitly. This yields the following concrete description:
\begin{itemize}
\item An object in $X(\U)$ is a tuple $(\{x_i\},\{\varphi_{ij}\})$ 
consisting of a family of objects $x_i\in X(U_i)$, for all $i$, and a family of 
isomorphisms $\varphi_{ij} : x_j\vert_{U_{ij}}\to x_{i}\vert_{U_{ij}}$ in $X(U_{ij})$,
for all $i,j$ with $U_{ij}\neq \emptyset$,
where $x_j\vert_{U_{ij}}:=X\big(\iota_{U_{ij}}^{U_j}\big)(x_j) \in X(U_{ij})$ and 
$x_{i}\vert_{U_{ij}}:= X\big(\iota_{U_{ij}}^{U_i}\big)(x_i)\in X(U_{ij})$ are convenient short-hand
notations. These data have to satisfy the cocycle conditions
\begin{flalign}
\begin{gathered}
\xymatrix@C=1.5em@R=1.5em{
~&~ x_j\vert_{U_{jk}}\vert_{U_{ijk}} \ar[dr]^-{\cong}~&~\\
\ar[d]_-{\cong} x_k\vert_{U_{jk}}\vert_{U_{ijk}}\ar[ru]^-{\varphi_{jk}\vert_{U_{ijk}}}~&~~&~ x_j\vert_{U_{ij}}\vert_{U_{ijk}} \ar[d]^-{\varphi_{ij}\vert_{U_{ijk}}}\\
\ar[dr]_-{\varphi_{ik}\vert_{U_{ijk}}}x_k\vert_{U_{ik}}\vert_{U_{ijk}}~&~~&~ x_i\vert_{U_{ij}}\vert_{U_{ijk}}\\
~&~x_i\vert_{U_{ik}}\vert_{U_{ijk}}\ar[ru]_-{\cong}~&~
}
\end{gathered}
\qquad 
\begin{gathered}
\xymatrix@C=2em@R=2em{
x_i\vert_{U_{ii}} \ar[r]^-{\varphi_{ii}}~&~ x_i\vert_{U_{ii}}\\
\ar[u]^-{\cong} x_i \ar[r]_-{\id_{x_i}}~&~ x_i\ar[u]_-{\cong}
}
\end{gathered}\qquad,
\end{flalign}
for all $i,j,k$ with $U_{ijk}\neq \emptyset$. The arrows labeled by $\cong$ are given by the 
coherence isomorphisms of the pseudo-functor $X$.

\item A morphism $\{\psi_i\} : (\{x_i\},\{\varphi_{ij}\}) 
\to (\{x^\prime_i\},\{\varphi^\prime_{ij}\}) $ in $X(\U)$ is a family 
of morphisms $\psi_i : x_i\to x_i^\prime$ in $X(U_i)$, for all $i$,
which is compatible with the cocyles according to
\begin{flalign}
\begin{gathered}
\xymatrix@R=2em@C=3em{
\ar[d]_-{\varphi_{ij}}x_j\vert_{U_{ij}} \ar[r]^-{\psi_j\vert_{U_{ij}}}~&~ x_j^\prime\vert_{U_{ij}} \ar[d]^-{\varphi^\prime_{ij}}\\
x_i\vert_{U_{ij}}\ar[r]_-{\psi_i\vert_{U_{ij}}}~&~x^\prime_i\vert_{U_{ij}}
}
\end{gathered}\qquad,
\end{flalign}
for all $i,j$ with $U_{ij}\neq \emptyset$.\qedhere
\end{itemize}
\end{rem}

\subsection{\label{subsec:Pr}Locally presentable categories}
We will briefly recall some well-known aspects of the theory of locally presentable
categories which will become essential when we construct the
Haag-Kastler stacks in Section \ref{sec:HKstack}. Readers who are mainly 
interested in our discussion of the more elementary
Haag-Kastler $2$-functors in Section \ref{sec:HK2functor} can skip 
this technical subsection. 
\sk

A \textit{locally presentable category} $\EE$ is a special kind of 
category which satisfies the following technical
conditions: 1.)~It is cocomplete, i.e.\ all small colimits exist in $\EE$,
and 2.)~it is generated under $\lambda$-directed colimits
from a subset $\Gamma\subseteq \mathrm{Obj}(\EE)$ of $\lambda$-presentable objects,
for $\lambda$ some regular cardinal. We refer the reader to \cite{Adamek}
for an extensive description of the rich theory of locally presentable categories
and also to \cite[Chapter 5]{Borceux} for a more concise introduction.
\begin{ex}\label{ex:locprescats}
The following examples of locally presentable categories are relevant in the context of AQFT.
\begin{itemize}
\item[(1)] The category $\Vec_\bbK$ of vector spaces over a field $\bbK$ is locally presentable for 
$\lambda=\aleph_0$. Indeed, $\Vec_\bbK$ is clearly cocomplete and the $\aleph_0$-presentable objects are the 
finite-dimensional vector spaces. Each vector space $V\in\Vec_\bbK$ is an $\aleph_0$-directed colimit
over its finite-dimensional subspaces.

\item[(2)] For every small category $\CC$ and every locally presentable category $\EE$, the category 
$\Fun(\CC,\EE)$ of functors and natural transformations is locally presentable, see \cite[Corollary 1.54]{Adamek}.
As a special case, the product category $\EE^S := \prod_{s\in S} \EE 
\cong \Fun(S,\EE)$ corresponding to a set $S$, which we also
regard as a category with only identity morphisms, is locally presentable whenever $\EE$ is.

\item[(3)] Let $\EE$ be a locally presentable category which is endowed with a closed symmetric monoidal structure.
Given any small colored operad $\O$, i.e.\ its class of objects $\O_0$ is a set, the category
$\Alg_\O(\EE)$ of $\O$-algebras in $\EE$ is locally presentable. Indeed, the category of $\O$-algebras
is equivalent to the category of algebras over the monad $\O\circ (-) : \EE^{\O_0}\to\EE^{\O_0}$, see e.g.\
\cite[Chapter 4.5]{Yau} for details,  which is locally presentable as a consequence of item (2) and 
\cite[Theorem 5.5.9]{Borceux}.

\item[(4)] As a special case of item (3), we observe that, for each orthogonal category $\ovr{\CC}$,
the AQFT category $\AQFT(\ovr{\CC})\cong \Alg_{\O_{\ovr{\CC}}}(\TT)$ from Definition \ref{def:AQFT} 
is locally presentable whenever the target category $\TT$ is. 
By item (1), this is in particular the case for the standard choice $\TT=\Vec_\bbK$. \qedhere
\end{itemize}
\end{ex}

Locally presentable categories assemble into the following two 
interesting $2$-subcategories of $\CAT$, both of which will be important for our work.
\begin{defi}\label{def:PrL/R} 
\begin{itemize}
\item[(a)] We denote by $\Pr^L\subseteq \CAT$ the $2$-subcategory whose objects are
all locally presentable categories, $1$-morphisms are all \textit{left} adjoint functors 
between locally presentable categories and 
$2$-morphisms are all natural transformations between left adjoint functors. 

\item[(b)] We denote by $\Pr^R\subseteq\CAT$ the $2$-subcategory whose objects are
all locally presentable categories, $1$-morphisms are all \textit{right} adjoint functors 
between locally presentable categories and 
$2$-morphisms are all natural transformations between right adjoint functors.
\end{itemize}
\end{defi}

The two $2$-categories $\Pr^{L}$ and $\Pr^{R}$ from Definition \ref{def:PrL/R} can be 
related by the following construction. Let us denote by $(\Pr^{R})^{\mathrm{coop}}$ 
the $2$-category which is obtained by reversing the direction of all $1$-morphisms 
and of all $2$-morphisms in $\Pr^{R}$. Consider the pseudo-functor
\begin{subequations}\label{eqn:LtoR}
\begin{flalign}
(-)^\dagger\,:\, \Pr^L~\longrightarrow~(\Pr^R)^{\mathrm{coop}}
\end{flalign}
which acts on objects as the identity, on $1$-morphisms
$F : \EE\to\FF$ by taking right adjoints $F^\dagger : \FF\to \EE$,
and on $2$-morphisms $\zeta : F\Rightarrow G$ via
\begin{flalign}
\zeta^\dagger\,:\,\xymatrix@C=3.5em{
G^\dagger \ar@{=>}[r]^-{\eta_F\,G^\dagger} ~&~ 
F^\dagger\,F\,G^\dagger \ar@{=>}[r]^-{F^\dagger \,\zeta\,G^\dagger} ~&~ 
F^\dagger\,G\,G^\dagger\ar@{=>}[r]^-{F^\dagger\,\epsilon_G}~&~ 
F^\dagger
}\quad,
\end{flalign}
\end{subequations}
where $\eta_F$ denotes the unit of the adjunction $F\dashv F^\dagger$ and $\epsilon_G$ the counit
of the adjunction $G\dashv G^\dagger$.
\begin{lem}\label{lem:PrLvsPrR}
The pseudo-functor \eqref{eqn:LtoR} exhibits a biequivalence 
\begin{flalign}
\Pr^L \,\simeq \,(\Pr^R)^{\mathrm{coop}}\quad.
\end{flalign}
\end{lem}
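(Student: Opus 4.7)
The plan is to apply the standard biequivalence criterion: a pseudo-functor between $2$-categories is a biequivalence if and only if it is essentially surjective on objects and induces an equivalence on every hom-category. The first condition is immediate here, since $(-)^\dagger$ acts as the identity on objects, and $\Pr^L$ and $(\Pr^R)^{\mathrm{coop}}$ have the same class of objects, namely the locally presentable categories.

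For the local equivalence, I fix a pair $\EE, \FF$ of locally presentable categories and examine the induced hom-functor
\begin{flalign}
(-)^\dagger_{\EE,\FF}\,:\,\Pr^L(\EE,\FF)~\longrightarrow~\Pr^R(\FF,\EE)^{\op}\quad.
\end{flalign}
Essential surjectivity of this functor is a direct consequence of the definition of $\Pr^R$: any object of the target is a right adjoint functor $G:\FF\to\EE$, which by definition admits a left adjoint $F:\EE\to\FF$, and uniqueness of right adjoints up to canonical isomorphism gives $F^\dagger\cong G$ in $\Pr^R(\FF,\EE)$. Fully faithfulness is the classical mate correspondence between adjunctions: for left adjoints $F,G:\EE\to\FF$ with chosen right adjoints $F^\dagger, G^\dagger$, the formula in \eqref{eqn:LtoR} yields a bijection between natural transformations $\zeta:F\Rightarrow G$ and $\xi:G^\dagger\Rightarrow F^\dagger$, whose inverse is the dual formula $\xi\mapsto \epsilon_G\, F\circ G\,\xi\, F\circ G\,\eta_F$. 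Mutual inversion, preservation of identities, and compatibility with vertical composition are short diagram chases combining naturality of the units and counits with the triangle identities of the two adjunctions.

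I expect the main effort, although entirely routine, to lie in the coherence bookkeeping: one has to verify that the pseudo-functor structure isomorphisms of $(-)^\dagger$, supplied by the canonical identification $g^\dagger\,f^\dagger\cong (f\,g)^\dagger$ of iterated right adjoints, interact correctly with the mate construction on $2$-morphisms, so that the data described in \eqref{eqn:LtoR} indeed satisfy the axioms of Definition~\ref{def:pseudofunctor}. It is worth emphasizing that no external adjoint functor theorem is needed, since both $\Pr^L$ and $\Pr^R$ are defined so that their $1$-morphisms come equipped with adjoints on the relevant side; only the basic calculus of adjunctions enters the argument.
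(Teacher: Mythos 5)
Your argument is correct in substance, but it takes a different route from the paper's. The paper offers no written proof of this lemma at all: it simply states it and, in the following Remark, exhibits an explicit quasi-inverse pseudo-functor (the ${}^{\mathrm{coop}}$ of the analogous construction \eqref{eqn:RtoL} taking left adjoints), so the implicit strategy there is ``biequivalence because there is a quasi-inverse''. You instead invoke the local characterization of biequivalences (surjective on objects up to equivalence, plus an equivalence on each hom-category), which reduces everything to the classical mate correspondence; this spares you from verifying that the quasi-inverse is itself a coherent pseudo-functor and that the two composites are pseudo-naturally equivalent to identities, at the price of having to check that $(-)^\dagger_{\EE,\FF}$ is a functor on hom-categories and that the mate bijection is compatible with the coherence isomorphisms $g^\dagger f^\dagger\cong(fg)^\dagger$ --- bookkeeping you correctly flag as the only real work. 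One small slip: your claimed inverse to the mate construction, $\epsilon_G F\circ G\xi F\circ G\eta_F$, has the roles of $F$ and $G$ interchanged and does not typecheck as written; for $\xi:G^\dagger\Rightarrow F^\dagger$ the mate $F\Rightarrow G$ is $\epsilon_F G\circ F\xi G\circ F\eta_G$, i.e.\ one whiskers with the unit of $G\dashv G^\dagger$ and the counit of $F\dashv F^\dagger$. This is a relabeling error rather than a gap; the triangle-identity verification you describe then goes through as stated.
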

\begin{rem}
An explicit quasi-inverse for the pseudo-functor \eqref{eqn:LtoR} is given by applying
${}^{\mathrm{coop}}$ to the pseudo-functor (denoted with abuse of notation by the same symbol 
as \eqref{eqn:LtoR})
\begin{subequations}\label{eqn:RtoL}
\begin{flalign}
(-)^\dagger\,:\,\Pr^R ~\longrightarrow~(\Pr^L)^{\mathrm{coop}}
\end{flalign}
which acts on objects as the identity, on $1$-morphisms
$F : \EE\to\FF$ by taking left adjoints $F^\dagger : \FF\to \EE$,
and on $2$-morphisms $\zeta : F\Rightarrow G$ via
\begin{flalign}
\zeta^\dagger \,:\,\xymatrix@C=3.5em{
G^\dagger \ar@{=>}[r]^-{G^\dagger\,\eta_F} ~&~ 
G^\dagger \, F\,F^\dagger \ar@{=>}[r]^-{G^\dagger \,\zeta\,F^\dagger} ~&~ 
G^\dagger\,G\, F^\dagger \ar@{=>}[r]^-{\epsilon_G\,F^\dagger}~&~ 
F^\dagger
}\quad,
\end{flalign}
\end{subequations}
where $\eta_F$ is the unit of the adjunction $F^\dagger \dashv F$ and 
$\epsilon_G$ the counit of the adjunction $G^\dagger \dashv G$.
\end{rem}

The following key result about bilimits and bicolimits in $\Pr^{L}$ and $\Pr^R$ is proven
in \cite{Bird}, see also \cite{BCJF} for a sketch.
\begin{theo}\label{theo:PrL/R limits}
The $2$-categories $\Pr^L$ and $\Pr^R$ from Definition \ref{def:PrL/R} 
admit all small bilimits and, as a consequence of Lemma \ref{lem:PrLvsPrR}, 
they also admit all small bicolimits. The forgetful $2$-functors
$\Pr^L\to\CAT$ and $\Pr^R\to \CAT$ preserve and reflect all bilimits.
\end{theo}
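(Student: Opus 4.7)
The plan is to construct small bilimits in $\Pr^R$ directly as bilimits of the underlying diagram in $\CAT$, then to deduce the result for $\Pr^L$ by a symmetric argument (or equivalently through the biequivalence $\Pr^L \simeq (\Pr^R)^{\mathrm{coop}}$ of Lemma \ref{lem:PrLvsPrR}), and finally to obtain small bicolimits in both $2$-categories by reading the bilimits of one side as bicolimits of the other through this biequivalence. Concretely, given a pseudo-functor $D : \mathcal{I}\to\Pr^R$ from a small $2$-category, I would form the bilimit $L\in\CAT$ of the composite $\mathcal{I} \to \Pr^R \to \CAT$, which exists since $\CAT$ is bicomplete, and I would verify that $L$ lies in $\Pr^R$ and is in fact universal as a pseudo-cone in $\Pr^R$.

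The two closure properties to check are: (i) $L$ is locally presentable; (ii) each projection leg $\pi_i : L \to D(i)$ is a right adjoint functor. For (i), one describes $L$ explicitly as a category of descent data consisting of families $(x_i)$ equipped with coherent comparison isomorphisms transported along the structure morphisms of $D$; such $2$-limit constructions of locally presentable categories along right adjoints remain locally presentable, which is the content of Bird's theorem \cite{Bird}, see also the sketch in \cite{BCJF}. For (ii), since the structure morphisms $D(f)$ are right adjoints and therefore preserve small limits, limits in $L$ are computed componentwise, so each leg $\pi_i$ preserves small limits; combined with (i), the adjoint functor theorem for locally presentable categories then produces a left adjoint to $\pi_i$, placing it in $\Pr^R$. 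The universal property is inherited from the $\CAT$-bilimit, so the forgetful $2$-functor $\Pr^R \to \CAT$ preserves this bilimit by construction; reflection holds because any $\CAT$-bilimit of the diagram that happens to satisfy the $\Pr^R$ closure properties must agree with our construction up to equivalence, by uniqueness of bilimits.

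The $\Pr^L$ case proceeds by the dual argument, with limits replaced by colimits: the structure morphisms $D(f)$ of a pseudo-functor $D : \mathcal{I}\to\Pr^L$ are left adjoints and preserve colimits, so colimits in the $\CAT$-bilimit $L$ are computed componentwise and the legs $\pi_i$ preserve them, yielding right adjoints via the adjoint functor theorem. Bicolimits in both $2$-categories then follow from Lemma \ref{lem:PrLvsPrR}: a bilimit in $\Pr^L$ (respectively $\Pr^R$) transports under the biequivalence $\Pr^L \simeq (\Pr^R)^{\mathrm{coop}}$ to a bicolimit in $\Pr^R$ (respectively $\Pr^L$) for the suitably dualised diagram, so the existence of bilimits on both sides yields the existence of bicolimits on both sides. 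The main obstacle throughout is property (i): showing that the $\CAT$-bilimit of locally presentable categories along adjoint functors is again locally presentable. This is the technically substantive part of Bird's theorem and requires a careful analysis of generators and accessibility ranks in the descent category, which I would take from the cited literature rather than reprove.
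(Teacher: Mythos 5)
Your proposal matches the paper's treatment of this theorem: the substantive content (local presentability of the descent category and right-adjointness of the projection legs) is deferred to Bird's theorem, exactly as the paper does by citing \cite{Bird} and \cite{BCJF}, and bicolimits are then obtained through the biequivalence of Lemma \ref{lem:PrLvsPrR}. The only point worth tightening is your reflection argument: to upgrade ``the apex agrees with the constructed bilimit up to equivalence in $\CAT$'' to ``the given cone is a bilimit in $\Pr^{R}$'' you need the observation, made explicitly in the paper's Remark \ref{rem:PrL/R:limitreflection}, that any equivalence of categories is both a left and a right adjoint, so that $\Pr^{L/R}\subseteq\CAT$ is closed under equivalences and the forgetful $2$-functors reflect them.
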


\begin{rem} \label{rem:PrL/R:limitreflection}
The statement that the forgetful $2$-functors $\Pr^{L/R} \to \CAT$
reflect all bilimits does not appear explicitly in \cite{Bird}, but it
is a simple consequence of the following argument.
Let us observe that the $2$-subcategories $\Pr^{L/R} \subseteq \CAT$ are 
closed under equivalences, and that the forgetful $2$-functors $\Pr^{L/R} \to \CAT$ reflect 
equivalences since any fully faithful and essentially surjective functor is both a left and right adjoint.
Since the forgetful $2$-functors preserve all bilimits by \cite{Bird},
it then follows that they also reflect all bilimits.
\end{rem}

\begin{constr}[Computing bilimits in $\Pr^{L/R}$]\label{constr:computingbilimits}
We would like to emphasize that Theorem \ref{theo:PrL/R limits}
is very useful to compute bilimits.
Given any diagram (i.e.\ pseudo-functor) $X : \DD\to \Pr^{L/R}$ from a small category $\DD$,
we can compute its bilimit in $\Pr^{L/R}$ by the following construction: 
\begin{enumerate}
\item Postcompose $X$ with the forgetful $2$-functor, which yields a
pseudo-functor $X : \DD\to\CAT$ to the $2$-category $\CAT$.

\item Compute the bilimit of $X : \DD\to\CAT$ in $\CAT$. For this one can use, for example,
the explicit model
\begin{flalign}\label{eqn:bilimXformula}
\bilim(X)\,=\,\Hom(\Delta\mathbf{1},X)\,\in\,\CAT
\end{flalign}
given by the category of pseudo-natural transformations from the constant
diagram $\Delta\mathbf{1} : \DD\to \CAT$ to $X : \DD\to\CAT$ and their modifications,
where $\mathbf{1}\in\CAT$ denotes the category consisting of a single object 
and its identity morphism.

\item Theorem \ref{theo:PrL/R limits} implies that \eqref{eqn:bilimXformula} is a locally presentable
category, i.e.\ $\bilim(X)\in\Pr^{L/R}$, and that the universal pseudo-cone $\Delta\bilim(X) \Rightarrow X$,
i.e.\ the projection maps from the bilimit to the diagram, consists of left/right adjoint functors.
This provides an explicit model for the bilimit of our original diagram
$X : \DD\to \Pr^{L/R}$ in $\Pr^{L/R}$. \qedhere
\end{enumerate}
\end{constr}

\begin{constr}[Computing bicolimits in $\Pr^{L/R}$]\label{constr:computingbicolimits}
Combining Theorem \ref{theo:PrL/R limits} and Lemma \ref{lem:PrLvsPrR}, one
also obtains an explicit approach to compute bicolimits.
To avoid notational clutter, we will spell out this construction only
for the case of the bicolimit of a diagram (i.e.\ pseudo-functor) $X : \DD\to \Pr^{L}$
from a small category $\DD$ to $\Pr^L$. The case of diagrams in $\Pr^R$ works analogously.
\begin{enumerate}
\item Postcompose $X$ with the pseudo-functor $(-)^{\dagger}$ from \eqref{eqn:LtoR},
which yields a pseudo-functor $X^\dagger : \DD\to (\Pr^{R})^{\mathrm{coop}}$. This is the same datum
as a pseudo-functor $X^\dagger : \DD^\op\to \Pr^R$ from the opposite category $\DD^\op$
to $\Pr^R$.

\item Compute the \textit{bilimit} of $X^\dagger : \DD^\op\to \Pr^R$ using,
for example, Construction \ref{constr:computingbilimits}. 
This defines an object $\bilim(X^\dagger)\in \Pr^R$
with a universal pseudo-cone $\Delta\bilim(X^\dagger) \Rightarrow X^\dagger$.

\item Apply the quasi-inverse pseudo-functor $(-)^\dagger$ from \eqref{eqn:RtoL} to the 
universal pseudo-cone, which defines a universal pseudo-cocone $X \simeq X^{\dagger\dagger} \Rightarrow 
\Delta\bilim(X^\dagger)^\dagger=\Delta\bilim(X^\dagger)$ for the original diagram $X : \DD\to \Pr^{L}$, 
where in the last step we used that $(-)^\dagger$ acts as the identity on objects, i.e.\ 
$\bilim(X^\dagger)^\dagger=\bilim(X^\dagger)$.
A model for the bicolimit of $X$ is then given by
\begin{flalign}
\bicolim\big(X : \DD \to \Pr^L\big)\,=\,
\bilim\big(X^\dagger: \DD^\op\to \Pr^R\big)
\end{flalign}
together with the given universal pseudo-cocone.
\end{enumerate}
In simpler words, this construction can be described as follows:
If one would like to compute the bicolimit of a diagram in $\Pr^{L/R}$,
one can equivalently compute the bilimit of the adjoint diagram in $\Pr^{R/L}$. The latter 
is relatively easy to do, e.g.\ by using the
explicit Construction \ref{constr:computingbilimits}.
\end{constr}

The above results imply that the concept of a $\Pr^{L/R}$-valued stack on 
some site is equivalent to that of a $\Pr^{R/L}$-valued costack on the same site.
Let us state this observation explicitly for the case of $\Pr^R$-valued stacks on $\Loc$ 
and $\Pr^L$-valued costacks on $\Loc$, which will be important 
for proving our main results in Section \ref{sec:HKstack}.
\begin{cor}\label{cor:adjointpseudofunctor}
Let $X : \Loc^\op\to \Pr^R$ be a pseudo-functor taking values in the $2$-category $\Pr^R$.
Then $X$ is a stack in the sense of Definition \ref{def:stack} if and only if
the adjoint pseudo-functor $X^\dagger : \Loc\to \Pr^L$ is a costack in the following sense: 
For every object $M\in\Loc$ and every ($D$-stable) 
causally convex open cover $\U=\{U_i\subseteq M\}$,
the canonical functor
\begin{subequations}\label{eqn:costack}
\begin{flalign}\label{eqn:costack1}
X^\dagger(\U)~\longrightarrow~X^\dagger(M)
\end{flalign}
in $\Pr^L$ from the codescent category 
\begin{flalign}
X^\dagger(\U)\,:=\,\bicolim\bigg(
\xymatrix{
\coprod\limits_{i}X^\dagger(U_i) \ar[r]  ~&~\ar@<0.75ex>[l]\ar@<-0.75ex>[l]
\coprod\limits_{ij}X^\dagger(U_{ij})
\ar@<0.75ex>[r]\ar@<-0.75ex>[r] ~&~ \ar@<1.5ex>[l]\ar@<-1.5ex>[l] \ar[l] 
\coprod\limits_{ijk}X^\dagger(U_{ijk})
}\bigg) 
\end{flalign}
\end{subequations}
is an equivalence in $\Pr^L$.
\end{cor}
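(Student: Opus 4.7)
The plan is to use the biequivalence $(-)^\dagger : \Pr^L \simeq (\Pr^R)^{\mathrm{coop}}$ from Lemma \ref{lem:PrLvsPrR} to translate the two conditions into one another directly, covering quantifier by quantifier. Since the biequivalence, and its quasi-inverse in \eqref{eqn:RtoL}, are the identity on objects, the canonical ($D$-stable) causally convex open covers $\U = \{U_i \subseteq M\}$ of $M \in \Loc$ indexing the descent and codescent conditions are the same in both settings, so it suffices to fix such $\U$ and argue pointwise.

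First, I would observe that applying the pseudo-functor $(-)^\dagger$ of \eqref{eqn:LtoR} to the diagram in $\Pr^L$ defining the bicolimit \eqref{eqn:costack1} produces, after the identification $(\Pr^R)^{\mathrm{coop}} \leftrightarrow \Pr^R$ that reverses $1$-morphisms, precisely the diagram in $\Pr^R$ defining the descent bilimit \eqref{eqn:descentcat} for $X$. This is because $X^\dagger$ is by construction obtained from $X$ by taking left adjoints and reversing arrows, so applying $(-)^\dagger$ once more recovers $X$ on the nerve of $\U$ up to canonical natural isomorphism. By Construction \ref{constr:computingbicolimits}, applied to the codescent diagram in $\Pr^L$, the bicolimit $X^\dagger(\U)$ can then be computed as the bilimit in $\Pr^R$ of this adjoint diagram, which is exactly $X(\U)$; moreover the universal pseudo-cocone witnessing $X^\dagger(\U)$ as a bicolimit is obtained by applying the quasi-inverse $(-)^\dagger$ of \eqref{eqn:RtoL} to the universal pseudo-cone witnessing $X(\U)$ as a bilimit.

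Second, I would verify that the canonical comparison functors correspond to each other under the biequivalence. The functor $X(M) \to X(\U)$ in \eqref{eqn:functortodescentcat} is determined by the universal property of the bilimit together with the pseudo-cone $\{X(\iota_{U_i}^M) : X(M) \to X(U_i)\}$. Applying $(-)^\dagger$ replaces each of these right adjoints with its left adjoint $X^\dagger(\iota_{U_i}^M) : X^\dagger(U_i) \to X^\dagger(M)$, yielding a pseudo-cocone under the adjoint diagram; by the universal property of the bicolimit $X^\dagger(\U)$, this pseudo-cocone induces precisely the canonical functor \eqref{eqn:costack1}. Thus the image of $X(M) \to X(\U)$ under $(-)^\dagger$ is (up to canonical natural isomorphism) the functor $X^\dagger(\U) \to X^\dagger(M)$ appearing in the costack condition.

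Finally, since $(-)^\dagger$ is a biequivalence, it preserves and reflects equivalences of $1$-morphisms. Combining this with the two previous steps, $X(M) \to X(\U)$ is an equivalence in $\Pr^R$ if and only if $X^\dagger(\U) \to X^\dagger(M)$ is an equivalence in $\Pr^L$, and quantifying over all $M$ and all (respectively $D$-stable) causally convex open covers $\U$ yields the statement. The only delicate point is the coherence verification in the second step, where one must track the interplay between the universal pseudo-cone of the bilimit and the universal pseudo-cocone of the bicolimit through the biequivalence; this is a routine but careful unpacking of the universal properties and the definitions of $(-)^\dagger$ in \eqref{eqn:LtoR} and \eqref{eqn:RtoL}, and does not present a genuine obstacle.
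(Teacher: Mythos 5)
Your proposal is correct and follows essentially the same route as the paper: both identify the codescent bicolimit $X^\dagger(\U)$ with the descent bilimit $X(\U)$ via Construction \ref{constr:computingbicolimits}, recognize the two canonical comparison functors as an adjoint pair, and conclude that one is an equivalence if and only if the other is. The paper states this more tersely, while you phrase the last step via the biequivalence of Lemma \ref{lem:PrLvsPrR} preserving and reflecting equivalences, which amounts to the same standard fact about adjoint equivalences.
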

\begin{proof}
This follows directly from the bilimit/bicolimit rewriting in 
Construction \ref{constr:computingbicolimits}. Indeed,
the codescent category 
\begin{flalign}
X^\dagger(\U)\,=\,X(\U)
\end{flalign}
of $X^\dagger$ coincides with the descent category of $X$, and the canonical functor
\eqref{eqn:costack1} is the left adjoint of the canonical functor
\begin{flalign}
X(M)~\longrightarrow~X(\U)
\end{flalign}
to the descent category. (Recall that $X^\dagger(M) = X(M)$ because
$(-)^\dagger$ acts as the identity on objects.)
\end{proof}

\begin{rem}
Using a similar terminology as in Remark \ref{rem:prestack}, 
we call the pseudo-functor $X^\dagger : \Loc^\op\to \Pr^L$ 
a \textit{precostack} whenever the functor
$X^\dagger(\U)\to X^\dagger(M)$ is fully faithful, 
for every object $M\in\Loc$ and every ($D$-stable) causally convex open cover 
$\U=\{U_i\subseteq M\}$.
\end{rem}


\section{\label{sec:HK2functor}Haag-Kastler \texorpdfstring{$2$-functors}{2-functors}}
Haag-Kastler-style AQFTs \cite{HaagKastler} are AQFTs which are defined on 
suitable causally convex opens in a fixed oriented, time-oriented
and globally hyperbolic Lorentzian manifold $M\in \Loc$. 
Depending on whether or not one wishes to demand a boundedness
condition for these opens, one can formalize such AQFTs by using 
either the orthogonal category $\ovr{\RC(M)}$ of relatively compact causally convex
opens in $M$ or the orthogonal category $\ovr{\COpen(M)}$ of all causally convex opens in $M$,
see Example \ref{ex:OCat}. Furthermore, if desired, the time-slice axiom can be implemented 
as in Proposition \ref{prop:timeslice} either as a property or, equivalently, 
through an orthogonal localization. 
The Haag-Kastler $2$-functors we define and study in this section describe the behavior 
of Haag-Kastler-style AQFTs under $\Loc$-morphisms $f : M\to N$. We consider
all of the above mentioned variations of Haag-Kastler-style AQFTs 
and establish relationships between different variations and also a comparison to locally covariant AQFT.

\subsection{\label{subsec:HK2functorCOpen}The case of causally convex opens}
In this subsection we describe the variants of the Haag-Kastler $2$-functor 
which are associated with Haag-Kastler-style AQFTs that are modeled on
the orthogonal categories $\ovr{\COpen(M)}$ of all causally convex opens
in $M\in\Loc$ from Example \ref{ex:OCat}.

\subsubsection{Definition and properties}
Let us start by observing that the assignment $M\mapsto \ovr{\COpen(M)}$ 
can be upgraded to a $2$-functor 
\begin{flalign}\label{eqn:COpenfunctor}
\ovr{\COpen(-)} \,:\, \Loc~\longrightarrow~\Cat^\perp
\end{flalign}
from $\Loc$ to the $2$-category $\Cat^\perp$ of orthogonal categories, orthogonal functors and natural transformations.
Indeed, given any $\Loc$-morphism $f:M\to N$, we can define an orthogonal functor (denoted
with abuse of notation by the same symbol $f$)
\begin{flalign}\label{eqn:fpullback}
f\,:=\,\ovr{\COpen(f)} \,:\, \ovr{\COpen(M)}~\longrightarrow~\ovr{\COpen(N)}~~,\quad 
U\subseteq M ~\longmapsto~f(U)\subseteq N
\end{flalign}
which sends causally convex opens in $M$ to their images under $f$ in $N$. 
Note that this assignment is strictly $2$-functorial.
\begin{defi}\label{def:HK2functor}
The \textit{Haag-Kastler $2$-functor}
\begin{subequations}
\begin{flalign}
\HK\,:\, \Loc^\op~\longrightarrow~\CAT
\end{flalign}
is defined by assigning to each object $M\in\Loc$ the category
\begin{flalign}
\HK(M)\,:=\, \AQFT(\ovr{\COpen(M)})\,\in\,\CAT
\end{flalign}
of Haag-Kastler-style AQFTs on $M$ and to each $\Loc$-morphism $f : M\to N$
the pullback functor
\begin{flalign}
\begin{gathered}
\xymatrix@R=1em{
\ar@{=}[d]\HK(N) \ar[r]^-{\HK(f)\,:=\, f^\ast} ~&~ \HK(M) \ar@{=}[d]\\
\AQFT(\ovr{\COpen(N)}) \ar[r]_-{f^\ast}~&~ \AQFT(\ovr{\COpen(M)})
}
\end{gathered}
\end{flalign}
\end{subequations}
from \eqref{eqn:Fpullbackfunctor} which is 
associated to the orthogonal functor $f : \ovr{\COpen(M)} \to \ovr{\COpen(N)}$ in \eqref{eqn:fpullback}.
\end{defi}

\begin{rem}
We would like to emphasize that the pullback functor $f^\ast$ describes 
the obvious and expected concept of pulling back Haag-Kastler-style AQFTs along $\Loc$-morphisms $f:M\to N$.
Indeed, given any $\AAA\in\HK(N)$ on $N$, the pullback $f^\ast(\AAA)\in\HK(M)$ on $M$ assigns
to a causally convex open subset $U\subseteq M$ the same algebra 
$f^\ast(\AAA)(U) = \AAA(f(U))$ as the
original theory assigns to the image $f(U)\subseteq N$. 
\end{rem}

It is natural to ask whether or not 
the Haag-Kastler $2$-functor is a stack on $\Loc$ in the sense
of Definition \ref{def:stack}, i.e.\ if Haag-Kastler-style AQFTs satisfy a 
local-to-global property with respect to either
causally convex open covers, or their Cauchy development stable counterparts.
This is in general not the case.
\begin{propo}\label{prop:HKnotastack}
Suppose that the category of algebras $\Alg_{\mathsf{uAs}}(\TT)$ has two objects
$A,B\in \Alg_{\mathsf{uAs}}(\TT)$ for which the $\Hom$-set $\Hom(A,B)$ is not a 
singleton.\footnote{This technical condition is very mild and it is only used to rule out 
pathological examples of $\TT$, such as the one-object category $\mathbf{1}$. 
In the standard case where $\TT=\Vec_\bbK$, one has that 
$\Hom(A,\bbK)=\emptyset$ is empty for every simple noncommutative $\bbK$-algebra $A$ 
and that $\Hom(A,A)$ has more than $1$ element for every $\bbK$-algebra $A$ with non-trivial automorphisms.}
Then the Haag-Kastler $2$-functor $\HK$ from Definition \ref{def:HK2functor} is not a stack with respect to 
either Grothendieck topology from Definition \ref{def:stack} on $\Loc$. It is not even a prestack.
\end{propo}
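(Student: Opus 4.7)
The strategy is to refute even the prestack condition (Remark~\ref{rem:prestack}) by showing that the canonical comparison $\HK(M)\to\HK(\U)$ fails to be faithful for a suitable $M\in\Loc$ and a suitable cover $\U$. I would take $\U$ simultaneously causally convex open and $D$-stable so that both Grothendieck topologies are ruled out at once: for any $M\in\Loc$, fix a Cauchy temporal function $\tau:M\to\bbR$ and two reals $t_-<t_+$, and set $\U=\{U_1,U_2\}$ with $U_1:=\tau^{-1}((-\infty,t_+))$ and $U_2:=\tau^{-1}((t_-,\infty))$. Both $U_i$ are $D$-stable causally convex opens properly contained in $M$, with non-empty overlap $U_{12}=\tau^{-1}((t_-,t_+))$.

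The key construction is a family of ``spike'' AQFTs concentrated at the top open of $M$. For any $A\in\Alg_{\mathsf{uAs}}(\TT)$, define $\AAA^A\in\HK(M)$ by $\AAA^A(M):=A$ and $\AAA^A(W):=\oone$ for every $W\subsetneq M$, with structure map $u_A:\oone\to A$ along every inclusion $W\subsetneq M$ and identities elsewhere. Functoriality is immediate. The $\perp$-commutativity axiom holds by direct inspection: an orthogonal pair targeting $V\subsetneq M$ gives only identities on $\oone$; an orthogonal pair targeting $V=M$ forces $W_1,W_2\subsetneq M$ (causally disjoint non-empty opens are in particular disjoint, hence each is a proper subset), so both composites of the axiom factor through $\oone\otimes\oone\cong\oone\xrightarrow{u_A}A$, on which $\mu_A$ and $\mu_A^\op$ manifestly agree.

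By hypothesis I may pick $A,B\in\Alg_{\mathsf{uAs}}(\TT)$ with $|\Hom(A,B)|\geq 2$. Since $\oone$ is initial in $\Alg_{\mathsf{uAs}}(\TT)$, a natural transformation $\psi:\AAA^A\Rightarrow\AAA^B$ is forced to be $\id_\oone$ at every $W\subsetneq M$, while its component $\psi_M:A\to B$ may be any unital algebra morphism (the naturality squares along inclusions $W\subsetneq M$ reducing to $\psi_M\circ u_A=u_B$, which is just unitality). Hence $\Hom_{\HK(M)}(\AAA^A,\AAA^B)\cong\Hom_{\Alg_{\mathsf{uAs}}(\TT)}(A,B)$ has at least two elements. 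On the other hand, every $W\in\COpen(U_i)$ satisfies $W\subsetneq M$, so each restriction $\AAA^A\vert_{U_i}$ coincides with the constant $\oone$-valued theory on $U_i$, independently of $A$. Thus $\AAA^A$ and $\AAA^B$ map to literally the same object in the descent category $\HK(\U)$ (a constant $\oone$-datum with identity cocycles), whose endomorphism set is a singleton by initiality of $\oone$. The induced map $\Hom_{\HK(M)}(\AAA^A,\AAA^B)\to\Hom_{\HK(\U)}(\AAA^A\vert_\U,\AAA^B\vert_\U)$ therefore collapses a set of cardinality $\geq 2$ to a singleton, violating faithfulness. The only mildly technical step in this plan is the verification of $\perp$-commutativity for $\AAA^A$; everything else is bookkeeping once the spike construction is in place.
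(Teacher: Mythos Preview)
Your proof is essentially the same as the paper's: the same ``spike'' AQFTs $\AAA^A$ (the paper calls them $\AAA$ and $\BBB$, built from $A$ and $B$ with the initial algebra $I$ elsewhere), the same observation that their restrictions to a proper cover collapse to the constant initial theory, and the same conclusion via the induced map on $\Hom$-sets. Your choice of cover is more explicit than necessary---the paper simply takes any ($D$-stable) causally convex open cover with each $U_i\subsetneq M$ proper---but this is harmless.

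One small slip: you write ``By hypothesis I may pick $A,B$ with $|\Hom(A,B)|\geq 2$'', but the stated hypothesis is only that $\Hom(A,B)$ is not a singleton, which includes the possibility $\Hom(A,B)=\emptyset$. In that case your argument does not show failure of faithfulness; rather, the map $\emptyset\to\{\ast\}$ shows failure of \emph{fullness}. Either way the comparison functor is not fully faithful, so the prestack condition fails---but you should phrase the conclusion to cover both cases, as the paper does.
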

\begin{proof}
Let us choose any object $M\in \Loc$ with a ($D$-stable)
causally convex open cover $\U=\{U_i\subseteq M\}$
such that every $U_i\subset M$ is a proper subset of $M$.
(The existence of such a cover is guaranteed because
$M$ is globally hyperbolic, so also strongly causal.
For the $D$-stable case, see also
Proposition \ref{prop:Cauchy_development:small_D-stable_neighbourhoods}
and Remark \ref{rem:Cauchy_development:fine_D-stable_covers}.)
We will now show that the functor
\begin{flalign}\label{eqn:HKdescentmaptmp}
\HK(M)~\longrightarrow~\HK(\U)
\end{flalign}
to the descent category is not fully faithful, which in particular implies that it can not 
be an equivalence. For this we consider two specific objects $\AAA,\BBB\in \HK(M)$ 
which are defined by
\begin{flalign}\label{eqn:AAABBBexample}
\AAA(U)\,:=\, \begin{cases}
A &~,~~\text{if }U=M\\
I &~,~~\text{if }U\subset M
\end{cases}\quad,\qquad
\BBB(U)\,:=\, \begin{cases}
B &~,~~\text{if }U=M\\
I &~,~~\text{if }U\subset M
\end{cases}\quad,
\end{flalign}
for all causally convex opens $U\subseteq M$,
where $I\in \Alg_{\mathsf{uAs}}(\TT)$ denotes the initial algebra
and $A,B\in \Alg_{\mathsf{uAs}}(\TT)$ are the algebras from our hypotheses.
We endow $\AAA$ and $\BBB$ with the AQFT structures which are
uniquely determined by the universal property of the initial algebra.
\sk

Since every $U_i\subset M$ is a proper subset, the two AQFTs $\AAA,\BBB\in\HK(M)$ defined above
are mapped via the functor \eqref{eqn:HKdescentmaptmp} to the same object in the descent category.
In the model for the descent category from Remark \ref{rem:descentexplicit}, this object reads
as $\mathfrak{I}:=(\{\mathfrak{I}_{U_i}\},\{\id_{\mathfrak{I}_{U_{ij}}}\})\in \HK(\U)$, where
$\mathfrak{I}_{U_i}\in \HK(U_i)$ denotes the initial object, i.e.\ the constant 
AQFT sending all causally convex opens $U\subseteq U_i$ to the initial algebra. Acting with the functor
\eqref{eqn:HKdescentmaptmp} on $\Hom$-sets, we obtain
\begin{flalign}
\Hom_{\HK(M)}(\AAA,\BBB)\,\cong\,\Hom_{\Alg_{\mathsf{uAs}}(\TT)}(A,B)~
\longrightarrow~\Hom_{\HK(\U)}(\mathfrak{I},\mathfrak{I})\quad.
\end{flalign}
Since $\Hom_{\HK(\U)}(\mathfrak{I},\mathfrak{I})$ is a singleton
and $\Hom_{\Alg_{\mathsf{uAs}}(\TT)}(A,B)$ is by our hypotheses not a singleton,
this map can not be a bijection. In particular, the functor
\eqref{eqn:HKdescentmaptmp} fails to be full when 
$\Hom_{\HK(M)}(\AAA,\BBB) = \emptyset$ is empty 
and it fails to be faithful when $\Hom_{\HK(M)}(\AAA,\BBB)$ contains more than one element.
\end{proof}

\subsubsection{Comparison to locally covariant AQFT}
Our next aim is to explain how the Haag-Kastler $2$-functor from 
Definition \ref{def:HK2functor} is related to
locally covariant AQFT \cite{BFV,FewsterVerch}. For this we recall
the following standard concept.
\begin{defi}\label{def:HKpoints}
The \textit{category of points} of the Haag-Kastler $2$-functor
is defined as the category
\begin{flalign}
\HK(\mathrm{pt})\,:=\,\Hom(\Delta\mathbf{1},\HK)\,\in\,\CAT
\end{flalign}
of pseudo-natural transformations from the constant
$2$-functor $\Delta\mathbf{1} : \Loc^\op\to \CAT$ to $\HK:\Loc^\op\to\CAT$ 
and their modifications, where we recall that
$\mathbf{1}\in\CAT$ denotes the category consisting of a single object 
and its identity morphism. 
\end{defi}

\begin{rem}\label{rem:HKpoints}
Spelling out the definitions of pseudo-natural transformations
and modifications (see e.g.\ \cite{Yau2Cats}) and using that a functor $\mathbf{1}\to \DD$
from the one-object category to any category $\DD$ is the same datum as an object in $\DD$,
one obtains the following explicit description of the category of points $\HK(\mathrm{pt})$:
\begin{itemize}
\item An object in $\HK(\mathrm{pt})$ is a tuple $(\{\AAA_M\},\{\alpha_f\})$
consisting of a family of objects $\AAA_M\in\HK(M)$, for all $M\in\Loc$,
and a family of isomorphisms $\alpha_f : \AAA_M\Rightarrow f^\ast(\AAA_N)$ in $\HK(M)$,
for all $\Loc$-morphisms $f : M\to N$, which satisfies the following conditions:
\begin{subequations}\label{eqn:HKptconditions}
\begin{itemize}
\item[(i)]
For all composable $\Loc$-morphisms $f : M\to N$ and $g : N\to O$, the diagram
\begin{flalign}
\begin{gathered}
\xymatrix@C=3.5em{
\ar@{=>}[d]_-{\alpha_{gf}}\AAA_M \ar@{=>}[r]^-{\alpha_f}~&~ f^\ast(\AAA_N)\ar@{=>}[d]^-{f^\ast(\alpha_g)}\\
(g\,f)^\ast(\AAA_O)\ar@{=}[r]~&~f^\ast g^\ast(\AAA_O)
}
\end{gathered}
\end{flalign}
in $\HK(M)$ commutes.

\item[(ii)] For all objects $M\in\Loc$, the diagram
\begin{flalign}
\begin{gathered}
\xymatrix@C=3.5em{
\ar@{=>}[dr]_-{\id_{\AAA_M}}\AAA_M \ar@{=>}[r]^-{\alpha_{\id_M}}~&~\id_M^\ast(\AAA_M)\ar@{=}[d]\\
~&~\AAA_M
}
\end{gathered}
\end{flalign}
in $\HK(M)$ commutes.
\end{itemize}
\end{subequations}

\item A morphism $\{\zeta_M\} : (\{\AAA_M\},\{\alpha_f\})\Rightarrow (\{\BBB_M\},\{\beta_f\})$ in $\HK(\mathrm{pt})$
is a family of morphisms $\zeta_M :\AAA_M\Rightarrow \BBB_M$ in $\HK(M)$, for all $M\in\Loc$, 
which satisfies the following condition: For all $\Loc$-morphisms $f : M\to N$, the diagram
\begin{flalign}\label{eqn:HKmorphisms}
\begin{gathered}
\xymatrix@C=3.5em{
\ar@{=>}[d]_-{\alpha_f} \AAA_M \ar@{=>}[r]^-{\zeta_M}~&~ \BBB_M\ar@{=>}[d]^-{\beta_f}\\
 f^\ast(\AAA_N) \ar@{=>}[r]_-{f^\ast(\zeta_N)}~&~ f^\ast(\BBB_N)\\
}
\end{gathered}
\end{flalign}
in $\HK(M)$ commutes.
\end{itemize}
In simple words, this description shows that points of the Haag-Kastler $2$-functor
are natural families of Haag-Kastler-style AQFTs over all $M\in\Loc$.
\end{rem}

In order to set up a comparison between the category of points $\HK(\mathrm{pt})$
and the category of locally covariant AQFTs $\AQFT(\ovr{\Loc})$, we recall from 
Example \ref{ex:OCat} the orthogonal functors 
\begin{flalign}\label{eqn:jMfunctor}
k_M \,:\, \ovr{\COpen(M)}~\longrightarrow~ \ovr{\Loc}~~,\quad U\subseteq M~\longmapsto~U\quad,
\end{flalign} 
for all $M\in\Loc$. Given any $\Loc$-morphism $f:M\to N$, we can compose
the orthogonal functor $f : \ovr{\COpen(M)}\to \ovr{\COpen(N)}$ from \eqref{eqn:fpullback}
with the orthogonal functor $k_N : \ovr{\COpen(N)}\to \ovr{\Loc}$ and define a natural isomorphism
\begin{subequations}\label{eqn:jftransformation}
\begin{flalign}
k_f \,:\, k_M~\Longrightarrow~k_N\,f
\end{flalign}
of orthogonal functors from $\ovr{\COpen(M)}$ to $\ovr{\Loc}$ in terms of the component $\Loc$-isomorphisms
\begin{flalign}
(k_f)_U\,:=\, f\vert_U \,:\,U~\stackrel{\cong}{\longrightarrow} ~f(U)
\end{flalign}
\end{subequations}
which are obtained by restricting and corestricting the given $\Loc$-morphism $f:M\to N$,
for all causally convex opens $U\subseteq M$.
\begin{constr}\label{constr:LCQFTtoHK}
We define a functor
\begin{flalign}\label{eqn:LCQFTtoHK}
\AQFT(\ovr{\Loc})~\longrightarrow~\HK(\mathrm{pt})
\end{flalign}
from the category of locally covariant AQFTs to the category of points of the Haag-Kastler $2$-functor.
To an object $\AAA\in\AQFT(\ovr{\Loc})$, this functor assigns the tuple
\begin{flalign}
\Big(\big\{k_M^\ast(\AAA):= \AAA\,k_M\big\},
\big\{\AAA\,k_f \,:\, k_M^\ast(\AAA) = \AAA\,k_M\,\Rightarrow\,\AAA\,k_N\,f = f^\ast k_N^\ast (\AAA)\big\}\Big)
\,\in\,\HK(\mathrm{pt})
\end{flalign}
which is obtained by applying the pullback construction \eqref{eqn:Fpullbackfunctor} 
to \eqref{eqn:jMfunctor} and whiskering with \eqref{eqn:jftransformation}. One directly checks that
this tuple satisfies the compatibility conditions \eqref{eqn:HKptconditions} 
from Remark \ref{rem:HKpoints}.
To a morphism $\zeta : \AAA\Rightarrow  \BBB$ in $\AQFT(\ovr{\Loc})$, this functor assigns
the tuple
\begin{flalign}
\Big\{k_M^\ast(\zeta)\,:\, k_M^\ast(\AAA) \,\Rightarrow\, k_M^\ast(\BBB) \Big\}\,:\,
\big(\big\{k_M^\ast(\AAA)\big\} , \big\{\AAA\,k_f\big\}\big)~\Longrightarrow~
\big(\big\{k_M^\ast(\BBB)\big\} , \big\{\BBB\,k_f\big\}\big)
\end{flalign}
which is obtained by applying the pullback construction \eqref{eqn:Fpullbackfunctor} 
to \eqref{eqn:jMfunctor}.
One directly checks that this tuple satisfies the compatibility 
conditions \eqref{eqn:HKmorphisms} from Remark \ref{rem:HKpoints}.
\end{constr}

\begin{constr}\label{constr:HKtoLCQFT}
We now define a functor
\begin{flalign}\label{eqn:HKtoLCQFT}
\HK(\mathrm{pt})~\longrightarrow~\AQFT(\ovr{\Loc})
\end{flalign}
which goes in the reverse direction of Construction \ref{constr:LCQFTtoHK}.
To an object $(\{\AAA_M\},\{\alpha_f\})\in \HK(\mathrm{pt})$, this functor assigns the locally covariant 
AQFT $\AAA\in\AQFT(\ovr{\Loc})$ which is defined by the following functor $\AAA : \Loc\to\Alg_{\mathsf{uAs}}(\TT)$:
To an object $M\in\Loc$, we assign the evaluation
\begin{subequations}\label{eqn:AAAconstruction}
\begin{flalign}
\AAA(M)\,:=\, \AAA_M(M)\,\in\,\Alg_{\mathsf{uAs}}(\TT)
\end{flalign}
of the corresponding Haag-Kastler-style AQFT $\AAA_M\in\HK(M)$ on the terminal object $M\in\COpen(M)$. 
To a $\Loc$-morphism $f:M\to N$, 
we assign the $\Alg_{\mathsf{uAs}}(\TT)$-morphism defined by
\begin{flalign}
\xymatrix@C=2.5em{
\AAA(f) \,:\, \AAA(M) \,=\,\AAA_M(M) \ar[r]^-{(\alpha_f)_M^{}} ~&~
\AAA_N(f(M))\ar[r]~ &~\AAA_N(N)\,= \,\AAA(N)
}\quad,
\end{flalign}
\end{subequations}
where the second arrow uses the functorial structure of $\AAA_N\in \HK(N)$
for the inclusion $f(M)\subseteq N$. One can immediately verify that the functor 
$\AAA :\Loc\to \Alg_{\mathsf{uAs}}(\TT)$ 
satisfies the $\perp$-commutativity axiom 
from Definition \ref{def:AQFT} by observing that the diagram
\begin{equation}\label{eqn:HK2LCQFTcausality}
\begin{tikzcd}
	{\mathfrak{A}_{M_1}(M_1) \otimes \mathfrak{A}_{M_2}(M_2)} \\
	& {\mathfrak{A}_{N}(f_1(M_1)) \otimes \mathfrak{A}_N(f_2(M_2))} & {\mathfrak{A}_{N}(N) \otimes \mathfrak{A}_{N}(N)} \\
	& {\mathfrak{A}_{N}(N) \otimes \mathfrak{A}_{N}(N)} & {\mathfrak{A}_{N}(N)}
	\arrow[from=2-2, to=2-3]
	\arrow[from=2-2, to=3-2]
	\arrow["{\mu_N^{\mathrm{op}}}"', from=3-2, to=3-3]
	\arrow["{\mu_N^{}}", from=2-3, to=3-3]
	\arrow["{(\alpha_{f_1})_{M_1} \otimes (\alpha_{f_2})_{M_2}}"{description}, from=1-1, to=2-2]
	\arrow["{\mathfrak{A}(f_1) \otimes \mathfrak{A}(f_2)}"', curve={height=18pt}, from=1-1, to=3-2]
	\arrow["{\mathfrak{A}(f_1) \otimes \mathfrak{A}(f_2)}", curve={height=-18pt}, from=1-1, to=2-3]
\end{tikzcd}
\end{equation}
in $\TT$ commutes, for all orthogonal pairs $(f_1:M_1\to N)\perp (f_2:M_2\to N)$ 
in $\ovr{\Loc}$. This follows from the $\perp$-commutativity axiom of $\AAA_N\in\HK(N)$
and the orthogonal pair $(f_1(M_1)\subseteq N)\perp (f_2(M_2)\subseteq N)$ in $\ovr{\COpen(N)}$.
\sk

To a morphism $\{\zeta_M\} :(\{\AAA_M\},\{\alpha_f\})\Rightarrow (\{\BBB_M\},\{\beta_f\})$ in $\HK(\mathrm{pt})$, 
the functor \eqref{eqn:HKtoLCQFT} assigns the morphism $\zeta : \AAA\Rightarrow \BBB$ 
of locally covariant AQFTs which is defined by the following components
\begin{flalign}
\xymatrix@C=3em{
\AAA(M) \,=\,\AAA_M(M) \ar[r]^-{(\zeta_M)_M^{}}~&~\BBB_M(M)\,=\, \BBB(M)\quad,
}
\end{flalign}
for all $M\in\Loc$.
Recalling \eqref{eqn:HKmorphisms} and \eqref{eqn:AAAconstruction}, one obtains
the commutative diagrams
\begin{flalign}
\begin{gathered}
\xymatrix@C=3em{
\ar[d]_-{(\zeta_M)_M^{}}\AAA_M(M) \ar[r]^-{(\alpha_f)_M^{}}~&~ \ar[d]_-{(\zeta_N)_{f(M)}} \AAA_N(f(M)) 
\ar[r]~&~ \AAA_N(N)\ar[d]^-{(\zeta_{N})_N^{}}\\
\BBB_M(M) \ar[r]_-{(\beta_f)_M^{}}~&~\BBB_N(f(M)) \ar[r] ~&~ \BBB_N(N)
}
\end{gathered}
\end{flalign}
in $\Alg_{\mathsf{uAs}}(\TT)$, for all $\Loc$-morphisms $f:M\to N$,
which confirm that $\zeta$ as defined above is a natural transformation. 
\end{constr}

\begin{theo}\label{theo:LCQFTvsHK}
The two functors defined in Constructions \ref{constr:LCQFTtoHK} and \ref{constr:HKtoLCQFT}
are quasi-inverse to each other. Hence, they exhibit an equivalence
\begin{flalign}
\HK(\mathrm{pt})~\simeq~\AQFT(\ovr{\Loc})
\end{flalign}
between the category
of points $\HK(\mathrm{pt})$ of the Haag-Kastler $2$-functor
and the category  $\AQFT(\ovr{\Loc})$ of locally covariant AQFTs.
\end{theo}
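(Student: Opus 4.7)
The plan is to show that both composites of the two functors are naturally isomorphic to the respective identity functors, by explicit inspection of the definitions. The forward-backward composite will turn out to be strictly the identity, whereas the backward-forward composite will require constructing a non-trivial natural isomorphism whose components are provided by the pseudo-naturality data $\{\alpha_f\}$.

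First, I verify that $\AQFT(\ovr{\Loc}) \to \HK(\mathrm{pt}) \to \AQFT(\ovr{\Loc})$ is strictly the identity. Starting from $\AAA \in \AQFT(\ovr{\Loc})$ and applying Construction \ref{constr:LCQFTtoHK} then Construction \ref{constr:HKtoLCQFT}, I use that $k_M(M) = M$ to get $(k_M^\ast\AAA)(M) = \AAA(M)$ on objects. On a $\Loc$-morphism $f : M \to N$, the composite $\AAA_N(\iota_{f(M)}^N) \circ (\AAA\,k_f)_M$ unfolds to $\AAA(\iota_{f(M)}^N) \circ \AAA(f\vert_M) = \AAA(\iota_{f(M)}^N \circ f\vert_M) = \AAA(f)$, using the factorization of $f$ through its image. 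An analogous computation on morphisms $\zeta : \AAA \Rightarrow \BBB$ gives back $\zeta$ strictly, since $(\zeta\,k_M)_M = \zeta_M$.

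The harder direction is to show that $\HK(\mathrm{pt}) \to \AQFT(\ovr{\Loc}) \to \HK(\mathrm{pt})$ is naturally isomorphic to the identity. Starting from $(\{\AAA_M\},\{\alpha_f\})$ and applying the two constructions in order, I obtain a new object $(\{k_M^\ast(\AAA)\},\{\AAA\,k_f\})$ whose underlying Haag-Kastler AQFT on $M$ satisfies $k_M^\ast(\AAA)(U) = \AAA_U(U)$ for each $U \in \COpen(M)$. To compare this with $\AAA_M(U)$, I use the isomorphism $\alpha_{\iota_U^M} : \AAA_U \Rightarrow (\iota_U^M)^\ast \AAA_M$ evaluated at $U \in \COpen(U)$, defining
\begin{equation*}
(\theta_M)_U \,:=\, (\alpha_{\iota_U^M})_U \,:\, k_M^\ast(\AAA)(U) \,=\, \AAA_U(U) ~\longrightarrow~ \AAA_M(U) \,.
\end{equation*}
This is a componentwise isomorphism; to see it assembles into a natural transformation $\theta_M : k_M^\ast(\AAA) \Rightarrow \AAA_M$ of functors $\COpen(M) \to \Alg_{\mathsf{uAs}}(\TT)$, I combine the naturality of $\alpha_{\iota_U^M}$ with the composition law \eqref{eqn:HKptconditions}(i) applied to $\iota_V^M = \iota_U^M \circ \iota_V^U$, which yields $(\alpha_{\iota_V^M})_V = (\alpha_{\iota_U^M})_V \circ (\alpha_{\iota_V^U})_V$. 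Since $\HK(M) = \AQFT(\ovr{\COpen(M)})$ is a full subcategory of $\Fun(\COpen(M),\Alg_{\mathsf{uAs}}(\TT))$, this natural transformation is automatically a morphism in $\HK(M)$.

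The main obstacle, though routine once the key identity is isolated, is verifying that $\{\theta_M\}$ defines a morphism in $\HK(\mathrm{pt})$ and that the assignment $(\{\AAA_M\},\{\alpha_f\}) \mapsto \{\theta_M\}$ is natural in its argument, i.e.\ satisfies diagram \eqref{eqn:HKmorphisms}. Both reduce to a single calculation using the composition law \eqref{eqn:HKptconditions}(i) twice on the factorization $f \circ \iota_U^M = \iota_{f(U)}^N \circ f\vert_U$ of $\Loc$-morphisms: applying (i) to either factorization and evaluating at $U \in \COpen(U)$ produces the two sides of the relevant square, forcing them to agree. Naturality in morphisms $\{\zeta_M\}$ of $\HK(\mathrm{pt})$ then follows from the morphism condition \eqref{eqn:HKmorphisms} evaluated at the inclusion $\iota_U^M$. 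Putting everything together yields the required quasi-inverse pair and the stated equivalence.
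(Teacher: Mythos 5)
Your proposal is correct and follows essentially the same route as the paper: the forward--backward composite is checked to be strictly the identity via the factorization $f = \iota_{f(M)}^N \circ f\vert_M$, and the backward--forward composite is compared to the identity through the natural isomorphism with components $(\alpha_{\iota_U^M})_U : \AAA_U(U) \to \AAA_M(U)$, which is exactly the isomorphism \eqref{tmp:HK2LCQFT2HKiso} used in the paper. Your verification of naturality via the two factorizations of $f\circ \iota_U^M$ and the cocycle condition \eqref{eqn:HKptconditions}(i) simply spells out the details the paper leaves to the reader.
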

\begin{proof}
One immediately checks that the composition 
$\AQFT(\ovr{\Loc})\to \HK(\mathrm{pt})\to \AQFT(\ovr{\Loc})$
of \eqref{eqn:LCQFTtoHK} followed by \eqref{eqn:HKtoLCQFT}
is the identity functor on $\AQFT(\ovr{\Loc})$.
\sk

Concerning the composition $\HK(\mathrm{pt})\to \AQFT(\ovr{\Loc})\to \HK(\mathrm{pt})$,
we apply first \eqref{eqn:HKtoLCQFT} and then \eqref{eqn:LCQFTtoHK}
to any object $(\{\AAA_M\},\{\alpha_f\})\in \HK(\mathrm{pt})$, which results 
in the object
\begin{subequations}
\begin{flalign}
\big(\{k_M^\ast(\AAA)\},\{k_M^\ast(\AAA)\Rightarrow f^\ast k_N^\ast(\AAA)\} \big)\,\in\,\HK(\mathrm{pt})\quad.
\end{flalign}
More explicitly, the family of objects is specified by
\begin{flalign}
k_M^\ast(\AAA)(U)\,=\, \AAA_U(U)\quad,
\end{flalign}
for all $M\in\Loc$ and all causally convex opens $U\subseteq M$, and the family 
of isomorphisms by
\begin{flalign}
\begin{gathered}
\xymatrix@C=3.5em{
\ar@{=}[d] k_M^\ast(\AAA)(U) \ar[r]~&~f^\ast k_N^\ast(\AAA)(U) \ar@{=}[d]\\
\AAA_U(U)\ar[r]_-{(\alpha_{f\vert_U})_U^{}}~&~ \AAA_{f(U)}(f(U))
}
\end{gathered}\quad,
\end{flalign}
\end{subequations}
for all $\Loc$-morphisms $f:M\to N$ and all causally convex opens $U\subseteq M$,
where $f\vert_U : U\to f(U)$ denotes the $\Loc$-isomorphism obtained by restricting and corestricting $f$.
There exists an isomorphism in $\HK(\mathrm{pt})$ from this object to the original object,
which is given by the components
\begin{flalign}\label{tmp:HK2LCQFT2HKiso}
\xymatrix@C=3em{
\AAA_U(U) \ar[r]^-{(\alpha_{\iota_U^M})_U}~&~\AAA_M(U)
}\quad,
\end{flalign}
for all $M\in\Loc$ and all causally convex opens $U\subseteq M$, where 
$\iota_U^M : U\to M$ denotes the canonical inclusion $\Loc$-morphism.
From this one shows that the composition of functors $\HK(\mathrm{pt})\to\AQFT(\ovr{\Loc})\to \HK(\mathrm{pt})$
is naturally isomorphic to the identity functor on  $\HK(\mathrm{pt})$.
\end{proof}

\begin{rem}\label{rem:CastHKvLCQFT}
We would like to highlight that the result of 
Theorem \ref{theo:LCQFTvsHK} adapts to the case
where one considers AQFTs taking values in the category 
of $C^\ast$-algebras. This is a direct consequence of
the fact that the Constructions \ref{constr:LCQFTtoHK} 
and \ref{constr:HKtoLCQFT} do not make use of any operadic technology
and hence they carry over ad verbum to $C^\ast$-algebras.
\end{rem}

\begin{rem}\label{rem:(op)laxpoints}
The Equivalence Theorem \ref{theo:LCQFTvsHK} 
does not only show that locally covariant AQFTs are subsumed
in our Haag-Kastler $2$-functor framework, but it also
provides some natural directions for generalizations. 
For instance, one may consider generalizations of the concept
of points from Definition \ref{def:HKpoints} which are given
by oplax or lax transformations $\AAA : \Delta \mathbf{1}\Rightarrow\HK$.
Concretely, this amounts to families $\{\AAA_M\}$ of Haag-Kastler theories
related by \textit{not necessarily invertible} morphisms $\alpha_f : \AAA_M\Rightarrow f^\ast(\AAA_N)$
in the oplax case or morphisms $\tilde{\alpha}_f :f^\ast(\AAA_N)\Rightarrow \AAA_M$
in the lax case, which satisfy similar cocycle conditions as the ones in Remark \ref{rem:HKpoints}.
It is interesting to observe that oplax points already appeared implicitly in \cite[Section 6]{BDHS}
where a certain quotient construction for AQFTs was proposed with the aim to restore isotony 
for gauge invariant observables in a quantum gauge theory.
\sk

For completeness, let us provide an abstract version of this construction.
Let $\AAA\in\AQFT(\ovr{\Loc})$ be a locally covariant AQFT and consider
its associated point $(\{\AAA_M\},\{\alpha_f: \AAA_M \Rightarrow f^\ast(\AAA_N)\})$ 
of the Haag-Kastler $2$-functor which is given by Theorem \ref{theo:LCQFTvsHK}. 
We define a new family $\{\BBB_M\}$ of Haag-Kastler theories by taking the quotient algebras
\begin{flalign}
\BBB_M(U)\,:=\,\frac{\AAA_M(U)}{\ker\big(\AAA_M(U)\to \AAA_M(M)\big)}\quad,
\end{flalign}
for all $U\in \COpen(M)$. The isomorphisms $\alpha_f: \AAA_M \Rightarrow f^\ast(\AAA_N)$
descend to the quotients and thereby define morphisms $\beta_f: \BBB_M \Rightarrow f^\ast(\BBB_N)$
which are however in general \textit{not invertible}, see e.g.\ the example in \cite[Section 6]{BDHS}.
This defines an oplax point $(\{\BBB_M\},\{\beta_f: \BBB_M \Rightarrow f^\ast(\BBB_N)\})$ 
of the Haag-Kastler $2$-functor $\HK$.
\end{rem}

\subsubsection{Time-slice axiom}
We conclude this subsection with a brief study of the time-slice axiom,
which by Proposition \ref{prop:timeslice} can be implemented
either as an additional property or, equivalently, as a structure
through an orthogonal localization. In the present context, it
will be more convenient to regard the time-slice axiom as an additional property.
\begin{defi}\label{def:HK2functortimeslice}
The \textit{time-sliced Haag-Kastler $2$-functor}
is defined as the $2$-subfunctor $\HK^W\subseteq \HK$
of the Haag-Kastler $2$-functor from Definition \ref{def:HK2functor}
which assigns to every $M\in\Loc$ the full subcategory 
$\HK^W(M)\subseteq \HK(M)$ consisting of all Haag-Kastler-style AQFTs
on $M$ which satisfy the time-slice axiom.
\end{defi}

\begin{rem}\label{rem:HK2functortimeslice}
We note that Proposition \ref{prop:timeslice} and Example \ref{ex:localizations}
provide us with the following equivalent model for the time-sliced Haag-Kastler $2$-functor.
The assignment of the localized orthogonal categories 
$M\mapsto \ovr{\COpen(M)}[W_M^{-1}]$ from Example \ref{ex:localizations} is $2$-functorial
\begin{subequations}\label{eqn:COpenfunctorW}
\begin{flalign}
\ovr{\COpen(-)}[W_{(-)}^{-1}]\,:\,\Loc~\longrightarrow~\Cat^\perp
\end{flalign}
with action on $\Loc$-morphisms $f:M\to N$ given by
\begin{flalign}\label{eqn:COpenfunctorWf}
f_W\,:=\,\ovr{\COpen(f)}[W_{f}^{-1}] \,:\, \ovr{\COpen(M)}[W_{M}^{-1}]~&\longrightarrow~\ovr{\COpen(N)}[W_N^{-1}]
\quad, \\
\nn U\subseteq M ~&\longmapsto~f(U)\subseteq N
\quad,  \\
\nn (U\to V) ~&\longmapsto~\big(f(U)\to f(V)\big)
\quad,
\end{flalign}
\end{subequations}
see also Appendix \ref{app:localization}.
Replacing in Definition \ref{def:HK2functor} the $2$-functor $\ovr{\COpen(-)}$ by
$\ovr{\COpen(-)}[W_{(-)}^{-1}]$, one obtains an equivalent model for the time-sliced Haag-Kastler $2$-functor
which, with abuse of notation, we denote by the same symbol $\HK^W : \Loc^\op\to \CAT$. Explicitly, this 
$2$-functor assigns to an object $M\in\Loc$ the category 
\begin{subequations}
\begin{flalign}
\HK^W(M)\,=\,\AQFT\big(\ovr{\COpen(M)}[W_M^{-1}]\big)\,\in\,\CAT
\end{flalign}
of AQFTs on the orthogonal localization $\ovr{\COpen(M)}[W_M^{-1}]$ 
and to a $\Loc$-morphism $f:M\to N$ the pullback functor
\begin{flalign}
\begin{gathered}
\xymatrix@R=1em{
\ar@{=}[d]\HK^W(N) \ar[r]^-{\HK^W(f)\,:=\, f_W^\ast} ~&~ \HK^W(M) \ar@{=}[d]\\
\AQFT\big(\ovr{\COpen(N)}[W_N^{-1}]\big) \ar[r]_-{f_W^\ast}~&~ \AQFT\big(\ovr{\COpen(M)}[W_M^{-1}]\big)
}
\end{gathered}
\end{flalign}
\end{subequations}
associated to the orthogonal functor \eqref{eqn:COpenfunctorWf}.
The equivalence between this model and the one in Definition \ref{def:HK2functortimeslice}
is implemented as in Proposition \ref{prop:timeslice} by pullbacks along the orthogonal
localization functors $L_M : \ovr{\COpen(M)}\to \ovr{\COpen(M)}[W_M^{-1}]$, for all $M\in\Loc$.
\end{rem}

The result of Proposition \ref{prop:HKnotastack} remains valid in the present case.
\begin{propo}\label{prop:HKtimeslicenotastack}
Suppose that the category of algebras $\Alg_{\mathsf{uAs}}(\TT)$ has two objects
$A,B\in \Alg_{\mathsf{uAs}}(\TT)$ for which the $\Hom$-set $\Hom(A,B)$ is not a 
singleton. Then the time-sliced Haag-Kastler $2$-functor $\HK^W$ from Definition 
\ref{def:HK2functortimeslice} is not a stack with respect to 
either Grothendieck topology from Definition \ref{def:stack} on $\Loc$.
It is not even a prestack.
\end{propo}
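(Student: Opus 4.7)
The plan is to mirror closely the proof of Proposition \ref{prop:HKnotastack}, but to replace the test AQFTs $\AAA$, $\BBB$ from \eqref{eqn:AAABBBexample} by time-slice compatible variants. The naive choice fails time-slice because any proper $U\subsetneq M$ that happens to contain a Cauchy surface of $M$ produces a Cauchy morphism $U\to M$ whose image $I\to A$ is generally not an isomorphism. My replacement is to let the algebra value be controlled by the Cauchy development in $M$ rather than by equality with $M$:
\begin{flalign*}
\AAA(U)\,:=\,\begin{cases} A & \text{if } D_M(U) = M\\ I & \text{otherwise}\end{cases}\quad,\quad \BBB(U)\,:=\,\begin{cases} B & \text{if } D_M(U) = M\\ I & \text{otherwise}\end{cases}\quad,
\end{flalign*}
with algebra morphisms assembled from $\id_A$, $\id_B$, $\id_I$, and the canonical unit maps $I\to A$ and $I\to B$.

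Next, I would verify that $\AAA$, $\BBB$ lie in $\HK^W(M)$, which is the technical heart of the argument. For the $\perp$-commutativity axiom, the critical Lorentzian input is that two non-empty causally disjoint subsets of $M$ cannot both have Cauchy development equal to $M$: each would otherwise contain a Cauchy surface of $M$, which would then have to meet every inextendible timelike curve through any point of the other, contradicting causal disjointness. Hence, for any orthogonal pair $U_1 \perp U_2 \subseteq V$, at most one of $\AAA(U_i\to V)$ is non-trivial, while the other factors through a unit, and the two multiplications $\mu^{(\op)}_{\AAA(V)}$ agree on any tensor involving a unit. For the time-slice axiom, the only potentially problematic Cauchy morphism is $U\subseteq V$ with $D_M(U)\neq M$ but $D_M(V)=M$; I would rule this case out by showing that, by causal convexity of $V$ in $M$, any inextendible-in-$M$ causal curve meets $V$ in a sub-curve that is inextendible in $V$, which must then meet $U$ since $U$ is Cauchy in $V$, so $D_M(U)=M$ after all.

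With $\AAA$, $\BBB \in \HK^W(M)$ established, the rest of the argument follows Proposition \ref{prop:HKnotastack}. I would choose a $D$-stable causally convex open cover $\U=\{U_i\subseteq M\}$ whose elements are proper subsets (whence $D_M(U_i)=U_i\subsetneq M$, with the same property inherited by all non-empty finite intersections), as supplied by Proposition \ref{prop:Cauchy_development:small_D-stable_neighbourhoods} and Remark \ref{rem:Cauchy_development:fine_D-stable_covers}. The pullbacks of $\AAA$ and $\BBB$ along each inclusion $U_i\hookrightarrow M$ are then constantly $I$, i.e.\ the initial object in $\HK^W(U_i)$, so both map to the same initial object $\mathfrak{I}\in\HK^W(\U)$ in the descent category. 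A direct naturality analysis in $\HK^W(M)$ yields $\Hom_{\HK^W(M)}(\AAA,\BBB)\cong\Hom_{\Alg_{\mathsf{uAs}}(\TT)}(A,B)$, while $\Hom_{\HK^W(\U)}(\mathfrak{I},\mathfrak{I})$ is a singleton. Since $\Hom(A,B)$ is by hypothesis not a singleton, the descent functor fails full faithfulness, ruling out the prestack (hence stack) condition for both Grothendieck topologies simultaneously, thanks to the chosen cover being $D$-stable. The main obstacle is the pair of Lorentzian-geometric verifications in the second step; these are standard consequences of global hyperbolicity and causal convexity, but are precisely what distinguishes this argument from the simpler Proposition \ref{prop:HKnotastack}.
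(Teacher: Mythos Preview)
Your proposal is correct and follows essentially the same approach as the paper. Your condition $D_M(U)=M$ is equivalent, for causally convex open $U\subseteq M$, to the paper's condition that $U$ contains a Cauchy surface of $M$; with this identification your test objects $\AAA,\BBB$ coincide with the paper's, and your choice of a $D$-stable cover by proper subsets matches the paper's requirement that no $U_i$ contain a Cauchy surface of $M$. You spell out the $\perp$-commutativity and time-slice verifications in more detail than the paper does, but the argument is otherwise the same.
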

\begin{proof}
The proof is very similar to one of Proposition \ref{prop:HKnotastack}.
The only differences are: 1.)~We take a ($D$-stable) causally convex open cover $\U=\{U_i\subseteq M\}$
such that every $U_i\subseteq M$ does not contain a Cauchy surface of $M$.
2.)~Instead of the objects $\AAA$ and $\BBB$ from \eqref{eqn:AAABBBexample},
we consider the objects $\AAA,\BBB\in \HK^W(M)$ 
which are defined by
\begin{subequations}
\begin{flalign}
\AAA(U)\,&:=\, \begin{cases}
A &~,~~\text{if $U$ contains a Cauchy surface of $M$}\\
I &~,~~\text{otherwise}
\end{cases}\quad,\\
\BBB(U)\,&:=\, \begin{cases}
B &~,~~\text{if $U$ contains a Cauchy surface of $M$}\\
I &~,~~\text{otherwise}
\end{cases}\quad,
\end{flalign}
\end{subequations}
for all causally convex opens $U\subseteq M$. 
We endow $\AAA$ and $\BBB$ with the AQFT structures which are
defined by the universal property of the initial algebra
and the identity morphisms $\id_A : A\to A$ and $\id_B : B\to B$.
As in the proof of Proposition \ref{prop:HKnotastack},
one then shows that the canonical functor $\HK^W(M) \to \HK^W(\mathcal{U})$ 
to the descent category is not fully faithful
by using these $\AAA$ and $\BBB$.
\end{proof}

Our Comparison Theorem \ref{theo:LCQFTvsHK} between locally covariant AQFTs and
points of the Haag-Kastler $2$-functor adapts to the case where
all AQFTs satisfy their relevant time-slice axiom. Indeed, by direct inspection,
one verifies that the functors from Constructions \ref{constr:LCQFTtoHK} and \ref{constr:HKtoLCQFT} 
preserve the time-slice axioms, hence they induce functors 
\begin{flalign}\label{eqn:localizedHKvsLCQFT}
\AQFT(\ovr{\Loc})^W~\longrightarrow~\HK^W(\mathrm{pt})\quad,\qquad
\HK^W(\mathrm{pt})~\longrightarrow~\AQFT(\ovr{\Loc})^W
\end{flalign}
between the category $\AQFT(\ovr{\Loc})^W$ of locally covariant AQFTs satisfying the time-slice axiom
and the category of points $\HK^W(\mathrm{pt})$ of the time-sliced Haag-Kastler $2$-functor. 
As a consequence of Theorem \ref{theo:LCQFTvsHK}, we then obtain the following result.
\begin{cor}\label{cor:LCQFTvsHKW}
The two functors in \eqref{eqn:localizedHKvsLCQFT} exhibit an equivalence
\begin{flalign}
\HK^W(\mathrm{pt})~\simeq~\AQFT(\ovr{\Loc})^W
\end{flalign}
between the category of points $\HK^W(\mathrm{pt})$ of the time-sliced Haag-Kastler $2$-functor
and the category $\AQFT(\ovr{\Loc})^W$ of locally covariant AQFTs satisfying the time-slice axiom.
\end{cor}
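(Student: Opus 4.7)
The plan is to leverage Theorem \ref{theo:LCQFTvsHK} directly and reduce the proof to showing that the two functors from Constructions \ref{constr:LCQFTtoHK} and \ref{constr:HKtoLCQFT} restrict to the full subcategories $\AQFT(\ovr{\Loc})^W \subseteq \AQFT(\ovr{\Loc})$ and $\HK^W(\mathrm{pt}) \subseteq \HK(\mathrm{pt})$. Since Theorem \ref{theo:LCQFTvsHK} already establishes that these functors are quasi-inverse to each other, once we verify that they send time-sliced objects to time-sliced objects in both directions, the induced functors \eqref{eqn:localizedHKvsLCQFT} will automatically remain quasi-inverse when restricted, giving the desired equivalence.

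The first step is to check that the functor $\AQFT(\ovr{\Loc}) \to \HK(\mathrm{pt})$ from Construction \ref{constr:LCQFTtoHK} sends $\AQFT(\ovr{\Loc})^W$ into $\HK^W(\mathrm{pt})$. Given $\AAA \in \AQFT(\ovr{\Loc})^W$, the associated point has components $\AAA_M = k_M^\ast(\AAA) = \AAA \circ k_M$. The key observation is that the functor $k_M : \COpen(M) \to \Loc$ sends Cauchy morphisms to Cauchy morphisms: an inclusion $U \subseteq V$ of causally convex opens in $M$ for which $U$ contains a Cauchy surface of $V$ is mapped to the $\Loc$-morphism $\iota_U^V : U \to V$, whose image clearly contains the same Cauchy surface of $V$. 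Hence $\AAA_M$ sends all Cauchy morphisms in $\COpen(M)$ to isomorphisms, so $\AAA_M \in \HK^W(M)$ for every $M \in \Loc$.

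The second step, for the reverse direction, is to check that the functor $\HK(\mathrm{pt}) \to \AQFT(\ovr{\Loc})$ from Construction \ref{constr:HKtoLCQFT} sends $\HK^W(\mathrm{pt})$ into $\AQFT(\ovr{\Loc})^W$. Given $(\{\AAA_M\},\{\alpha_f\}) \in \HK^W(\mathrm{pt})$, the associated locally covariant AQFT is defined by $\AAA(M) = \AAA_M(M)$ with action on a $\Loc$-morphism $f: M \to N$ given by the composition of $(\alpha_f)_M : \AAA_M(M) \to \AAA_N(f(M))$ with the image of the inclusion $f(M) \subseteq N$ under $\AAA_N$, as in \eqref{eqn:AAAconstruction}. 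When $f$ is a Cauchy morphism in $\Loc$, the image $f(M) \subseteq N$ contains a Cauchy surface of $N$, so the inclusion $f(M) \subseteq N$ is a Cauchy morphism in $\COpen(N)$. The time-slice axiom of $\AAA_N \in \HK^W(N)$ then ensures that $\AAA_N(f(M)) \to \AAA_N(N)$ is an isomorphism, and combined with the isomorphism $(\alpha_f)_M$ this proves that $\AAA(f)$ is an isomorphism. Hence $\AAA \in \AQFT(\ovr{\Loc})^W$.

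These two observations show that the equivalence from Theorem \ref{theo:LCQFTvsHK} restricts to the full subcategories $\AQFT(\ovr{\Loc})^W$ and $\HK^W(\mathrm{pt})$, yielding the claimed equivalence. No step here is genuinely hard; the only point requiring care is the verification in step two that Cauchy morphisms $f : M \to N$ in $\Loc$ induce Cauchy morphisms $f(M) \subseteq N$ in $\COpen(N)$, which follows immediately from the definition of Cauchy morphism in terms of images containing Cauchy surfaces of the codomain.
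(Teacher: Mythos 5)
Your proposal is correct and follows the same route as the paper: the paper asserts that "by direct inspection" the functors of Constructions \ref{constr:LCQFTtoHK} and \ref{constr:HKtoLCQFT} preserve the time-slice axioms and then invokes Theorem \ref{theo:LCQFTvsHK}, and your two verification steps (that $k_M$ sends Cauchy morphisms in $\COpen(M)$ to Cauchy morphisms in $\Loc$, and that a Cauchy $\Loc$-morphism $f:M\to N$ yields a Cauchy inclusion $f(M)\subseteq N$ in $\COpen(N)$) are exactly the intended direct inspection.
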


\subsection{\label{subsec:HK2functorRC}The case of relatively compact causally convex opens}
In this subsection we describe the variants of the Haag-Kastler $2$-functor 
which are associated with Haag-Kastler-style AQFTs that are modeled on
the orthogonal categories $\ovr{\RC(M)}$ of relatively compact 
causally convex opens in $M\in\Loc$ from Example \ref{ex:OCat}.

\subsubsection{Definition and properties}
Analogously to the $2$-functor $\ovr{\COpen(-)} : \Loc\to\Cat^\perp$ 
from \eqref{eqn:COpenfunctor}, there exists a $2$-functor 
\begin{flalign}\label{eqn:RCfunctor}
\ovr{\RC(-)} \,:\, \Loc~\longrightarrow~\Cat^\perp\quad.
\end{flalign}
This $2$-functor assigns to each object $M\in\Loc$ the orthogonal category
$\ovr{\RC(M)}$ of relatively compact causally convex opens in $M$ 
and to each $\Loc$-morphism $f:M\to N$ the orthogonal
functor (denoted with abuse of notation by the same symbol $f$)
\begin{flalign}\label{eqn:RCfpullback}
f\,:=\,\ovr{\RC(f)} \,:\, \ovr{\RC(M)}~\longrightarrow~\ovr{\RC(N)}~~,\quad 
U\subseteq M ~\longmapsto~f(U)\subseteq N
\end{flalign}
which sends relatively compact causally convex opens in $M$ to their images under $f$ in $N$. 
\begin{defi}\label{def:RCHK2functor}
The \textit{relatively compact Haag-Kastler $2$-functor}
\begin{subequations}
\begin{flalign}
\HK^{\mathrm{rc}}\,:\, \Loc^\op~\longrightarrow~\CAT
\end{flalign}
is defined by assigning to each object $M\in\Loc$ the category
\begin{flalign}
\HK^{\mathrm{rc}}(M)\,:=\, \AQFT(\ovr{\RC(M)})\,\in\,\CAT
\end{flalign}
of relatively compact Haag-Kastler-style AQFTs on $M$ and 
to each $\Loc$-morphism $f : M\to N$ the pullback functor
\begin{flalign}
\begin{gathered}
\xymatrix@R=1em{
\ar@{=}[d]\HK^{\mathrm{rc}}(N) \ar[r]^-{\HK^{\mathrm{rc}}(f)\,:=\, f^\ast} ~&~ \HK^{\mathrm{rc}}(M) \ar@{=}[d]\\
\AQFT(\ovr{\RC(N)}) \ar[r]_-{f^\ast}~&~ \AQFT(\ovr{\RC(M)})
}
\end{gathered}
\end{flalign}
\end{subequations}
from \eqref{eqn:Fpullbackfunctor} which is 
associated to the orthogonal functor $f : \ovr{\RC(M)} \to \ovr{\RC(N)}$ in \eqref{eqn:RCfpullback}.
\end{defi}

The result of Proposition \ref{prop:HKnotastack} remains valid in the present case.
\begin{propo}\label{prop:RCHKnotastack}
Suppose that the category of algebras $\Alg_{\mathsf{uAs}}(\TT)$ has two objects
$A,B\in \Alg_{\mathsf{uAs}}(\TT)$ for which the $\Hom$-set $\Hom(A,B)$ is not a 
singleton. Then the relatively compact Haag-Kastler $2$-functor $\HK^{\mathrm{rc}}$ from Definition 
\ref{def:RCHK2functor} is not a stack with respect to 
either Grothendieck topology from Definition \ref{def:stack} on $\Loc$.
It is not even a prestack.
\end{propo}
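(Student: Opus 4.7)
The plan is to adapt the proof of Proposition~\ref{prop:HKnotastack} to the relatively compact setting, where the main obstacle is that typically $M \notin \RC(M)$, so two AQFTs cannot be separated by their values at $M$. The fix is to pick a ``sufficiently large'' element $V \in \RC(M)$ that is not contained in any member of the chosen cover $\U = \{U_i \subseteq M\}$ and use $V$ in place of $M$.

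First I would exhibit such an $M$, $\U$ and $V$ simultaneously. Concretely, take $M = \mathbb{R}^m$ to be Minkowski spacetime, cover it by relatively compact causally convex double cones of diameter at most some $\delta > 0$ (such double cones are $D$-stable, and the existence of such a ($D$-stable) cover is ensured by the results of Appendix~\ref{app:Lorentz_geometry_details}), and take $V$ to be a single double cone of diameter strictly larger than $\delta$, so that $V \not\subseteq U_i$ for every $i$. Next I would define $\AAA,\BBB \in \HK^{\mathrm{rc}}(M)$ by setting $\AAA(W) = A$, $\BBB(W) = B$ whenever $V \subseteq W$, and $\AAA(W) = \BBB(W) = I$ (the initial algebra) otherwise, with structure maps being identities in the $A$-$A$ or $B$-$B$ cases and the unique map out of the initial algebra in all the remaining cases; the a priori problematic case ``$A \to I$'' never occurs, since $V \subseteq W_1 \subseteq W_2$ forces $V \subseteq W_2$. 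For the $\perp$-commutativity axiom I would use that causally disjoint pairs in $\RC(M)$ have empty intersection as subsets, so at most one of $W_1, W_2$ can contain $V$, and the remaining verification reduces to unit centrality in $A$ or $B$.

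Finally I would compare both sides of the canonical functor $\HK^{\mathrm{rc}}(M) \to \HK^{\mathrm{rc}}(\U)$ on $\Hom$-sets using the explicit model for the descent category from Remark~\ref{rem:descentexplicit}. Since any $W \in \RC(U_i)$ sits inside $U_i$ and hence cannot contain $V$, both restrictions $\AAA|_{U_i}$ and $\BBB|_{U_i}$ agree with the initial object $\mathfrak{I}_{U_i}$ of $\HK^{\mathrm{rc}}(U_i)$, so $\AAA$ and $\BBB$ descend to the same trivial cocycle, whose endomorphism set is a singleton. On the other hand, a natural transformation $\zeta : \AAA \Rightarrow \BBB$ is determined by a single unital algebra morphism $\phi = \zeta_V \in \Hom_{\Alg_{\mathsf{uAs}}(\TT)}(A, B)$ (all other components being forced to be $\id_I$ or to agree with $\phi$ by naturality and unitality), yielding a bijection $\Hom_{\HK^{\mathrm{rc}}(M)}(\AAA,\BBB) \cong \Hom_{\Alg_{\mathsf{uAs}}(\TT)}(A,B)$, which is not a singleton by hypothesis. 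Hence the canonical functor fails to be fully faithful, which also rules out being a (pre)stack. The main technical hurdle lies in the geometric first step, namely the simultaneous construction of a small ($D$-stable) cover and of a large relatively compact straddling $V$, and is handled by Appendix~\ref{app:Lorentz_geometry_details}.
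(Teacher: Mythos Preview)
Your argument is correct, but it takes a genuinely different route from the paper's. The paper chooses $M\in\Loc$ with \emph{compact} Cauchy surfaces and defines $\AAA(U)=A$ (respectively $\BBB(U)=B$) precisely when $U$ contains a Cauchy surface of $M$, and $I$ otherwise; the cover is then chosen so that no $U_i$ contains a Cauchy surface of $M$. Your approach instead works in Minkowski spacetime (which has non-compact Cauchy surfaces) and replaces the condition ``contains a Cauchy surface of $M$'' by ``contains the fixed large double cone $V$''. Both constructions produce a pair $\AAA,\BBB\in\HK^{\mathrm{rc}}(M)$ which restrict to the initial object on every member of the cover, while $\Hom_{\HK^{\mathrm{rc}}(M)}(\AAA,\BBB)\cong\Hom_{\Alg_{\mathsf{uAs}}(\TT)}(A,B)$ is not a singleton.

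The trade-off is as follows. The paper's choice has the feature that the resulting $\AAA,\BBB$ automatically satisfy the time-slice axiom (a Cauchy inclusion $U\subseteq U'$ in $\RC(M)$ preserves the property of containing a Cauchy surface of $M$), so the subsequent Proposition~\ref{prop:RCHKtimeslicenotastack} follows with no extra work. Your $\AAA,\BBB$ do \emph{not} satisfy time-slice: taking $W_2=V$ and $W_1\subseteq V$ a thin time-slab gives a Cauchy morphism with $\AAA(W_1\subseteq W_2):I\to A$ not an isomorphism. On the other hand, your construction avoids the need to invoke spacetimes with compact Cauchy surfaces, and the geometry (small double cones covering Minkowski, one larger double cone not fitting in any of them) is entirely elementary.
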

\begin{proof}
The proof is very similar to the one of Proposition \ref{prop:HKtimeslicenotastack}.
The only difference is that we assume that $M\in \Loc$ admits compact Cauchy surfaces.
Under this hypothesis, the two objects $\AAA,\BBB\in\HK^{\mathrm{rc}}(M)$ defined by
\begin{subequations}
\begin{flalign}
\AAA(U)\,&:=\, \begin{cases}
A &~,~~\text{if $U$ contains a Cauchy surface of $M$}\\
I &~,~~\text{else}
\end{cases}\quad,\\
\BBB(U)\,&:=\, \begin{cases}
B &~,~~\text{if $U$ contains a Cauchy surface of $M$}\\
I &~,~~\text{else}
\end{cases}\quad,
\end{flalign}
\end{subequations}
for all relatively compact causally convex opens $U\subseteq M$,
differ from the initial object in $\HK^{\mathrm{rc}}(M)$ 
since there exist relatively compact causally convex opens $U\subseteq M$
which contain a Cauchy surface of $M$. (Consider for example time-slabs with a bounded time interval.)
As in the proof of Proposition \ref{prop:HKnotastack},
one then shows that the canonical functor $\HK^{\mathrm{rc}}(M) \to \HK^{\mathrm{rc}}(\mathcal{U})$ 
to the descent category is not fully faithful by using these $\AAA$ and $\BBB$.
\end{proof}

\subsubsection{\label{subsubsec:additivity}Comparison to additivity properties}
The canonical full orthogonal subcategory inclusions $i_M :\ovr{\RC(M)}\to \ovr{\COpen(M)}$
from Example \ref{ex:RCinCOpen},
for all $M\in\Loc$, assemble into a (strict) $2$-natural transformation
\begin{flalign}
i\,:\,\ovr{\RC(-)}~\Longrightarrow~\ovr{\COpen(-)}
\end{flalign}
between the $2$-functors defined in \eqref{eqn:RCfunctor} and \eqref{eqn:COpenfunctor}. This induces
via object-wise pullback a $2$-natural transformation
\begin{flalign}\label{eqn:ipullback}
i^\ast\,:\, \HK~\Longrightarrow~\HK^{\mathrm{rc}}
\end{flalign}
which allows us to compare the Haag-Kastler $2$-functor from Definition \ref{def:HK2functor}
with the relatively compact Haag-Kastler $2$-functor from Definition \ref{def:RCHK2functor}.

\paragraph{Additivity in Haag-Kastler-style AQFTs:}
It was shown in Example \ref{ex:RCinCOpen} that, restricting to any fixed object $M\in \Loc$,
the component
\begin{flalign}\label{eqn:HKrcepsilon}
i_M^\ast \,:\, \HK^{\epsilon\mathrm{-iso}}(M)~\stackrel{\simeq}{\longrightarrow}~\HK^{\mathrm{rc}}(M)
\end{flalign}
of the $2$-natural transformation \eqref{eqn:ipullback}
defines an equivalence between the category of relatively compact Haag-Kastler-style AQFTs 
on $M$ and the full subcategory $ \HK^{\epsilon\mathrm{-iso}}(M)\subseteq \HK(M)$
consisting of all Haag-Kastler-style AQFTs $\AAA\in \HK(M)$
which satisfy the property that the counit $\epsilon_{\AAA}$ of 
the adjunction $i_{M\,!}\dashv i_M^\ast$ in \eqref{eqn:iMadjunction} is an isomorphism.
Moreover, \eqref{eqn:epsilonadd} identifies this property
as an additivity property of the Haag-Kastler-style AQFT $\AAA\in \HK(M)$ on $M$.
\sk

This object-wise identification between $\HK^{\mathrm{rc}}(M)$ and
the full subcategory $ \HK^{\epsilon\mathrm{-iso}}(M)\subseteq \HK(M)$
however fails to extend to the level of $2$-functors because
the family of full subcategories 
$\HK^{\epsilon\mathrm{-iso}}(M)\subseteq \HK(M)$, for all $M\in\Loc$,
does \textit{not} form a $2$-subfunctor of the Haag-Kastler $2$-functor $\HK$.
\begin{propo}\label{prop:HKepsilonNOTsubfunctor}
Suppose that there exists an algebra $A\in \Alg_{\mathsf{uAs}}(\TT)$
which is not isomorphic to the initial algebra.\footnote{This technical
condition is very mild and it is only used to rule out pathological examples
of $\TT$. In particular, it holds true for $\TT=\Vec_\bbK$.} 
Then, for every $\Loc$-morphism $f:M\to N$ whose image $f(M)\subseteq N$
is relatively compact, the pullback functor
$f^\ast : \HK(N)\to\HK(M)$ of the Haag-Kastler $2$-functor 
does not restrict to a functor
between $\HK^{\epsilon\mathrm{-iso}}(N)$ and $\HK^{\epsilon\mathrm{-iso}}(M)$.
\end{propo}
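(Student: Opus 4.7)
The strategy is to exhibit a specific object $\AAA \in \HK^{\epsilon\mathrm{-iso}}(N)$ whose pullback $f^\ast(\AAA)$ fails to lie in $\HK^{\epsilon\mathrm{-iso}}(M)$, with the failure detected at the terminal object $M \in \COpen(M)$. The key observation is that, by Example \ref{ex:RCinCOpen} and Proposition \ref{propo:Ffullorthogonal}, any object of the form $\AAA = i_{N\,!}(\BBB)$ with $\BBB \in \HK^{\mathrm{rc}}(N)$ automatically lies in $\HK^{\epsilon\mathrm{-iso}}(N)$: the adjunction unit $\eta$ is a natural isomorphism, so by the triangle identity the counit $\epsilon_{i_{N\,!}(\BBB)}$ is an isomorphism as well. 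It therefore suffices to produce a $\BBB \in \HK^{\mathrm{rc}}(N)$ for which the counit $\epsilon_{f^\ast \AAA}$ is not an isomorphism.

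Concretely, I would use the hypothesis that $K := f(M) \subseteq N$ is relatively compact (and causally convex, since $\Loc$-morphisms have causally convex image), so that $K \in \RC(N)$, to define an ``indicator'' AQFT $\BBB \in \HK^{\mathrm{rc}}(N)$ by
\begin{flalign*}
\BBB(W) \,:=\, \begin{cases} A &,\quad \text{if } K \subseteq W , \\ I &,\quad \text{otherwise} , \end{cases}
\end{flalign*}
for $W \in \RC(N)$, with structure morphisms given by identities or the canonical map $I \to A$ out of the initial algebra, as dictated by the cases. Functoriality is immediate. For the $\perp$-commutativity axiom, the key observation is that causally disjoint subsets of $N$ have empty intersection, so at most one member of any orthogonal pair $W_1 \perp W_2$ can contain the non-empty $K$; the resulting square then commutes either because it factors through the initial algebra or because it reduces to the unit law in $A$.

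Setting $\AAA := i_{N\,!}(\BBB)$, one has $\AAA(W) \cong \BBB(W)$ for all $W \in \RC(N)$ by Proposition \ref{propo:Ffullorthogonal}. I would then compute the component $(\epsilon_{f^\ast \AAA})_M$ via the colimit formula \eqref{eqn:epsilonadd}. On one side, $f^\ast(\AAA)(M) = \AAA(K) \cong \BBB(K) = A$, since $K \subseteq K$. On the other side, the crucial Lorentz-geometric input is that every globally hyperbolic $M \in \Loc$ is non-compact, so $M \notin \RC(M)$; combined with the injectivity of $f$, this forces $f(V) \subsetneq K$, hence $K \not\subseteq f(V)$, for every $V \in \RC(M)/M = \RC(M)$. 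Therefore $f^\ast(\AAA)(V) \cong \BBB(f(V)) = I$ throughout, and the colimit over this (non-empty) diagram of initial algebras reduces to $I$. The component $(\epsilon_{f^\ast \AAA})_M$ is then the canonical map $I \to A$, which by assumption on $A$ is not an isomorphism, so $f^\ast(\AAA) \notin \HK^{\epsilon\mathrm{-iso}}(M)$.

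The main technical check is the verification that the indicator AQFT $\BBB$ satisfies the $\perp$-commutativity axiom, which proceeds via the short case analysis above. Everything else is straightforward bookkeeping with the coreflection $i_{N\,!} \dashv i_N^\ast$ and the colimit formula of Example \ref{ex:RCinCOpen}.
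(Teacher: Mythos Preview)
Your proof is correct and follows essentially the same strategy as the paper: construct the ``indicator'' AQFT supported on $K=f(M)$, then detect the failure of $\epsilon$-iso for $f^\ast(\AAA)$ at the terminal object $M\in\COpen(M)$ using that $M$ is non-compact. The only difference is that the paper defines $\AAA$ directly on all of $\COpen(N)$ and verifies $\AAA\in\HK^{\epsilon\text{-iso}}(N)$ by a short case analysis on the counit \eqref{eqn:epsilonadd}, whereas you build $\AAA=i_{N\,!}(\BBB)$ from $\BBB\in\HK^{\mathrm{rc}}(N)$ and get $\epsilon$-iso for free from the coreflection; this is a mild streamlining, and the two $\AAA$'s are in fact isomorphic.
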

\begin{proof}
Our proof strategy is to present an explicit example
for an object $\AAA\in \HK^{\epsilon\mathrm{-iso}}(N)$ whose pullback
$f^\ast(\AAA)\in \HK(M)$ does not lie in the full subcategory $\HK^{\epsilon\mathrm{-iso}}(M)\subseteq \HK(M)$.
Let us define
\begin{flalign}
\AAA(V)\,=\,\begin{cases}
A &~,~~\text{if } f(M)\subseteq V\quad,\\
I &~,~~\text{if }f(M)\not\subseteq V\quad,
\end{cases}
\end{flalign}
for all $V\in\COpen(N)$, where $A\in \Alg_{\mathsf{uAs}}(\TT)$ denotes the
algebra from our hypotheses and $I\in \Alg_{\mathsf{uAs}}(\TT)$ denotes the initial algebra.
We endow $\AAA\in\HK(N)$ with the AQFT structure which is defined
by the universal property of the initial algebra
and the identity morphism $\id_A : A\to A$. (Note that the $\perp$-commutativity axiom 
follows from the fact that $(V_1\subseteq V)\perp (V_2\subseteq V)$ implies
that $V_1\cap V_2 = \varnothing$, hence $f(M)$ can not be contained 
in both $V_1$ and $V_2$.) To show that 
$\AAA\in\HK^{\epsilon\mathrm{-iso}}(N)\subseteq \HK(N)$, recall \eqref{eqn:epsilonadd}
and consider the components of the counit 
\begin{flalign}
(\epsilon_{\AAA})_V\,:\, \colim\Big(\RC(N)/V\longrightarrow\COpen(N)\stackrel{\AAA}{\longrightarrow}\Alg_{\mathsf{uAs}}(\TT)\Big)~\longrightarrow~\AAA(V)\quad,
\end{flalign}
for all $V\in\COpen(N)$. In the case where $f(M)\not\subseteq V$,
the restriction of $\AAA$ to the comma category yields the constant functor assigning $I$,
hence we obtain an isomorphism. In the case where $f(M)\subseteq V$, we 
use that $f(M)\subseteq V$ defines an object in the comma category $\RC(N)/V$ 
since the image of $f$ is by our hypotheses relatively compact in $N$.
Using further that $\AAA$ is constantly
assigning $A$ to all $\tilde{V}\supseteq f(M)$, we obtain again an isomorphism.
\sk

It remains to show that $f^\ast(\AAA)\in\HK(M)$ does not define an object in 
$\HK^{\epsilon\mathrm{-iso}}(M)\subseteq \HK(M)$. For this we consider
the component of the counit 
\begin{flalign}
(\epsilon_{f^\ast(\AAA)})_M\,:\, \colim\Big(\RC(M)/M \longrightarrow\COpen(M)\stackrel{f^\ast(\AAA)}{\longrightarrow}\Alg_{\mathsf{uAs}}(\TT)\Big)~\longrightarrow~\AAA(f(M))\,=\,A
\end{flalign}
on the terminal object $M\in\COpen(M)$. Note that the restriction of the functor $f^\ast(\AAA)$ 
to the comma category 
$\RC(M)/M $ yields the constant functor assigning $I$
because $M\subseteq M$ is not relatively compact for a (necessarily non-compact)
globally hyperbolic Lorentzian manifold. Hence, $\epsilon_{f^\ast(\AAA)}$ is not an isomorphism. 
\end{proof}

The implication of this result is that the \textit{structure}
of the relatively compact Haag-Kastler $2$-functor $\HK^{\mathrm{rc}}$
can not be encoded in terms of a \textit{property} of the Haag-Kastler $2$-functor
$\HK$, despite the object-wise equivalence $\HK^{\mathrm{rc}}(M) \simeq \HK^{\epsilon\mathrm{-iso}}(M)$
of \eqref{eqn:HKrcepsilon} with Haag-Kastler-style AQFTs satisfying a particular additivity property.
This makes the relatively compact Haag-Kastler $2$-functor
$\HK^{\mathrm{rc}}$ a genuinely new concept.

\paragraph{Additivity in locally covariant AQFTs:}
We next compare the relatively
compact Haag-Kastler $2$-functor $\HK^{\mathrm{rc}}$ with
an additivity property used in the context of locally covariant AQFTs, see e.g.\ \cite[Definition 2.16]{BPS}.
\begin{defi}\label{def:additivity}
For every object $M\in\Loc$, we denote by $\HK^{\mathrm{add}}(M)\subseteq \HK(M)$
the full subcategory of the category of Haag-Kastler-style AQFTs on
$M$ consisting of all objects $\AAA\in \HK(M)$ which satisfy the following
\textit{locally covariant additivity property}: For every $U\in \COpen(M)$, the canonical map
\begin{flalign}\label{eqn:additivity}
\colim\Big(\RC(U)\stackrel{\subseteq}{\longrightarrow}\COpen(M)\stackrel{\AAA}{\longrightarrow}\Alg_{\mathsf{uAs}}(\TT)\Big)~\stackrel{\cong}{\longrightarrow}~\AAA(U)
\end{flalign}
is an isomorphism in $\Alg_{\mathsf{uAs}}(\TT)$.
\end{defi}

This will be related to an additivity property on $\AQFT(\ovr{\Loc})$ below, justifying the name.
\begin{rem}
We would like to highlight a subtle but important
difference between the locally covariant additivity property \eqref{eqn:additivity}
and the $\epsilon$-iso property \eqref{eqn:epsilonadd} from Example \ref{ex:RCinCOpen},
which we rewrite here for comparison
\begin{flalign}\label{eqn:epsilonaddx}
(\epsilon_\AAA)_U\,:\,\colim\Big(
\RC(M)/U \longrightarrow \COpen(M) \stackrel{\AAA}{\longrightarrow} \Alg_{\mathsf{uAs}}(\TT)
\Big) ~\stackrel{\cong}{\longrightarrow}~\AAA(U)\quad.
\end{flalign}
The colimit in the locally covariant additivity property is indexed over
the category $\RC(U)$ of all relatively compact causally convex
opens in $U$, while the colimit in the $\epsilon$-iso property is indexed over the comma category
$\RC(M)/U$ of all relatively compact causally convex opens in $M$ which are also contained in $U\subseteq M$.
Note that these two categories are in general very different. For example,
if $U\subseteq M$ is a relatively compact causally convex open subset, then 
the comma category $\RC(M)/U$ has a terminal object $U\subseteq U$, while the category
$\RC(U)$ never has a terminal object because $U$ is a (necessarily non-compact) globally hyperbolic
Lorentzian manifold. In simpler words, this means that relative compactness
is a relative condition which is sensitive to the ambient manifold
in which one considers open subsets. For the $\epsilon$-iso property \eqref{eqn:epsilonaddx},
the relevant relative compactness condition $\RC(M)/U$ is formulated relative to the 
ambient manifold $M$ itself, while for the locally covariant additivity property \eqref{eqn:additivity} the
relative compactness condition $\RC(U)$ is formulated intrinsically relative to
the submanifold $U\subseteq M$.
\end{rem}

In stark contrast to Proposition \ref{prop:HKepsilonNOTsubfunctor},
the family of full subcategories $\HK^{\mathrm{add}}(M)\subseteq \HK(M)$, for all $M\in\Loc$,
forms a $2$-subfunctor of the Haag-Kastler $2$-functor $\HK$.
\begin{propo}
For every $\Loc$-morphism $f:M\to N$, the pullback functor
$f^\ast : \HK(N)\to\HK(M)$ of the Haag-Kastler $2$-functor 
restricts to a functor $f^\ast : \HK^{\mathrm{add}}(N)\to\HK^{\mathrm{add}}(M)$
between the full subcategories of locally covariantly additive objects.
This defines a $2$-subfunctor $\HK^{\mathrm{add}}\subseteq \HK$.
\end{propo}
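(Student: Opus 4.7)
The plan is to prove the object-wise statement that $f^\ast$ preserves locally covariant additivity, after which the $2$-subfunctor claim follows automatically from $2$-functoriality of $\HK$. Fix a $\Loc$-morphism $f:M\to N$ and an object $\AAA\in\HK^{\mathrm{add}}(N)$; we must show that for every $U\in\COpen(M)$ the canonical map
\begin{flalign*}
\colim\Big(\RC(U)\stackrel{\subseteq}{\longrightarrow}\COpen(M)\stackrel{f^\ast(\AAA)}{\longrightarrow}\Alg_{\mathsf{uAs}}(\TT)\Big)~\longrightarrow~f^\ast(\AAA)(U)\,=\,\AAA(f(U))
\end{flalign*}
is an isomorphism.

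First I would exploit the crucial geometric observation that, since $f:M\to N$ is an isometric embedding with causally convex and open image, the restriction-corestriction $f\vert_U:U\to f(U)$ is an isomorphism in $\Loc$. In particular $f\vert_U$ is a homeomorphism, so for any $V\subseteq U$ we have $\mathrm{cl}_{f(U)}(f(V))=f(\mathrm{cl}_U(V))$, and causal convexity is preserved in both directions. This yields an isomorphism of categories
\begin{flalign*}
F\,:\,\RC(U)~\stackrel{\cong}{\longrightarrow}~\RC(f(U))~~,\quad V~\longmapsto~f(V)\quad,
\end{flalign*}
with inverse given by $V^\prime\mapsto (f\vert_U)^{-1}(V^\prime)$.

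Next I would observe that by definition of the pullback $f^\ast$ and of the orthogonal functor $f:\ovr{\COpen(M)}\to\ovr{\COpen(N)}$, the functor $\RC(U)\subseteq\COpen(M)\stackrel{f^\ast(\AAA)}{\to}\Alg_{\mathsf{uAs}}(\TT)$ factors as
\begin{flalign*}
\RC(U)~\stackrel{F}{\longrightarrow}~\RC(f(U))~\stackrel{\subseteq}{\longrightarrow}~\COpen(N)~\stackrel{\AAA}{\longrightarrow}~\Alg_{\mathsf{uAs}}(\TT)\quad.
\end{flalign*}
Since $F$ is an isomorphism of categories, precomposition with $F$ preserves colimits, so the colimit on the left-hand side of our canonical map coincides canonically with the colimit
\begin{flalign*}
\colim\Big(\RC(f(U))\stackrel{\subseteq}{\longrightarrow}\COpen(N)\stackrel{\AAA}{\longrightarrow}\Alg_{\mathsf{uAs}}(\TT)\Big)\quad,
\end{flalign*}
and under this identification the canonical map to $f^\ast(\AAA)(U)=\AAA(f(U))$ coincides with the canonical map expressing the locally covariant additivity of $\AAA$ at the object $f(U)\in\COpen(N)$. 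Since $\AAA\in\HK^{\mathrm{add}}(N)$, the latter is an isomorphism, which proves the claim.

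Finally, for the $2$-subfunctor statement, there is nothing further to check: $\HK^{\mathrm{add}}(M)\subseteq\HK(M)$ is full by definition, the pullback functors $f^\ast$ restrict to these full subcategories by the above, and the coherence isomorphisms $X_{g,f}$ and $X_M$ of the pseudo-functor $\HK$ restrict without modification. I do not anticipate any real obstacle; the only subtle point is the observation that $f\vert_U$ is a $\Loc$-isomorphism onto its image, which is what makes the intrinsically formulated condition $\RC(U)$ transport cleanly along $f$.
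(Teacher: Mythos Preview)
Your proof is correct and follows essentially the same approach as the paper: both arguments hinge on the observation that the restriction-corestriction $f\vert_U:U\to f(U)$ is a $\Loc$-isomorphism inducing an isomorphism $\RC(U)\cong\RC(f(U))$, which allows the additivity colimit for $f^\ast(\AAA)$ at $U$ to be identified with the additivity colimit for $\AAA$ at $f(U)$. The paper presents this via a commutative square of functors and a chain of colimit identifications, but the content is identical to yours.
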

\begin{proof}
We have to show that, given any locally covariantly 
additive object $\AAA\in\HK^{\mathrm{add}}(N)\subseteq \HK(N)$ on $N$,
the pullback $f^\ast(\AAA)\in \HK(M)$ satisfies the
locally covariant additivity property from Definition \ref{def:additivity}.
This follows from the direct calculation
\begin{flalign}
\nn &\colim\Big(\RC(U) \stackrel{\subseteq}{\longrightarrow}\COpen(M)\stackrel{f^\ast(\AAA)}{\longrightarrow}\Alg_{\mathsf{uAs}}(\TT)\Big) \\
\nn &\qquad= \colim\Big(\RC(U) \stackrel{\subseteq}{\longrightarrow}\COpen(M)\stackrel{f}{\longrightarrow}
\COpen(N)\stackrel{\AAA}{\longrightarrow}\Alg_{\mathsf{uAs}}(\TT)\Big) \\
&\qquad \cong \colim\Big(\RC(f(U)) \stackrel{\subseteq}{\longrightarrow}
\COpen(N)\stackrel{\AAA}{\longrightarrow}\Alg_{\mathsf{uAs}}(\TT)\Big)\,\cong\, \AAA(f(U))\,=\,f^\ast(\AAA)(U)
\quad,
\end{flalign}
for all $U\in\COpen(M)$. In the first step we used the definition of the pullback functor $f^\ast(\AAA) = \AAA\,f$.
In the second step we used the commutative diagram
\begin{flalign}
\begin{gathered}
\xymatrix{
\ar[d]_-{f\vert_U}^-{\cong}\RC(U) \ar[r]^-{\subseteq}~&~ \COpen(M)\ar[d]^-{f}\\
\RC(f(U))\ar[r]_-{\subseteq}~&~\COpen(N)
}
\end{gathered}\quad,
\end{flalign}
where $f\vert_U : U\to f(U)$ denotes the $\Loc$-isomorphism obtained by restricting and corestricting $f:M\to N$.
The last two steps follow from the locally covariant additivity property
of $\AAA\in\HK^{\mathrm{add}}(N)\subseteq \HK(N)$
and using the definition of the pullback functor $f^\ast(\AAA) = \AAA\,f$ once more.
\end{proof}

We can now compare the relatively compact Haag-Kastler $2$-functor
$\HK^{\mathrm{rc}}$ with the
locally covariantly additive $2$-subfunctor $\HK^{\mathrm{add}}\subseteq \HK$
by restricting the $2$-natural transformation \eqref{eqn:ipullback} to
\begin{flalign}\label{eqn:ipullbackadd}
i^\ast\,:\,\HK^{\mathrm{add}}~\Longrightarrow~\HK^{\mathrm{rc}}\quad.
\end{flalign}
As a consequence of Proposition \ref{prop:HKepsilonNOTsubfunctor},
we already know that this $2$-natural transformation can \textit{not} be an equivalence
of $2$-functors, so the relative compactness structure of $\HK^{\mathrm{rc}}$
and the locally covariant additivity property of $\HK^{\mathrm{add}}\subseteq \HK$
are \textit{not} equivalent.
However, we have the following result which implies that the
locally covariant additivity property is stronger than the relative compactness structure.
\begin{theo}\label{theo:rcvsadd}
For every object $M\in\Loc$, the component $i^\ast_M : \HK^{\mathrm{add}}(M)\to \HK^{\mathrm{rc}}(M)$
of the $2$-natural transformation \eqref{eqn:ipullbackadd} is a fully faithful functor. 
Hence, by taking essential images, one can present the
locally covariantly additive Haag-Kastler $2$-functor
$\HK^{\mathrm{add}}\subseteq\HK$ as a $2$-subfunctor of the relatively compact 
Haag-Kastler $2$-functor $\HK^{\mathrm{rc}}$.
\end{theo}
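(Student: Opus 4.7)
The plan is to prove full faithfulness of $i_M^\ast$ directly, by exploiting the locally covariant additivity of $\AAA,\BBB\in\HK^{\mathrm{add}}(M)$ to show that morphisms in $\HK(M)$ between such objects are entirely determined by, and can be freely reconstructed from, their restrictions to $\RC(M)$. The crucial enabling observation, which I would prove as a preliminary lemma of a single paragraph, is the inclusion of indexing categories $\RC(U)\subseteq \RC(M)$, for every $U\in\COpen(M)$: indeed, if $V\subseteq U$ is causally convex and relatively compact in $U$, then $U$ is causally convex in $M$ forces $V$ causally convex in $M$, and the identity $\mathrm{cl}_U(V)=\mathrm{cl}_M(V)\cap U$ together with the fact that a compact subset of $U$ is closed in $M$ forces $\mathrm{cl}_M(V)=\mathrm{cl}_U(V)$, which is compact. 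As a consequence, the additivity colimit in Definition \ref{def:additivity} may be rewritten as $\AAA(U)\cong\colim_{V\in\RC(U)} i_M^\ast(\AAA)(V)$, displaying $\AAA(U)$ explicitly as a colimit of values of the restriction $i_M^\ast(\AAA)\in\HK^{\mathrm{rc}}(M)$.

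For faithfulness, given $\zeta,\zeta^\prime:\AAA\Rightarrow\BBB$ in $\HK(M)$ with $i_M^\ast(\zeta)=i_M^\ast(\zeta^\prime)$, one has $\zeta_V=\zeta^\prime_V$ for all $V\in\RC(M)$, hence in particular for all $V\in\RC(U)\subseteq\RC(M)$. Combined with naturality of $\zeta$ and $\zeta^\prime$ with respect to the inclusions $V\subseteq U$, this shows that both $\zeta_U$ and $\zeta^\prime_U$ provide the same factorization of the cocone $\{\BBB(V\subseteq U)\circ\zeta_V\}_{V\in\RC(U)}$ through the colimit $\AAA(U)\cong\colim_{V\in\RC(U)}\AAA(V)$. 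Uniqueness of factorizations through colimits then yields $\zeta_U=\zeta^\prime_U$ for every $U\in\COpen(M)$.

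For fullness, given $\tilde{\zeta}:i_M^\ast(\AAA)\Rightarrow i_M^\ast(\BBB)$ in $\HK^{\mathrm{rc}}(M)$, define $\zeta_U:\AAA(U)\to\BBB(U)$ in $\Alg_{\mathsf{uAs}}(\TT)$, for every $U\in\COpen(M)$, as the unique morphism factoring the cocone $\{\BBB(V\subseteq U)\circ\tilde{\zeta}_V\}_{V\in\RC(U)}$ through the colimit $\AAA(U)\cong\colim_{V\in\RC(U)}\AAA(V)$ provided by additivity of $\AAA$. The cocone compatibility condition along $V_1\subseteq V_2$ in $\RC(U)$ reduces via a short diagram chase to the naturality of $\tilde{\zeta}$ on $\RC(M)$. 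When $U\in\RC(M)$, the component $\tilde{\zeta}_U$ itself provides such a factorization by its naturality, so by uniqueness $\zeta_U=\tilde{\zeta}_U$; in particular $i_M^\ast(\zeta)=\tilde{\zeta}$.

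What remains is to verify that $\zeta$ so defined is natural with respect to arbitrary inclusions $\iota:U_1\subseteq U_2$ in $\COpen(M)$, which I expect to be the main, though still routine, obstacle. Using the analogous inclusion $\RC(U_1)\subseteq\RC(U_2)$ (by the same closure argument applied inside $U_2$) and additivity of $\AAA$ on $U_1$, it suffices to verify the equality $\zeta_{U_2}\circ\AAA(\iota)=\BBB(\iota)\circ\zeta_{U_1}$ after precomposition with each cocone map $\AAA(V\subseteq U_1)$ for $V\in\RC(U_1)$; both sides then equal $\BBB(V\subseteq U_2)\circ\tilde{\zeta}_V$ by the defining property of $\zeta_{U_2}$, respectively of $\zeta_{U_1}$, composed with functoriality of $\BBB$. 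Uniqueness of factorization through $\AAA(U_1)$ yields naturality. Since the $\perp$-commutativity axiom in $\HK(M)=\AQFT(\ovr{\COpen(M)})$ is a condition on the underlying functor rather than on morphisms between them, any such natural transformation $\zeta$ automatically defines a morphism in $\HK(M)$, which concludes fullness and hence the theorem.
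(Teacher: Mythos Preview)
Your proof is correct and takes a genuinely different route from the paper. The paper does not argue full faithfulness of $i_M^\ast$ directly; instead, it uses the already established equivalence $i_M^\ast:\HK^{\epsilon\mathrm{-iso}}(M)\stackrel{\simeq}{\to}\HK^{\mathrm{rc}}(M)$ from Example~\ref{ex:RCinCOpen}, reducing the problem to the object-level inclusion $\HK^{\mathrm{add}}(M)\subseteq\HK^{\epsilon\mathrm{-iso}}(M)$. Proving this inclusion requires showing that the counit map $(\epsilon_\AAA)_U$ of \eqref{eqn:epsilonadd} is an isomorphism for every additive $\AAA$, which the paper accomplishes by rewriting both sides as colimits (using additivity twice) and then verifying that the projection $\pi:\int_U\RC\to\RC(U)$ out of a suitable Grothendieck construction is final; the latter step involves a nontrivial Lorentzian-geometric construction of relatively compact causally convex hulls (Lemma~\ref{lem:hullofrelativelycompact}) to show the relevant comma categories are non-empty and connected. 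Your argument bypasses all of this: by working directly with the additivity colimit over $\RC(U)\subseteq\RC(M)$ and the universal property of colimits, you obtain faithfulness and fullness via elementary diagram chases, needing only the easy topological fact that $\RC(U)\subseteq\RC(M)$. The trade-off is that the paper's proof yields the stronger intermediate statement that locally covariant additivity implies the $\epsilon$-iso property, which may be of independent interest, while your approach is shorter and more self-contained for the theorem as stated.
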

\begin{proof}
Using the equivalence $i_M^\ast : \HK^{\epsilon\mathrm{-iso}}(M)\stackrel{\simeq}{\longrightarrow}
\HK^{\mathrm{rc}}(M)$ from \eqref{eqn:HKrcepsilon}, see also Example \ref{ex:RCinCOpen},
it suffices to show that $\HK^{\mathrm{add}}(M)\subseteq \HK^{\epsilon\mathrm{-iso}}(M)$
is a full subcategory. In other words, we have to show that every
locally covariantly additive
object $\AAA\in\HK^{\mathrm{add}}(M)\subseteq \HK(M)$ satisfies the $\epsilon$-iso
property \eqref{eqn:epsilonadd} from Example \ref{ex:RCinCOpen}, i.e.\ the canonical map
\begin{flalign}
\colim\Big(
\RC(M)/U \longrightarrow \COpen(M) \stackrel{\AAA}{\longrightarrow} \Alg_{\mathsf{uAs}}(\TT)
\Big) ~\longrightarrow~\AAA(U)
\end{flalign}
is an isomorphism in $\Alg_{\mathsf{uAs}}(\TT)$, for all $U\in\COpen (M)$.
Using that $\AAA$ is additive in the sense of Definition \ref{def:additivity},
we can rewrite the source of the canonical map as an iterated colimit
\begin{flalign}
\nn &\colim\Big(
\RC(M)/U \longrightarrow \COpen(M) \stackrel{\AAA}{\longrightarrow} \Alg_{\mathsf{uAs}}(\TT)
\Big)
\, =\, \colim_{(U^\prime\subseteq U)\in \RC(M)/U}^{~}\big(\AAA(U^\prime)\big)\\[4pt]
\nn &\qquad\qquad \,\cong\, \colim_{(U^\prime\subseteq U)\in \RC(M)/U}^{}~
\colim_{U^{\prime\prime}\in \RC(U^\prime)}^{}\big(\AAA(U^{\prime\prime})\big) \\[4pt]
&\qquad\qquad \,\cong\, \colim\Big(\textstyle{\int_U\RC}
\stackrel{\pi}{\longrightarrow} \RC(U)
\stackrel{\subseteq}{\longrightarrow}\COpen(M)
\stackrel{\AAA}{\longrightarrow} \Alg_{\mathsf{uAs}}(\TT)\Big)
\quad,\label{eqn:GrothendieckconstructionTMP}
\end{flalign}
hence as a colimit over the Grothendieck construction $\int_U\RC$ of the $2$-functor
$\RC : \RC(M)/U\to \Cat\,,~(U^\prime\subseteq U)\mapsto \RC(U^\prime)$.
Similarly, the target of the canonical map can be rewritten by using
locally covariant additivity as a colimit
\begin{flalign}
\AAA(U)\,\cong\,\colim\Big(\RC(U)\stackrel{\subseteq}{\longrightarrow} \COpen(M)\stackrel{\AAA}{\longrightarrow}
\Alg_{\mathsf{uAs}}(\TT)\Big)\quad.
\end{flalign}
The problem then reduces to proving that the functor 
$\pi : \int_U\RC\to \RC(U)$ in \eqref{eqn:GrothendieckconstructionTMP} is final.
\sk

An explicit model for the relevant Grothendieck construction $\int_U\RC$
is given by the category whose objects are pairs $(V,W)$
with $V\in\RC(M)/U$ a relatively compact causally convex open in $M$
such that $V\subseteq U$ and $W\in\RC(V)$
a relatively compact causally convex open in $V$. There exists a unique
morphism $(V,W)\to (\widetilde{V},\widetilde{W})$
if and only if $V\subseteq\widetilde{V} $ and $W\subseteq \widetilde{W}$.
The functor $\pi : \int_U\RC\to \RC(U)$ then sends an object $(V,W)$ to $W$
and the unique morphism $(V,W)\to (\widetilde{V},\widetilde{W})$
in $\int_U\RC$ to the subset inclusion $W\subseteq \widetilde{W}$ in $\RC(U)$.
\sk

Recall that the functor $\pi : \int_U\RC\to \RC(U)$ is final if, for each $U^\prime \in \RC(U)$,
the comma category $U^\prime / \pi$ is non-empty and connected.
An object of $U^\prime / \pi$ is an object $(V,W) \in \int_U \RC$ such that $U^\prime \subseteq W$ in $\RC(U)$,
while a morphism $(V,W) \to (\widetilde{V}, \widetilde{W})$ in $U^\prime / \pi$ exists
if and only if the underlying $\int_U \RC$-morphism exists,
i.e.\ if and only if $V \subseteq \widetilde{V}$ and $W \subseteq \widetilde{W}$.
\sk

Let us show that $U^\prime / \pi$ is non-empty.
Since $U^\prime \in \RC(U)$, its closure $\mathrm{cl}(U^\prime)\subseteq U$ 
with respect to $U$ is compact.
For each point $p \in \mathrm{cl}(U^\prime)$,
global hyperbolicity and hence strong causality of $M$
entails the existence
of a relatively compact causally convex open neighborhood $V_p \subseteq U$ of $p$.
Since $\left\{V_p\right\}_{p \in \mathrm{cl}(U^\prime)}$ 
is an open cover of the compact subset $\mathrm{cl}(U^\prime) \subseteq U$, 
it admits a finite subcover $\left\{V_{p_1}, \ldots, V_{p_n}\right\}$.
Define $V := J^{+}_U \left(\cup_{i=1}^n V_{p_i}\right) \cap J^{-}_U \left(\cup_{i=1}^n V_{p_i}\right)\subseteq U$ 
as the causally convex hull of the finite subcover. 
Using Lemma \ref{lem:hullofrelativelycompact}, we obtain that
$V \subseteq U \subseteq M$ is a relatively compact 
causally convex open subset, hence $V$ is an object of $\RC(M)/U$. 
Furthermore, $\mathrm{cl}(U^\prime)\subseteq V$ by construction, 
hence $(V, U^\prime) \in U^\prime / \pi$.
\sk

Let us now show that $U^\prime / \pi$ is connected.
Take any two objects $(V_1,W_1)$ and $(V_2,W_2)$ of $U^\prime / \pi$.
We exhibit an object $(V,W)$ and morphisms
$(V_1,W_1) \leftarrow (V,W) \rightarrow (V_2,W_2)$ in $U^\prime / \pi$.
Define the relatively compact causally convex opens 
$V := V_1 \cap V_2 \subseteq M$ and $W := W_1 \cap W_2 \subseteq V$. 
(To confirm that $W$ is relatively compact in $V$, recall that the 
closures of $W_i$ with respect to $V_i$ are compact, 
hence the closure of $W$ with respect to $V$ is compact too.) 
Then $U^\prime \subseteq W_i$ entails that $U^\prime \subseteq W$,
hence $(V,W) \in U^\prime / \pi$. The morphisms 
$(V,W) \to (V_i, W_i)$ in $U^\prime / \pi$ exist because 
$V \subseteq V_i$ and $W \subseteq W_i$. 
\end{proof}

We conclude this subsection with a comment on the relationship between additive 
locally covariant AQFTs and the relatively compact Haag-Kastler $2$-functor. 
The usual definition of the full subcategory
$\AQFT(\ovr{\Loc})^{\mathrm{add}}\subseteq \AQFT(\ovr{\Loc})$
of additive locally covariant AQFTs, see e.g.\ \cite[Definition 2.16]{BPS},
can be equivalently rephrased through the equivalence from Theorem \ref{theo:LCQFTvsHK}
as follows: An object $\AAA\in \AQFT(\ovr{\Loc})$ is additive if and only if 
its underlying Haag-Kastler-style AQFTs $k_M^\ast(\AAA)\in\HK(M)$ are locally covariantly additive
in the sense of Definition \ref{def:additivity}, for all $M\in\Loc$.
This implies that Theorem \ref{theo:LCQFTvsHK} restricts to an equivalence
\begin{flalign}\label{eqn:HKaddvsLCQFTadd}
\HK^{\mathrm{add}}(\mathrm{pt})\,\simeq\,\AQFT(\ovr{\Loc})^{\mathrm{add}}
\end{flalign}
between the category of points of
the locally covariantly additive Haag-Kastler $2$-subfunctor $\HK^{\mathrm{add}}\subseteq\HK$
and the category $\AQFT(\ovr{\Loc})^{\mathrm{add}}$ of additive locally covariant AQFTs.
Together with Theorem \ref{theo:rcvsadd}, this yields the following result.
\begin{cor}\label{cor:additiveLCQFT}
The category $\AQFT(\ovr{\Loc})^{\mathrm{add}}$ of additive locally covariant AQFTs
is presented via the fully faithful functor
\begin{flalign}
\AQFT(\ovr{\Loc})^{\mathrm{add}} ~\simeq~\HK^{\mathrm{add}}(\mathrm{pt})~\longrightarrow~\HK^{\mathrm{rc}}(\mathrm{pt})
\end{flalign}
as a full subcategory of the category of points $\HK^{\mathrm{rc}}(\mathrm{pt})$ 
of the relatively compact Haag-Kastler $2$-functor,
where the last functor is obtained by inducing \eqref{eqn:ipullbackadd} to the categories of points.
\end{cor}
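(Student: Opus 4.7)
The plan is to combine the equivalence \eqref{eqn:HKaddvsLCQFTadd}, $\HK^{\mathrm{add}}(\mathrm{pt}) \simeq \AQFT(\ovr{\Loc})^{\mathrm{add}}$, with the observation that the functor $\HK^{\mathrm{add}}(\mathrm{pt}) \to \HK^{\mathrm{rc}}(\mathrm{pt})$ obtained by postcomposing points with the strict $2$-natural transformation $i^\ast : \HK^{\mathrm{add}} \Rightarrow \HK^{\mathrm{rc}}$ from \eqref{eqn:ipullbackadd} is fully faithful. The equivalence \eqref{eqn:HKaddvsLCQFTadd} itself is obtained by restricting the equivalence of Theorem \ref{theo:LCQFTvsHK} to the appropriate full subcategories, using that $\AAA \in \AQFT(\ovr{\Loc})$ is additive precisely when its underlying Haag-Kastler theories $k_M^\ast(\AAA) \in \HK(M)$ are locally covariantly additive for all $M \in \Loc$. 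Hence the main content of the proof is the fully faithfulness statement.

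For the latter, I would use the explicit description of morphisms between points from Remark \ref{rem:HKpoints}. Given two objects $(\{\AAA_M\},\{\alpha_f\})$ and $(\{\BBB_M\},\{\beta_f\})$ of $\HK^{\mathrm{add}}(\mathrm{pt})$, a $\HK^{\mathrm{rc}}(\mathrm{pt})$-morphism between their images is a family $\{\widetilde{\zeta}_M : i_M^\ast(\AAA_M) \Rightarrow i_M^\ast(\BBB_M)\}$ in $\HK^{\mathrm{rc}}(M)$ satisfying the compatibility square \eqref{eqn:HKmorphisms} with respect to $i_M^\ast(\alpha_f)$ and $i_M^\ast(\beta_f)$. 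By Theorem \ref{theo:rcvsadd}, each component $i_M^\ast : \HK^{\mathrm{add}}(M) \to \HK^{\mathrm{rc}}(M)$ is fully faithful, so every $\widetilde{\zeta}_M$ is uniquely of the form $i_M^\ast(\zeta_M)$ for some $\zeta_M : \AAA_M \Rightarrow \BBB_M$ in $\HK^{\mathrm{add}}(M)$; this yields a canonical candidate family $\{\zeta_M\}$ which is the only possible preimage of $\{\widetilde{\zeta}_M\}$. To see that $\{\zeta_M\}$ itself satisfies the compatibility square \eqref{eqn:HKmorphisms} in $\HK^{\mathrm{add}}$, I would apply the faithful functor $i_M^\ast$ to it and use the strict $2$-naturality of $i^\ast$, which identifies $i_M^\ast \circ f^\ast$ with $f^\ast \circ i_N^\ast$ on the nose, to rewrite the resulting diagram as the compatibility square already known to hold for $\{\widetilde{\zeta}_M\}$. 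Faithfulness of $i_M^\ast$ then forces the original square to commute.

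The step that will require the most care is the bookkeeping with the coherence natural isomorphisms of the pseudo-functors $\HK^{\mathrm{add}}$ and $\HK^{\mathrm{rc}}$ appearing implicitly in \eqref{eqn:HKmorphisms}. However, this is not a genuine obstacle: since $i^\ast$ is strictly $2$-natural, it commutes strictly with all such coherences, so the argument above carries over verbatim to the pseudo-functorial setting and yields the claimed fully faithful functor $\AQFT(\ovr{\Loc})^{\mathrm{add}} \simeq \HK^{\mathrm{add}}(\mathrm{pt}) \to \HK^{\mathrm{rc}}(\mathrm{pt})$.
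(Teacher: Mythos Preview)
Your proposal is correct and follows the same approach as the paper, which simply states that the corollary is obtained by combining the equivalence \eqref{eqn:HKaddvsLCQFTadd} with Theorem \ref{theo:rcvsadd}. You have merely spelled out in detail the standard argument that a strict $2$-natural transformation with componentwise fully faithful components induces a fully faithful functor on categories of points, which the paper leaves implicit.
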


\subsubsection{Time-slice axiom} 
This subsection contains a brief study of the time-slice axiom in the context of the 
relatively compact Haag-Kastler $2$-functor. 
The following definition is analogous to Definition \ref{def:HK2functortimeslice}.
\begin{defi}\label{def:RCHK2functortimeslice}
The \textit{time-sliced relatively compact Haag-Kastler $2$-functor}
is defined as the $2$-subfunctor $\HK^{\mathrm{rc},W}\subseteq \HK^{\mathrm{rc}}$
of the relatively compact Haag-Kastler $2$-functor from Definition \ref{def:RCHK2functor}
which assigns to every $M\in\Loc$ the full subcategory 
$\HK^{\mathrm{rc},W}(M)\subseteq \HK^{\mathrm{rc}}(M)$ 
consisting of all relatively compact Haag-Kastler-style AQFTs on $M$ which satisfy the time-slice axiom.
\end{defi}

\begin{rem}\label{rem:RCHK2functortimeslice}
Similarly to Remark \ref{rem:HK2functortimeslice},
we have the following equivalent model for the time-sliced relatively compact Haag-Kastler $2$-functor.
The assignment of the localized orthogonal categories 
$M\mapsto \ovr{\RC(M)}[W_{\mathrm{rc},M}^{-1}]$ from Example \ref{ex:localizations} 
is $2$-functorial
\begin{subequations}\label{eqn:RCfunctorW}
\begin{flalign}
\ovr{\RC(-)}[W_{\mathrm{rc},(-)}^{-1}]\,:\,\Loc~\longrightarrow~\Cat^\perp
\end{flalign}
with action on $\Loc$-morphisms $f:M\to N$ given by
\begin{flalign}\label{eqn:RCfunctorWf}
f_{W}\,:=\,\ovr{\RC(f)}[W_{\mathrm{rc},f}^{-1}] \,:\, \ovr{\RC(M)}[W_{\mathrm{rc},M}^{-1}] ~&\longrightarrow~\ovr{\RC(N)}[W_{\mathrm{rc},N}^{-1}]
\quad,\\
\nn U\subseteq M ~&\longmapsto~f(U)\subseteq N
\quad, \\
\nn (U\to V) ~&\longmapsto~ \big(f(U)\to f(V)\big)
\quad,
\end{flalign}
\end{subequations}
see also Appendix \ref{app:localization}.
Replacing in Definition \ref{def:RCHK2functor} the $2$-functor $\ovr{\RC(-)}$ by
$\ovr{\RC(-)}[W_{\mathrm{rc},(-)}^{-1}]$, one obtains an equivalent model for the time-sliced
relatively compact Haag-Kastler $2$-functor which, with abuse of notation, 
we denote by the same symbol $\HK^{\mathrm{rc},W} : \Loc^\op\to \CAT$. Explicitly, 
this $2$-functor assigns to an object $M\in\Loc$ the category 
\begin{subequations}
\begin{flalign}
\HK^{\mathrm{rc},W}(M)\,=\,\AQFT\big(\ovr{\RC(M)}[W_{\mathrm{rc},M}^{-1}]\big)\,\in\,\CAT
\end{flalign}
of AQFTs on the orthogonal localization $\ovr{\RC(M)}[W_{\mathrm{rc},M}^{-1}]$ 
and to a $\Loc$-morphism $f:M\to N$ the pullback functor
\begin{flalign}
\begin{gathered}
\xymatrix@R=1em{
\ar@{=}[d]\HK^{\mathrm{rc},W}(N) \ar[r]^-{\HK^{\mathrm{rc},W}(f)\,:=\, f_{W}^\ast} ~&~ \HK^{\mathrm{rc},W}(M) \ar@{=}[d]\\
\AQFT\big(\ovr{\RC(N)}[W_{\mathrm{rc},N}^{-1}]\big) \ar[r]_-{f_{W}^\ast}~&~ 
\AQFT\big(\ovr{\RC(M)}[W_{\mathrm{rc},M}^{-1}]\big)
}
\end{gathered}
\end{flalign}
\end{subequations}
associated to the orthogonal functor \eqref{eqn:RCfunctorWf}.
The equivalence between this model and the one in Definition \ref{def:RCHK2functortimeslice}
is implemented as in Proposition \ref{prop:timeslice} by pullbacks along the orthogonal
localization functors $L_{\mathrm{rc},M} : \ovr{\RC(M)}\to \ovr{\RC(M)}[W_{\mathrm{rc},M}^{-1}]$, for all $M\in\Loc$.
\end{rem}

The result of Proposition \ref{prop:RCHKnotastack} remains valid in the present case.
\begin{propo}\label{prop:RCHKtimeslicenotastack}
Suppose that the category of algebras $\Alg_{\mathsf{uAs}}(\TT)$ has two objects
$A,B\in \Alg_{\mathsf{uAs}}(\TT)$ for which the $\Hom$-set $\Hom(A,B)$ is not a 
singleton. Then the time-sliced relatively compact Haag-Kastler $2$-functor 
$\HK^{\mathrm{rc},W}$ from Definition 
\ref{def:RCHK2functortimeslice} is not a stack with respect to 
either Grothendieck topology from Definition \ref{def:stack} on $\Loc$.
It is not even a prestack.
\end{propo}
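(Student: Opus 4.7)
The plan is to adapt the construction from the proof of Proposition \ref{prop:RCHKnotastack}, observing that the specific AQFTs used there already satisfy the time-slice axiom, so only the choice of cover needs refinement. Concretely, I would pick an object $M \in \Loc$ admitting a compact Cauchy surface, together with a (possibly $D$-stable) causally convex open cover $\U=\{U_i\subseteq M\}$ such that no $U_i$ contains a Cauchy surface of $M$. Such covers exist: one takes a small enough neighbourhood of each point so that no individual member of the cover carries a Cauchy surface of $M$, and for the $D$-stable case one invokes Proposition \ref{prop:Cauchy_development:small_D-stable_neighbourhoods} and Remark \ref{rem:Cauchy_development:fine_D-stable_covers} from Appendix \ref{app:Lorentz_geometry_details}.

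Next, I would introduce the two AQFTs $\AAA,\BBB\in \HK^{\mathrm{rc},W}(M)$ defined by
\begin{flalign*}
\AAA(U)\,&:=\, \begin{cases}
A &,~\text{if $U$ contains a Cauchy surface of $M$,}\\
I &,~\text{otherwise,}
\end{cases}\\
\BBB(U)\,&:=\, \begin{cases}
B &,~\text{if $U$ contains a Cauchy surface of $M$,}\\
I &,~\text{otherwise,}
\end{cases}
\end{flalign*}
for all $U\in\RC(M)$, where $A,B$ are the algebras provided by hypothesis and $I$ is the initial algebra. The structure maps are determined by identities on $A$ respectively $B$, and by universality of $I$ elsewhere; the $\perp$-commutativity axiom follows as in Proposition \ref{prop:HKnotastack} since $(U_1\subseteq V)\perp(U_2\subseteq V)$ forces $U_1\cap U_2=\emptyset$, ruling out that both $U_1$ and $U_2$ contain a Cauchy surface. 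The time-slice axiom is automatic, because a Cauchy morphism $U\to U'$ in $\RC(M)$ must have $U$ containing a Cauchy surface of $U'$, and by a standard Cauchy-development argument this is equivalent to $U$ and $U'$ containing the same Cauchy surfaces of $M$; hence $\AAA$ and $\BBB$ take either pair to an identity. Note that $\AAA$ and $\BBB$ are genuinely non-initial because $M$ admits compact Cauchy surfaces, so some relatively compact causally convex time-slab contains a Cauchy surface.

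Finally, I would argue precisely as in Proposition \ref{prop:RCHKnotastack}: since no member $U_i$ of $\U$ contains a Cauchy surface of $M$, both $\AAA$ and $\BBB$ restrict to the initial object $\mathfrak{I}_{U_i}\in\HK^{\mathrm{rc},W}(U_i)$ on each $U_i$ and trivially on all intersections. Therefore they are sent to the same object $\mathfrak{I}=(\{\mathfrak{I}_{U_i}\},\{\id\})$ in the descent category $\HK^{\mathrm{rc},W}(\U)$, while the induced map
\[
\Hom_{\HK^{\mathrm{rc},W}(M)}(\AAA,\BBB)\,\cong\,\Hom_{\Alg_{\mathsf{uAs}}(\TT)}(A,B)
~\longrightarrow~\Hom_{\HK^{\mathrm{rc},W}(\U)}(\mathfrak{I},\mathfrak{I})
\]
lands in a singleton. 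By the hypothesis on $\Hom(A,B)$, this map fails either injectivity or surjectivity, so the canonical functor $\HK^{\mathrm{rc},W}(M)\to \HK^{\mathrm{rc},W}(\U)$ is not fully faithful, refuting both the stack and the prestack condition.

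The main obstacle is the combinatorial one of producing the cover $\U$ in the $D$-stable case while ensuring no $U_i$ contains a Cauchy surface of $M$; this is handled by the fine-$D$-stable-cover statements in Appendix \ref{app:Lorentz_geometry_details}. Everything else is a direct adaptation of the earlier arguments, with the small verification that the time-slice axiom is preserved by the proposed $\AAA$ and $\BBB$.
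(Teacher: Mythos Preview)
Your proposal is correct and follows essentially the same approach as the paper: the paper's proof simply notes that the objects $\AAA,\BBB$ from Proposition~\ref{prop:RCHKnotastack} already satisfy the time-slice axiom, so that earlier proof carries over without change. Your write-up spells out in more detail the points the paper leaves implicit (existence of a $D$-stable cover with no $U_i$ containing a Cauchy surface, and the verification that $\AAA,\BBB$ satisfy time-slice), but the strategy and the specific counterexample are identical.
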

\begin{proof}
Note that the two objects $\AAA$ and $\BBB$ constructed in the proof of Proposition 
\ref{prop:RCHKnotastack} satisfy the time-slice axiom, hence that proof applies to the
present case without any alterations.
\end{proof}

We conclude by adapting the results of Theorem \ref{theo:rcvsadd}
and Corollary \ref{cor:additiveLCQFT} to the case where the time-slice axiom
is implemented. For this we observe that the $2$-natural transformation \eqref{eqn:ipullbackadd}
restricts to a $2$-natural transformation
\begin{flalign}\label{eqn:Wipullbackadd}
i^\ast\,:\, \HK^{\mathrm{add},W}~\Longrightarrow~\HK^{\mathrm{rc},W}
\end{flalign}
between the $2$-subfunctor $\HK^{\mathrm{add},W}\subseteq \HK^{\mathrm{add}}\subseteq \HK$,
which implements both the locally covariant additivity property and the time-slice axiom,
and the time-sliced relatively compact Haag-Kastler $2$-functor $\HK^{\mathrm{rc},W}$
from Definition \ref{def:RCHK2functortimeslice}. This is a consequence of the fact
that the orthogonal functor $i_M : \ovr{\RC(M)}\to\ovr{\COpen(M)}$ sends
Cauchy morphisms in $\ovr{\RC(M)}$ to Cauchy morphisms in $\ovr{\COpen(M)}$, for all $M\in\Loc$.
\begin{theo}\label{theo:Wrcvsadd}
For every object $M\in\Loc$, the component $i^\ast_M : \HK^{\mathrm{add},W}(M)\to \HK^{\mathrm{rc},W}(M)$
of the $2$-natural transformation \eqref{eqn:Wipullbackadd} is a fully faithful functor. 
Hence, by taking essential images,
one can present the time-sliced and locally covariantly additive Haag-Kastler $2$-functor
$\HK^{\mathrm{add},W}\subseteq \HK$ as a $2$-subfunctor of the time-sliced relatively 
compact Haag-Kastler $2$-functor $\HK^{\mathrm{rc},W}$.
\end{theo}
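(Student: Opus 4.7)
The plan is to deduce this theorem as a direct consequence of Theorem \ref{theo:rcvsadd} together with the observation, already stated in the paragraph preceding the theorem, that the orthogonal functor $i_M : \ovr{\RC(M)} \to \ovr{\COpen(M)}$ preserves Cauchy morphisms. The strategy is to exploit the fact that both $\HK^{\mathrm{add},W}(M)$ and $\HK^{\mathrm{rc},W}(M)$ are full subcategories (cut out by the time-slice property) of $\HK^{\mathrm{add}}(M)$ and $\HK^{\mathrm{rc}}(M)$ respectively, so that fully faithfulness can be inherited from the unrestricted case.

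First, I would recall that by Theorem \ref{theo:rcvsadd}, the functor $i^\ast_M : \HK^{\mathrm{add}}(M) \to \HK^{\mathrm{rc}}(M)$ is fully faithful. Since full subcategory inclusions compose with fully faithful functors to yield fully faithful functors, the composite
\begin{equation*}
\HK^{\mathrm{add},W}(M) \hookrightarrow \HK^{\mathrm{add}}(M) \xrightarrow{i^\ast_M} \HK^{\mathrm{rc}}(M)
\end{equation*}
is fully faithful. Next, I would check that this composite factors through the full subcategory $\HK^{\mathrm{rc},W}(M) \subseteq \HK^{\mathrm{rc}}(M)$. This is where the preservation of Cauchy morphisms enters: if $\AAA \in \HK^{\mathrm{add},W}(M)$ sends every Cauchy morphism $U \subseteq V$ in $\ovr{\COpen(M)}$ to an isomorphism, and if $U \subseteq V$ is a Cauchy morphism in $\ovr{\RC(M)}$, then $i_M(U \subseteq V) = (U \subseteq V)$ is also Cauchy in $\ovr{\COpen(M)}$, hence $i^\ast_M(\AAA)(U \subseteq V) = \AAA(U \subseteq V)$ is an isomorphism. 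Thus $i^\ast_M(\AAA)$ satisfies the time-slice axiom over $\ovr{\RC(M)}$.

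Finally, since $\HK^{\mathrm{rc},W}(M) \subseteq \HK^{\mathrm{rc}}(M)$ is a full subcategory and the composite above factors through it, the resulting functor $i^\ast_M : \HK^{\mathrm{add},W}(M) \to \HK^{\mathrm{rc},W}(M)$ inherits full faithfulness from the composite. This yields the first claim, and taking essential images then exhibits $\HK^{\mathrm{add},W}$ as a $2$-subfunctor of $\HK^{\mathrm{rc},W}$ in analogy with the corresponding remark at the end of Theorem \ref{theo:rcvsadd}. The only potential subtlety, which is mild, is verifying that the restriction is genuinely a $2$-natural transformation of the restricted $2$-functors; this follows because both the source and target $2$-subfunctors were obtained by pulling back the time-slice condition along the identity on objects of the respective underlying orthogonal categories. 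No new nontrivial content is therefore needed beyond Theorem \ref{theo:rcvsadd} and Cauchy-morphism preservation.
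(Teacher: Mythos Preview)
Your proposal is correct and follows essentially the same approach as the paper's proof: the paper simply recalls that $i^\ast_M : \HK^{\mathrm{add}}(M)\to \HK^{\mathrm{rc}}(M)$ is fully faithful by Theorem~\ref{theo:rcvsadd} and observes that its restriction to the full subcategories $\HK^{\mathrm{add},W}(M)$ and $\HK^{\mathrm{rc},W}(M)$ is therefore fully faithful too. You have spelled out the factorization through $\HK^{\mathrm{rc},W}(M)$ via Cauchy-morphism preservation more explicitly, but this is exactly the content the paper records in the sentence preceding the theorem when introducing \eqref{eqn:Wipullbackadd}.
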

\begin{proof}
Recall from Theorem \ref{theo:rcvsadd} that 
$i^\ast_M : \HK^{\mathrm{add}}(M)\to \HK^{\mathrm{rc}}(M)$ 
is a fully faithful functor, hence its restriction to the full subcategories 
$\HK^{\mathrm{add},W}(M)\subseteq \HK^{\mathrm{add}}(M)$ 
and $\HK^{\mathrm{rc},W}(M)\subseteq \HK^{\mathrm{rc}}(M)$
is fully faithful too.
\end{proof}

Recalling also Corollary \ref{cor:LCQFTvsHKW}, we then obtain the following result.
\begin{cor}\label{cor:WadditiveLCQFT}
The category $\AQFT(\ovr{\Loc})^{\mathrm{add},W}$ 
of additive locally covariant AQFTs satisfying the time-slice axiom 
is presented via the fully faithful functor
\begin{flalign}
\AQFT(\ovr{\Loc})^{\mathrm{add},W} ~\simeq~\HK^{\mathrm{add},W}(\mathrm{pt})~\longrightarrow~\HK^{\mathrm{rc},W}(\mathrm{pt})
\end{flalign}
as a full subcategory of the category of points $\HK^{\mathrm{rc},W}(\mathrm{pt})$
of the time-sliced relatively compact Haag-Kastler $2$-functor,
where the last functor is obtained by inducing \eqref{eqn:Wipullbackadd} to the categories of points.
\end{cor}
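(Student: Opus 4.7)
The plan is to combine Corollary \ref{cor:LCQFTvsHKW} with Theorem \ref{theo:Wrcvsadd} and the general observation that a $2$-natural transformation with componentwise fully faithful components induces a fully faithful functor on categories of points.

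First, I would observe that the equivalence $\HK^W(\mathrm{pt})\simeq \AQFT(\ovr{\Loc})^W$ from Corollary \ref{cor:LCQFTvsHKW} restricts to an equivalence $\HK^{\mathrm{add},W}(\mathrm{pt})\simeq \AQFT(\ovr{\Loc})^{\mathrm{add},W}$. This is the time-sliced analog of the equivalence \eqref{eqn:HKaddvsLCQFTadd} and is obtained by direct inspection: under the quasi-inverse functors from Constructions \ref{constr:LCQFTtoHK} and \ref{constr:HKtoLCQFT}, additivity of a locally covariant AQFT $\AAA \in \AQFT(\ovr{\Loc})$ corresponds (as already used in the derivation of \eqref{eqn:HKaddvsLCQFTadd}) precisely to the locally covariant additivity of each Haag-Kastler-style AQFT $k_M^\ast(\AAA) \in \HK(M)$ in the sense of Definition \ref{def:additivity}; this characterization is insensitive to the time-slice axiom, which is simultaneously satisfied on both sides of the equivalence.

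Next, I would apply Theorem \ref{theo:Wrcvsadd}, which provides the $2$-natural transformation $i^\ast : \HK^{\mathrm{add},W} \Rightarrow \HK^{\mathrm{rc},W}$ with every component $i^\ast_M : \HK^{\mathrm{add},W}(M) \to \HK^{\mathrm{rc},W}(M)$ fully faithful. Post-whiskering pseudo-natural transformations $\Delta\mathbf{1} \Rightarrow \HK^{\mathrm{add},W}$ and modifications by $i^\ast$ yields an induced functor $\HK^{\mathrm{add},W}(\mathrm{pt}) \to \HK^{\mathrm{rc},W}(\mathrm{pt})$, which is the one appearing in the statement of the corollary.

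The remaining step is to show that this induced functor is fully faithful. Using the explicit description of points and modifications in Remark \ref{rem:HKpoints}, this reduces to a standard componentwise argument: given two points $(\{\AAA_M\},\{\alpha_f\})$ and $(\{\BBB_M\},\{\beta_f\})$ of $\HK^{\mathrm{add},W}$, faithfulness follows directly from the componentwise faithfulness of the $i^\ast_M$, while fullness is obtained by picking componentwise pre-images (available by fullness of each $i^\ast_M$) and using faithfulness of $i^\ast_M$ over each appropriate object $M$ to propagate the modification axiom \eqref{eqn:HKmorphisms} from the $i^\ast_M$-images back to the pre-images. The main point requiring care, and the only potential obstacle, is the bookkeeping with the coherence isomorphisms of the pseudo-functor $i^\ast$ in this last verification; however, since $i^\ast$ is in fact strictly $2$-natural by construction, these coherences are identities and the argument becomes immediate.
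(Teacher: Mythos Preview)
Your proposal is correct and follows essentially the same approach as the paper. The paper treats this corollary as an immediate consequence of Theorem \ref{theo:Wrcvsadd} together with Corollary \ref{cor:LCQFTvsHKW} (indeed it gives no explicit proof beyond the introductory sentence), and you have simply spelled out the details---the restriction of the equivalence to additive objects and the componentwise argument for fully faithfulness on points---that the paper leaves implicit.
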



\section{\label{sec:HKstack}Haag-Kastler stacks}
All variants $\HK$, $\HK^W$, $\HK^{\mathrm{rc}}$ and
$\HK^{\mathrm{rc},W}$ of the Haag-Kastler $2$-functor we have encountered
in the previous section (see Definitions \ref{def:HK2functor}, \ref{def:HK2functortimeslice}, 
\ref{def:RCHK2functor} and \ref{def:RCHK2functortimeslice})
have failed to satisfy the local-to-global (descent) properties which are described by
the concept of a stack, see Definition \ref{def:stack}. Even worse, we have shown in
Propositions \ref{prop:HKnotastack}, \ref{prop:HKtimeslicenotastack}, \ref{prop:RCHKnotastack}
and \ref{prop:RCHKtimeslicenotastack} that each of these $2$-functors is not even a prestack.
In this section we shall address and partially solve this issue 
by constructing from our original $2$-functors new ones
which, under certain hypotheses that hold true for the 
relatively compact examples $\HK^{\mathrm{rc}}$ and $\HK^{\mathrm{rc},W}$, 
enjoy better descent properties. Our construction is 
guided by leveraging very specific properties of AQFTs and the Haag-Kastler $2$-functors 
which arise by combining the techniques from Subsection \ref{subsec:AQFT}
with the theory of locally presentable categories from Subsection \ref{subsec:Pr}.

\subsection{\label{subsec:HKnotations}Preparations}
In order to streamline our arguments and to avoid unnecessary repetitions,
let us introduce the following abstract
notion of a Haag-Kastler $2$-functor.
\begin{defi}\label{def:HKabstract}
Let $\ovr{\CC(-)} : \Loc\to \Cat^\perp$ be a strict $2$-functor
to the $2$-category of orthogonal categories. The associated 
\textit{Haag-Kastler-style $2$-functor}
\begin{subequations}
\begin{flalign}
\HK_{\ovr{\CC}}\,:\,\Loc^\op~\longrightarrow~\CAT
\end{flalign}
is defined by assigning to an object $M\in \Loc$ the category
\begin{flalign}
\HK_{\ovr{\CC}}(M)\, :=\,\AQFT(\ovr{\CC(M)})\,\in\,\CAT
\end{flalign}
of AQFTs over $\ovr{\CC(M)}$,
and to a $\Loc$-morphism $f:M\to N$ the pullback functor
\begin{flalign}
\begin{gathered}
\xymatrix@R=1em{
\ar@{=}[d]\HK_{\ovr{\CC}}(N) \ar[r]^-{\HK_{\ovr{\CC}}(f)\,:=\, f^\ast} ~&~ \HK_{\ovr{\CC}}(M) \ar@{=}[d]\\
\AQFT(\ovr{\CC(N)}) \ar[r]_-{f^\ast}~&~ \AQFT(\ovr{\CC(M)})
}
\end{gathered}
\end{flalign}
\end{subequations}
from \eqref{eqn:Fpullbackfunctor} along the orthogonal functor $f:=\ovr{\CC(f)} : \ovr{\CC(M)}\to \ovr{\CC(N)}$.
\end{defi}

\begin{ex}\label{ex:HK2functors}
We observe that Definition \ref{def:HKabstract} covers
all the different variants of the Haag-Kastler $2$-functor from Section \ref{sec:HK2functor}.
\begin{itemize}
\item[(1)] The Haag-Kastler $2$-functor $\HK$ 
from Definition \ref{def:HK2functor} is associated to the $2$-functor
$\ovr{\COpen(-)}$ which assigns the orthogonal categories of causally convex opens.

\item[(2)] The time-sliced Haag-Kastler $2$-functor $\HK^W$
from Definition \ref{def:HK2functortimeslice} (see also Remark \ref{rem:HK2functortimeslice})
is associated to the $2$-functor $\ovr{\COpen(-)}[W_{(-)}^{-1}]$ which assigns
the orthogonal categories of causally convex opens localized at all Cauchy morphisms.

\item[(3)] The relatively compact Haag-Kastler $2$-functor $\HK^{\mathrm{rc}}$
from Definition \ref{def:RCHK2functor} is associated to the $2$-functor
$\ovr{\RC(-)}$ which assigns the orthogonal categories of relatively compact causally convex
opens.

\item[(4)] The time-sliced relatively compact Haag-Kastler $2$-functor $\HK^{\mathrm{rc},W}$
from Definition \ref{def:RCHK2functortimeslice} (see also Remark \ref{rem:RCHK2functortimeslice})
is associated to the $2$-functor
$\ovr{\RC(-)}[W_{\mathrm{rc},(-)}^{-1}]$ which assigns the orthogonal categories of
relatively compact causally convex opens localized at all Cauchy morphisms. \qedhere
\end{itemize}
\end{ex}

We will now provide a useful description of the descent category $\HK_{\ovr{\CC}}(\U)$
of the general Haag-Kastler-style $2$-functor from Definition \ref{def:HKabstract}
for a causally convex open cover $\U=\{U_i\subseteq M\}$ of an object $M\in\Loc$.
The key idea is to present this descent category in terms of an AQFT category over
the following orthogonal category associated with the cover.
\begin{defi}\label{def:CcoverOCat}
Let $\ovr{\CC(-)} : \Loc\to \Cat^\perp$ be a strict $2$-functor.
For each causally convex open cover $\U=\{U_i\subseteq M\}$ of an object $M\in\Loc$,
we define the category $\CC(\U)$ by the following generators and relations description:
\begin{itemize}
\item An object in $\CC(\U)$ is a pair $(i,U)$ consisting of an index $i$ of the cover
and an object $U\in \CC(U_i)$.

\item The morphisms in $\CC(\U)$ are generated by the following two types of generators:
\begin{subequations}
\begin{itemize}
\item[(i)] For every $i$ and every morphism $g : U\to U^\prime$ in $\CC(U_i)$, there exists a morphism
\begin{flalign}
(i,g)\,:\, (i,U)~\longrightarrow~(i,U^\prime)\quad.
\end{flalign}

\item[(ii)] For every $i,j$ such that $U_{ij}\neq \emptyset$ and every $V\in \CC(U_{ij})$, 
there exists a morphism
\begin{flalign}
\varphi_{ij,V}\,:\,\Big(j,\iota_{U_{ij}}^{U_j}(V)\Big)~\longrightarrow~\Big(i,\iota_{U_{ij}}^{U_i}(V)\Big)\quad,
\end{flalign}
where here and below we use the short-hand notations $\iota_{U_{ij}}^{U_j}(V)=\ovr{\CC\big(\iota_{U_{ij}}^{U_j}\big)}(V)$
and $\iota_{U_{ij}}^{U_i}(V)=\ovr{\CC\big(\iota_{U_{ij}}^{U_i}\big)}(V)$.
\end{itemize}
\end{subequations}
\noindent These generators are required to satisfy the following relations:
\begin{itemize}
\item[(r1)] For all $i$ and all composable morphisms $g:U\to U^\prime$ and $g^\prime: U^\prime\to U^{\prime\prime}$
in $\CC(U_i)$,
\begin{subequations}
\begin{flalign}
(i,g^\prime)\circ (i,g) \,=\, (i,g^\prime\,g)\quad.
\end{flalign}
Furthermore, for all $i$ and all $U\in \CC(U_i)$,
\begin{flalign}
(i,\id_U)\,=\, \id_{(i,U)}\quad.
\end{flalign}
\end{subequations}

\item[(r2)] For all $i,j$ with $U_{ij}\neq \emptyset$ and all morphisms
$h : V\to V^\prime$ in $\CC(U_{ij})$, the diagram
\begin{flalign}
\begin{gathered}
\xymatrix@C=3em@R=3em{
\ar[d]_-{\big(j,\, \iota_{U_{ij}}^{U_j}(h)\big)} 
\Big(j, \iota_{U_{ij}}^{U_j}(V)\Big)\ar[r]^-{\varphi_{ij,V}}~&~
\Big(i, \iota_{U_{ij}}^{U_i}(V)\Big) 
\ar[d]^-{\big(i,\, \iota_{U_{ij}}^{U_i}(h)\big)} \\
\Big(j, \iota_{U_{ij}}^{U_j}(V^\prime)\Big)\ar[r]_-{\varphi_{ij,V^\prime}}~&~
\Big(i, \iota_{U_{ij}}^{U_i}(V^\prime)\Big)
}
\end{gathered} 
\end{flalign}
commutes.

\item[(r3)] For all $i$ and all $U\in \CC(U_{ii}) = \CC(U_{i})$,
\begin{subequations}
\begin{flalign}
\varphi_{ii,U}\,=\,\id_{(i,U)}\quad.
\end{flalign}
Furthermore, for all $i,j,k$ with $U_{ijk}\neq \emptyset$ and all $W\in\CC(U_{ijk})$, the diagram
\begin{flalign}
\begin{gathered}
\xymatrix@C=6em@R=3em{
\ar[dr]_-{\varphi_{ik,\, \iota_{U_{ijk}}^{U_{ik}}(W)}}
\Big(k, \iota_{U_{ijk}}^{U_k}(W)\Big)
\ar[r]^-{\varphi_{jk,\, \iota_{U_{ijk}}^{U_{jk}}(W)}}~&~
\Big(j,\iota_{U_{ijk}}^{U_j}(W)\Big)
\ar[d]^-{\varphi_{ij,\,\iota_{U_{ijk}}^{U_{ij}}(W)}}\\
~&~\Big(i, \iota_{U_{ijk}}^{U_i}(W)\Big)
}
\end{gathered}
\end{flalign}
\end{subequations}
commutes.
\end{itemize}
\end{itemize}
We endow this category with the structure of an orthogonal category
$\ovr{\CC(\U)}$ by taking the smallest orthogonality relation
such that
\begin{flalign}
\big((i,g_1)\,:\,(i,U_1)\to (i,U)\big)\perp \big((i,g_2)\,:\,(i,U_2)\to (i,U)\big)
\end{flalign}
is orthogonal, for all $i$ and all orthogonal pairs $(g_1:U_1\to U)\perp (g_2:U_2\to U)$ in $\ovr{\CC(U_i)}$.
\end{defi}

For every $M\in\Loc$ and every causally convex open cover $\U=\{U_i\subseteq M\}$, 
we define an orthogonal functor
\begin{subequations}\label{eqn:jcoverfunctor}
\begin{flalign}
j_{\U}\,:\,\ovr{\CC(\U)}~\longrightarrow~\ovr{\CC(M)}
\end{flalign}
from the orthogonal category in Definition \ref{def:CcoverOCat} 
to the orthogonal category $\ovr{\CC(M)}$. This orthogonal functor maps an object $(i,U)$ to
\begin{flalign}
j_{\U}(i,U)\,:=\,\iota^{M}_{U_i}(U)\,:=\, \ovr{\CC\big(\iota^{M}_{U_i}\big)}(U)\quad,
\end{flalign}
a type (i) generating morphism $(i,g) : (i,U)\to (i,U^\prime)$ to
\begin{flalign}
j_{\U}(i,g)\,:=\, \iota^{M}_{U_i}(g)\,:=\,\ovr{\CC\big(\iota^{M}_{U_i}\big)}(g)\,:\, 
\iota^{M}_{U_i}(U)~\longrightarrow~\iota^{M}_{U_i}(U^\prime)\quad,
\end{flalign}
and a type (ii) generating morphism $\varphi_{ij,V} : \big(j,\iota_{U_{ij}}^{U_j}(V)\big)
\to \big(i,\iota_{U_{ij}}^{U_i}(V)\big)$ to the identity morphism
\begin{flalign}
j_{\U}(\varphi_{ij,V}) \,:=\,\id_{\iota_{U_{ij}}^{M}(V)}^{}\,:\, 
\iota_{U_{ij}}^{M}(V)~\longrightarrow~\iota_{U_{ij}}^{M}(V)\quad.
\end{flalign}
\end{subequations}
One directly checks that this assignment is compatible with the relations from Definition \ref{def:CcoverOCat}
and also that it preserves the orthogonality relations.
\begin{propo}\label{prop:descentvsAQFT}
For every $M\in\Loc$ and every causally convex open cover $\U=\{U_i\subseteq M\}$, 
there exists a canonical identification
\begin{flalign}
\HK_{\ovr{\CC}}(\U)\,\cong\, \AQFT(\ovr{\CC(\U)})
\end{flalign}
between the descent category of the Haag-Kastler-style $2$-functor $\HK_{\ovr{\CC}}$
and the category of AQFTs over the orthogonal category from 
Definition \ref{def:CcoverOCat}. Upon this identification, the canonical functor to
the descent category coincides with the pullback functor
\begin{flalign}
\begin{gathered}
\xymatrix@C=3em{
\ar@{=}[d]\HK_{\ovr{\CC}}(M) \ar[r]~&~\HK_{\ovr{\CC}}(\U)\ar[d]^-{\cong}\\
 \AQFT(\ovr{\CC(M)})\ar[r]_-{j_{\U}^\ast}~&~ \AQFT(\ovr{\CC(\U)})
}
\end{gathered}
\end{flalign}
along the orthogonal functor \eqref{eqn:jcoverfunctor}. 
Operadic left Kan extension as in Proposition \ref{propo:operadicLKE} then defines an adjunction
\begin{flalign}\label{eqn:descentadjunction}
\xymatrix@C=3em{
j_{\U\,!}\,:\, \HK_{\ovr{\CC}}(\U) \ar@<0.75ex>[r]~&~\ar@<0.75ex>[l] \HK_{\ovr{\CC}}(M)\,:\, j_{\U}^\ast
}
\end{flalign}
between the descent category $\HK_{\ovr{\CC}}(\U)$ and $\HK_{\ovr{\CC}}(M)$.
\end{propo}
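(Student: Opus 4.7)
The plan is to compare both sides of the claimed isomorphism by unpacking them as concrete data and matching those data, then recognize the functor between them as the pullback along $j_\U$.

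First, I would compute the descent category $\HK_{\ovr{\CC}}(\U)$ explicitly by applying the general formula from Remark \ref{rem:descentexplicit} to $X = \HK_{\ovr{\CC}}$. This yields that an object of $\HK_{\ovr{\CC}}(\U)$ is a tuple $(\{\AAA_i\}, \{\varphi_{ij}\})$ consisting of objects $\AAA_i \in \AQFT(\ovr{\CC(U_i)})$ together with isomorphisms $\varphi_{ij} : (\iota_{U_{ij}}^{U_j})^\ast(\AAA_j) \Rightarrow (\iota_{U_{ij}}^{U_i})^\ast(\AAA_i)$ in $\AQFT(\ovr{\CC(U_{ij})})$, subject to the standard normalization and cocycle constraints on triple overlaps. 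Morphisms are families of natural transformations $\AAA_i \Rightarrow \BBB_i$ compatible with these cocycles.

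Next, I would use the generators-and-relations presentation of $\CC(\U)$ from Definition \ref{def:CcoverOCat} to see that specifying a functor $\AAA : \CC(\U) \to \Alg_\mathsf{uAs}(\TT)$ is equivalent to specifying: (a) for each index $i$, a functor $\AAA_i : \CC(U_i) \to \Alg_\mathsf{uAs}(\TT)$, obtained by restricting along $\CC(U_i) \to \CC(\U),\ U \mapsto (i,U)$ and using the generators of type (i) together with relation (r1); and (b) for each pair $i,j$ with $U_{ij} \neq \emptyset$, a natural transformation $\varphi_{ij}$ whose components are $\AAA(\varphi_{ij,V})$, the naturality of which is precisely relation (r2). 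The normalization $\varphi_{ii} = \id$ and the cocycle condition on triple overlaps are exactly the two parts of relation (r3). Setting $k = i$ in the cocycle diagram forces $\varphi_{ji,V} \circ \varphi_{ij,V} = \id$ and similarly for the reverse composition, so each $\varphi_{ij}$ is automatically an isomorphism as demanded by the descent data. The orthogonality relation on $\ovr{\CC(\U)}$ is generated by the orthogonal pairs of the form $(i,g_1) \perp (i,g_2)$ coming from the local orthogonality relations on $\ovr{\CC(U_i)}$, and since the $\perp$-commutativity axiom is stable under pre- and post-composition, checking it for the generating pairs is equivalent to the collection of axioms that each $\AAA_i$ lies in $\AQFT(\ovr{\CC(U_i)})$. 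Matching morphisms is equally direct: a natural transformation $\AAA \Rightarrow \BBB$ in $\AQFT(\ovr{\CC(\U)})$ is determined by its components on objects $(i,U)$, which gives the family $\AAA_i \Rightarrow \BBB_i$, and its naturality with respect to the generators $\varphi_{ij,V}$ is exactly the compatibility with cocycles.

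Finally, to identify the canonical functor $\HK_{\ovr{\CC}}(M) \to \HK_{\ovr{\CC}}(\U)$ with the pullback along $j_\U$, it suffices to observe that on objects the descent functor sends $\BBB \in \AQFT(\ovr{\CC(M)})$ to the tuple $(\{(\iota_{U_i}^M)^\ast(\BBB)\}, \{\id\})$, whose value on $(i,U) \in \CC(\U)$ is $\BBB(\iota_{U_i}^M(U)) = \BBB(j_\U(i,U))$, and whose value on the $\varphi_{ij,V}$ is the identity, matching the definition of $j_\U$ on the generators in \eqref{eqn:jcoverfunctor}. The existence of the adjunction \eqref{eqn:descentadjunction} then follows immediately from Proposition \ref{propo:operadicLKE} applied to the orthogonal functor $j_\U$.

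The bookkeeping for relation (r3) and the verification that the induced orthogonality on $\ovr{\CC(\U)}$ captures precisely the local $\perp$-commutativity conditions (no more, no less) are the main technical points, but both reduce to the universal properties encoded by the generators-and-relations presentation.
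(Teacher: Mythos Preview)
Your proposal is correct and follows essentially the same approach as the paper: unpack the descent category via Remark~\ref{rem:descentexplicit}, match the generators-and-relations presentation of $\ovr{\CC(\U)}$ against the descent data (with invertibility of the $\varphi_{ij}$ coming from (r3) with repeated indices, which the paper phrases as using triple overlaps $U_{iji}$ and $U_{jij}$), and then read off that the canonical functor is $j_\U^\ast$. Your write-up is more detailed than the paper's terse version, but the strategy and all key steps coincide.
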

\begin{proof}
The first statement follows directly by spelling out and comparing
the data and properties of a $\perp$-commutative
functor $\AAA : \CC(\U)\to \Alg_{\mathsf{uAs}}(\TT)$ from the orthogonal
category in Definition \ref{def:CcoverOCat} with the model for the descent category in
Remark \ref{rem:descentexplicit}. The canonical identification is then given by
$\AAA(i,U) =\AAA_i(U)$, for all $i$ and all objects $U\in\CC(U_i)$. The type (i) morphisms 
$(i,g)$ describe the structure of the local AQFTs $\AAA_i$ on $U_i$ in the descent category,
and the type (ii) morphisms $\varphi_{ij,V}$ describe the cocycle data in the descent category.
Note that, choosing triple overlaps of the form $U_{iji}$ and $U_{jij}$, 
the relations (r3) in Definition \ref{def:CcoverOCat} imply
that every type (ii) morphism is invertible, as required for the cocyles.
Through this identification, one directly verifies that the canonical functor to the descent
category coincides with $j_{\U}^\ast$.
\end{proof}

\begin{assu}\label{assu:LPCat}
Throughout the remainder of this section, we shall assume that 
the target symmetric monoidal category $\TT$ in which the AQFTs take values 
is \textit{locally presentable},
see also Section \ref{subsec:Pr}.
Let us recall from Example \ref{ex:locprescats} that the standard choice, given by the
category of vector spaces $\Vec_\bbK$ over a field $\bbK$, is of this kind.
\end{assu}

As a consequence of Example \ref{ex:locprescats} and Proposition \ref{propo:operadicLKE},
it follows from Assumption \ref{assu:LPCat} 
that the general Haag-Kastler-style $2$-functor from Definition \ref{def:HKabstract}
takes values in the $2$-subcategory $\Pr^{R}\subseteq \CAT$ from Definition \ref{def:PrL/R}, i.e.\
\begin{flalign}
\HK_{\ovr{\CC}}\,:\,\Loc^\op~\longrightarrow~\Pr^R\quad.
\end{flalign}
This provides by Corollary \ref{cor:adjointpseudofunctor} a new 
angle through which one can study the descent properties of $\HK_{\ovr{\CC}}$,
which is given by studying the codescent properties of the adjoint pseudo-functor
\begin{flalign}\label{eqn:HKdagger}
\HK_{\ovr{\CC}}^\dagger \,:\,\Loc~\longrightarrow~\Pr^L\quad.
\end{flalign}
Let us recall that the adjoint pseudo-functor assigns to each object
$M \in \Loc$ the same category $\HK_{\ovr{\CC}}^\dagger(M) := \HK_{\ovr{\CC}}(M)$
as assigned by $\HK_{\ovr{\CC}}$, and to each $\Loc$-morphism $f$ the left adjoint
$\HK_{\ovr{\CC}}^\dagger(f) := f_! \dashv f^\ast = \HK_{\ovr{\CC}}(f)$ of the pullback 
functor assigned by $\HK_{\ovr{\CC}}$.
The adjunction \eqref{eqn:descentadjunction} established in Proposition \ref{prop:descentvsAQFT}
gives an explicit and powerful model for both the canonical functor $j_{\U}^\ast : 
\HK_{\ovr{\CC}}(M)\to \HK_{\ovr{\CC}}(\U)$ to the descent category of 
$\HK_{\ovr{\CC}}$ and the canonical functor $j_{\U\,!} : 
\HK_{\ovr{\CC}}(\U)\to \HK_{\ovr{\CC}}(M)$ from the codescent category of
$\HK_{\ovr{\CC}}^\dagger$ in terms of pullback and operadic left Kan extension
along the orthogonal functor $j_{\U}$ from \eqref{eqn:jcoverfunctor}.
This explicit description will be a very useful ingredient for proving 
our results in the subsections below.
\begin{rem}\label{rem:Castpresentability}
For readers interested in $C^\ast$-algebraic AQFTs, we would like
to note that the corresponding Haag-Kastler-style $2$-functor
$C^\ast\HK_{\ovr{\CC}}:\Loc^\op \to \Pr^R$ in this context takes 
values in locally presentable categories and right adjoint functors too. 
Let us present the relevant arguments: Recall from
e.g.\ \cite{Castpresentable} that the category $C^\ast\Alg$ 
of $C^\ast$-algebras is locally ($\aleph_1$-)presentable,
hence the category of functors $\Fun(\CC,C^\ast\Alg)$
out of any small category $\CC$ is locally presentable by \cite[Corollary 1.54]{Adamek}.
The category $C^\ast\AQFT(\ovr{\CC})\subseteq \Fun(\CC,C^\ast\Alg)$
of $C^\ast$-algebraic AQFTs over an orthogonal category $\ovr{\CC}$
is defined as in Definition \ref{def:AQFT} in terms of 
the full subcategory consisting of all functors $\AAA : \CC\to C^\ast\Alg$ 
satisfying the $\perp$-commutativity axiom. From the usual
point-wise construction of limits and colimits in $\Fun(\CC,C^\ast\Alg)$, 
one directly checks that the full subcategory of $C^\ast$-algebraic AQFTs
is closed under limits and filtered colimits, as these preserve the
$\perp$-commutativity axiom, hence $C^\ast\AQFT(\ovr{\CC})$ 
is locally presentable by \cite{Adamek2}. It remains to show that, given any
orthogonal functor $F:\ovr{\CC}\to\ovr{\DD}$, the pullback functor
$F^\ast: C^\ast\AQFT(\ovr{\DD})\to C^\ast\AQFT(\ovr{\CC})$ is right adjoint.
This functor clearly preserves limits and filtered colimits as 
these are computed point-wise, hence it is right adjoint as a consequence
of the special adjoint functor theorem \cite[Theorem 1.66]{Adamek}.
This implies in particular that $C^\ast\HK_{\ovr{\CC}}:\Loc^\op \to \Pr^R$ takes values in 
$2$-subcategory $\Pr^{R}\subseteq \CAT$ from Definition \ref{def:PrL/R}.
\sk

The key difference between our algebraic context and the $C^\ast$-algebraic 
one is that we have available concrete models for the associated left adjoint functors 
in terms of operadic left Kan extensions, which we will use 
explicitly in our proofs in this section. In the $C^\ast$-algebraic case,
one knows a priori only about the existence of such left adjoints, which
introduces the additional challenge of finding useful models to carry out proofs. 
This means that further work is required to find out 
whether or not the results of this section generalize to $C^\ast$-algebraic AQFTs.
\end{rem}

\subsection{\label{subsec:HKcostack}Adjoint precostacks}
In this subsection we will prove that, for all our variants of the Haag-Kastler 
$2$-functor from Example \ref{ex:HK2functors}, the adjoint pseudo-functor \eqref{eqn:HKdagger}
is a precostack with respect to a suitable Grothendieck topology on $\Loc$.
In the case where no time-slice axiom is implemented, the topology is
given by all causally convex open covers $\U=\{U_i\subseteq M\}$.
In the case where a time-slice axiom is implemented, the topology must be chosen coarser and
it is given by all $D$-stable causally convex open covers $\U=\{U_i\subseteq M\}$. 
See also Definition \ref{def:cover}.

\subsubsection{The case of no time-slice}
Of the Haag-Kastler $2$-functors in Example \ref{ex:HK2functors},
the non-time-sliced examples (1) and (3) are associated with $2$-functors
$\ovr{\CC(-)}: \Loc\to\Cat^\perp$ that have additional properties which considerably simplify
the study of codescent of the adjoint pseudo-functor $\HK_{\ovr{\CC}}^\dagger$.
The following definition captures these properties abstractly.
\begin{defi}\label{def:nice2functor}
A $2$-functor $\ovr{\CC(-)} : \Loc\to \Cat^\perp$ is 
called a \textit{net domain} if it satisfies the following properties:
\begin{itemize}
\item[(1)] For all $M\in\Loc$, the underlying category of $\ovr{\CC(M)}$ is thin, 
i.e.\ there exists at most one morphism between every two objects.

\item[(2)] For all $M\in\Loc$ and all causally convex open subsets $V\subseteq M$, the orthogonal functor
$\iota_V^M = \ovr{\CC\big(\iota_V^M\big)} : \ovr{\CC(V)}\to \ovr{\CC(M)}$ is a full orthogonal subcategory inclusion 
$\ovr{\CC(V)}\subseteq \ovr{\CC(M)}$.

\item[(3)] For all $M\in\Loc$ and all causally convex open subsets $V_1,V_2\subseteq M$, 
if there exists a morphism $U_1\to U_2$ in $\ovr{\CC(M)}$ from 
$U_1\in \ovr{\CC(V_1)} \subseteq \ovr{\CC(M)}$ to $U_2 \in \ovr{\CC(V_2)} \subseteq  \ovr{\CC(M)}$,
then $U_1\in \ovr{\CC(V_1\cap V_2)}\subseteq \ovr{\CC(M)}$.
\end{itemize}
\end{defi}

\begin{ex}\label{ex:nice2functor}
The $2$-functors $\ovr{\COpen(-)} : \Loc\to \Cat^\perp$ from \eqref{eqn:COpenfunctor}
and $\ovr{\RC(-)} : \Loc\to \Cat^\perp$ from \eqref{eqn:RCfunctor} clearly satisfy
the properties of Definition \ref{def:nice2functor}. Hence, the Haag-Kastler $2$-functor
$\HK$ and the relatively compact Haag-Kastler $2$-functor $\HK^{\mathrm{rc}}$ are defined over net domains.
The orthogonal localizations $\ovr{\COpen(-)}[W^{-1}_{(-)}] : \Loc\to \Cat^\perp$ from \eqref{eqn:COpenfunctorW}
and $\ovr{\RC(-)}[W^{-1}_{\mathrm{rc},(-)}] : \Loc\to \Cat^\perp$ from \eqref{eqn:RCfunctorW}
satisfy only a weaker variant of these properties, see Definition \ref{def:Wnice2functor}
and Example \ref{ex:Wnice2functor} below for more details.
\end{ex}

In the case of net domains, one can drastically simplify 
the description of the orthogonal category $\ovr{\CC(\U)}$ 
from Definition \ref{def:CcoverOCat}.
\begin{lem}\label{lem:niceCcoverOCat}
Suppose that the $2$-functor $\ovr{\CC(-)} : \Loc\to \Cat^\perp$ is a net domain.
Let $\U=\{U_i\subseteq M\}$ be any causally convex open cover of any object $M\in\Loc$.
Then the orthogonal category $\ovr{\CC(\U)}$ from 
Definition \ref{def:CcoverOCat} admits the following explicit description:
\begin{itemize}
\item There exists a unique morphism $(i,U)\to (j,V)$ if and only if there exists
a morphism $U\to V$ in $\ovr{\CC(M)}$ from $U\in \ovr{\CC(U_i)}\subseteq \ovr{\CC(M)}$
to $V\in \ovr{\CC(U_j)}\subseteq \ovr{\CC(M)}$. 

\item A pair of morphisms $(i_1,U_1) \to (j,V) \leftarrow (i_2,U_2)$ in $\ovr{\CC(\U)}$
is orthogonal if and only if the pair of morphisms $U_1 \to V \leftarrow U_2$ in $\ovr{\CC(M)}$ is orthogonal.
\end{itemize}
\end{lem}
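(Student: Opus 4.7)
The plan is to exhibit the description in the lemma as an explicit orthogonal category $\ovr{\tilde{\CC}(\U)}$ and then construct a canonical isomorphism of orthogonal categories $\ovr{\CC(\U)} \cong \ovr{\tilde{\CC}(\U)}$. Since $\CC(M)$ is thin by net domain property (1), composition of $\CC(M)$-morphisms induces a well-defined thin category $\tilde{\CC}(\U)$ on the same objects $(i,U)$ as $\CC(\U)$, with a unique morphism $(i,U) \to (j,V)$ whenever the underlying $\CC(M)$-morphism $U \to V$ exists; I equip it with the orthogonality stated in the lemma.

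First I would define a functor $F : \CC(\U) \to \tilde{\CC}(\U)$ by extending on the two types of generators in the only possible way (a type (i) generator $(i,g)$ maps to the morphism witnessing $g : U \to U'$ in $\CC(M)$, and a type (ii) generator $\varphi_{ij,V}$ maps to the morphism witnessing the identity of $\iota_{U_{ij}}^{M}(V)$ in $\CC(M)$, which is available by $2$-functoriality of $\ovr{\CC(-)}$). Thinness of $\tilde{\CC}(\U)$ makes the verification of relations (r1)--(r3) automatic. In the other direction, I would construct $G : \tilde{\CC}(\U) \to \CC(\U)$ by the formula $G((i,U) \to (j,V)) := (j,g) \circ \varphi_{ji,U}$, where $g : U \to V$ is the unique $\CC(U_j)$-morphism obtained as follows: net domain property (3) forces $U \in \CC(U_{ij})$, and the full orthogonal subcategory inclusion $\CC(U_j) \subseteq \CC(M)$ from property (2) uniquely lifts the $\CC(M)$-morphism $U \to V$.

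The main obstacle is proving that $G$ respects composition, which is the step that genuinely uses all three net domain axioms together with all three relations (r1)--(r3). Given a composable pair $(i,U) \to (j,V) \to (k,X)$ in $\tilde{\CC}(\U)$, the key technical input is the \emph{identity-morphism trick}: applying property (3) to the identity $U \to U$ in $\CC(M)$, with $U$ viewed as an object of two subcategories $\CC(V_1), \CC(V_2) \subseteq \CC(M)$, forces $U \in \CC(V_1 \cap V_2)$. Iterating this, one concludes $V \in \CC(U_{jk})$ and then $U \in \CC(U_{jk})$ and $U \in \CC(U_{ijk})$, which provides the precise hypotheses for (r2) and (r3) to apply. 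Relation (r2) rewrites $\varphi_{kj, V} \circ (j,g_1)$ as $(k, \tilde{g}_1) \circ \varphi_{kj, U}$, relation (r1) merges the resulting type (i) morphisms, and relation (r3) collapses $\varphi_{kj, U} \circ \varphi_{ji, U}$ into $\varphi_{ki, U}$, after which faithfulness of $\CC(U_k) \subseteq \CC(M)$ identifies the resulting $\CC(U_k)$-morphism with the one defining $G((i,U) \to (k,X))$.

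Once $F$ and $G$ are defined, the identities $F \circ G = \id$ and $G \circ F = \id$ are routine: the former is forced by thinness of $\tilde{\CC}(\U)$, and the latter reduces on generators to the relations $\varphi_{ii, U} = \id_{(i, U)}$ from (r3) and $(i,\id_U) = \id_{(i,U)}$ from (r1). For the orthogonality, every generating orthogonal pair $(i,g_1) \perp (i,g_2)$ in $\CC(\U)$ maps under $F$ to an orthogonal pair in $\CC(M)$ by property (2), while any orthogonal pair $U_1 \to V \leftarrow U_2$ in $\tilde{\CC}(\U)$ descends via the full orthogonal subcategory inclusion $\CC(U_j) \subseteq \CC(M)$ to an orthogonal pair $h_1 \perp h_2$ in $\CC(U_j)$, giving a generating orthogonal pair $(j,h_1) \perp (j,h_2)$ in $\CC(\U)$ whose post-composition with $\varphi_{j i_1, U_1}$ and $\varphi_{j i_2, U_2}$ preserves orthogonality by composition stability.
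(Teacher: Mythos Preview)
Your proof is correct and follows essentially the same strategy as the paper's: both arguments hinge on using property~(3) to push objects into the relevant intersection categories $\CC(U_{ij})$, then using relation~(r2) to commute type~(ii) generators past type~(i) generators, and relations~(r1), (r3) to collapse the resulting word into the canonical form $(j,g)\circ\varphi_{ji,U}$. The only difference is packaging: the paper works directly inside $\CC(\U)$ to show every morphism admits a unique such normal form, whereas you externalize this by building the target category $\tilde{\CC}(\U)$ and exhibiting mutually inverse orthogonal functors $F,G$ --- but the substantive verifications (compatibility of $G$ with composition, and the orthogonality comparison via the full orthogonal subcategory $\CC(U_j)\subseteq\CC(M)$) are identical in content.
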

\begin{proof}
We will use the properties from Definition \ref{def:nice2functor} in order to simplify the
generators and relations presentation of the category $\ovr{\CC(\U)}$ 
from Definition \ref{def:CcoverOCat}. As a consequence of thinness (item (1)), 
we can drop the labels for the type (i) generators because there exists at most one such generator
for fixed source and target. Using also the full subcategory assumption (item (2)), we can regard all
objects and morphisms associated with the cover as living in $\ovr{\CC(M)}$. Given any type (ii) generator
$\psi_{ji,V} : (i,V)\to (j,V)$ with $V\in\ovr{\CC(U_{ij})}\subseteq \ovr{\CC(M)}$
and any type (i) generator $(i,U)\to (i,V)$ with $U\in\ovr{\CC(U_i)} \subseteq \ovr{\CC(M)}$,
then item (3) implies that $U\in \ovr{\CC(U_{ij})}\subseteq \ovr{\CC(M)}$. Hence, we get from the relations 
(r2) a commutative diagram
\begin{flalign}
\begin{gathered}
\xymatrix@C=3em{
\ar[d](i,U)\ar[r]^-{\psi_{ji,U}}~&~(j,U)\ar[d]\\
(i,V)\ar[r]_-{\psi_{ji,V}}~&~(j,V)
}
\end{gathered}
\end{flalign}
exchanging the order of composition of type (i) and type (ii) generators. This implies that
any morphism in  $\ovr{\CC(\U)}$  can be written in the form
$(\mathrm{type ~(i)})\circ (\mathrm{type ~(ii)})$ where all type (ii) generators are to the right
of the type (i) generators. Using thinness (item (1)) and the relations (r1) and (r3), it then follows that
there exists at most one morphism $(i,U) \to (j,V)$ in $\ovr{\CC(\U)}$,
with $U\in \ovr{\CC(U_i)}\subseteq \ovr{\CC(M)}$ and $V \in \ovr{\CC(U_j)}\subseteq \ovr{\CC(M)}$,
which must be of the form
\begin{flalign}
(i,U) ~\stackrel{\psi_{ji,U}}{\longrightarrow}~(j,U)~\longrightarrow~(j,V)
\end{flalign}
given by a type (ii) generator composed with a type (i) generator.
The type (i) morphism in this composition exists 
if and only if there exists a morphism $U\to V$ in $\ovr{\CC(M)}$,
which implies using item (3) that  $U\in \ovr{\CC(U_{ij})}\subseteq \ovr{\CC(M)}$
and hence also the type (ii) morphism exists. 
\sk

Let us now focus on the characterization of the orthogonality relation. 
Given two morphisms $(i_1,U_1) \to (j,V)$ and $(i_2,U_2) \to (j,V)$ in 
$\ovr{\CC(\U)}$, the above description of morphisms yields factorizations 
$(i_1,U_1) \cong (j,U_1) \to (j,V)$
and $(i_2,U_2) \cong (j,U_2) \to (j,V)$ in $\ovr{\CC(\U)}$.
(The first steps are isomorphisms by the relations (r3) from Definition \ref{def:CcoverOCat}.)
Therefore, by composition stability $(i_1,U_1) \to (j,V) \leftarrow (i_2,U_2)$ in $\ovr{\CC(\U)}$
is orthogonal if and only if $(j,U_1) \to (j,V) \leftarrow (j,U_2)$ in $\ovr{\CC(\U)}$
is orthogonal. According to Definition \ref{def:CcoverOCat} and item (2) from Definition \ref{def:nice2functor}, 
the latter is orthogonal if and only if $U_1 \to V \leftarrow U_2$ in $\ovr{\CC(M)}$ is orthogonal. 
\end{proof}

\begin{theo}\label{theo:precostack}
Suppose that the $2$-functor $\ovr{\CC(-)} : \Loc\to \Cat^\perp$ is a net domain 
in the sense of Definition \ref{def:nice2functor}. 
Let $\U=\{U_i\subseteq M\}$ be any causally convex open cover of any object $M\in\Loc$.
Then the canonical functor
\begin{flalign}\label{eqn:precostackcodescent}
j_{\U\,!}\,:\, \HK_{\ovr{\CC}}(\U)~\longrightarrow~\HK_{\ovr{\CC}}(M)
\end{flalign}
from the codescent category of $\HK_{\ovr{\CC}}^\dagger$ in this cover to its value on $M$ is fully faithful.
Hence, the adjoint pseudo-functor $\HK_{\ovr{\CC}}^\dagger : \Loc\to \Pr^L$ is a precostack
with respect to the Grothendieck topology given by all causally convex open covers.
\end{theo}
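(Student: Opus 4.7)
The plan is to deduce the full faithfulness of $j_{\U\,!}$ directly from Proposition \ref{propo:Ffullorthogonal}, applied to the orthogonal functor $j_{\U} : \ovr{\CC(\U)} \to \ovr{\CC(M)}$ constructed in \eqref{eqn:jcoverfunctor}. Indeed, by Proposition \ref{prop:descentvsAQFT}, the canonical functor \eqref{eqn:precostackcodescent} is (up to the canonical identification) the operadic left Kan extension $j_{\U\,!}$ along $j_{\U}$, so it will suffice to verify that $j_{\U}$ is fully faithful and reflects the orthogonality relation. Once this is established, the precostack condition is immediate from the definition, because it must hold for every causally convex open cover $\U$.

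The core input will be the explicit description of $\ovr{\CC(\U)}$ furnished by Lemma \ref{lem:niceCcoverOCat}, which is available precisely because $\ovr{\CC(-)}$ is a net domain. Given this lemma, I would verify the three required properties in turn. For faithfulness, I would invoke thinness of $\ovr{\CC(M)}$ (item (1) of Definition \ref{def:nice2functor}) together with the first bullet of Lemma \ref{lem:niceCcoverOCat}, which asserts that there is at most one morphism $(i,U) \to (j,V)$ in $\ovr{\CC(\U)}$, so that the map on hom-sets is trivially injective. For fullness, I would observe that the same bullet guarantees that whenever a morphism $U \to V$ exists in $\ovr{\CC(M)}$ between objects $U \in \ovr{\CC(U_i)}$ and $V \in \ovr{\CC(U_j)}$, the unique morphism $(i,U) \to (j,V)$ in $\ovr{\CC(\U)}$ is sent by $j_{\U}$ to exactly that morphism, by the defining formula \eqref{eqn:jcoverfunctor}. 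For reflection of orthogonality, I would simply quote the second bullet of Lemma \ref{lem:niceCcoverOCat}, which characterizes orthogonal pairs in $\ovr{\CC(\U)}$ as precisely those whose $j_{\U}$-images are orthogonal in $\ovr{\CC(M)}$.

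With these three verifications in hand, Proposition \ref{propo:Ffullorthogonal} then immediately gives that $j_{\U\,!}$ is fully faithful, exhibiting $\HK_{\ovr{\CC}}(\U)$ as a coreflective full subcategory of $\HK_{\ovr{\CC}}(M)$ via \eqref{eqn:descentadjunction}. The precostack statement for $\HK_{\ovr{\CC}}^\dagger$ then follows by Corollary \ref{cor:adjointpseudofunctor} and the fully faithful version of the remark following it, since the canonical functor $X^\dagger(\U) \to X^\dagger(M)$ in the codescent formulation of \eqref{eqn:costack1} is precisely our $j_{\U\,!}$ under the identifications spelled out in Proposition \ref{prop:descentvsAQFT}.

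The only point that requires some care is conceptual rather than technical: the orthogonal functor $j_{\U}$ is generally \emph{not} injective on objects, since the same object $U \in \ovr{\CC(M)}$ may be labeled by several indices $i$ with $U \in \ovr{\CC(U_i)}$. This is harmless for full faithfulness, but it does mean one cannot hope for $j_{\U}$ to exhibit $\ovr{\CC(\U)}$ as a genuine orthogonal subcategory of $\ovr{\CC(M)}$; the best one can do is a full essential image description. The main non-trivial work behind the proof is really encapsulated in Lemma \ref{lem:niceCcoverOCat}, which itself relies on the three net domain properties (thinness, full subcategory inclusions, and the intersection property) in order to collapse the generators-and-relations presentation of Definition \ref{def:CcoverOCat} into the simple form used above; once that lemma is granted, the remainder of the proof is a clean application of Proposition \ref{propo:Ffullorthogonal}.
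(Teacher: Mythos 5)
Your proposal is correct and follows essentially the same route as the paper's proof: Lemma \ref{lem:niceCcoverOCat} is used to check that $j_{\U}$ is fully faithful and reflects orthogonality, after which Proposition \ref{propo:Ffullorthogonal} yields full faithfulness of the operadic left Kan extension $j_{\U\,!}$, identified with the codescent functor via Proposition \ref{prop:descentvsAQFT}. The extra detail you supply (thinness handling faithfulness/fullness, and the remark that $j_{\U}$ need not be injective on objects) is consistent with, and merely spells out, the argument the paper leaves implicit.
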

\begin{proof}
Using Lemma \ref{lem:niceCcoverOCat}, 
one verifies that the orthogonal
functor $j_{\U} : \ovr{\CC(\U)} \to\ovr{\CC(M)}$
from \eqref{eqn:jcoverfunctor} is fully faithful and reflects orthogonality.
Proposition \ref{propo:Ffullorthogonal} then implies 
that the associated operadic left Kan extension $j_{\U\,!}$ is a fully faithful functor.
\end{proof}

\begin{rem}\label{rem:dualprecostack}
The result of Theorem \ref{theo:precostack}
implies that, in the case where $\ovr{\CC(-)}$ is a net domain,
the canonical functor 
\begin{flalign}\label{eqn:precostackcodescentadjoint}
j_{\U}^\ast\,:\, \HK_{\ovr{\CC}}(M)~\longrightarrow~\HK_{\ovr{\CC}}(\U)
\end{flalign}
to the descent category of the original $2$-functor $\HK_{\ovr{\CC}} : \Loc^\op\to\Pr^R$
is essentially surjective, for every causally convex open cover $\U=\{U_i\subseteq M\}$. 
Indeed, since $j_{\U\,!}$ is fully faithful,
the unit of the adjunction $j_{\U\,!}\dashv j_{\U}^\ast$ is 
a natural isomorphism, so we obtain an isomorphism
\begin{flalign}
(\eta_{\U})_{\BBB}\,:\,\BBB ~\stackrel{\cong}{\Longrightarrow}~j_{\U}^\ast \,j_{\U\,!}(\BBB)\quad,
\end{flalign}
for every object $\BBB\in \HK_{\ovr{\CC}}(\U)$. Hence, every object 
of $\HK_{\ovr{\CC}}(\U)$ lies in the essential 
image of the functor \eqref{eqn:precostackcodescentadjoint}.
\end{rem}

As a direct consequence of Examples \ref{ex:HK2functors} and \ref{ex:nice2functor},
we obtain the following result.
\begin{cor}
The adjoint pseudo-functors $\HK^\dagger$ and $\HK^{\mathrm{rc}\,\dagger}$ 
of the Haag-Kastler $2$-functor and the relatively compact Haag-Kastler $2$-functor are
both precostacks with respect to the Grothendieck topology given by all 
causally convex open covers.
\end{cor}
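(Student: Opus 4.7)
The plan is to reduce this corollary to a direct application of Theorem \ref{theo:precostack} together with the identifications in Example \ref{ex:HK2functors}. There is no deep obstacle here; the statement is essentially a repackaging of results already established, and the only substantive work is to verify the net-domain hypothesis of Definition \ref{def:nice2functor} for the two relevant $2$-functors.

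First, I would recall from items (1) and (3) of Example \ref{ex:HK2functors} that $\HK = \HK_{\ovr{\COpen(-)}}$ is the Haag-Kastler-style $2$-functor associated with $\ovr{\COpen(-)} : \Loc \to \Cat^\perp$, and similarly $\HK^{\mathrm{rc}} = \HK_{\ovr{\RC(-)}}$ is associated with $\ovr{\RC(-)} : \Loc \to \Cat^\perp$. Consequently their adjoint pseudo-functors $\HK^\dagger$ and $\HK^{\mathrm{rc}\,\dagger}$ are the instances of $\HK_{\ovr{\CC}}^\dagger$ corresponding to these two choices of $\ovr{\CC(-)}$.

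Second, I would verify that both $\ovr{\COpen(-)}$ and $\ovr{\RC(-)}$ satisfy the three conditions in Definition \ref{def:nice2functor}, as already asserted in Example \ref{ex:nice2functor}. Condition (1) (thinness) is immediate since both $\COpen(M)$ and $\RC(M)$ are poset categories under subset inclusion. Condition (2) holds because, for any causally convex open $V \subseteq M$, a causally convex open subset of $V$ is automatically a causally convex open subset of $M$ contained in $V$, and relative compactness in $V$ clearly implies relative compactness in $M$, so that $\ovr{\COpen(V)}$ and $\ovr{\RC(V)}$ embed as full orthogonal subcategories of $\ovr{\COpen(M)}$ and $\ovr{\RC(M)}$, respectively. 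Condition (3) follows from the observation that if $U_1 \to U_2$ is a morphism in either $\ovr{\COpen(M)}$ or $\ovr{\RC(M)}$ with $U_1 \subseteq V_1$ and $U_2 \subseteq V_2$, then $U_1 \subseteq U_2 \subseteq V_2$ gives $U_1 \subseteq V_1 \cap V_2$, which lies in $\ovr{\COpen(V_1 \cap V_2)}$ (respectively $\ovr{\RC(V_1 \cap V_2)}$, since relative compactness is preserved).

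Third, having verified the hypotheses, I would invoke Theorem \ref{theo:precostack} separately for $\ovr{\CC(-)} = \ovr{\COpen(-)}$ and for $\ovr{\CC(-)} = \ovr{\RC(-)}$. In each case the theorem yields that, for every causally convex open cover $\U = \{U_i \subseteq M\}$ of every $M \in \Loc$, the canonical functor $j_{\U\,!} : \HK_{\ovr{\CC}}(\U) \to \HK_{\ovr{\CC}}(M)$ from the codescent category is fully faithful, which by the definition recalled after Corollary \ref{cor:adjointpseudofunctor} is precisely the precostack condition for $\HK_{\ovr{\CC}}^\dagger$. This completes the proof for both $\HK^\dagger$ and $\HK^{\mathrm{rc}\,\dagger}$ simultaneously, with the only non-trivial ingredient being the already-verified net-domain properties; no further obstacles arise.
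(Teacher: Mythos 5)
Your proposal is correct and follows exactly the paper's route: the corollary is stated there as an immediate consequence of Example \ref{ex:HK2functors}, Example \ref{ex:nice2functor} (the net-domain property of $\ovr{\COpen(-)}$ and $\ovr{\RC(-)}$, which the paper asserts as clear and you verify explicitly), and Theorem \ref{theo:precostack}. Your explicit check of the three conditions of Definition \ref{def:nice2functor} is sound, so nothing further is needed.
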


\subsubsection{\label{subsubsec:precostack:timeslice}The case of time-slice}
For our examples (2) and (4) of time-sliced Haag-Kastler $2$-functors from Example \ref{ex:HK2functors},
the $2$-functor $\ovr{\CC(-)}: \Loc\to\Cat^\perp$ satisfies only a weaker variant
of the properties in Definition \ref{def:nice2functor}.
\begin{defi}\label{def:Wnice2functor}
A $2$-functor $\ovr{\CC(-)} : \Loc\to \Cat^\perp$ is 
called a \textit{localized net domain} if it satisfies the following properties:
\begin{itemize}
\item[(1)] For all $M\in\Loc$, the underlying category of $\ovr{\CC(M)}$ is thin, 
i.e.\ there exists at most one morphism between every two objects.

\item[(2)] For all $M\in\Loc$ and all $D$-stable causally convex open subsets $V\subseteq M$, i.e.\ 
$D_M(V)=V$ is stable under Cauchy development in $M$, the orthogonal functor
$\iota_V^M = \ovr{\CC\big(\iota_V^M\big)} : \ovr{\CC(V)}\to \ovr{\CC(M)}$ is a full orthogonal subcategory inclusion 
$\ovr{\CC(V)}\subseteq \ovr{\CC(M)}$.

\item[(3)] For all $M\in\Loc$ and all $D$-stable causally convex open subsets $V_1,V_2\subseteq M$,
if there exists a morphism $U_1\to U_2$ in $\ovr{\CC(M)}$ from 
$U_1\in \ovr{\CC(V_1)} \subseteq \ovr{\CC(M)}$ to $U_2 \in \ovr{\CC(V_2)} \subseteq  \ovr{\CC(M)}$,
then $U_1\in \ovr{\CC(V_1\cap V_2)}\subseteq \ovr{\CC(M)}$.
\end{itemize}
\end{defi}

\begin{ex}\label{ex:Wnice2functor}
Using the explicit localization models from Example \ref{ex:localizations}, we will now verify that the 
$2$-functors $\ovr{\COpen(-)}[W_{(-)}^{-1}] : \Loc\to \Cat^\perp$ from \eqref{eqn:COpenfunctorW}
and $\ovr{\RC(-)}[W_{\mathrm{rc},(-)}^{-1}] : \Loc\to \Cat^\perp$ from \eqref{eqn:RCfunctorW}
satisfy the properties of Definition \ref{def:Wnice2functor}. Hence, the time-sliced Haag-Kastler
$2$-functor $\HK^W$ and the time-sliced relatively compact Haag-Kastler $2$-functor 
$\HK^{\mathrm{rc},W}$ are defined over localized net domains.
\sk

Let us start with the case of $\ovr{\COpen(-)}[W_{(-)}^{-1}]$. 
We observe that the category underlying $\ovr{\COpen(M)}[W_M^{-1}]$ 
is manifestly thin, for all $M\in\Loc$.
Furthermore, given any $D$-stable causally convex open subset 
$V\subseteq M$, one has that the orthogonal functor
$\iota_V^M : \ovr{\COpen(V)}[W_{V}^{-1}] \to \ovr{\COpen(M)}[W_{M}^{-1}]$ 
is faithful and reflects orthogonality. 
To prove fullness, we observe that 
given any causally convex open subsets 
$U, U^\prime \subseteq V$ such that $U \subseteq D_M(U^\prime)$, 
the $D$-stability $D_M(V) = V$ of $V$ 
entails that $D_V(U^\prime) = D_M(U^\prime)$, 
hence $U \subseteq D_V(U^\prime)$. 
Finally, given any $D$-stable causally convex open subsets $V_1, V_2\subseteq M$ 
and any causally convex open subsets $U_1 \subseteq V_1$ and $U_2 \subseteq V_2$ 
such that $U_1 \subseteq D_M(U_2)$, it follows that 
$U_1 \subseteq V_1 \cap D_M(U_2) = V_1 \cap D_{V_2}(U_2) \subseteq V_1 \cap V_2$.
\sk

The case of $\ovr{\RC(-)}[W_{\mathrm{rc},(-)}^{-1}]$ follows by specializing
the above arguments to causally convex open subsets $U,U^\prime\subseteq V$, 
$U_1\subseteq V_2$ and $U_2\subseteq V_2$ which are also relatively compact.
\end{ex}

The following statement is the analogue of Lemma \ref{lem:niceCcoverOCat} in the present case.
\begin{lem}\label{lem:WniceCcoverOCat}
Suppose that the $2$-functor $\ovr{\CC(-)} : \Loc\to \Cat^\perp$ is a localized net domain.
Let $\U=\{U_i\subseteq M\}$ be any $D$-stable causally convex open cover of any object $M\in\Loc$.
Then the orthogonal category $\ovr{\CC(\U)}$ from 
Definition \ref{def:CcoverOCat} admits the following explicit description:
\begin{itemize}
\item There exists a unique morphism $(i,U)\to (j,V)$ if and only if there exists
a morphism $U\to V$ in $\ovr{\CC(M)}$ from $U\in \ovr{\CC(U_i)}\subseteq \ovr{\CC(M)}$
to $V\in \ovr{\CC(U_j)}\subseteq \ovr{\CC(M)}$. 

\item A pair of morphisms $(i_1,U_1) \to (j,V) \leftarrow (i_2,U_2)$ in $\ovr{\CC(\U)}$
is orthogonal if and only if the pair of morphisms $U_1 \to V \leftarrow U_2$ in $\ovr{\CC(M)}$ is orthogonal.
\end{itemize}
\end{lem}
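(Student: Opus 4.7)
The proof plan is to mimic the argument for Lemma \ref{lem:niceCcoverOCat}, with the key additional observation that $D$-stability of the cover $\U = \{U_i \subseteq M\}$ propagates to all double and triple intersections, as recorded in Remark \ref{rem:inclusionLocmorphisms}. This ensures that the properties (2) and (3) of Definition \ref{def:Wnice2functor}, which are only available for $D$-stable opens, can be invoked at every place where the original proof used the unrestricted properties of Definition \ref{def:nice2functor}.

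First, I would simplify the generators-and-relations presentation of $\ovr{\CC(\U)}$ from Definition \ref{def:CcoverOCat}. Thinness (item (1) of Definition \ref{def:Wnice2functor}) allows me to drop the labels on the type (i) generators, since there is at most one such generator with a given source and target. Item (2), together with the $D$-stability of each $U_i$, of each non-empty $U_{ij}$ and of each non-empty $U_{ijk}$, lets me regard all objects and morphisms produced by the cover as living inside $\ovr{\CC(M)}$ via the full orthogonal subcategory inclusions $\ovr{\CC(U_i)},\ovr{\CC(U_{ij})},\ovr{\CC(U_{ijk})} \subseteq \ovr{\CC(M)}$. I then use relations (r2) exactly as before to commute any type (ii) generator past a type (i) generator: given $\psi_{ji,V}:(i,V)\to(j,V)$ with $V\in\ovr{\CC(U_{ij})}$ and a type (i) generator $(i,U)\to(i,V)$ with $U\in\ovr{\CC(U_i)}$, property (3) applied to the $D$-stable opens $U_i$ and $U_{ij}\subseteq U_i$ gives $U\in\ovr{\CC(U_{ij})}\subseteq \ovr{\CC(M)}$, so the type (ii) generator $\psi_{ji,U}$ exists and (r2) produces the commuting square rearranging the composition.

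Consequently every morphism in $\ovr{\CC(\U)}$ can be brought into the form $(\mathrm{type~(i)})\circ(\mathrm{type~(ii)})$, and by thinness together with (r1) and (r3) there is at most one morphism $(i,U)\to(j,V)$, necessarily factoring as
\begin{equation*}
(i,U)\,\stackrel{\psi_{ji,U}}{\longrightarrow}\,(j,U)\,\longrightarrow\,(j,V).
\end{equation*}
Such a factorization exists precisely when the type (i) arrow $U\to V$ exists in $\ovr{\CC(M)}$: applying property (3) to the $D$-stable opens $U_i,U_j$ yields $U\in\ovr{\CC(U_{ij})}$, so the type (ii) component $\psi_{ji,U}$ is automatically available. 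This proves the first claim.

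For the orthogonality characterization, the first bullet already exhibits every pair of morphisms $(i_1,U_1)\to(j,V)\leftarrow(i_2,U_2)$ as isomorphic, via type (ii) generators and relations (r3), to a pair $(j,U_1)\to(j,V)\leftarrow(j,U_2)$. Composition stability of the orthogonality relation reduces the question to this latter pair, and the definition of $\perp$ on $\ovr{\CC(\U)}$ together with the full subcategory inclusion $\ovr{\CC(U_j)}\subseteq\ovr{\CC(M)}$ identifies its orthogonality with that of $U_1\to V\leftarrow U_2$ in $\ovr{\CC(M)}$. The principal subtlety, and the only substantive difference from the proof of Lemma \ref{lem:niceCcoverOCat}, is the repeated use of the fact that the $D$-stability hypothesis is preserved under intersections; once this is observed, every step of the earlier proof goes through verbatim.
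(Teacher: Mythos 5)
Your proposal is correct and follows essentially the same route as the paper: the paper's proof simply observes that $D$-stability passes to the (double and triple) intersections of a $D$-stable cover and then declares the argument of Lemma \ref{lem:niceCcoverOCat} to apply verbatim, which is exactly what you carry out, invoking items (2) and (3) of Definition \ref{def:Wnice2functor} at the corresponding places. Your version merely spells out explicitly the steps that the paper leaves implicit by reference to the earlier lemma.
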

\begin{proof}
Since the cover $\U=\{U_i\subseteq M\}$ is by hypothesis 
$D$-stable, it follows that all intersections $U_{ij} = U_i\cap U_j\subseteq M$ are $D$-stable too,
i.e.\ $D_M(U_{ij}) = U_{ij}$. The proof is then identical to the one of Lemma \ref{lem:niceCcoverOCat}.
\end{proof}

\begin{theo}\label{theo:Wprecostack}
Suppose that the $2$-functor $\ovr{\CC(-)} : \Loc\to \Cat^\perp$ is a localized
net domain in the sense of Definition \ref{def:Wnice2functor}. 
Let $\U=\{U_i\subseteq M\}$ be any $D$-stable causally 
convex open cover of any object $M\in\Loc$.
Then the canonical functor
\begin{flalign}\label{eqn:Wprecostackcodescent}
j_{\U\,!}\,:\, \HK_{\ovr{\CC}}(\U)~\longrightarrow~\HK_{\ovr{\CC}}(M)
\end{flalign}
from the codescent category of $\HK_{\ovr{\CC}}^\dagger$ in this cover to its value on $M$ is fully faithful.
Hence, the adjoint pseudo-functor $\HK_{\ovr{\CC}}^\dagger : \Loc\to \Pr^L$ is a precostack
with respect to the Grothendieck topology given by all 
$D$-stable causally convex open covers.
\end{theo}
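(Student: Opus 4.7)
The plan is to mimic the proof strategy of Theorem \ref{theo:precostack}, with Lemma \ref{lem:WniceCcoverOCat} playing the role that Lemma \ref{lem:niceCcoverOCat} played there. The key point that enables this strategy to go through in the time-sliced setting is that $D$-stability is inherited by intersections: if $\U = \{U_i \subseteq M\}$ is a $D$-stable causally convex open cover, then as noted in Remark \ref{rem:inclusionLocmorphisms} all pairwise intersections $U_{ij}$ and triple intersections $U_{ijk}$ are themselves $D$-stable in $M$. This is precisely what is needed for the applicability conditions of items (2) and (3) in Definition \ref{def:Wnice2functor} to be satisfied for all the open subsets that appear in the construction of the orthogonal category $\ovr{\CC(\U)}$ of Definition \ref{def:CcoverOCat}.

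Given this, the core argument proceeds as follows. First, I would invoke Lemma \ref{lem:WniceCcoverOCat} to obtain the explicit description of $\ovr{\CC(\U)}$: there is at most one morphism $(i,U) \to (j,V)$, and it exists precisely when the corresponding morphism $U \to V$ exists in $\ovr{\CC(M)}$, while orthogonality of pairs in $\ovr{\CC(\U)}$ coincides with orthogonality of the underlying pairs in $\ovr{\CC(M)}$. From this description, one reads off immediately that the orthogonal functor $j_{\U} : \ovr{\CC(\U)} \to \ovr{\CC(M)}$ of \eqref{eqn:jcoverfunctor} is fully faithful (the bijection on hom-sets is forced by thinness plus the morphism characterization) and reflects orthogonality (again by the characterization). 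With these two properties established, I would apply Proposition \ref{propo:Ffullorthogonal} to conclude that the associated operadic left Kan extension $j_{\U\,!} : \HK_{\ovr{\CC}}(\U) \to \HK_{\ovr{\CC}}(M)$ is fully faithful.

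The precostack conclusion is then automatic: by Proposition \ref{prop:descentvsAQFT}, $j_{\U\,!}$ is exactly the canonical functor from the codescent category of $\HK_{\ovr{\CC}}^\dagger$ to $\HK_{\ovr{\CC}}^\dagger(M) = \HK_{\ovr{\CC}}(M)$, so fully faithfulness of this canonical functor for every $D$-stable causally convex open cover is precisely the precostack condition with respect to the $D$-stable Grothendieck topology.

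I do not anticipate any real obstacle in this proof: the substantive work has already been performed in setting up the notion of localized net domain (Definition \ref{def:Wnice2functor}), in verifying Lemma \ref{lem:WniceCcoverOCat}, and in the general Proposition \ref{propo:Ffullorthogonal}. The only subtlety, which one must be careful to flag explicitly, is that the argument genuinely requires $D$-stability of the cover in order to guarantee that all intersections appearing in the construction of $\ovr{\CC(\U)}$ are $D$-stable and hence fall under the scope of items (2) and (3) of Definition \ref{def:Wnice2functor}; for a generic causally convex open cover this would fail, which is the structural reason why the Grothendieck topology must be restricted to $D$-stable covers in the time-sliced setting.
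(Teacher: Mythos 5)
Your proposal is correct and follows essentially the same route as the paper's proof: use $D$-stability of the cover (hence of all intersections) so that Lemma \ref{lem:WniceCcoverOCat} applies, conclude that $j_{\U}$ is fully faithful and reflects orthogonality, and then invoke Proposition \ref{propo:Ffullorthogonal} to get fully faithfulness of $j_{\U\,!}$, with Proposition \ref{prop:descentvsAQFT} identifying this as the canonical codescent functor. The additional remark that $D$-stability of intersections is the structural reason for coarsening the Grothendieck topology is exactly the point the paper makes implicitly via Lemma \ref{lem:WniceCcoverOCat} and Remark \ref{rem:inclusionLocmorphisms}.
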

\begin{proof}
Using that the cover $\U$ is $D$-stable 
and Lemma \ref{lem:WniceCcoverOCat}, one verifies that the orthogonal
functor $j_{\U} : \ovr{\CC(\U)} \to\ovr{\CC(M)}$
from \eqref{eqn:jcoverfunctor} is fully faithful and reflects orthogonality. 
Proposition \ref{propo:Ffullorthogonal} then implies 
that the associated operadic left Kan extension $j_{\U\,!}$ is a fully faithful functor.
\end{proof}

As a direct consequence of Examples \ref{ex:HK2functors} and \ref{ex:Wnice2functor},
we obtain the following result.
\begin{cor}
The adjoint pseudo-functors $\HK^{W\,\dagger}$ and $\HK^{\mathrm{rc},W\,\dagger}$ 
of the time-sliced Haag-Kastler $2$-functor and the time-sliced relatively compact Haag-Kastler $2$-functor are
both precostacks with respect to the Grothendieck topology given by all 
$D$-stable causally convex open covers.
\end{cor}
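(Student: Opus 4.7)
The proof plan is to simply assemble the pieces already available in the excerpt: Theorem \ref{theo:Wprecostack} provides the general precostack statement for Haag-Kastler-style $2$-functors built from localized net domains, so what remains is to match the two cases of interest to the hypotheses of that theorem.

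First, I would invoke Example \ref{ex:HK2functors}, parts (2) and (4), to recognize that $\HK^W = \HK_{\ovr{\CC}}$ for $\ovr{\CC(-)} = \ovr{\COpen(-)}[W_{(-)}^{-1}]$, and $\HK^{\mathrm{rc},W} = \HK_{\ovr{\CC}}$ for $\ovr{\CC(-)} = \ovr{\RC(-)}[W_{\mathrm{rc},(-)}^{-1}]$. Consequently, the adjoint pseudo-functors in the statement are precisely the instances $\HK_{\ovr{\CC}}^{\dagger}$ of \eqref{eqn:HKdagger} for these two choices of $\ovr{\CC(-)}$.

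Second, I would invoke Example \ref{ex:Wnice2functor}, which explicitly verifies that both $\ovr{\COpen(-)}[W_{(-)}^{-1}]$ and $\ovr{\RC(-)}[W_{\mathrm{rc},(-)}^{-1}]$ satisfy the three conditions of Definition \ref{def:Wnice2functor}, i.e.\ they are localized net domains. (The verification relies on the explicit models of Example \ref{ex:localizations} together with the key Lorentz-geometric identity $D_V(U^\prime) = D_M(U^\prime)$ for $U^\prime \subseteq V$ whenever $V \subseteq M$ is $D$-stable.)

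With both ingredients in place, the conclusion follows at once by applying Theorem \ref{theo:Wprecostack}: for any object $M\in\Loc$ and any $D$-stable causally convex open cover $\U = \{U_i\subseteq M\}$, the canonical functor $j_{\U\,!} : \HK_{\ovr{\CC}}(\U) \to \HK_{\ovr{\CC}}(M)$ from the codescent category is fully faithful, which is precisely the precostack condition. There is no real obstacle here, since the heavy lifting has already been carried out in Theorem \ref{theo:Wprecostack} (via Lemma \ref{lem:WniceCcoverOCat} and Proposition \ref{propo:Ffullorthogonal}); the present corollary is a purely bookkeeping specialization to the two cases of interest.
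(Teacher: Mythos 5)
Your proposal is correct and matches the paper's own reasoning exactly: the corollary is stated there as a direct consequence of Examples \ref{ex:HK2functors} and \ref{ex:Wnice2functor} (identifying the two $2$-functors as Haag-Kastler-style $2$-functors over localized net domains) combined with Theorem \ref{theo:Wprecostack}. Nothing further is needed.
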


\subsection{\label{subsec:HKcostackification}Improving descent}
Our results in Subsection \ref{subsec:HKcostack} 
provide insights about why the Haag-Kastler-style $2$-functors $\HK_{\ovr{\CC}}$ 
fail to satisfy the descent conditions of a stack. From the dual perspective
of the adjoint pseudo-functors $\HK_{\ovr{\CC}}^\dagger$,
we see that the canonical functors \eqref{eqn:precostackcodescent} and \eqref{eqn:Wprecostackcodescent}
from the (co)descent category $\HK_{\ovr{\CC}}(\U)$ to $\HK_{\ovr{\CC}}(M)$
are fully faithful but fail to be essentially surjective. (This failure
follows by an argument as in Remark \ref{rem:dualprecostack}
from our results in Propositions \ref{prop:HKnotastack}, \ref{prop:HKtimeslicenotastack},
\ref{prop:RCHKnotastack} and \ref{prop:RCHKtimeslicenotastack}.) 
Loosely speaking, this means that the category $\HK_{\ovr{\CC}}(M)$ contains also `bad objects'
which do not interplay well with descent 
and it suggests that one should select a suitable class of `good objects' 
in $\HK_{\ovr{\CC}}(M)$. Our proposal for a selection criterion
can be stated in simple terms as follows: We would like to intersect
the essential images of the (co)descent categories
$\HK_{\ovr{\CC}}(\U)\to \HK_{\ovr{\CC}}(M)$ over all covers $\U$
and thereby define a full subcategory of `good objects' in $\HK_{\ovr{\CC}}(M)$.
In this subsection we shall formalize and discuss this idea for both
the case where $\ovr{\CC(-)}$ is a net domain and the case where $\ovr{\CC(-)}$ 
is a localized net domain. These two cases are very similar but, as already indicated
in Theorems \ref{theo:precostack} and \ref{theo:Wprecostack}, they require slightly different
choices of Grothendieck topologies on $\Loc$. To avoid messy notations and case distinctions, 
we shall present each case in an individual subsection.

\subsubsection{\label{subsubsec:notimeslicedescent}The case of no time-slice}
Throughout this subsection, let us assume that  $\ovr{\CC(-)} : \Loc\to\Cat^\perp$ is a net domain
as in Definition \ref{def:nice2functor}. We formalize our selection criterion for `good objects'
as follows.
\begin{defi}\label{def:improvedHK}
Let $\ovr{\CC(-)} : \Loc\to\Cat^\perp$ be a net domain.
For every object $M\in\Loc$, we denote by
\begin{subequations}\label{eqn:improvedHKdescentcondition}
\begin{flalign}
\calHK_{\ovr{\CC}}(M)\,\subseteq\,\HK_{\ovr{\CC}}(M)
\end{flalign}
the full subcategory consisting of all objects $\AAA\in\HK_{\ovr{\CC}}(M)$ 
which satisfy the following descent conditions: For every
causally convex open cover $\U=\{U_i\subseteq M\}$, the $\AAA$-component 
of the counit
\begin{flalign}
(\epsilon_{\U})_\AAA \,:\, j_{\U\,!}\,j_{\U}^\ast(\AAA)~
\stackrel{\cong}{\Longrightarrow}~\AAA
\end{flalign}
\end{subequations}
of the adjunction
$j_{\U\,!} : \HK_{\ovr{\CC}}(\U) \rightleftarrows \HK_{\ovr{\CC}}(M) : j_\U^\ast$
of \eqref{eqn:descentadjunction} is an isomorphism in $\HK_{\ovr{\CC}}(M)$.
\end{defi}

From this definition it is not immediately clear 
that the categories $\calHK_{\ovr{\CC}}(M)$ are locally presentable
and that the inclusion functors $\calHK_{\ovr{\CC}}(M) \subseteq \HK_{\ovr{\CC}}(M)$
are left adjoints. We will now show that $\calHK_{\ovr{\CC}}(M)$ arises
as the bilimit of a suitable diagram in $\Pr^L$ which will manifestly imply 
these properties. To set up this construction, we briefly recall the concept of
refinements of covers.
\begin{defi}\label{def:covcats}
Given any object $M\in\Loc$, we denote by $\mathbf{cov}(M)$ the category
whose objects are all causally convex open covers $\U=\{U_i\subseteq M\}$ of $M$
and whose morphisms $\alpha : \U=\{U_i\subseteq M\}\to \U^\prime=\{U^\prime_{i^\prime}\subseteq M\}$ are refinements,
i.e.\ $\alpha :\mathcal{I}\to\mathcal{I}^\prime\,,~i\mapsto \alpha(i)$ is a map
between the indexing sets such that $U_i\subseteq U^\prime_{\alpha(i)}$, for all $i$.
Note that the category $\mathbf{cov}(M)$ has a terminal object, which is given
by the coarsest cover $\{M\subseteq M\}$.
\end{defi}

We observe that the assignment of the orthogonal categories
from Definition \ref{def:CcoverOCat} is $2$-functorial
\begin{flalign}
\ovr{\CC(-)}\,:\,\mathbf{cov}(M)~\longrightarrow~\Cat^\perp~~,\quad
\U~\longmapsto~\ovr{\CC(\U)}\quad.
\end{flalign}
Using the simplification for a net domain from Lemma \ref{lem:niceCcoverOCat},
the action of this $2$-functor on refinements 
$\alpha : \U\to \U^\prime$
is given by the orthogonal functors (denoted with abuse of notation by the same symbol $\alpha$)
\begin{flalign}\label{eqn:alphaOFun}
\alpha\, :=\,\ovr{\CC(\alpha)} \,:\,
\ovr{\CC(\U)}~&\longrightarrow~\ovr{\CC(\U^\prime)}\quad, \\
\nn (i,U)~&\longmapsto~(\alpha(i),U)\quad,\\
\nn  \big((i,U)\to (j,V)\big)~&\longmapsto~\big((\alpha(i),U)\to(\alpha(j),V)\big)\quad.
\end{flalign}
For each refinement $\alpha : \U\to \U^\prime$,
the orthogonal functor \eqref{eqn:alphaOFun} is fully faithful and reflects orthogonality.
Applying Proposition \ref{propo:Ffullorthogonal}, we then obtain an adjunction
\begin{flalign}\label{eqn:alphaadjunction}
\xymatrix{
\alpha_! \,:\, \HK_{\ovr{\CC}}(\U) \ar@<0.75ex>[r]~&~\ar@<0.75ex>[l] 
\HK_{\ovr{\CC}}(\U^\prime)\,:\,\alpha^\ast
}
\end{flalign}
whose left adjoint $\alpha_!$ is fully faithful. 
\begin{rem}\label{rem:descentfinerimpliescoarser}
As a side remark which will become relevant in some of our proofs below, 
we observe that the descent conditions in Definition \ref{def:improvedHK} are not independent
because descent on finer covers implies descent on coarser ones.
Let us provide the relevant argument.
Given any refinement of covers $\alpha : \U\to\U^\prime$,
we obtain a commutative diagram of orthogonal functors
\begin{flalign}\label{eqn:refinementdiagram}
\begin{gathered}
\xymatrix@C=1em@R=1.5em{
\ar[dr]_-{\alpha}\ovr{\CC(\U)} \ar[rr]^-{j_{\U}}~&~ ~&~\ovr{\CC(M)}\\
~&~ \ovr{\CC(\U^\prime)} \ar[ur]_-{j_{\U^\prime}}~&~
}
\end{gathered}\quad.
\end{flalign}
Suppose that an object $\AAA\in\HK_{\ovr{\CC}}(M)$ satisfies descent on the finer cover $\U$, 
i.e.\ the counit $(\epsilon_{\U})_\AAA : j_{\U\,!}\,j_{\U}^\ast(\AAA) \Rightarrow\AAA$ 
is an isomorphism $\HK_{\ovr{\CC}}(M)$. Applying $(-)_!$ to the diagram \eqref{eqn:refinementdiagram}
gives a diagram which commutes up to a natural isomorphism, so we find that our object
\begin{flalign}
\AAA \,\cong\,   j_{\U\,!}\,j_{\U}^\ast(\AAA) \,\cong\, j_{\U^\prime\,!}\,\alpha_!\,j_{\U}^\ast(\AAA)\,=:\,
 j_{\U^\prime\,!}(\BBB)
\end{flalign}
lies in the essential image of $j_{\U^\prime\,!}:\HK_{\ovr{\CC}}(\U^\prime)\to \HK_{\ovr{\CC}}(M)$ 
for the coarser cover. From this isomorphism we obtain a commutative diagram
\begin{flalign}
\begin{gathered}
\xymatrix@C=5em{
j_{\U^\prime\,!}\,j_{\U^\prime}^\ast(\AAA) \ar@{=>}[r]^-{(\epsilon_{\U^\prime})_\AAA}~&~ \AAA\\
\ar@{=>}[u]^-{\cong}j_{\U^\prime\,!}\,j_{\U^\prime}^\ast\,j_{\U^\prime\,!}(\BBB) \ar@{=>}[r]^-{(\epsilon_{\U^\prime})_{j_{\U^\prime\,!}(\BBB)}}~&~j_{\U^\prime\,!}(\BBB)\ar@{=>}[u]_-{\cong}\\
\ar@{=>}[u]^-{j_{\U^\prime\,!}(\eta_{\U^\prime})_\BBB}_-{\cong} j_{\U^\prime\,!}(\BBB)\ar@{=}[ru]~&~
}
\end{gathered}
\end{flalign}
with bottom part the triangle identity for the unit and counit of the adjunction
$j_{\mathcal{U}^\prime\,!}\dashv j_{\mathcal{U}^\prime}^\ast$.
This implies that $\AAA$ satisfies descent on the coarser cover $\U^\prime$
because the unit of the adjunction $j_{\U^\prime\,!} \dashv j_{\U^\prime}^\ast$
is a natural isomorphism by Theorem \ref{theo:precostack}.
\end{rem}

Let us define the pseudo-functor
\begin{flalign}\label{eqn:covfunctor}
\HK_{\ovr{\CC}}^{\dagger}\,:\, \mathbf{cov}(M)~\longrightarrow~\Pr^L
\end{flalign}
which assigns to each cover $\U$ the locally presentable
category $\HK_{\ovr{\CC}}(\U)$ and to each refinement
$\alpha$ the corresponding left adjoint $\alpha_!$ from \eqref{eqn:alphaadjunction}.
\begin{propo}\label{propo:bilimovercover}
Let $\ovr{\CC(-)} : \Loc\to\Cat^\perp$ be a net domain.
For every object $M\in\Loc$, the category 
\begin{flalign}
\calHK_{\ovr{\CC}}(M)~\simeq~\bilim\big( \HK_{\ovr{\CC}}^{\dagger} : \mathbf{cov}(M) \to \Pr^L \big)
\end{flalign}
from Definition \ref{def:improvedHK} 
is a bilimit of the pseudo-functor \eqref{eqn:covfunctor},
hence it is locally presentable $\calHK_{\ovr{\CC}}(M)\in \Pr^L$. 
Furthermore, the full subcategory inclusion $\iota_M : \calHK_{\ovr{\CC}}(M)\subseteq \HK_{\ovr{\CC}}(M)$
is coreflective, i.e.\ there exists an adjunction
\begin{flalign}\label{eqn:coreflector}
\xymatrix{
\iota_M\,:\, \calHK_{\ovr{\CC}}(M) \ar@<0.75ex>[r]^-{\subseteq} ~&~  \ar@<0.75ex>[l]\HK_{\ovr{\CC}}(M)\,:\,\pi_M
}
\end{flalign}
with coreflector $\pi_M$.
\end{propo}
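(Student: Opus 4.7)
The plan is to exhibit $\calHK_{\ovr{\CC}}(M)$ as the bilimit in $\Pr^L$ of the pseudo-functor \eqref{eqn:covfunctor}, after which both the local presentability and the existence of the coreflector $\pi_M$ will follow at once from general properties of $\Pr^L$. First, by Construction \ref{constr:computingbilimits} together with Theorem \ref{theo:PrL/R limits}, I would compute the bilimit in $\CAT$ as the category of pseudo-natural transformations $\Delta\mathbf{1}\Rightarrow \HK_{\ovr{\CC}}^\dagger$ and their modifications. Explicitly, an object is a family $\{\BBB_\U\in\HK_{\ovr{\CC}}(\U)\}_{\U\in\mathbf{cov}(M)}$ equipped with coherent isomorphisms $\alpha_!(\BBB_\U)\cong \BBB_{\U^\prime}$ for every refinement $\alpha:\U\to\U^\prime$, with morphisms given by compatible families.

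The key reduction I would use is that $\mathbf{cov}(M)$ admits a terminal object $\{M\subseteq M\}$, for which direct inspection of Definition \ref{def:CcoverOCat} yields $\ovr{\CC(\{M\subseteq M\})}\cong \ovr{\CC(M)}$, so that $\HK_{\ovr{\CC}}(\{M\subseteq M\})\cong \HK_{\ovr{\CC}}(M)$ and the left adjoint induced by the unique refinement $\alpha_\U:\U\to\{M\subseteq M\}$ is identified with $j_{\U\,!}$. Given a pseudo-cone, I set $\AAA:=\BBB_{\{M\subseteq M\}}\in\HK_{\ovr{\CC}}(M)$; the cone coherence along $\alpha_\U$ furnishes an isomorphism $j_{\U\,!}(\BBB_\U)\cong \AAA$. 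Since Theorem \ref{theo:precostack} guarantees that each $j_{\U\,!}$ is fully faithful, the unit of $j_{\U\,!}\dashv j_\U^\ast$ is an isomorphism; applying $j_\U^\ast$ then gives $\BBB_\U\cong j_\U^\ast(\AAA)$, and the cone datum at $\U$ becomes equivalent to the counit $(\epsilon_\U)_\AAA$ being an isomorphism. The analogous analysis for modifications shows that they reduce to a single morphism $\AAA\to\AAA^\prime$ in $\HK_{\ovr{\CC}}(M)$ between the objects $\AAA,\AAA^\prime$ associated with the two pseudo-cones.

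Combining these reductions, the projection at $\{M\subseteq M\}$ will exhibit an equivalence of categories between $\bilim(\HK_{\ovr{\CC}}^\dagger)$ and the full subcategory $\calHK_{\ovr{\CC}}(M)\subseteq\HK_{\ovr{\CC}}(M)$ of Definition \ref{def:improvedHK}. By Theorem \ref{theo:PrL/R limits}, the bilimit lies in $\Pr^L$, hence $\calHK_{\ovr{\CC}}(M)$ is locally presentable; the same theorem identifies the bilimit projection as a $1$-morphism of $\Pr^L$, i.e.\ a left adjoint, which after transport along the equivalence is precisely the inclusion $\iota_M$, and its right adjoint in $\Pr^L$ is the desired coreflector $\pi_M$.

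The main obstacle I anticipate is the careful verification that the cone coherence for an \emph{arbitrary} refinement $\alpha:\U\to\U^\prime$ (not only those $\alpha_\U$ ending at the terminal cover) is automatically recovered from the formula $\BBB_\U\cong j_\U^\ast(\AAA)$. Concretely, one needs canonical and coherent isomorphisms $\alpha_!(j_\U^\ast(\AAA))\cong j_{\U^\prime}^\ast(\AAA)$, which I would produce by factoring $j_{\U\,!}\cong j_{\U^\prime\,!}\circ \alpha_!$ (up to the pseudo-functor coherence of $\HK_{\ovr{\CC}}^\dagger$) and exploiting fully faithfulness of $j_{\U^\prime\,!}$, in the same spirit as Remark \ref{rem:descentfinerimpliescoarser}. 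Careful bookkeeping of the coherence $2$-cells is required here, but no genuinely new ideas should enter at this stage.
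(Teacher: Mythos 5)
Your proposal is correct and follows essentially the same route as the paper: the paper reduces the statement to exactly the two facts you isolate (all refinement functors $\alpha_!$, in particular $j_{\U\,!}$, are fully faithful, and $\mathbf{cov}(M)$ has the terminal object $\{M\subseteq M\}$) and then delegates to a general bilimit computation in Appendix \ref{app:bilimit}, which is precisely your argument that the projection to the terminal value is fully faithful with essential image the objects whose counits $(\epsilon_{\U})_{\AAA}$ are isomorphisms, followed by Theorem \ref{theo:PrL/R limits} to get local presentability and the coreflector. The coherence bookkeeping you flag as the remaining obstacle is exactly the content of the essential-surjectivity step in that appendix and is handled there by the same factorization-plus-full-faithfulness argument you sketch.
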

\begin{proof}
The key properties which enable this result are 1.)~all functors $\alpha_!$ are fully faithful,
and 2.)~the category $\mathbf{cov}(M)$ has a terminal object, given by the coarsest
cover $\{M\subseteq M\}$. The proof then follows from our general 
bilimit computation in Appendix \ref{app:bilimit}.
\end{proof}

The categories of covers from Definition \ref{def:covcats} are $2$-functorial
$\mathbf{cov}(-) : \Loc^\op \to \Cat$ with respect to pullbacks of covers along $\Loc$-morphisms.
Explicitly, given any $\Loc$-morphism $f:M\to N$, we obtain a functor
\begin{flalign}
f^{-1}\,:=\,\mathbf{cov}(f)\,:\, \mathbf{cov}(N)~&\longrightarrow~\mathbf{cov}(M)\quad,\\
\nn \V= \{V_j\subseteq N\}~&\longmapsto~ f^{-1}\V = \{f^{-1}(V_j)\subseteq M\}\quad,\\
\nn \big(\alpha : \V\to \V^\prime\big)~&\longmapsto~
\big(\alpha : f^{-1}\V\to f^{-1}\V^\prime\big)
\end{flalign}
by taking preimages under $f$. (Since we focus on non-empty causally convex opens,
we always discard all empty preimages $f^{-1}(V_j)=\emptyset$ from the covers.)
This $2$-functorial structure endows the bilimits from Proposition \ref{propo:bilimovercover}
with a pseudo-functorial structure. Transferring this structure to our explicit
models from Definition \ref{def:improvedHK} yields the pseudo-functor
\begin{subequations}\label{eqn:adjointimproved}
\begin{flalign}
\calHK_{\ovr{\CC}}^\dagger\,:\,\Loc~\longrightarrow~\Pr^L
\end{flalign} 
which assigns to each object $M\in\Loc$ the locally presentable category
$\calHK_{\ovr{\CC}}^\dagger(M):=\calHK_{\ovr{\CC}}(M)\in\Pr^L$ from Definition \ref{def:improvedHK}
and to each $\Loc$-morphism $f:M\to N$ the restriction
\begin{flalign}
\calHK_{\ovr{\CC}}^\dagger(f)\,:=\,f_! \,:\, \calHK_{\ovr{\CC}}(M)~\longrightarrow~\calHK_{\ovr{\CC}}(N)
\end{flalign}
\end{subequations}
of the left adjoint $f_! : \HK_{\ovr{\CC}}(M)\to \HK_{\ovr{\CC}}(N)$ to the full subcategories
$\calHK_{\ovr{\CC}}(M)\subseteq \HK_{\ovr{\CC}}(M)$ and $\calHK_{\ovr{\CC}}(N)\subseteq \HK_{\ovr{\CC}}(N)$.
\begin{rem}
For readers who prefer a more direct and computational argument, let us also verify 
explicitly that the functors $f_! : \HK_{\ovr{\CC}}(M)\to \HK_{\ovr{\CC}}(N)$ restrict to the full subcategories
$f_! : \calHK_{\ovr{\CC}}(M)\to \calHK_{\ovr{\CC}}(N)$.
Given any object $\AAA\in \calHK_{\ovr{\CC}}(M)$, we have to show that 
$f_! (\AAA)\in \HK_{\ovr{\CC}}(N) $ satisfies the descent condition from Definition \ref{def:improvedHK}
for every causally convex open cover $\mathcal{V}$ of $N\in\Loc$. 
Taking the pullback $f^{-1}\mathcal{V}$ of this cover, we obtain a commutative square
\begin{flalign}\label{eqn:comdiag}
\begin{gathered}
\xymatrix@C=4em{
\ar[d]_-{\tilde{f}}\ovr{\CC(f^{-1}\mathcal{V})} \ar[r]^-{j_{f^{-1}\mathcal{V}}}~&~\ovr{\CC(M)}\ar[d]^-{f}\\
\ovr{\CC(\mathcal{V})}\ar[r]_-{j_{\mathcal{V}}}~&~\ovr{\CC(N)}
}
\end{gathered}
\end{flalign}
of orthogonal functors. The orthogonal functor $\tilde{f}$ is defined on objects by 
$(j,U)\mapsto (j, f(U))$
and on morphisms by $\big((j,U)\to (j^\prime,U^\prime)\big)\mapsto 
\big((j,f(U))\to (j^\prime,f(U^\prime)\big)$.
We then obtain the commutative diagram
\begin{flalign}
\begin{gathered}
\xymatrix@C=8em{
j_{\mathcal{V}\,!}\,j^\ast_{\mathcal{V}} \, f_!(\AAA) 
\ar@{=>}[r]^-{(\epsilon_{\mathcal{V}})_{f_!(\AAA)}}
~&~f_!(\AAA)\\
\ar@{=>}[u]_-{\cong}^-{j_{\mathcal{V}\,!}j^\ast_{\mathcal{V}}f_! (\epsilon_{f^{-1}\mathcal{V}})_\AAA} 
j_{\mathcal{V}\,!}\,j^\ast_{\mathcal{V}}\,f_! \,j_{f^{-1}\mathcal{V}\,!}\,j_{f^{-1}\mathcal{V}}^\ast(\AAA) 
\ar@{=>}[r]^-{(\epsilon_{\mathcal{V}})_{f_!j_{f^{-1}\mathcal{V}\,!}j_{f^{-1}\mathcal{V}}^\ast(\AAA)}}
~&~f_!\,j_{f^{-1}\mathcal{V}\,!}\,j_{f^{-1}\mathcal{V}}^\ast(\AAA)
\ar@{=>}[u]^-{\cong}_-{f_! (\epsilon_{f^{-1}\mathcal{V}})_\AAA}\\
\ar@{=>}[u]^-{\cong} 
j_{\mathcal{V}\,!}\,j^\ast_{\mathcal{V}}\,j_{\mathcal{V}\,!} \,\tilde{f}_!\,  j_{f^{-1}\mathcal{V}}^\ast(\AAA) 
\ar@{=>}[r]^-{(\epsilon_{\mathcal{V}})_{j_{\mathcal{V}\,!} \tilde{f}_!j_{f^{-1}\mathcal{V}}^\ast(\AAA)}}
~&~j_{\mathcal{V}\,!}\, \tilde{f}_! \,j_{f^{-1}\mathcal{V}}^\ast(\AAA)
\ar@{=>}[u]_-{\cong}\\
\ar@{=>}[u]_-{\cong}^-{j_{\mathcal{V}\,!}(\eta_{\mathcal{V}})_{\tilde{f}_!  j_{f^{-1}\mathcal{V}}^\ast(\AAA)}}j_{\mathcal{V}\,!}\, \tilde{f}_! \, j_{f^{-1}\mathcal{V}}^\ast(\AAA) \ar@{=}[ru]~&~
}
\end{gathered}\qquad.
\end{flalign}
In the top square we use that $\AAA\in \calHK_{\ovr{\CC}}(M)$
satisfies the descent condition from Definition \ref{def:improvedHK} for the cover $f^{-1}\mathcal{V}$
of $M$ and in the middle square we use that applying $(-)_!$ to the commutative 
diagram \eqref{eqn:comdiag} yields a diagram which commutes up to a natural isomorphism.
In the bottom triangle we use the triangle identity for the unit and counit of the adjunction
$j_{\mathcal{V}\,!}\dashv j_{\mathcal{V}}^\ast$,
as well as the fact that the unit is a natural isomorphism since
$j_{\mathcal{V}}$ is fully faithful and reflects orthogonality.
From this diagram it follows that $(\epsilon_{\mathcal{V}})_{f_!(\AAA)}$ is an isomorphism,
hence $f_!(\AAA)$ satisfies the descent conditions from Definition \ref{def:improvedHK}.
\end{rem}

\begin{defi}\label{def:improvedHKpseudofunctor}
Let $\ovr{\CC(-)} : \Loc\to\Cat^\perp$ be a net domain.
The \textit{improved Haag-Kastler-style pseudo-functor}
\begin{flalign}
\calHK_{\ovr{\CC}}\,:=\, \calHK_{\ovr{\CC}}^{\dagger\dagger}\,:\, \Loc^\op~\longrightarrow~\Pr^R
\end{flalign}
is defined as the adjoint via \eqref{eqn:LtoR} 
of the pseudo-functor $\calHK_{\ovr{\CC}}^\dagger$ in \eqref{eqn:adjointimproved}.
\end{defi}

The improved Haag-Kastler-style pseudo-functor $\calHK_{\ovr{\CC}}$ is in general 
hard to work with because the right adjoints to the functors
$\calHK_{\ovr{\CC}}^\dagger(f)=f_! : \calHK_{\ovr{\CC}}(M)\to \calHK_{\ovr{\CC}}(N)$
in \eqref{eqn:adjointimproved} are difficult to construct explicitly.
Indeed, the pullback functors $f^\ast : \HK_{\ovr{\CC}}(N)\to \HK_{\ovr{\CC}}(M)$
assigned by the (non-improved) Haag-Kastler-style $2$-functor $\HK_{\ovr{\CC}}:\Loc^\op\to\Pr^R$
for a generic net domain $\ovr{\CC}$
do \textit{not} necessarily restrict to the full subcategories $\calHK_{\ovr{\CC}}(N)\subseteq \HK_{\ovr{\CC}}(N)$
and $\calHK_{\ovr{\CC}}(M)\subseteq \HK_{\ovr{\CC}}(M)$ from Definition \ref{def:improvedHK}.
To describe the pseudo-functorial structure of $\calHK_{\ovr{\CC}}$,
one can then use the coreflectors $\pi_M$ from \eqref{eqn:coreflector},
with a possible model for the right adjoint functor 
$\calHK_{\ovr{\CC}}(f) \vdash f_! = \calHK_{\ovr{\CC}}^\dagger(f)$ given by
\begin{flalign}\label{eqn:fpullcoreflected}
\calHK_{\ovr{\CC}}(N)~\stackrel{\iota_N}{\longrightarrow}~
\HK_{\ovr{\CC}}(N)  ~\stackrel{f^\ast}{\longrightarrow}~\HK_{\ovr{\CC}}(M)
~\stackrel{\pi_M}{\longrightarrow}~\calHK_{\ovr{\CC}}(M)\quad.
\end{flalign}
The existence of these coreflectors was argued in Proposition
\ref{propo:bilimovercover} by abstract reasoning, but we are
currently not aware of any explicit models for $\pi_M$ which are useful for computations.
In order to simplify the remaining part of this subsection,
we will now include a very useful, but possibly quite strong, assumption on the behavior of 
the Haag-Kastler-style $2$-functor $\HK_{\ovr{\CC}}$. We will verify
in Theorem \ref{theo:RCHKstack} below that this assumption holds true
for the relatively compact Haag-Kastler $2$-functor
$\HK^{\mathrm{rc}}$, but it is currently not clear to us if 
it also holds true for the Haag-Kastler $2$-functor $\HK$ 
which is modeled on all causally convex opens.
\begin{assu}\label{assu:fastrestricts}
We assume that, for every $\Loc$-morphism $f:M\to N$, the pullback
functor $f^\ast :\HK_{\ovr{\CC}}(N)\to \HK_{\ovr{\CC}}(M)$
restricts to a functor $f^\ast :\calHK_{\ovr{\CC}}(N)\to \calHK_{\ovr{\CC}}(M)$
between the full subcategories $\calHK_{\ovr{\CC}}(N)\subseteq \HK_{\ovr{\CC}}(N)$
and $\calHK_{\ovr{\CC}}(M)\subseteq \HK_{\ovr{\CC}}(M)$ from Definition \ref{def:improvedHK}.
\end{assu}

Provided that Assumption \ref{assu:fastrestricts} holds true, 
one obtains a particularly simple model for the improved Haag-Kastler-style pseudo-functor
from Definition \ref{def:improvedHKpseudofunctor} in terms of a $2$-subfunctor 
$\calHK_{\ovr{\CC}}\subseteq \HK_{\ovr{\CC}}$. This $2$-functor
assigns to each object $M\in\Loc$ the locally presentable
category $\calHK_{\ovr{\CC}}(M)\in\Pr^R$ from Definition \ref{def:improvedHK}
and to each $\Loc$-morphism the restricted pullback functor 
$f^\ast :\calHK_{\ovr{\CC}}(N)\to \calHK_{\ovr{\CC}}(M)$,
which in this case is right adjoint to the functor
$f_! :  \calHK_{\ovr{\CC}}(M)\to  \calHK_{\ovr{\CC}}(N)$ from \eqref{eqn:adjointimproved}.
For every object $M\in\Loc$ and every causally convex open cover $\U=\{U_i\subseteq M\}$, 
the descent category of $\calHK_{\ovr{\CC}}$ is then given by the full subcategory
\begin{flalign}\label{eqn:caldescentcat}
\calHK_{\ovr{\CC}}(\U)\,\subseteq\,\HK_{\ovr{\CC}}(\U)
\end{flalign}
consisting of all objects $(\{\AAA_i\},\{\varphi_{ij}\})\in \HK_{\ovr{\CC}}(\U)$
such that $\AAA_i\in \calHK_{\ovr{\CC}}(U_i)\subseteq \HK_{\ovr{\CC}}(U_i)$
lies in the full subcategory of objects satisfying the descent conditions 
from Definition \ref{def:improvedHK}, for all $i$.
The canonical functor 
\begin{flalign}\label{eqn:caldescentmap}
j_{\U}^\ast\,:\,\calHK_{\ovr{\CC}}(M)~\longrightarrow~ \calHK_{\ovr{\CC}}(\U)
\end{flalign}
to the descent category is given by restricting the right adjoint 
$j_{\U}^\ast:\HK_{\ovr{\CC}}(M) \to \HK_{\ovr{\CC}}(\U)$ from \eqref{eqn:descentadjunction}
to the full subcategories $\calHK_{\ovr{\CC}}(M)\subseteq \HK_{\ovr{\CC}}(M)$
and $\calHK_{\ovr{\CC}}(\U)\subseteq \HK_{\ovr{\CC}}(\U)$. The following
result provides an explicit model for the left adjoint of \eqref{eqn:caldescentmap}.
\begin{propo}\label{prop:caldescentmapadjoint}
Suppose that the $2$-functor $\ovr{\CC(-)} : \Loc\to \Cat^\perp$ is a
net domain and that Assumption \ref{assu:fastrestricts} holds true.
Then, for every object $M\in\Loc$ and every causally convex open cover $\U=\{U_i\subseteq M\}$, 
the left adjoint $j_{\U\,!} :\HK_{\ovr{\CC}}(\U) \to \HK_{\ovr{\CC}}(M)$ from \eqref{eqn:descentadjunction}
restricts to the full subcategories $\calHK_{\ovr{\CC}}(\U)\subseteq \HK_{\ovr{\CC}}(\U)$
and $\calHK_{\ovr{\CC}}(M)\subseteq \HK_{\ovr{\CC}}(M)$, and thereby defines
a left adjoint $j_{\U\,!} :\calHK_{\ovr{\CC}}(\U) \to \calHK_{\ovr{\CC}}(M)$
for the functor \eqref{eqn:caldescentmap}.
\end{propo}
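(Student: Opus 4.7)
The plan is to first show that the left adjoint $j_{\U\,!}$ of \eqref{eqn:descentadjunction} restricts to a functor $\calHK_{\ovr{\CC}}(\U)\to\calHK_{\ovr{\CC}}(M)$; the assertion that this restriction is left adjoint to \eqref{eqn:caldescentmap} will then follow from the standard Hom-set principle that an adjunction $L\dashv R$ restricts to any pair of full subcategories on which both $L$ and $R$ restrict. Assumption \ref{assu:fastrestricts} handles the right adjoint $j_{\U}^\ast$, because its $i$-component is the pullback $\iota_{U_i}^{M\,\ast}$ along the $\Loc$-morphism $\iota_{U_i}^M:U_i\to M$, which by hypothesis sends $\calHK_{\ovr{\CC}}(M)$ into $\calHK_{\ovr{\CC}}(U_i)$.

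To verify that $j_{\U\,!}(\BBB)$ lies in $\calHK_{\ovr{\CC}}(M)$ for every $\BBB=(\{\AAA_i\},\{\varphi_{ij}\})\in\calHK_{\ovr{\CC}}(\U)$, I would fix an arbitrary causally convex open cover $\V=\{V_k\subseteq M\}$ and check descent on $\V$. By Remark \ref{rem:descentfinerimpliescoarser}, it suffices to check descent on the common refinement $\W:=\V\cap\U=\{V_k\cap U_i\}_{(k,i)}$ equipped with its natural refinement $\alpha:\W\to\U$, $(k,i)\mapsto i$. The strict factorization $j_{\W}=j_{\U}\circ\alpha$ in $\Cat^\perp$ implies $j_{\W\,!}\cong j_{\U\,!}\,\alpha_!$ and $j_{\W}^\ast\cong\alpha^\ast\,j_{\U}^\ast$, and combining these with the fact that $j_{\U\,!}$ is fully faithful by Theorem \ref{theo:precostack} (so the unit $\BBB\Rightarrow j_{\U}^\ast j_{\U\,!}(\BBB)$ is iso) yields a natural isomorphism $j_{\W\,!}\,j_{\W}^\ast\,j_{\U\,!}(\BBB)\cong j_{\U\,!}\,\alpha_!\,\alpha^\ast(\BBB)$. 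Under this isomorphism the descent counit $(\epsilon_\W)_{j_{\U\,!}(\BBB)}$ corresponds to $j_{\U\,!}((\epsilon_\alpha)_\BBB)$, so since $j_{\U\,!}$ is conservative it suffices to show that the counit $(\epsilon_\alpha)_\BBB:\alpha_!\alpha^\ast(\BBB)\Rightarrow\BBB$ is an isomorphism in $\HK_{\ovr{\CC}}(\U)$.

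To handle this last step, I would first check that $\alpha$ is $j$-closed, so that $\alpha_!\alpha^\ast(\BBB)(i,U)$ equals the categorical colimit of $\alpha^\ast(\BBB)$ over the comma category $\alpha/(i,U)$. Lemma \ref{lem:niceCcoverOCat} together with iterated use of item (3) of Definition \ref{def:nice2functor} identifies the objects of $\alpha/(i,U)$ as $((k,j),X)$ with $X\in\ovr{\CC(V_k\cap U_i\cap U_j)}$ and $X\subseteq U$. The key step is showing that the full subcategory $\mathcal{D}\subseteq\alpha/(i,U)$ of objects with $j=i$ is final: non-emptiness of $d\downarrow\mathcal{D}$ for $d=((k,j),X)$ is witnessed by $((k,i),X)$, and connectedness follows from a zigzag $((k_1,i),X_1)\leftarrow((k_1,i),X)\to((k_2,i),X)\to((k_2,i),X_2)$ using that $X\subseteq X_1\cap X_2\subseteq V_{k_1}\cap V_{k_2}$ plus the net domain axioms to exhibit the necessary morphisms in $\ovr{\CC(\W)}$. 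Since $\mathcal{D}$ is canonically the comma $j_{\V^{(i)}}/U$ for the induced cover $\V^{(i)}:=\{V_k\cap U_i\}_k$ of $U_i$, cofinality yields
\begin{flalign}
\alpha_!\alpha^\ast(\BBB)(i,U) \,\cong\, \colim_\mathcal{D}\AAA_i \,=\, j_{\V^{(i)}\,!}j_{\V^{(i)}}^\ast(\AAA_i)(U) \,\cong\, \AAA_i(U) \,=\, \BBB(i,U)\quad,
\end{flalign}
where the last isomorphism uses that $\AAA_i\in\calHK_{\ovr{\CC}}(U_i)$ satisfies descent on $\V^{(i)}$. The main obstacle I expect is in this final step, specifically in verifying that the natural isomorphism produced by cofinality and descent actually coincides with the counit $(\epsilon_\alpha)_\BBB$, rather than being merely a point-wise isomorphism; this will require tracking the universal properties of colimits and the counit formula through the cofinality identification.
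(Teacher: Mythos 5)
Your proposal follows essentially the same strategy as the paper's proof: reduce to the intersection cover $\U\cap\V$ via Remark \ref{rem:descentfinerimpliescoarser}, observe that descent of $j_{\U\,!}(\BBB)$ on that cover follows once $\BBB$ lies in the essential image of the refinement projection (your $\alpha_!$, the paper's $\pr_{1\,!}$), and then identify the counit $(\epsilon_\alpha)_\BBB$ component-wise with the descent counits $(\epsilon_{\V^{(i)}})_{\AAA_i}$ for the induced covers of the $U_i$, which are isomorphisms by hypothesis. The only divergence is in how that last identification is carried out. The paper stays inside the operadic framework: it computes directly with the comma category $\O_{\pr_1}^\otimes/(i,U)$ of the monoidal envelope, shows every object there is isomorphic to one whose tuple has all first indices equal to $i$, and thereby obtains an equivalence $\O_{\pr_1}^\otimes/(i,U)\simeq\O_{j_{\U_i}}^\otimes/U$ under which the explicit counit formula \eqref{eqn:epsilonpr} is literally the counit component $(\epsilon_{\U_i})_{\AAA_i}$. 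Your route instead passes through $j$-closedness of $\alpha$ so as to replace the operadic left Kan extension by a categorical one, and then runs a cofinality argument; this leaves two things unverified, namely the $j$-closedness of $\alpha$ (which you only say you ``would check'') and the compatibility of the cofinality isomorphism with the counit, which you yourself flag as the main obstacle. Both points are plausibly fine, but the paper's operadic comma-category computation sidesteps them entirely, since the isomorphism of objects there comes packaged as an identification of counit components rather than as a point-wise isomorphism that must afterwards be matched to the counit.
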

\begin{proof}
We have to prove that, given any object $\AAA:=(\{\AAA_i\},\{\varphi_{ij}\})\in \calHK_{\ovr{\CC}}(\U)$
in the improved descent category \eqref{eqn:caldescentcat}, i.e.\ every $\AAA_i\in \calHK_{\ovr{\CC}}(U_i)$ 
satisfies the descent conditions from Definition \ref{def:improvedHK} for all 
causally convex open covers of $U_i$, the resulting object 
$j_{\U\,!}(\AAA)\in \HK_{\ovr{\CC}}(M)$ 
satisfies the descent conditions from Definition \ref{def:improvedHK} 
for all causally convex open covers of $M$. We shall denote
these arbitrary covers by $\U^\prime$ since the symbol $\U$
is already reserved by the choice of cover in the statement of this proposition.
\sk

To prove that $j_{\U\,!}(\AAA)\in \HK_{\ovr{\CC}}(M)$
satisfies descent on any causally convex open cover $\U^\prime$ of $M$, we can use 
Remark \ref{rem:descentfinerimpliescoarser} and study instead descent on any 
cover which is finer than $\U^\prime$. A suitable choice is given by 
intersecting the covers $\U^\prime$ and $\U$, yielding the causally convex open cover
$\U\cap \U^\prime  := \{U_i\cap U_{i^\prime}^\prime\subseteq M\}$ of $M$ which is labeled by pairs of indices
$(i,i^\prime)\in\mathcal{I}\times\mathcal{I}^\prime$. The projection maps
$\pr_1 : \mathcal{I}\times\mathcal{I}^\prime\to \mathcal{I}$ and 
$\pr_2:\mathcal{I}\times\mathcal{I}^\prime\to \mathcal{I}^\prime$ then 
define refinements which fit into the following commutative diagram of orthogonal functors
\begin{flalign}
\begin{gathered}
\xymatrix@C=1em@R=1em{
~&~ \ovr{\CC(\U)}\ar[dr]^-{j_{\U}} ~&~\\
\ovr{\CC(\U\cap \U^\prime)} \ar[ur]^-{\pr_1}\ar[dr]_-{\pr_2}\ar[rr]^-{j_{\U\cap \U^\prime}}~&~ ~&~ \ovr{\CC(M)}\\
~&~ \ovr{\CC(\U^\prime)} \ar[ru]_-{j_{\U^\prime}}~&~\\
}
\end{gathered}\qquad.
\end{flalign}
Taking the upper path of this diagram
and using the same argument as the one at the end 
of Remark \ref{rem:descentfinerimpliescoarser}, we find that
the descent condition $(\epsilon_{\U\cap \U^\prime})_{j_{\U\,!}(\AAA)} : 
j_{\U\cap \U^\prime\,!}\,j_{\U\cap \U^\prime}^\ast\, j_{\U\,!}(\AAA) \stackrel{\cong}{\Longrightarrow}j_{\U\,!}(\AAA)$
on the intersection cover holds true provided that 
$\AAA$ lies in the essential image of $\pr_{1\,!}$. 
To this end, we will argue that the counit component $\epsilon_\AAA : \pr_{1\,!}\,\pr_1^\ast(\AAA)\Rightarrow \AAA$
is an isomorphism, which follows from the fact that, for every $i$, 
the component $\AAA_i\in\calHK_{\ovr{\CC}}(U_i)$ of the tuple 
$\AAA=(\{\AAA_i\},\{\varphi_{ij}\})\in \calHK_{\ovr{\CC}}(\U)$ 
satisfies by definition the descent conditions on $U_i$ and the fiber
$\U_i := \pr_1^{-1}(i) := \{U_i\cap U_{i^\prime}^\prime : i^\prime\in\mathcal{I}^\prime\}$ 
of $\pr_1 : \U\cap \U^\prime\to \U$ defines a causally convex 
open cover of $U_i$.
\sk

Using Appendix \ref{app:operadicLKE},
one obtains an explicit model for the operadic left Kan extension $\pr_{1\,!}$ 
for which the counit component $\epsilon_\AAA$ is given by the canonical maps
\begin{flalign}\label{eqn:epsilonpr}
(\epsilon_\AAA)_{(i,U)}\,:\, \colim\Big(\O_{\pr_1}^\otimes\big/(i,U)\longrightarrow
\O_{\ovr{\CC(\U\cap\U^\prime)}}^\otimes\stackrel{\O_{\pr_1}^\otimes}{\longrightarrow}
\O_{\ovr{\CC(\U)}}^\otimes  \stackrel{\AAA^\otimes}{\longrightarrow} \TT\Big)~\longrightarrow~\AAA(i,U)\quad,
\end{flalign}
for all $(i,U)\in \ovr{\CC(\U)}$. Using also the explicit description
of the orthogonal categories $\ovr{\CC(\U)}$ and $\ovr{\CC(\U\cap \U^\prime)}$
from Lemma \ref{lem:niceCcoverOCat}, one finds that an
object in the comma category $\O_{\pr_1}^\otimes\big/(i,U)$
is given by a tuple $\big(((i_1,i^\prime_1),V_1),\dots,((i_n,i^\prime_n),V_n)\big)$,
with $V_j\in\ovr{\CC(U_{i_j}\cap U^\prime_{i_j^\prime})}\subseteq \ovr{\CC(M)}$ for all $j\in\{1,\dots,n\}$,
together with an operation $ \big((i_1,V_1),\dots,(i_n,,V_n)\big)\to (i,U)$
in the operad $\O_{\ovr{\CC(\U)}}$. Using the factorizations
$(i_{j},V_j) \cong (i,V_j) \to (i,U)$ in $\ovr{\CC(\U)}$, 
we observe that every object in $\O_{\pr_1}^\otimes\big/(i,U)$ is isomorphic to one 
whose underlying tuple is of the form $\big(((i,i^\prime_1),V_1),\dots,((i,i^\prime_n),V_n)\big)$.
This provides an equivalence $\O_{\pr_1}^\otimes\big/(i,U)\simeq \O^\otimes_{j_{\U_i}}/U$
with the comma category of the functor 
$\O^\otimes_{j_{\U_i}} : \O_{\ovr{\CC(\U_i)}}^\otimes \to \O_{\ovr{\CC(U_i)}}^\otimes$
which is associated with the cover $\U_i = \pr_1^{-1}(i)$ of $U_i$.
Under this equivalence, the family of maps in \eqref{eqn:epsilonpr} gets identified
with the counit components $(\epsilon_{\U_i})_{\AAA_i} : j_{\U_i\,!}\,j_{\U_i}^\ast(\AAA_i)\Rightarrow\AAA_i$,
which are isomorphisms because $\AAA_i\in\calHK_{\ovr{\CC}}(U_i)$  satisfies descent, for all $i$.
\end{proof}

We can now prove the main result of this subsection.
\begin{theo}\label{theo:HKstacks}
Suppose that the $2$-functor $\ovr{\CC(-)} : \Loc\to \Cat^\perp$ is a
net domain in the sense of Definition \ref{def:nice2functor} and that Assumption \ref{assu:fastrestricts} holds true.
Then the improved Haag-Kastler-style pseudo-functor $\calHK_{\ovr{\CC}} : \Loc^\op\to \Pr^R$
from Definition \ref{def:improvedHKpseudofunctor} is a stack with respect to the Grothendieck
topology given by all causally convex open covers. 
\end{theo}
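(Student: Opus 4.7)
The plan is to leverage Assumption \ref{assu:fastrestricts} to reduce the abstract descent condition to checking that the adjunction from Proposition \ref{prop:caldescentmapadjoint} is an adjoint equivalence in $\Pr^R$, which I expect to follow directly by combining our precostack result (Theorem \ref{theo:precostack}) with the defining descent property of the subcategories $\calHK_{\ovr{\CC}}(M)\subseteq \HK_{\ovr{\CC}}(M)$ (Definition \ref{def:improvedHK}).

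First, I would identify the $\Pr^R$-bilimit which defines the descent category of $\calHK_{\ovr{\CC}}$ with the concrete full subcategory $\calHK_{\ovr{\CC}}(\U)\subseteq \HK_{\ovr{\CC}}(\U)$ from \eqref{eqn:caldescentcat}. Since the forgetful $2$-functor $\Pr^R\to \CAT$ preserves bilimits (Theorem \ref{theo:PrL/R limits}) and, under Assumption \ref{assu:fastrestricts}, the restricted pullback functors $f^\ast : \calHK_{\ovr{\CC}}(N)\to \calHK_{\ovr{\CC}}(M)$ are precisely the restrictions of those defining $\HK_{\ovr{\CC}}(\U)$ to the relevant full subcategories, the bilimit is computed in $\CAT$ by the same diagram as for $\HK_{\ovr{\CC}}(\U)$, but with objects constrained at each vertex to lie in $\calHK_{\ovr{\CC}}(U_i)\subseteq \HK_{\ovr{\CC}}(U_i)$. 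This yields exactly \eqref{eqn:caldescentcat}, and the canonical functor $\calHK_{\ovr{\CC}}(M)\to \calHK_{\ovr{\CC}}(\U)$ in $\Pr^R$ is modeled by the restriction of $j_{\U}^\ast$ in \eqref{eqn:caldescentmap}.

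Second, I would establish that the adjunction $j_{\U\,!}\dashv j_{\U}^\ast$ obtained in Proposition \ref{prop:caldescentmapadjoint} on $\calHK_{\ovr{\CC}}(\U)$ and $\calHK_{\ovr{\CC}}(M)$ is an adjoint equivalence. The counit components $(\epsilon_{\U})_\AAA : j_{\U\,!}\,j_{\U}^\ast(\AAA)\Rightarrow \AAA$ are isomorphisms for all $\AAA\in\calHK_{\ovr{\CC}}(M)$, since by Definition \ref{def:improvedHK} any such $\AAA$ satisfies the descent condition \eqref{eqn:improvedHKdescentcondition} for every causally convex open cover of $M$, including the given cover $\U$. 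The unit components $(\eta_{\U})_{\BBB} : \BBB \Rightarrow j_{\U}^\ast\,j_{\U\,!}(\BBB)$ are isomorphisms for all $\BBB\in\calHK_{\ovr{\CC}}(\U)$ since Theorem \ref{theo:precostack} applies to the ambient net domain and yields that $j_{\U\,!}:\HK_{\ovr{\CC}}(\U)\to \HK_{\ovr{\CC}}(M)$ is fully faithful, so the unit of the ambient adjunction \eqref{eqn:descentadjunction} is a natural isomorphism on all of $\HK_{\ovr{\CC}}(\U)$ and in particular on the full subcategory $\calHK_{\ovr{\CC}}(\U)$.

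Finally, to conclude stackiness in the sense of Definition \ref{def:stack} (which requires an equivalence in $\Pr^R$, not merely in $\CAT$), I would invoke the discussion in Remark \ref{rem:PrL/R:limitreflection}: since $\calHK_{\ovr{\CC}}(M), \calHK_{\ovr{\CC}}(\U)\in\Pr^R$ and $j_{\U}^\ast$ is a right adjoint which is also fully faithful and essentially surjective, it is automatically an equivalence in $\Pr^R$. The main potential obstacle is really the bookkeeping in the first paragraph, namely the compatibility of the concrete model \eqref{eqn:caldescentcat} with the abstract $\Pr^R$-bilimit, but Assumption \ref{assu:fastrestricts} is tailored precisely to make this identification work and reduces the rest of the argument to invoking already-established results.
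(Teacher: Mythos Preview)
Your proposal is correct and follows essentially the same approach as the paper: present $\calHK_{\ovr{\CC}}$ as a $2$-subfunctor via Assumption \ref{assu:fastrestricts}, identify the descent category with \eqref{eqn:caldescentcat}, and then verify that the adjunction of Proposition \ref{prop:caldescentmapadjoint} is an adjoint equivalence by combining Theorem \ref{theo:precostack} (unit is an isomorphism) with Definition \ref{def:improvedHK} (counit is an isomorphism). Your additional remarks about computing the bilimit in $\CAT$ and upgrading the equivalence to $\Pr^R$ are more explicit than the paper's terse proof, but the substance is the same.
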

\begin{proof}
By Assumption \ref{assu:fastrestricts}, we can present the improved 
Haag-Kastler-style pseudo-functor as a $2$-subfunctor $\calHK_{\ovr{\CC}} \subseteq \HK_{\ovr{\CC}}$.
Using Proposition \ref{prop:caldescentmapadjoint}, we obtain for every object 
$M\in\Loc$ and every causally convex open cover $\U=\{U_i\subseteq M\}$ the adjunction
\begin{flalign}\label{eqn:caldescentadjunction}
\xymatrix@C=3em{
j_{\U\,!}\,:\, \calHK_{\ovr{\CC}}(\U) \ar@<0.75ex>[r]~&~\ar@<0.75ex>[l] \calHK_{\ovr{\CC}}(M)\,:\, j_{\U}^\ast
}
\end{flalign}
whose right adjoint is the canonical functor \eqref{eqn:caldescentmap} to the descent category.
The unit $\eta_{\mathcal{U}}$ of this adjunction is a natural isomorphism by 
Theorem \ref{theo:precostack} and the counit $\epsilon_{\mathcal{U}}$
is a natural isomorphism by Definition \ref{def:improvedHK}
of the full subcategories $\calHK_{\ovr{\CC}}(M)\subseteq \HK_{\ovr{\CC}}(M)$. This implies that \eqref{eqn:caldescentadjunction} 
is an (adjoint) equivalence, for every object $M\in\Loc$ and every causally convex open cover $\U=\{U_i\subseteq M\}$,
hence $\calHK_{\ovr{\CC}}$ is a stack.
\end{proof}

The category of points of the Haag-Kastler-style stack $\calHK_{\ovr{\CC}}$ from Theorem \ref{theo:HKstacks}
is defined similarly to Definition \ref{def:HKpoints} in terms of the category
\begin{flalign}\label{eqn:calHKpoints}
\calHK_{\ovr{\CC}}(\mathrm{pt})\,:=\,\Hom(\Delta\mathbf{1},\calHK_{\ovr{\CC}})\,\in\,\CAT
\end{flalign}
of pseudo-natural transformations from the constant $2$-functor $\Delta\mathbf{1} : \Loc^\op\to\Pr^R$
(which is a stack with respect to the Grothendieck topology given by all causally convex open covers)
to $\calHK_{\ovr{\CC}}: \Loc^\op\to\Pr^R$ and their modifications. 
\begin{propo}\label{prop:calHKpoints}
Suppose that the $2$-functor $\ovr{\CC(-)} : \Loc\to \Cat^\perp$ is a
net domain and that Assumption \ref{assu:fastrestricts} holds true. Then there exists
an equivalence
\begin{flalign}
\calHK_{\ovr{\CC}}(\mathrm{pt}) \,\simeq\, \HK_{\ovr{\CC}}(\mathrm{pt})^{\mathrm{desc}}
\end{flalign}
between the category of points \eqref{eqn:calHKpoints} of the 
Haag-Kastler-style stack $\calHK_{\ovr{\CC}}$ and the full subcategory 
$\HK_{\ovr{\CC}}(\mathrm{pt})^{\mathrm{desc}}\subseteq \HK_{\ovr{\CC}}(\mathrm{pt})$
of the category of points of the Haag-Kastler-style $2$-functor $\HK_{\ovr{\CC}}$
consisting of all objects $(\{\AAA_M\},\{\alpha_f\})\in \HK_{\ovr{\CC}}(\mathrm{pt})$ 
(see also Remark \ref{rem:HKpoints}) such that $\AAA_M\in \calHK_{\ovr{\CC}}(M)\subseteq \HK_{\ovr{\CC}}(M)$
satisfies the descent conditions from Definition \ref{def:improvedHK}, for all $M\in\Loc$.
\end{propo}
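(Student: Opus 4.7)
The plan is to leverage the fact, emphasized right before the proposition, that under Assumption~\ref{assu:fastrestricts} the improved Haag-Kastler-style pseudo-functor $\calHK_{\ovr{\CC}}$ can be realized as a strict $2$-subfunctor $\calHK_{\ovr{\CC}}\subseteq \HK_{\ovr{\CC}}$, with object-wise inclusions $\iota_M : \calHK_{\ovr{\CC}}(M)\hookrightarrow \HK_{\ovr{\CC}}(M)$ that are full, and with the pullback functor $f^\ast : \calHK_{\ovr{\CC}}(N)\to \calHK_{\ovr{\CC}}(M)$ acting as the restriction of the pullback functor of $\HK_{\ovr{\CC}}$. In particular, the coherence natural isomorphisms of $\calHK_{\ovr{\CC}}$ are obtained by restricting those of $\HK_{\ovr{\CC}}$, so $\iota$ is a (strict) $2$-natural transformation $\iota : \calHK_{\ovr{\CC}}\Rightarrow \HK_{\ovr{\CC}}$.

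First, I would define a comparison functor
\begin{flalign*}
\Phi\,:\,\calHK_{\ovr{\CC}}(\mathrm{pt})~\longrightarrow~\HK_{\ovr{\CC}}(\mathrm{pt})
\end{flalign*}
by whiskering pseudo-natural transformations and modifications with $\iota$. Concretely, a point of $\calHK_{\ovr{\CC}}$ consists of a family $\{\AAA_M\in \calHK_{\ovr{\CC}}(M)\}$ together with isomorphisms $\alpha_f : \AAA_M\Rightarrow f^\ast(\AAA_N)$ in $\calHK_{\ovr{\CC}}(M)$ satisfying the cocycle conditions in Remark~\ref{rem:HKpoints}, and $\Phi$ reinterprets this datum by regarding the $\AAA_M$ as objects of $\HK_{\ovr{\CC}}(M)$ via $\iota_M$ and the $\alpha_f$ as isomorphisms in $\HK_{\ovr{\CC}}(M)$. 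Since each $\AAA_M\in\calHK_{\ovr{\CC}}(M)$ satisfies the descent conditions from Definition~\ref{def:improvedHK}, the image of $\Phi$ lands in the full subcategory $\HK_{\ovr{\CC}}(\mathrm{pt})^{\mathrm{desc}}\subseteq \HK_{\ovr{\CC}}(\mathrm{pt})$.

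Next, I would check that $\Phi$ is fully faithful. Morphisms on either side are modifications, i.e.\ families $\{\zeta_M\}$ of morphisms $\zeta_M : \AAA_M\Rightarrow \BBB_M$ satisfying the naturality square~\eqref{eqn:HKmorphisms}. Since each $\iota_M$ is a full subcategory inclusion, the $\Hom$-sets between $\AAA_M$ and $\BBB_M$ in $\calHK_{\ovr{\CC}}(M)$ and in $\HK_{\ovr{\CC}}(M)$ are canonically identified, and the naturality squares coincide as $\iota$ is $2$-natural; thus $\Phi$ is bijective on $\Hom$-sets. Essential surjectivity onto $\HK_{\ovr{\CC}}(\mathrm{pt})^{\mathrm{desc}}$ is immediate: given an object $(\{\AAA_M\},\{\alpha_f\})$ of $\HK_{\ovr{\CC}}(\mathrm{pt})^{\mathrm{desc}}$, by definition each $\AAA_M$ lies in $\calHK_{\ovr{\CC}}(M)$, and the isomorphisms $\alpha_f : \AAA_M\Rightarrow f^\ast(\AAA_N)$ are morphisms in the full subcategory $\calHK_{\ovr{\CC}}(M)\subseteq \HK_{\ovr{\CC}}(M)$ between objects of $\calHK_{\ovr{\CC}}(M)$, noting that $f^\ast(\AAA_N)\in \calHK_{\ovr{\CC}}(M)$ by Assumption~\ref{assu:fastrestricts}. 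The cocycle conditions in~\eqref{eqn:HKptconditions} transfer verbatim because the coherence $2$-cells of $\calHK_{\ovr{\CC}}$ are the restrictions of those of $\HK_{\ovr{\CC}}$.

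The proof is essentially formal once Assumption~\ref{assu:fastrestricts} is in place, and I expect no serious obstacle: the only point requiring minor care is the bookkeeping of coherence isomorphisms, which is resolved by observing that $\iota$ is a strict $2$-natural transformation so that the structural $2$-cells $X_{g,f}$ and $X_M$ of $\calHK_{\ovr{\CC}}$ (Definition~\ref{def:pseudofunctor}) literally agree with those of $\HK_{\ovr{\CC}}$ after applying $\iota$. This ensures that the definitions of pseudo-natural transformation and modification in the two contexts pull back to each other under $\Phi$, completing the equivalence.
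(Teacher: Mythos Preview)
Your proposal is correct and follows essentially the same approach as the paper: both use Assumption~\ref{assu:fastrestricts} to present $\calHK_{\ovr{\CC}}$ as a $2$-subfunctor of $\HK_{\ovr{\CC}}$ and then invoke that each $\calHK_{\ovr{\CC}}(M)\subseteq \HK_{\ovr{\CC}}(M)$ is a full subcategory. The paper's proof is a one-liner stating exactly this, while you have spelled out the resulting identification of points and modifications in more detail.
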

\begin{proof}
This follows immediately by using Assumption \ref{assu:fastrestricts}
to present the Haag-Kastler-style stack as a $2$-subfunctor 
$\calHK_{\ovr{\CC}} \subseteq \HK_{\ovr{\CC}}$ and the fact that 
$\calHK_{\ovr{\CC}}(M)\subseteq \HK_{\ovr{\CC}}(M)$
is a full subcategory, for all $M\in\Loc$.
\end{proof}

It remains to verify that the above results apply to at least some of our examples.
As already anticipated above, it is currently not clear to us 
if the Haag-Kastler $2$-functor $\HK$ from Definition \ref{def:HK2functor}
satisfies Assumption \ref{assu:fastrestricts}. However, we have 
the following positive result for the relatively compact Haag-Kastler $2$-functor $\HK^{\mathrm{rc}}$
from Definition \ref{def:RCHK2functor}.
\begin{theo}\label{theo:RCHKstack}
The relatively compact Haag-Kastler $2$-functor $\HK^{\mathrm{rc}}$
from Definition \ref{def:RCHK2functor} satisfies the requirements of Assumption \ref{assu:fastrestricts}.
Hence, as a consequence of Theorem \ref{theo:HKstacks}, the improved
relatively compact Haag-Kastler pseudo-functor
$\calHK^{\mathrm{rc}}$ associated to the net domain $\ovr{\RC(-)}$
is a
stack with respect to the Grothendieck topology given by all causally convex open covers. 
\end{theo}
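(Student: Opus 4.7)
The plan is to verify Assumption \ref{assu:fastrestricts} directly for $\HK^{\mathrm{rc}}$ by showing that, for every $\Loc$-morphism $f: M\to N$, pullback along $f$ preserves the descent conditions of Definition \ref{def:improvedHK}. Fix $\AAA \in \calHK^{\mathrm{rc}}(N)$ and a causally convex open cover $\U = \{U_i \subseteq M\}$ of $M$; the goal is to show that the counit $(\epsilon_\U)_{f^\ast\AAA}: j_{\U\,!}\,j_\U^\ast(f^\ast\AAA) \Rightarrow f^\ast\AAA$ is an isomorphism in $\HK^{\mathrm{rc}}(M)$. Since morphisms in $\AQFT(\ovr{\RC(M)})$ are natural transformations, it suffices to check that the counit component at each $U \in \RC(M)$ is an isomorphism in $\Alg_{\mathsf{uAs}}(\TT)$. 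My strategy is, for each such $U$, to construct an auxiliary causally convex open cover $\V$ of $N$ whose descent colimit for $\AAA$ at $f(U)\in\RC(N)$ agrees, via $f$, with the one computing the counit of $f^\ast\AAA$ at $U$.

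Constructing $\V$ crucially exploits relative compactness. Given $U\in\RC(M)$, the closure $\cl_M(U)$ is compact, so $f(\cl_M(U)) \subseteq f(M)$ is compact in $N$ and, by Hausdorffness, closed in $N$; therefore $\cl_N(f(U)) \subseteq f(\cl_M(U)) \subseteq f(M)$ is compact, and in particular $N \setminus \cl_N(f(U))$ is open. Strong causality of $N$ ensures that every $p \in N \setminus \cl_N(f(U))$ admits a causally convex open neighborhood $V_p \subseteq N \setminus \cl_N(f(U))$. Moreover, each $f(U_i) \subseteq N$ is causally convex open in $N$, since $f(M) \subseteq N$ is causally convex and $f$ is a homeomorphism onto its open image. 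The family
\begin{flalign*}
\V \,:=\, \{f(U_i)\subseteq N : i\} \,\cup\, \{V_p : p \in N\setminus\cl_N(f(U))\}
\end{flalign*}
is then a causally convex open cover of $N$, covering $f(M)$ via the $f(U_i)$ (since $\U$ covers $M$) and $N\setminus f(M) \subseteq N\setminus\cl_N(f(U))$ via the $V_p$.

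By hypothesis, $\AAA$ satisfies descent on $\V$, so the counit component $(\epsilon_\V)_\AAA$ at $f(U)$ is an isomorphism. Using the operadic left Kan extension model from Appendix \ref{app:operadicLKE}, the source $(j_{\V\,!}\,j_\V^\ast\AAA)(f(U))$ is a colimit over the comma category $\O^\otimes_{j_\V}/f(U)$, whose objects are tuples $((j_1,V^{(1)}),\ldots,(j_n,V^{(n)}))$ with $V^{(k)}\in\RC(V_{j_k})$ together with an operation to $f(U)$ in $\O_{\ovr{\RC(N)}}$, i.e., pairwise orthogonal inclusions $V^{(k)}\subseteq f(U)$. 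The key observation is that if some label $j_k$ corresponds to a $V_p$ with $p\in N\setminus\cl_N(f(U))$, then $V^{(k)}\subseteq V_p\cap f(U) \subseteq (N\setminus \cl_N(f(U)))\cap f(U) = \emptyset$, contradicting non-emptiness of $V^{(k)}$. Hence all labels of contributing objects lie in $\{f(U_i)\}$ and, via $V^{(k)} = f(W^{(k)})$ with $W^{(k)} \in \RC(U_{i_k})$ and $W^{(k)}\subseteq U$, biject with objects of the comma category $\O^\otimes_{j_\U}/U$ in $\O_{\ovr{\RC(M)}}$. Because $f$ is a fully faithful isometric causally convex open embedding, this bijection lifts to an equivalence of comma categories which respects operadic composition and orthogonality, and the diagram functors match because $f^\ast\AAA(W) = \AAA(f(W))$ by definition of the pullback.

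The equivalence of comma categories identifies the colimits $(j_{\V\,!}\,j_\V^\ast\AAA)(f(U)) \cong (j_{\U\,!}\,j_\U^\ast f^\ast\AAA)(U)$ and, by universality, identifies $(\epsilon_\V)_\AAA(f(U))$ with $(\epsilon_\U)_{f^\ast\AAA}(U)$ modulo the equality $f^\ast\AAA(U) = \AAA(f(U))$; since the former is an isomorphism, so is the latter. Because $U$ was arbitrary, $(\epsilon_\U)_{f^\ast\AAA}$ is an isomorphism, and because $\U$ was arbitrary, $f^\ast\AAA \in \calHK^{\mathrm{rc}}(M)$, which verifies Assumption \ref{assu:fastrestricts} for $\HK^{\mathrm{rc}}$. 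The stack conclusion then follows immediately from Theorem \ref{theo:HKstacks}. The most delicate step, which I expect to be the main technical obstacle, is the compatibility check for the operadic structure under the equivalence of comma categories, in particular verifying that orthogonality relations in $\O_{\ovr{\RC(N)}}$ between the $V^{(k)}$ correspond exactly to orthogonality relations in $\O_{\ovr{\RC(M)}}$ between the $W^{(k)}$; this should follow from $f$ being an isometric causally convex open embedding with causally convex image, which preserves and reflects the causal disjointness relation, but the detailed bookkeeping requires careful unwinding of the operadic definitions from Appendix \ref{app:operadicLKE}.
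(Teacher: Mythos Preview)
Your proposal is correct and follows essentially the same approach as the paper: both construct, for each $U\in\RC(M)$, an auxiliary causally convex open cover $\V$ of $N$ by adjoining to $f(\U)=\{f(U_i)\}$ a causally convex open cover of the complement $N\setminus\cl(f(U))$ (using strong causality), and then identify the comma categories $\O^\otimes_{j_\U}/U$ and $\O^\otimes_{j_\V}/f(U)$ via $f$ to transfer the descent isomorphism for $\AAA$ on $\V$ to the desired one for $f^\ast\AAA$ on $\U$. The paper phrases the comma-category identification slightly more compactly via the restriction property $f^{-1}\V\vert_U = \U\vert_U$, which yields directly the isomorphism $f:\O^\otimes_{j_\U}/U\stackrel{\cong}{\to}\O^\otimes_{j_\V}/f(U)$ you spell out by hand; your emptiness argument for the $V_p$-labels is exactly what underlies this restriction property.
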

\begin{proof}
We have to prove that, given any $\Loc$-morphism $f:M\to N$ and any object $\AAA\in\calHK^{\mathrm{rc}}(N)$
which satisfies the descent conditions from Definition \ref{def:improvedHK} for all causally convex
open covers of $N$, the pullback $f^\ast(\AAA)\in \HK^{\mathrm{rc}}(M)$ satisfies these 
descent conditions for all causally convex open covers of $M$.
Recalling the model for the operadic left Kan extension from Appendix \ref{app:operadicLKE},
this means that we have to show that, for every causally convex open cover $\U=\{U_i\subseteq M\}$ of $M$, 
the canonical map
\begin{flalign}\label{eqn:HKstackdesccondition}
((\epsilon_\U)_{f^\ast(\AAA)})_U\,:\,\colim\Big(\O_{j_{\U}}^\otimes/U \longrightarrow 
\O_{\ovr{\RC(\U)}}^\otimes \stackrel{\O_{j_{\U}}^\otimes}{\longrightarrow}\O_{\ovr{\RC(M)}}^\otimes
\stackrel{\O_{f}^\otimes}{\longrightarrow} \O_{\ovr{\RC(N)}}^\otimes
\stackrel{\AAA^\otimes}{\longrightarrow}\TT\Big)~\longrightarrow~\AAA\big(f(U)\big)
\end{flalign}
is an isomorphism, for all $U\in\RC(M)$. Our proof strategy is to construct,
for every fixed $U\in\RC(M)$, a causally convex open cover $\V$ of $N$ such that the descent conditions
for $\AAA$ in this cover imply that \eqref{eqn:HKstackdesccondition} is an isomorphism.
For this we use that $U\subseteq M$ is a relatively compact
causally convex open subset, hence the image of its closure
$f(\mathrm{cl}(U))\subseteq N$ is a compact subset of $N$. 
We choose any open cover $\mathcal{W} = \{W_j\subseteq N\setminus f(\mathrm{cl}(U))\}$
of the complement such that each $W_j\subseteq N$ is causally convex in $N$.
(Such cover exists
because $N$ is globally hyperbolic and hence strongly causal,
so the open set $N\setminus f(\mathrm{cl}(U))\subseteq N$ contains a causally convex open neighborhood of each of its points.)
From this we define the causally convex open cover 
$\V := f(\U)\cup \mathcal{W} = \{f(U_i)\subseteq N\}\cup \{W_j\subseteq N\}$ of $N$.
It is important to observe that, by construction, the restriction $f^{-1}\V\vert_U = \U\vert_U $
to $U\subseteq M$ of the pullback cover agrees with the restriction
of the given cover $\U$. This implies that we have an isomorphism
\begin{flalign}
f\,:\, \O_{j_{\U}}^\otimes/U~\stackrel{\cong}{\longrightarrow}~\O_{j_{\V}}^\otimes/f(U)
\end{flalign}
between the comma categories by taking images under $f$. Moreover, by direct inspection
one verifies that the diagram
\begin{flalign}
\begin{gathered}
\xymatrix{
\ar[d]^-{\cong}_-{f} \O_{j_{\U}}^\otimes/U \ar[r] ~&~ \O_{\ovr{\RC(\U)}}^\otimes \ar[r]^-{\O_{j_{\U}}^\otimes} ~&~ \O_{\ovr{\RC(M)}}^\otimes\ar[d]^-{\O_{f}^\otimes}\\
\O_{j_{\V}}^\otimes/f(U) \ar[r]~&~ \O_{\ovr{\RC(\V)}}^\otimes \ar[r]_-{\O_{j_{\V}}^\otimes}~&~ \O_{\ovr{\RC(N)}}^\otimes
}
\end{gathered}
\end{flalign}
commutes. This allows us to identify the canonical map \eqref{eqn:HKstackdesccondition}
with the canonical map
\begin{flalign}
((\epsilon_\V)_{\AAA})_{f(U)}\,:\,\colim\Big(\O_{j_{\V}}^\otimes/f(U) 
\longrightarrow  \O_{\ovr{\RC(\V)}}^\otimes \stackrel{\O_{j_{\V}}^\otimes}{\longrightarrow}\O_{\ovr{\RC(N)}}^\otimes
\stackrel{\AAA^\otimes}{\longrightarrow}\TT\Big)~\longrightarrow~\AAA\big(f(U)\big)\quad,
\end{flalign}
which is an isomorphism because $\AAA\in\calHK^{\mathrm{rc}}(N)$
satisfies the descent conditions from Definition \ref{def:improvedHK} for all causally convex open covers
$\V$ of $N$.
\end{proof}

\begin{rem}\label{rem:RCHKstack}
Our proof of Theorem \ref{theo:RCHKstack} does not
generalize in any evident way to the Haag-Kastler $2$-functor $\HK$ from Definition \ref{def:HK2functor}
because in the construction of the extended cover 
$\V := f(\U)\cup \mathcal{W} = \{f(U_i)\subseteq N\}\cup \{W_j\subseteq N\}$ of $N$,
which has the crucial restriction property $f^{-1}\V\vert_U = \U\vert_U$,
it was essential to assume that $U\subseteq M$ is relatively compact. It is therefore
currently not clear to us if the improved Haag-Kastler pseudo-functor $\calHK$ is a stack too.
\end{rem}

\begin{rem}\label{rem:stackification}
An interesting question is whether or not there exists any relationship 
between our AQFT-inspired `descent improvement' construction presented 
in this section and an abstract stackification construction.
In the context of $\CAT$-valued stacks, the stackification
of a pseudo-functor $X: \Loc^\op\to \CAT$ can be described explicitly by forming 
suitable bicolimits in $\CAT$, see e.g.\ \cite{Street}, but it 
is unlikely that the same kind of construction provides a stackification
in the context of $\Pr^R$-valued stacks because bicolimits 
in $\Pr^R$ behave very differently to those in $\CAT$, see Construction \ref{constr:computingbicolimits}.
To the best of our knowledge, stackification in the context of $\Pr^R$-valued stacks 
has not been studied in the literature and it is even unclear to us if it exists.
\sk

Since stackification in the context of $\Pr^R$-valued stacks (if it exists) 
should be defined by a universal property, namely the property
of being left adjoint (in the bicategorical sense) to the forgetful $2$-functor from stacks to pseudo-functors, one can
reformulate the question of whether or not our AQFT-inspired construction describes a stackification
into the following problem: Under our hypotheses, the Haag-Kastler-style stack
$\calHK_{\ovr{\CC}} \subseteq \HK_{\ovr{\CC}}$ is a $2$-subfunctor of the Haag-Kastler-style
$2$-functor with all full subcategory inclusions being left adjoint functors. Passing over to 
the corresponding right adjoints defines a pseudo-natural transformation
$\pi : \HK_{\ovr{\CC}} \Rightarrow \calHK_{\ovr{\CC}}$ of pseudo-functors from $\Loc^\op$
to $\Pr^R$ whose components are the coreflectors from Proposition \ref{propo:bilimovercover}.
Then $\pi$ exhibits $\calHK_{\ovr{\CC}}$ as a stackification of $\HK_{\ovr{\CC}}$ if and only if,
for every stack $X:\Loc^\op\to \Pr^R$, there exists an equivalence
\begin{flalign}
\mathrm{HOM}\big(\calHK_{\ovr{\CC}},X\big)~\simeq~\mathrm{HOM}\big(\HK_{\ovr{\CC}},X\big)
\end{flalign}
between the categories of pseudo-natural transformations and modifications,
which is pseudo-natural in $X$. We are currently not aware of any 
suitable strategies to verify or disprove this universal property, 
hence the question of whether or not the Haag-Kastler-style stack $\calHK_{\ovr{\CC}}$ 
arises as a stackification of the Haag-Kastler-style
$2$-functor $\HK_{\ovr{\CC}}$ unfortunately remains open.
\end{rem}

\subsubsection{\label{subsubsec:descent:timeslice}The case of time-slice}
Throughout this subsection, let us assume that  $\ovr{\CC(-)} : \Loc\to\Cat^\perp$ is a localized
net domain. As a consequence of the similar formal properties of localized net domains 
from Definition \ref{def:Wnice2functor} and net domains from Definition \ref{def:nice2functor}, 
with the only difference given by the additional assumption of Cauchy development stability
in the localized case, all constructions and most of the results from Subsection \ref{subsubsec:notimeslicedescent}
directly generalize to the present case if one consistently replaces general causally convex open covers
by $D$-stable causally convex open covers.
We shall briefly collect the relevant definitions
and results in the present case, without repeating the proofs. 
\sk

The analogue of Definition \ref{def:improvedHK}
in the present case is given as follows.
\begin{defi}\label{def:WimprovedHK}
Let $\ovr{\CC(-)} : \Loc\to\Cat^\perp$ be a localized net domain.
For every object $M\in\Loc$, we denote by
\begin{subequations}\label{eqn:WimprovedHKdescentcondition}
\begin{flalign}
\calHK_{\ovr{\CC}}(M)\,\subseteq\,\HK_{\ovr{\CC}}(M)
\end{flalign}
the full subcategory consisting of all objects $\AAA\in\HK_{\ovr{\CC}}(M)$ 
which satisfy the following descent conditions: For every
$D$-stable causally convex open cover $\U=\{U_i\subseteq M\}$, 
the $\AAA$-component of the counit
\begin{flalign}
(\epsilon_{\U})_\AAA \,:\, j_{\U\,!}\,j_{\U}^\ast(\AAA)~
\stackrel{\cong}{\Longrightarrow}~\AAA
\end{flalign}
\end{subequations}
of the adjunction
$j_{\U\,!} : \HK_{\ovr{\CC}}(\U) \rightleftarrows \HK_{\ovr{\CC}}(M) : j_\U^\ast$
of \eqref{eqn:descentadjunction} is an isomorphism in $\HK_{\ovr{\CC}}(M)$.
\end{defi}

By the same arguments as in the proof of Proposition \ref{propo:bilimovercover},
one can show that the category from Definition \ref{def:WimprovedHK}
arises as the bilimit of a pseudo-functor $\HK_{\ovr{\CC}}^\dagger:\mathbf{Dcov}(M)\to \Pr^L$,
which in the present case is defined on the full subcategory $\mathbf{Dcov}(M)\subseteq \mathbf{cov}(M)$
of $D$-stable causally convex open covers. (Note that this subcategory has a terminal
object, given by the coarsest cover $\{M\subseteq M\}$.) From this one concludes that
$\calHK_{\ovr{\CC}}(M) \subseteq \HK_{\ovr{\CC}}(M)$ is a locally presentable category
which is embedded as a coreflective full subcategory into $\HK_{\ovr{\CC}}(M)$.
Leveraging pseudo-functoriality of the bilimits over the categories of
$D$-stable causally convex open covers, one obtain the pseudo-functor 
\begin{subequations}\label{eqn:adjointimprovedW}
\begin{flalign}
\calHK_{\ovr{\CC}}^\dagger\,:\,\Loc~\longrightarrow~\Pr^L
\end{flalign} 
which assigns to each object $M\in\Loc$ the locally presentable category
$\calHK_{\ovr{\CC}}^\dagger(M):=\calHK_{\ovr{\CC}}(M)\in\Pr^L$ from Definition \ref{def:WimprovedHK}
and to each $\Loc$-morphism $f:M\to N$ the restriction
\begin{flalign}
\calHK_{\ovr{\CC}}^\dagger(f)\,:=\,f_! \,:\, \calHK_{\ovr{\CC}}(M)~\longrightarrow~\calHK_{\ovr{\CC}}(N)
\end{flalign}
\end{subequations}
of the left adjoint $f_! : \HK_{\ovr{\CC}}(M)\to \HK_{\ovr{\CC}}(N)$ to the full subcategories
$\calHK_{\ovr{\CC}}(M)\subseteq \HK_{\ovr{\CC}}(M)$ and $\calHK_{\ovr{\CC}}(N)\subseteq \HK_{\ovr{\CC}}(N)$.
The analogue of Definition \ref{def:improvedHKpseudofunctor} in 
the present case is then as follows.
\begin{defi}\label{def:WimprovedHKpseudofunctor}
Let $\ovr{\CC(-)} : \Loc\to\Cat^\perp$ be a localized net domain.
The \textit{improved Haag-Kastler-style pseudo-functor}
\begin{flalign}
\calHK_{\ovr{\CC}}\,:=\, \calHK_{\ovr{\CC}}^{\dagger\dagger}\,:\, \Loc^\op~\longrightarrow~\Pr^R
\end{flalign}
is defined as the adjoint via \eqref{eqn:LtoR} 
of the pseudo-functor $\calHK_{\ovr{\CC}}^\dagger$ in \eqref{eqn:adjointimprovedW}.
\end{defi}

As in the previous subsection, this pseudo-functor
is difficult to work with, which is why we introduce
an analogue of Assumption 
\ref{assu:fastrestricts}, but weaker since it 
only applies to $\Loc$-morphisms with $D$-stable image.
\begin{assu}\label{assu:Wfastrestricts}
We assume that, for every $\Loc$-morphism $f:M\to N$ whose image 
$f(M)\subseteq N$ is $D$-stable, i.e.\ $D_N(f(M))=f(M)$, the pullback
functor $f^\ast :\HK_{\ovr{\CC}}(N)\to \HK_{\ovr{\CC}}(M)$
restricts to a functor $f^\ast :\calHK_{\ovr{\CC}}(N)\to \calHK_{\ovr{\CC}}(M)$
between the full subcategories $\calHK_{\ovr{\CC}}(N)\subseteq \HK_{\ovr{\CC}}(N)$
and $\calHK_{\ovr{\CC}}(M)\subseteq \HK_{\ovr{\CC}}(M)$ from Definition \ref{def:WimprovedHK}.
\end{assu}

Provided that Assumption \ref{assu:Wfastrestricts} holds true,
one can choose a model for the improved Haag-Kastler-style pseudo-functor
such that $\calHK_{\ovr{\CC}}(f) = f^\ast : \calHK_{\ovr{\CC}}(N)\to \calHK_{\ovr{\CC}}(M)$
is the restriction of the pullback functor, for all $\Loc$-morphisms $f:M\to N$
with $D$-stable image. (For $\Loc$-morphisms whose image is not $D$-stable,
the pseudo-functorial structure is more complicated because one has to use 
coreflectors as in \eqref{eqn:fpullcoreflected}.)
This partially simplified description of the pseudo-functor $\calHK_{\ovr{\CC}}$ however
suffices to conclude that, for every object 
$M\in\Loc$ and every $D$-stable causally convex open cover $\U=\{U_i\subseteq M\}$, 
the descent category of $\calHK_{\ovr{\CC}}$ is given by the full subcategory
\begin{flalign}\label{eqn:caldescentcatW}
\calHK_{\ovr{\CC}}(\U)\,\subseteq\,\HK_{\ovr{\CC}}(\U)
\end{flalign}
consisting of all objects $(\{\AAA_i\},\{\varphi_{ij}\})\in \HK_{\ovr{\CC}}(\U)$
such that $\AAA_i\in \calHK_{\ovr{\CC}}(U_i)\subseteq \HK_{\ovr{\CC}}(U_i)$
lies in the full subcategory of objects satisfying the descent conditions 
from Definition \ref{def:WimprovedHK}, for all $i$. The canonical functor 
\begin{flalign}\label{eqn:caldescentmapW}
j_{\U}^\ast\,:\,\calHK_{\ovr{\CC}}(M)~\longrightarrow~ \calHK_{\ovr{\CC}}(\U)
\end{flalign}
to the descent category is given by restricting the right adjoint 
$j_{\U}^\ast:\HK_{\ovr{\CC}}(M) \to \HK_{\ovr{\CC}}(\U)$ from \eqref{eqn:descentadjunction}
to the full subcategories $\calHK_{\ovr{\CC}}(M)\subseteq \HK_{\ovr{\CC}}(M)$
and $\calHK_{\ovr{\CC}}(\U)\subseteq \HK_{\ovr{\CC}}(\U)$.
With the same proof as in Proposition \ref{prop:caldescentmapadjoint},
one then shows the following result.
\begin{propo}\label{prop:Wcaldescentmapadjoint}
Suppose that the $2$-functor $\ovr{\CC(-)} : \Loc\to \Cat^\perp$ is a
localized net domain and that Assumption \ref{assu:Wfastrestricts} holds true.
Then, for every object $M\in\Loc$ and every $D$-stable causally convex open cover $\U=\{U_i\subseteq M\}$, 
the left adjoint $j_{\U\,!} :\HK_{\ovr{\CC}}(\U) \to \HK_{\ovr{\CC}}(M)$ from \eqref{eqn:descentadjunction}
restricts to the full subcategories $\calHK_{\ovr{\CC}}(\U)\subseteq \HK_{\ovr{\CC}}(\U)$
and $\calHK_{\ovr{\CC}}(M)\subseteq \HK_{\ovr{\CC}}(M)$, and thereby defines
a left adjoint $j_{\U\,!} :\calHK_{\ovr{\CC}}(\U) \to \calHK_{\ovr{\CC}}(M)$
for the functor \eqref{eqn:caldescentmapW}.
\end{propo}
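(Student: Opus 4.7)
The plan is to adapt the proof of Proposition \ref{prop:caldescentmapadjoint} to the localized setting, paying careful attention to preservation of $D$-stability throughout. Given an object $\AAA=(\{\AAA_i\},\{\varphi_{ij}\})\in \calHK_{\ovr{\CC}}(\U)$, whose components $\AAA_i\in \calHK_{\ovr{\CC}}(U_i)$ satisfy descent on every $D$-stable causally convex open cover of $U_i$, I want to show that $j_{\U\,!}(\AAA)\in \HK_{\ovr{\CC}}(M)$ satisfies descent on every $D$-stable causally convex open cover $\U^\prime$ of $M$.

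First, I would verify that the intersection cover $\U\cap \U^\prime:=\{U_i\cap U^\prime_{i^\prime}\subseteq M\}$ inherits $D$-stability from $\U$ and $\U^\prime$. Indeed, as noted in Remark \ref{rem:inclusionLocmorphisms}, $D_M(U_i\cap U^\prime_{i^\prime})\subseteq D_M(U_i)\cap D_M(U^\prime_{i^\prime})=U_i\cap U^\prime_{i^\prime}$. I would then invoke the localized analogue of Remark \ref{rem:descentfinerimpliescoarser} to reduce descent of $j_{\U\,!}(\AAA)$ on the coarser $D$-stable cover $\U^\prime$ to descent on the finer $D$-stable refinement $\U\cap\U^\prime$; this invocation is legitimate because the unit of $j_{\U^\prime\,!}\dashv j_{\U^\prime}^\ast$ is a natural isomorphism by Theorem \ref{theo:Wprecostack}, which applies precisely to $D$-stable covers.

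Using the projection refinement $\pr_1\,:\,\U\cap\U^\prime\to\U$ in $\mathbf{Dcov}(M)$ and the same diagrammatic argument as in Proposition \ref{prop:caldescentmapadjoint}, the descent condition $(\epsilon_{\U\cap\U^\prime})_{j_{\U\,!}(\AAA)}$ on $j_{\U\,!}(\AAA)$ reduces to showing that the counit component
\begin{flalign*}
\epsilon_\AAA\,:\,\pr_{1\,!}\,\pr_1^\ast(\AAA)~\Longrightarrow~\AAA
\end{flalign*}
in $\HK_{\ovr{\CC}}(\U)$ is a natural isomorphism. Employing the explicit operadic model from Appendix \ref{app:operadicLKE} together with the simplified description of $\ovr{\CC(\U)}$ and $\ovr{\CC(\U\cap\U^\prime)}$ from Lemma \ref{lem:WniceCcoverOCat}, the component of $\epsilon_\AAA$ at $(i,U)\in\ovr{\CC(\U)}$ factors through an equivalence $\O_{\pr_1}^\otimes/(i,U)\simeq \O_{j_{\U_i}}^\otimes/U$, where $\U_i:=\pr_1^{-1}(i)=\{U_i\cap U^\prime_{i^\prime}:i^\prime\in\mathcal{I}^\prime\}$ is the fiber cover of $U_i$, and is identified with the counit component $(\epsilon_{\U_i})_{\AAA_i}$.

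The key new input compared to the unlocalized case—and the one place where the argument could plausibly fail—is the verification that each fiber cover $\U_i$ is a $D$-stable causally convex open cover of $U_i$, so that the descent hypothesis on $\AAA_i\in\calHK_{\ovr{\CC}}(U_i)$ applies. Since $U_i$ is itself $D$-stable in $M$, the argument from Example \ref{ex:Wnice2functor} yields $D_{U_i}(V)=D_M(V)$ for all causally convex opens $V\subseteq U_i$. Applied to $V=U_i\cap U^\prime_{i^\prime}$, this gives $D_{U_i}(U_i\cap U^\prime_{i^\prime})=D_M(U_i\cap U^\prime_{i^\prime})=U_i\cap U^\prime_{i^\prime}$, using the $D$-stability of $\U\cap\U^\prime$ in $M$ established in the first step. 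The descent property of each $\AAA_i$ on the $D$-stable cover $\U_i$ then forces $(\epsilon_{\U_i})_{\AAA_i}$, and hence $(\epsilon_\AAA)_{(i,U)}$, to be an isomorphism for every $(i,U)\in\ovr{\CC(\U)}$, completing the argument.
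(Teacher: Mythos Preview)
Your proposal is correct and follows essentially the same approach as the paper, which simply asserts that ``the same proof as in Proposition \ref{prop:caldescentmapadjoint}'' works. You have in fact done more than the paper by explicitly verifying the $D$-stability bookkeeping (for the intersection cover $\U\cap\U^\prime$ and for the fiber covers $\U_i$ of $U_i$) that the paper leaves implicit; these checks are exactly what is needed to ensure that Theorem \ref{theo:Wprecostack}, Lemma \ref{lem:WniceCcoverOCat}, and the descent hypothesis on each $\AAA_i$ are legitimately invoked in the localized setting.
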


The main result of the present subsection is then similar to Theorem \ref{theo:HKstacks}.
\begin{theo}\label{theo:WHKstacks}
Suppose that the $2$-functor $\ovr{\CC(-)} : \Loc\to \Cat^\perp$ is a localized
net domain in the sense of Definition \ref{def:Wnice2functor} and that 
Assumption \ref{assu:Wfastrestricts} holds true.
Then the improved Haag-Kastler-style pseudo-functor $\calHK_{\ovr{\CC}} : \Loc^\op\to \Pr^R$
from Definition \ref{def:WimprovedHKpseudofunctor} is a stack with respect to the Grothendieck
topology given by all $D$-stable causally convex open covers. 
\end{theo}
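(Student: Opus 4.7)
My plan is to mirror the proof strategy of Theorem \ref{theo:HKstacks}, adapting each step to the $D$-stable setting. First, I would use Assumption \ref{assu:Wfastrestricts} to model the improved Haag-Kastler-style pseudo-functor as a $2$-subfunctor $\calHK_{\ovr{\CC}}\subseteq\HK_{\ovr{\CC}}$, at least for pullbacks along $\Loc$-morphisms whose image is $D$-stable. For checking the stack descent property, however, only such morphisms matter: indeed, for any object $M\in\Loc$ and any $D$-stable causally convex open cover $\U=\{U_i\subseteq M\}$, each inclusion $\iota_{U_i}^M : U_i \to M$ has $D$-stable image by assumption, and the same holds for the inclusions of the (possibly iterated) intersections $U_{ij}, U_{ijk}$, since, as observed in Remark \ref{rem:inclusionLocmorphisms}, $D$-stability is inherited by intersections. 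Hence the simplified pseudo-functorial model suffices to describe the descent category as the full subcategory \eqref{eqn:caldescentcatW} of $\HK_{\ovr{\CC}}(\U)$, with the canonical functor to the descent category given by the restricted pullback \eqref{eqn:caldescentmapW}.

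Next, I would invoke Proposition \ref{prop:Wcaldescentmapadjoint} to obtain, for every $M\in\Loc$ and every $D$-stable causally convex open cover $\U=\{U_i\subseteq M\}$, an adjunction
\begin{flalign}
\xymatrix@C=3em{
j_{\U\,!}\,:\, \calHK_{\ovr{\CC}}(\U) \ar@<0.75ex>[r]~&~\ar@<0.75ex>[l] \calHK_{\ovr{\CC}}(M)\,:\, j_{\U}^\ast
}
\end{flalign}
obtained by restricting the adjunction \eqref{eqn:descentadjunction} to the full subcategories determined by the descent conditions of Definition \ref{def:WimprovedHK}. It then remains to verify that both the unit and the counit of this adjunction are natural isomorphisms.

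For the unit $\eta_{\U}$, I would appeal to Theorem \ref{theo:Wprecostack}: since $\ovr{\CC(-)}$ is a localized net domain and $\U$ is $D$-stable, the orthogonal functor $j_{\U} : \ovr{\CC(\U)}\to \ovr{\CC(M)}$ is fully faithful and reflects orthogonality, so the corresponding operadic left Kan extension $j_{\U\,!}$ is fully faithful on all of $\HK_{\ovr{\CC}}(\U)$, and hence a fortiori on the full subcategory $\calHK_{\ovr{\CC}}(\U)$. This is equivalent to $\eta_{\U}$ being a natural isomorphism. For the counit $\epsilon_{\U}$, the argument is tautological: by Definition \ref{def:WimprovedHK}, an object $\AAA\in\HK_{\ovr{\CC}}(M)$ lies in the full subcategory $\calHK_{\ovr{\CC}}(M)$ precisely if the component $(\epsilon_{\U})_\AAA$ is an isomorphism for every $D$-stable causally convex open cover $\U$ of $M$. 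Combining these two observations, the adjunction is an adjoint equivalence, which is precisely the statement that the canonical functor $\calHK_{\ovr{\CC}}(M)\to \calHK_{\ovr{\CC}}(\U)$ to the descent category is an equivalence in $\Pr^R$, showing that $\calHK_{\ovr{\CC}}$ is a stack.

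Since every ingredient (Assumption \ref{assu:Wfastrestricts}, Proposition \ref{prop:Wcaldescentmapadjoint}, Theorem \ref{theo:Wprecostack}, Definition \ref{def:WimprovedHK}) is already established, I do not expect any serious obstacle. The only mild subtlety is to confirm that restricting to $D$-stable covers is genuinely harmless, i.e.\ that intersections of members of a $D$-stable cover remain $D$-stable, but this has already been noted in Remark \ref{rem:inclusionLocmorphisms}.
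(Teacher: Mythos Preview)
Your proposal is correct and follows essentially the same route as the paper's own proof, which simply refers back to the argument of Theorem \ref{theo:HKstacks} with Proposition \ref{prop:Wcaldescentmapadjoint} and Theorem \ref{theo:Wprecostack} in place of their non-time-sliced counterparts. Your additional observation that only $\Loc$-morphisms with $D$-stable image enter the descent diagram (so that the weaker Assumption \ref{assu:Wfastrestricts} suffices) makes explicit a point the paper leaves implicit in the discussion preceding \eqref{eqn:caldescentcatW}.
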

\begin{proof}
By the same arguments as in the proof
of Theorem \ref{theo:HKstacks}, this 
follows directly from Proposition \ref{prop:Wcaldescentmapadjoint}
and Theorem \ref{theo:Wprecostack}.
\end{proof}

The result from Proposition \ref{prop:calHKpoints}
about the category of points of the Haag-Kastler-style 
stack does not generalize to the present case because
Assumption \ref{assu:Wfastrestricts} is too weak to imply that
$\calHK_{\ovr{\CC}}$ can be presented 
as $2$-subfunctor of $\HK_{\ovr{\CC}}$. However, we have
the following weaker result which does not rely on any additional
assumptions but characterizes only a full subcategory of the category of points $\calHK_{\ovr{\CC}}(\mathrm{pt})$.
\begin{propo}\label{prop:WcalHKpoints}
Suppose that the $2$-functor $\ovr{\CC(-)} : \Loc\to \Cat^\perp$ is a localized net domain.
Then there exists a fully faithful functor
\begin{flalign}\label{eqn:WcalHKpoints}
\HK_{\ovr{\CC}}(\mathrm{pt})^{\mathrm{desc}}\,\longrightarrow\, \calHK_{\ovr{\CC}}(\mathrm{pt})
\end{flalign}
from the full subcategory $\HK_{\ovr{\CC}}(\mathrm{pt})^{\mathrm{desc}}\subseteq \HK_{\ovr{\CC}}(\mathrm{pt})$
of the category of points of the Haag-Kastler-style $2$-functor $\HK_{\ovr{\CC}}$
consisting of all objects $(\{\AAA_M\},\{\alpha_f\})\in \HK_{\ovr{\CC}}(\mathrm{pt})$ 
(see also Remark \ref{rem:HKpoints}) such that $\AAA_M\in \calHK_{\ovr{\CC}}(M)\subseteq \HK_{\ovr{\CC}}(M)$
satisfies the descent conditions from Definition \ref{def:WimprovedHK}, for all $M\in\Loc$,
to the category of points \eqref{eqn:calHKpoints} of the improved Haag-Kastler-style pseudo-functor $\calHK_{\ovr{\CC}}$.
\end{propo}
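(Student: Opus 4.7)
The key preliminary observation I would start from is that the full subcategory $\calHK_{\ovr{\CC}}(M) \subseteq \HK_{\ovr{\CC}}(M)$ carved out by the descent conditions \eqref{eqn:WimprovedHKdescentcondition} is closed under isomorphisms in $\HK_{\ovr{\CC}}(M)$, since those conditions are manifestly stable under isomorphism of the test object. Consequently, for any point $(\{\AAA_M\},\{\alpha_f\}) \in \HK_{\ovr{\CC}}(\mathrm{pt})^{\mathrm{desc}}$, the isomorphism $\alpha_f : \AAA_M \Rightarrow f^\ast(\AAA_N)$ together with $\AAA_M \in \calHK_{\ovr{\CC}}(M)$ forces $f^\ast(\AAA_N) \in \calHK_{\ovr{\CC}}(M)$, for every $\Loc$-morphism $f : M\to N$. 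Equivalently, the counit $\epsilon_{f^\ast(\AAA_N)}: \iota_M\pi_M(f^\ast(\AAA_N)) \to f^\ast(\AAA_N)$ of the coreflective adjunction \eqref{eqn:coreflector} is an isomorphism on every such object.

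On objects, I would define the functor \eqref{eqn:WcalHKpoints} by sending $(\{\AAA_M\},\{\alpha_f\})$ to $(\{\AAA_M\},\{\tilde{\alpha}_f\})$, where $\tilde{\alpha}_f := \pi_M(\alpha_f) \circ \eta_{\AAA_M}$ is obtained using the (iso)unit $\eta$ of $\iota_M \dashv \pi_M$ and yields an isomorphism $\AAA_M \Rightarrow \pi_M f^\ast(\AAA_N) \cong \calHK_{\ovr{\CC}}(f)(\AAA_N)$ in $\calHK_{\ovr{\CC}}(M)$, where the last identification uses uniqueness up to isomorphism of right adjoints for \eqref{eqn:adjointimprovedW}. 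I would verify the cocycle conditions \eqref{eqn:HKptconditions} for the new tuple by applying $\pi_M$ to those satisfied by $\{\alpha_f\}$ and invoking the triangle identities together with the coherence isomorphisms of the pseudo-functor $\calHK_{\ovr{\CC}}$. On morphisms, I would set $\Phi(\{\zeta_M\}) := \{\zeta_M\}$, which is well-defined because $\calHK_{\ovr{\CC}}(M) \subseteq \HK_{\ovr{\CC}}(M)$ is a full subcategory and both endpoints of each $\zeta_M$ lie in it.

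Faithfulness of the resulting functor $\Phi$ is then immediate from the same full-subcategory property. For fullness, the plan is to show that compatibility of any $\{\zeta_M\}$ with $\{\tilde{\alpha}_f\}$ and $\{\tilde{\beta}_f\}$ in $\calHK_{\ovr{\CC}}(\mathrm{pt})$ implies compatibility with the original $\{\alpha_f\}$ and $\{\beta_f\}$ in $\HK_{\ovr{\CC}}(\mathrm{pt})$. To this end, I would apply the fully faithful inclusion $\iota_M$ and use the triangle identity $\epsilon_{\iota_M(\AAA_M)} \circ \iota_M(\eta_{\AAA_M}) = \id$, together with naturality of the counit $\epsilon$ of $\iota_M \dashv \pi_M$ at the morphism $\alpha_f$, to compute
\[
\iota_M(\tilde{\alpha}_f) \,=\, \epsilon_{f^\ast(\AAA_N)}^{-1} \circ \alpha_f\quad,
\]
and analogously for $\tilde{\beta}_f$. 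Naturality of $\epsilon$ at $f^\ast(\zeta_N)$ then converts the $\calHK_{\ovr{\CC}}(\mathrm{pt})$-compatibility condition into the $\HK_{\ovr{\CC}}(\mathrm{pt})$-one after canceling the isomorphism $\epsilon_{f^\ast(\BBB_N)}$.

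The main obstacle I anticipate is bookkeeping rather than a genuine conceptual hurdle: the pseudo-functorial structure of $\calHK_{\ovr{\CC}}$ in Definition \ref{def:WimprovedHKpseudofunctor} is defined via the mate correspondence from the left adjoint pseudo-functor $\calHK_{\ovr{\CC}}^\dagger$ in \eqref{eqn:adjointimprovedW}, and its coherence isomorphisms involve a non-trivial interplay of units, counits and coreflectors of the various adjunctions $\iota_M \dashv \pi_M$. Handling these carefully while verifying the cocycle conditions for $\{\tilde{\alpha}_f\}$ will require spelling out the coherence isomorphisms $\calHK_{\ovr{\CC},g,f}$ explicitly in terms of these data, but no genuinely new input beyond the adjunction theory above is expected.
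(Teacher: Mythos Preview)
Your proposal is correct and follows essentially the same approach as the paper: the paper also uses the model $\calHK_{\ovr{\CC}}(f) = \pi_M\,f^\ast\,\iota_N$, observes that $f^\ast(\AAA_N)\in\calHK_{\ovr{\CC}}(M)$ by isomorphism-closure, and defines $\tilde{\alpha}_f$ as the composite $\eta_{f^\ast(\AAA_N)}\circ\alpha_f$ (which agrees with your $\pi_M(\alpha_f)\circ\eta_{\AAA_M}$ by naturality of the unit). The paper spells out the coherences of $\calHK_{\ovr{\CC}}$ exactly as you anticipate in your final paragraph, and then dispatches fully faithfulness in one line via the full-subcategory inclusions, whereas you supply a more detailed argument for fullness; both are fine.
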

\begin{proof}
We provide a direct construction of the fully faithful functor \eqref{eqn:WcalHKpoints}.
For this we use the model for the Haag-Kastler-style pseudo-functor $\calHK_{\ovr{\CC}}$ which is
given by the coreflected pullback functors in \eqref{eqn:fpullcoreflected}, i.e.\ 
$\calHK_{\ovr{\CC}}(f) = \pi_M \, f^\ast\,\iota_N $ for all $\Loc$-morphisms $f:M\to N$. 
Note that this model is only pseudo-functorial with coherences 
\begin{subequations}
\begin{flalign}\label{eqn:tmpdiagram}
\resizebox{.9 \textwidth}{!}{$
\begin{gathered}
\xymatrix@C=3.75em@R=2em{
\ar@{=}[d]\calHK_{\ovr{\CC}}(g\,f) \ar@{=>}[rrr]^-{\cong} ~&~ ~&~ ~&~ 
\calHK_{\ovr{\CC}}(f)\,\calHK_{\ovr{\CC}}(g) \ar@{=}[d]\\
 \pi_M f^\ast g^\ast \iota_O \ar@{=>}[r]_-{\cong} ~&~  
 f^\ast \pi_N g^\ast \iota_O \ar@{=>}[r]_-{f^\ast\eta_N\pi_N g^\ast\iota_O}^-{\cong} ~&~
 f^\ast \pi_N \iota_N \pi_N g^\ast\iota_O \ar@{=>}[r]_-{\cong}~&~ \pi_M f^\ast \iota_N \pi_N g^\ast\iota_O
}
\end{gathered}
$}
\end{flalign}
and 
\begin{flalign}
\xymatrix{
\eta_M \,:\, \id_{\calHK_{\ovr{\CC}}(M)} \ar@{=>}[r]^-{\cong} ~&~ \pi_M\,\iota_M\,=\, \calHK_{\ovr{\CC}}(\id_M)
}
\end{flalign}
\end{subequations}
given by the units of the coreflection adjunctions 
$\iota_M : \calHK_{\ovr{\CC}}(M) \rightleftarrows \HK_{\ovr{\CC}}(M): \pi_M$, for all $M\in\Loc$.
The unlabeled isomorphisms in the bottom row of \eqref{eqn:tmpdiagram} are
the natural isomorphisms $\pi_M\,f^\ast \cong f^\ast\, \pi_N$ which are a consequence of
the commutativity property $\iota_N \,f_! = f_!\,\iota_M$ of the left adjoint functors.
In the following we shall suppress all $\iota$ because these are just full subcategory inclusions.
\sk

For every object $(\{\AAA_M\},\{\alpha_f\})\in \HK_{\ovr{\CC}}(\mathrm{pt})^{\mathrm{desc}}$,
the fact that $\AAA_M\in \calHK_{\ovr{\CC}}(M)\subseteq \HK_{\ovr{\CC}}(M)$
lies in the full subcategory and the isomorphism $\alpha_f : \AAA_M\stackrel{\cong}{\Longrightarrow} f^\ast(\AAA_N)$
in $\HK_{\ovr{\CC}}(M)$ imply that $f^\ast(\AAA_N)\in \calHK_{\ovr{\CC}}(M)\subseteq \HK_{\ovr{\CC}}(M)$
lies in the full subcategory, for every $\Loc$-morphism $f:M\to N$. We define
the functor \eqref{eqn:WcalHKpoints} on objects by sending
$(\{\AAA_M\},\{\alpha_f\})\in \HK_{\ovr{\CC}}(\mathrm{pt})^{\mathrm{desc}}$ to the tuple
\begin{flalign}
\Big(\Big\{\AAA_M\in \calHK_{\ovr{\CC}}(M)\Big\}, \,
\Big\{\tilde{\alpha}_f : \AAA_M\stackrel{\alpha_f}{\Longrightarrow}f^\ast(\AAA_N)
\stackrel{\eta_{f^\ast(\AAA_N)}}{\Longrightarrow}   \pi_M f^\ast(\AAA_N)\Big\}\Big)\quad.
\end{flalign}
One directly checks that this tuple satisfies the conditions in 
Remark \ref{rem:HKpoints}, hence it defines an object $(\{\AAA_M\},\{\tilde{\alpha}_f\})\in 
\calHK_{\ovr{\CC}}(\mathrm{pt})$. The action of the functor \eqref{eqn:WcalHKpoints} 
on morphisms $\{\zeta_M\} : (\{\AAA_M\},\{\alpha_f\})\Rightarrow (\{\BBB_M\},\{\beta_f\})$
in $\HK_{\ovr{\CC}}(\mathrm{pt})^{\mathrm{desc}}$ is given by the same tuple
of maps  $\{\zeta_M\} : (\{\AAA_M\},\{\tilde{\alpha}_f\})\Rightarrow (\{\BBB_M\},\{\tilde{\beta}_f\})$.
One directly checks that this tuple satisfies the conditions in Remark \ref{rem:HKpoints}.
Fully faithfulness then follows immediately from the fact that
$\calHK_{\ovr{\CC}}(M)\subseteq \HK_{\ovr{\CC}}(M)$ is a full subcategory,
for all $M\in\Loc$.
\end{proof}

Analogously to the case where no time-slice axiom is implemented,
see Theorem \ref{theo:RCHKstack} and Remark \ref{rem:RCHKstack},
we can confirm the hypotheses of
Theorem \ref{theo:WHKstacks}
only for the
time-sliced relatively compact Haag-Kastler $2$-functor $\HK^{\mathrm{rc},W}$
from Definition \ref{def:RCHK2functortimeslice}.
We currently
do not know if the improvement construction from Definition \ref{def:WimprovedHKpseudofunctor}
applied to the (non-relatively compact) time-sliced Haag-Kastler $2$-functor $\HK^W$
from Definition \ref{def:HK2functortimeslice} defines a stack.
\begin{theo}\label{theo:WRCHKstack}
The time-sliced relatively compact Haag-Kastler $2$-functor $\HK^{\mathrm{rc},W}$
from Definition \ref{def:RCHK2functortimeslice} 
satisfies the requirements of Assumption \ref{assu:Wfastrestricts}.
Hence, as a consequence of Theorem \ref{theo:WHKstacks}, the improved
time-sliced relatively compact Haag-Kastler pseudo-functor $\calHK^{\mathrm{rc},W}$
associated to the localized net domain $\ovr{\RC(-)}[W_{\mathrm{rc},(-)}^{-1}]$
is a
stack with respect to the Grothendieck topology given by all $D$-stable
causally convex open covers. 
\end{theo}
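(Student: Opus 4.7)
The plan is to follow the template of the proof of Theorem \ref{theo:RCHKstack}, with the extra care needed to produce $D$-stable covers. By Theorem \ref{theo:WHKstacks}, the stack conclusion follows once we verify that $\HK^{\mathrm{rc},W}$ satisfies Assumption \ref{assu:Wfastrestricts}: for every $\Loc$-morphism $f:M\to N$ whose image $f(M)\subseteq N$ is $D$-stable and every $\AAA\in\calHK^{\mathrm{rc},W}(N)$, the pullback $f^\ast(\AAA)\in \HK^{\mathrm{rc},W}(M)$ must satisfy the descent conditions from Definition \ref{def:WimprovedHK} for every $D$-stable causally convex open cover $\U=\{U_i\subseteq M\}$ of $M$.

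Fix such a cover $\U$ and any $U\in\RC(M)$. Mirroring the proof of Theorem \ref{theo:RCHKstack}, the goal is to construct a $D$-stable causally convex open cover $\V$ of $N$ whose pullback restricts on $U$ to $\U\vert_U$. Granted this, the canonical isomorphism $f:\O_{j_\U}^\otimes/U \stackrel{\cong}{\longrightarrow} \O_{j_\V}^\otimes/f(U)$ between comma categories, together with the commutative diagram relating the operadic left Kan extensions along $j_\U$ and $j_\V$, identifies the counit component $((\epsilon_\U)_{f^\ast(\AAA)})_U$ with $((\epsilon_\V)_\AAA)_{f(U)}$, which is an isomorphism by the descent hypothesis on $\AAA$ applied to the $D$-stable cover $\V$.

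For the construction of $\V$, the key observation is that since $U_i$ is $D$-stable in $M$ and $f(M)$ is $D$-stable in $N$, the image $f(U_i)\subseteq N$ is automatically $D$-stable in $N$: by monotonicity $D_N(f(U_i))\subseteq D_N(f(M))=f(M)$, while the standard identity $D_N(V)\cap W = D_W(V)$ for $V\subseteq W\subseteq N$ with $W$ causally convex (applied to $W=f(M)$ and $V = f(U_i)$) gives $D_N(f(U_i)) = D_N(f(U_i))\cap f(M) = D_{f(M)}(f(U_i)) = f(D_M(U_i)) = f(U_i)$. Since $U\in\RC(M)$, the image $f(\mathrm{cl}(U))\subseteq N$ is compact, and Proposition \ref{prop:Cauchy_development:small_D-stable_neighbourhoods} together with Remark \ref{rem:Cauchy_development:fine_D-stable_covers} from Appendix \ref{app:Lorentz_geometry_details} yield a $D$-stable causally convex open cover $\mathcal{W}=\{W_j\subseteq N\setminus f(\mathrm{cl}(U))\}$ of the complement, with each $W_j\subseteq N$ being $D$-stable and causally convex in $N$. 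Setting $\V := \{f(U_i)\subseteq N\}\cup\mathcal{W}$ yields the required $D$-stable causally convex open cover of $N$; the restriction property $f^{-1}\V\vert_U = \U\vert_U$ holds since $f^{-1}(f(U_i))\cap U = U_i\cap U$ and $W_j\cap f(U) \subseteq W_j\cap f(\mathrm{cl}(U)) = \emptyset$.

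The main obstacle lies purely on the Lorentzian-geometric side, namely producing sufficiently fine $D$-stable causally convex open covers of open subsets of a globally hyperbolic Lorentzian manifold. This is exactly what is established in Appendix \ref{app:Lorentz_geometry_details}. The remainder of the argument is a direct categorical adaptation of the non-time-sliced proof of Theorem \ref{theo:RCHKstack}, with the simplifying observation that $D$-stability of $f(M)$ in $N$ lets one take $\V$-elements covering $f(M)$ to be literally $f(U_i)$, rather than their Cauchy developments.
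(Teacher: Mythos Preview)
There is a genuine gap in your construction of $\V$. In the localized net domain $\ovr{\RC(-)}[W_{\mathrm{rc},(-)}^{-1}]$, a morphism $V\to U$ exists whenever $V\subseteq D_M(U)$, not merely when $V\subseteq U$ (see Example \ref{ex:localizations}). Consequently, objects in the comma category $\O_{j_\U}^\otimes/U$ are tuples $((i_k,V_k))_k$ with $V_k\in\RC(U_{i_k})$ satisfying $V_k\subseteq D_M(U)$, and such $V_k$ may lie well outside $\mathrm{cl}(U)$. For the isomorphism $\O_{j_\U}^\otimes/U\cong\O_{j_\V}^\otimes/f(U)$ to hold you therefore need the restriction property on the Cauchy development, namely $f^{-1}\V\vert_{D_M(U)} = \U\vert_{D_M(U)}$, not just on $U$. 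With your $\mathcal{W}$ covering only $N\setminus f(\mathrm{cl}(U))$, some $W_j$ can meet $D_N(f(U))\setminus f(\mathrm{cl}(U))$ (which is typically nonempty), producing objects $(j,W)$ in $\O_{j_\V}^\otimes/f(U)$ with $W\in\RC(W_j)$ and $W\subseteq D_N(f(U))$ that are not in the image of the map from $\O_{j_\U}^\otimes/U$.

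The paper's remedy is to let $\mathcal{W}$ cover $N\setminus\mathrm{cl}(f(D_M(U)))$ instead. For the resulting $\V = f(\U)\cup\mathcal{W}$ still to cover $N$, one must know that $\mathrm{cl}(f(D_M(U)))\subseteq f(M)$; this is precisely the content of Proposition \ref{prop:Cauchy_development:closure_in_D_stable_image}, the key Lorentzian input you do not invoke. Note that $D_M(U)$ need not be relatively compact even when $U$ is, so this inclusion cannot be deduced from compactness of $f(\mathrm{cl}(U))$ alone; this is where the $D$-stability of $f(M)$ does essential work beyond making each $f(U_i)$ $D$-stable in $N$. Your argument that $f(U_i)\subseteq N$ is $D$-stable is correct and is needed, but Proposition \ref{prop:Cauchy_development:small_D-stable_neighbourhoods} by itself does not suffice to close the argument.
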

\begin{proof}
We use the same proof strategy as in Theorem \ref{theo:RCHKstack},
but there are additional technical aspects (treated in Appendix \ref{app:Lorentz_geometry_details}) 
arising from 1.)~the $D$-stability requirement for causally convex open covers, and 
2.)~the fact that morphisms $U\to U^\prime$ 
in the localized orthogonal categories $\ovr{\RC(M)}[W_{\mathrm{rc},M}^{-1}]$
from Example \ref{ex:localizations} exist whenever $U\subseteq D_M(U^\prime)$.
This requires us to construct, for every $\Loc$-morphism $f: M\to N$ with 
$D$-stable image $f(M)\subseteq N$, every $D$-stable causally convex open cover 
$\U=\{U_i\subseteq M\}$ of $M$ and every relatively compact causally 
convex open subset $U\subseteq M$, a $D$-stable causally convex open cover 
$\V$ of $N$ which satisfies the restriction property $f^{-1}\V\vert_{D_M(U)} = \U\vert_{D_M(U)}$
on the Cauchy development $D_M(U)\subseteq M$.
As an immediate corollary of Proposition \ref{prop:Cauchy_development:small_D-stable_neighbourhoods},
one can find an open cover $\mathcal{W} = \{W_j\subseteq N\setminus \mathrm{cl}(f(D_M(U)))\}$
of the complement of the closure $\mathrm{cl}(f(D_M(U)))\subseteq N$
such that each $W_j\subseteq N$ is causally convex and $D$-stable in $N$.
Using further that $\mathrm{cl}(f(D_M(U))) \subseteq \mathrm{cl}(D_N(f(U))) \subseteq f(M)$ 
is contained in the image of $f$
by Proposition \ref{prop:Cauchy_development:closure_in_D_stable_image},
we then obtain a $D$-stable causally convex open cover
$\V := f(\U) \cup \mathcal{W} = \{f(U_i)\subseteq N\}\cup\{W_j\subseteq N\}$ of $N$
which, by construction, satisfies the desired restriction property 
$f^{-1}\V\vert_{D_M(U)} = \U\vert_{D_M(U)}$. The remaining steps in the proof
are then identical to Theorem \ref{theo:RCHKstack}.
\end{proof}

\subsection{\label{subsec:KGexample}Exhibiting examples of points}
The aim of this subsection is to exhibit points
of the relatively compact Haag-Kastler stack $\calHK^{\mathrm{rc}}$
from Theorem \ref{theo:RCHKstack} and of the time-sliced relatively compact
Haag-Kastler stack $\calHK^{\mathrm{rc},W}$ from Theorem \ref{theo:WRCHKstack}.
These points correspond to AQFTs which are 
presented by generators and relations that satisfy
certain simpler descent conditions. Explicit examples include 
free (i.e.\ non-interacting) AQFTs such as the free Klein-Gordon quantum field.
\sk

We start with some technical preparations for these results.
Let $\ovr{\CC(-)} : \Loc\to \Cat^\perp$ be a net domain 
(see Definition \ref{def:nice2functor}) 
or a localized net domain (see Definition \ref{def:Wnice2functor}).
Given any causally convex open cover $\U = \{U_i\subseteq M\}$ of any object $M\in \Loc$, 
which we assume to be $D$-stable in the localized case, there exists a diagram of adjunctions
\begin{flalign}\label{eqn:bigadjunctiondiagram}
\begin{gathered}
\xymatrix@R=3em@C=3em{
\ar@<1.2ex>[d]^-{\subseteq} \HK_{\ovr{\CC}}(\U) \ar@<1.2ex>[r]_-{\perp}^-{j_{\U\,!}}~&~
\ar@<1.2ex>[l]^-{j_\U^\ast} \HK_{\ovr{\CC}}(M) \ar@<1.2ex>[d]^-{\subseteq}\\
\ar@<1.2ex>[u]_-{\dashv}^-{p_{\U}} \ar@<1.2ex>[d]^-{\mathsf{U}_{\U}} \Fun\big(\CC(\U),\Alg_{\mathrm{uAs}}(\TT)\big) \ar@<1.2ex>[r]_-{\perp}^-{\Lan_{j_\U}}~&~
\ar@<1.2ex>[u]_-{\dashv}^-{p_{M}} \ar@<1.2ex>[l]^-{j_\U^\ast} \Fun\big(\CC(M),\Alg_{\mathrm{uAs}}(\TT)\big) \ar@<1.2ex>[d]^-{\mathsf{U}_{M}}\\
\ar@<1.2ex>[u]_-{\dashv}^-{\mathsf{F}_{\U}}\Fun(\CC(\U),\TT) \ar@<1.2ex>[r]_-{\perp}^-{\lan_{j_\U}}~&~
\ar@<1.2ex>[u]_-{\dashv}^-{\mathsf{F}_{M}} \ar@<1.2ex>[l]^-{j_\U^\ast} \Fun(\CC(M),\TT)
}
\end{gathered}\qquad.
\end{flalign}
The top horizontal adjunction $j_{\U\,!}\dashv j_{\U}^\ast$
is the one determining the descent conditions from Definitions
\ref{def:improvedHK} and \ref{def:WimprovedHK} for the improved Haag-Kastler-style
pseudo-functor $\calHK_{\ovr{\CC}}$, while the middle and bottom horizontal
adjunctions $\Lan_{j_\U}\dashv j_{\U}^\ast$ and $\lan_{j_\U}\dashv j_{\U}^\ast$
are given by left Kan extensions of functors along the fully faithful functors
$j_\U : \CC(\U)\to\CC(M)$. All horizontal left adjoints are fully faithful functors.
The top vertical adjunctions $p_\U\dashv {\subseteq}$ and $p_M\dashv {\subseteq}$
describe the reflectors for the full subcategory inclusions $\HK_{\ovr{\CC}}(\U)\subseteq 
\Fun\big(\CC(\U),\Alg_{\mathrm{uAs}}(\TT)\big)$ and $\HK_{\ovr{\CC}}(M)\subseteq 
\Fun\big(\CC(M),\Alg_{\mathrm{uAs}}(\TT)\big)$. The left adjoints of these 
adjunctions enforce the $\perp$-commutativity axiom and they appeared before
in \cite[Section 4.1]{BSWoperad} under the name $\perp$-abelianizations.
The bottom vertical adjunctions $\mathsf{F}_\U\dashv \mathsf{U}_{\U}$
and $\mathsf{F}_M\dashv \mathsf{U}_{M}$ are the free-forget adjunctions for unital associative algebras
in the functor categories $\Fun(\CC(\U),\TT)$ and $\Fun(\CC(M),\TT)$. Concretely,
given any object $X\in \Fun(\CC(M),\TT)$, then $\mathsf{F}_M(X) \in \Fun\big(\CC(M),\Alg_{\mathrm{uAs}}(\TT)\big)$
is defined object-wise $\mathsf{F}_M(X)(U) := \mathsf{F}_{\mathsf{uAs}}(X(U))$ by taking
the free unital associative algebra over $X(U)\in \TT$, for all $U\in\CC(M)$.
(The functor $\mathsf{F}_\U$ is defined similarly by taking object-wise free unital associative algebras.)
Let us record the following commutativity properties of the diagram \eqref{eqn:bigadjunctiondiagram} of adjunctions:
\begin{itemize}
\item[(1)] The top square of right adjoints commutes $\subseteq\,j_{\U}^\ast = j_{\U}^\ast\,\subseteq$.
Hence, also the top square of left adjoints commutes up to a natural isomorphism 
$j_{\U\,!}\,p_\U \cong p_M\,\Lan_{j_\U}$.

\item[(2)] The bottom square of right adjoints commutes $j_{\U}^\ast\,\mathsf{U}_M = \mathsf{U}_{\U} \,j_{\U}^\ast$.
Hence, also the bottom square of left adjoints commutes up to a natural isomorphism
$\Lan_{j_\U}\,\mathsf{F}_{\U} \cong \mathsf{F}_M\,\lan_{j_\U}$.

\item[(3)] In the bottom square we have also
$\mathsf{F}_{\U}\,j_\U^\ast = j_{\U}^\ast \,\mathsf{F}_M$ 
because the free algebra functors $F_{\U}$ and $F_{M}$ are defined object-wise,
hence they commute with the pullback functors along $j_\U : \CC(\U)\to\CC(M)$.
\end{itemize}
Let us consider two objects $\LLL_M,\RRR_M\in \Fun(\CC(M),\TT)$
and two parallel morphisms $\tilde{r}_1^M, \tilde{r}_2^M: \RRR_M\Rightarrow \mathsf{U}_M\mathsf{F}_M(\LLL_M)$
in $\Fun(\CC(M),\TT)$. We define the object
\begin{flalign}\label{eqn:genrelAQFT}
\AAA_M\,:=\,\colim_{\Alg}^{}\Big(\xymatrix@C=3.5em{
\mathsf{F}_M(\RRR_M) \ar@{=>}@<0.75ex>[r]^-{r^M_1}\ar@{=>}@<-0.75ex>[r]_-{r^M_2}~&~ \mathsf{F}_M(\LLL_M)
}\Big)\,\in\, \Fun\big(\CC(M),\Alg_{\mathrm{uAs}}(\TT)\big)
\end{flalign}
by taking the colimit (i.e.\ coequalizer) in $\Fun\big(\CC(M),\Alg_{\mathrm{uAs}}(\TT)\big)$,
where $r^M_1,r^M_2$ denote the adjuncts of $\tilde{r}_1^M,\tilde{r}_2^M$ 
with respect to the free-forget adjunction $\mathsf{F}_M\dashv \mathsf{U}_M$.
We assume that $\AAA_M$ satisfies the $\perp$-commutativity axiom from Definition \ref{def:AQFT},
hence we obtain an object $\AAA_M\in \HK_{\ovr{\CC}}(M)\subseteq \Fun\big(\CC(M),\Alg_{\mathrm{uAs}}(\TT)\big)$
which we interpret as an AQFT that is presented by the generators $\LLL_M$
and the relations $\tilde{r}_1^M,\tilde{r}_2^M$ .
\sk

Our goal is to establish criteria for this object to satisfy
the descent conditions from Definitions \ref{def:improvedHK} and \ref{def:WimprovedHK},
i.e.\ criteria such that the counit component 
$(\epsilon_{\U})_{\AAA_M} : j_{\U\,!}\,j_{\U}^\ast(\AAA_M)\Rightarrow \AAA_M$ 
is an isomorphism in $\HK_{\ovr{\CC}}(M)$. Using 
commutativity of the top square of right adjoints in \eqref{eqn:bigadjunctiondiagram},
we can compute the pullback $j_\U^\ast(\AAA_M)\in \HK_{\ovr{\CC}}(\U)$ by 
\begin{flalign}
\nn j_\U^\ast(\AAA_M)\,&=\, j_\U^{\ast}\Big(\colim_{\Alg}^{}\Big(\xymatrix@C=3.5em{
\mathsf{F}_M(\RRR_M) \ar@{=>}@<0.75ex>[r]^-{r^M_1}\ar@{=>}@<-0.75ex>[r]_-{r^M_2}~&~ \mathsf{F}_M(\LLL_M)
}\Big)\Big)\\
\,&=\,\colim_{\Alg}^{}\Big(\xymatrix@C=3.5em{
j_\U^{\ast}\mathsf{F}_M(\RRR_M) \ar@{=>}@<0.75ex>[r]^-{j_\U^{\ast}(r^M_1)}\ar@{=>}@<-0.75ex>[r]_-{j_\U^{\ast}(r^M_2)}~&~ j_\U^{\ast}\mathsf{F}_M(\LLL_M)
}\Big)\quad,
\end{flalign}
where in the second step we used that colimits in functor categories are computed object-wise,
hence they commute with pullback functors. Since the top vertical adjunctions
in \eqref{eqn:bigadjunctiondiagram} exhibit reflective full subcategory inclusions,
we can model the top horizontal left adjoint by $j_{\U\,!} := p_M\,\Lan_{j_\U}\,\subseteq$,
see also \cite[Proposition 4.3]{BSWoperad} for more details on this point.
We then compute
\begin{flalign}
\nn j_{\U\,!}j_\U^\ast(\AAA_M)\,&=\, p_M\,\Lan_{j_\U}\Big(\colim_{\Alg}^{}\Big(\xymatrix@C=5em{
j_\U^{\ast}\mathsf{F}_M(\RRR_M) \ar@{=>}@<0.75ex>[r]^-{j_\U^{\ast}(r^M_1)}\ar@{=>}@<-0.75ex>[r]_-{j_\U^{\ast}(r^M_2)}~&~ j_\U^{\ast}\mathsf{F}_M(\LLL_M)
}\Big)\Big)\\
\,&\cong\, p_M\Big(\colim_{\Alg}^{}\Big(\xymatrix@C=5em{
\Lan_{j_\U} j_\U^{\ast}\mathsf{F}_M(\RRR_M) \ar@{=>}@<0.75ex>[r]^-{\Lan_{j_\U}j_\U^{\ast}(r^M_1)}\ar@{=>}@<-0.75ex>[r]_-{\Lan_{j_\U}j_\U^{\ast}(r^M_2)}~&~ \Lan_{j_\U}j_\U^{\ast}\mathsf{F}_M(\LLL_M)
}\Big)\Big)\quad.\label{eqn:jpulljastgenrel}
\end{flalign}
In this model, the counit component $(\epsilon_{\U})_{\AAA_M} : 
j_{\U\,!}\,j_{\U}^\ast(\AAA_M)\Rightarrow \AAA_M \cong p_M(\AAA_M)$ in $\HK_{\ovr{\CC}}(M)$
is given by applying the reflector $p_M$ to the map between colimits
which is induced by the map of diagrams
\begin{subequations}\label{eqn:jpulljastgenreldiagram}
\begin{flalign}
\begin{gathered}
\xymatrix@R=3em@C=5em{
\ar@{=>}[d]_-{(\epsilon^{\Lan}_\U)_{\mathsf{F}_M(\RRR_M)}} 
\Lan_{j_\U} j_\U^{\ast}\mathsf{F}_M(\RRR_M) \ar@{=>}@<0.75ex>[r]^-{\Lan_{j_\U}j_\U^{\ast}(r^M_1)}\ar@{=>}@<-0.75ex>[r]_-{\Lan_{j_\U}j_\U^{\ast}(r^M_2)}~&~ \Lan_{j_\U}j_\U^{\ast}\mathsf{F}_M(\LLL_M)
\ar@{=>}[d]^-{(\epsilon^{\Lan}_\U)_{\mathsf{F}_M(\LLL_M)}} \\
\mathsf{F}_M(\RRR_M) \ar@{=>}@<0.75ex>[r]^-{r^M_1}\ar@{=>}@<-0.75ex>[r]_-{r^M_2}~&~ \mathsf{F}_M(\LLL_M)
}
\end{gathered}\qquad,
\end{flalign}
where $\epsilon^{\Lan}_\U$ denotes the counit of the adjunction $\Lan_{j_\U}\dashv j_{\U}^\ast$.
Using the commutativity properties in the diagram of adjunctions \eqref{eqn:bigadjunctiondiagram},
we can rewrite this map of diagrams in the following more convenient form
\begin{flalign}
\begin{gathered}
\xymatrix@R=3em@C=3.5em{
\ar@{=>}[d]_-{\mathsf{F}_M(\epsilon^{\lan}_\U)_{\RRR_M}} 
\mathsf{F}_M\lan_{j_\U} j_\U^{\ast}(\RRR_M) \ar@{=>}@<0.75ex>[r]^-{s^M_1}\ar@{=>}@<-0.75ex>[r]_-{s^M_2}~&~ \mathsf{F}_M\lan_{j_\U}j_\U^{\ast}(\LLL_M)
\ar@{=>}[d]^-{\mathsf{F}_M(\epsilon^{\lan}_\U)_{\LLL_M}} \\
\mathsf{F}_M(\RRR_M) \ar@{=>}@<0.75ex>[r]^-{r^M_1}\ar@{=>}@<-0.75ex>[r]_-{r^M_2}~&~ \mathsf{F}_M(\LLL_M)
}
\end{gathered}\qquad,
\end{flalign}
\end{subequations}
where $\epsilon^{\lan}_\U$ denotes the counit of the adjunction $\lan_{j_\U}\dashv j_{\U}^\ast$
and the relations $s^M_1,s^M_2$ are defined implicitly through this identification. 
(In what follows we do not need explicit expressions for $s^M_1,s^M_2$.) This allows us
to formulate a useful criterion for $\AAA_M\in\HK_{\ovr{\CC}}(M)$ to 
satisfy the descent conditions from Definitions \ref{def:improvedHK} and \ref{def:WimprovedHK}.
\begin{propo}\label{prop:descentcriterion}
Suppose that the counit component 
$(\epsilon^{\lan}_\U)_{\LLL_M} : \lan_{j_\U}\,j_\U^\ast(\LLL_M)\Rightarrow \LLL_M$
of the generators $\LLL_M\in \Fun(\CC(M),\TT)$ is an isomorphism in $\Fun(\CC(M),\TT)$. 
Then the counit component
$(\epsilon_{\U})_{\AAA_M} : j_{\U\,!}\,j_{\U}^\ast(\AAA_M)\Rightarrow \AAA_M$ 
of the AQFT $\AAA_M\in\HK_{\ovr{\CC}}(M)$ from \eqref{eqn:genrelAQFT}
is an isomorphism in $\HK_{\ovr{\CC}}(M)$ if and only if the diagram
\begin{flalign}\label{eqn:descentcriterionHK}
\begin{gathered}
\xymatrix@R=3em@C=3.5em{ 
p_M\mathsf{F}_M\lan_{j_\U} j_\U^{\ast}(\RRR_M) 
\ar@{=>}@<0.75ex>[rr]^-{p_M(r_1^M\circ \mathsf{F}_M(\epsilon^{\lan}_\U)_{\RRR_M})}
\ar@{=>}@<-0.75ex>[rr]_-{p_M(r_2^M\circ \mathsf{F}_M(\epsilon^{\lan}_\U)_{\RRR_M})}~&~ ~&~
p_M\mathsf{F}_M(\LLL_M)\ar@{=>}[r]~&~ \AAA_M
}
\end{gathered}
\end{flalign}
is a coequalizer in $\HK_{\ovr{\CC}}(M)$. A sufficient condition 
for this is that the diagram
\begin{flalign}\label{eqn:descentcriterion}
\begin{gathered}
\xymatrix@R=3em@C=3em{ 
\mathsf{F}_M\lan_{j_\U} j_\U^{\ast}(\RRR_M) 
\ar@{=>}@<0.75ex>[rr]^-{r_1^M\circ \mathsf{F}_M(\epsilon^{\lan}_\U)_{\RRR_M}}
\ar@{=>}@<-0.75ex>[rr]_-{r_2^M\circ \mathsf{F}_M(\epsilon^{\lan}_\U)_{\RRR_M}}~&~ ~&~
p_M\mathsf{F}_M(\LLL_M)\ar@{=>}[r]~&~ \AAA_M
}
\end{gathered}
\end{flalign}
is a coequalizer in the functor category $\Fun\big(\CC(M),\Alg_{\mathsf{uAs}}(\TT)\big)$,
where the unit of the adjunction $p_M \dashv {\subseteq}$ is left implicit.
\end{propo}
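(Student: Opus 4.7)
The plan is to combine the explicit model \eqref{eqn:jpulljastgenreldiagram} for the counit $(\epsilon_\U)_{\AAA_M}$ with the fact that both $p_M$ and the colimit in \eqref{eqn:genrelAQFT} are left-adjoint-like constructions that preserve coequalizers. First I would record the rewriting
\begin{flalign*}
j_{\U\,!}j_\U^\ast(\AAA_M)\,\cong\,\colim_{\HK}\Big(p_M\mathsf{F}_M\lan_{j_\U}j_\U^\ast(\RRR_M) \rightrightarrows p_M\mathsf{F}_M\lan_{j_\U}j_\U^\ast(\LLL_M)\Big)
\end{flalign*}
obtained from \eqref{eqn:jpulljastgenrel} by pulling the reflector $p_M$ inside the coequalizer (valid because $p_M$ is a left adjoint), and using the commutativity $\Lan_{j_\U}\mathsf{F}_\U\cong\mathsf{F}_M\lan_{j_\U}$ together with $\mathsf{F}_\U j_\U^\ast = j_\U^\ast \mathsf{F}_M$ from items (2) and (3) above \eqref{eqn:genrelAQFT}. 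Analogously, $\AAA_M \cong p_M(\AAA_M)$ is the coequalizer in $\HK_{\ovr{\CC}}(M)$ of $p_M\mathsf{F}_M(\RRR_M) \rightrightarrows p_M\mathsf{F}_M(\LLL_M)$. Under these identifications, the counit $(\epsilon_\U)_{\AAA_M}$ is the canonical map between coequalizers induced by applying $p_M$ to the commutative square \eqref{eqn:jpulljastgenreldiagram}.

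The next step is to use the hypothesis that $(\epsilon^{\lan}_\U)_{\LLL_M}$ is an isomorphism in $\Fun(\CC(M),\TT)$. Since $\mathsf{F}_M$ and $p_M$ are functors, the morphism $p_M\mathsf{F}_M(\epsilon^{\lan}_\U)_{\LLL_M}$ is an isomorphism in $\HK_{\ovr{\CC}}(M)$, which allows us to substitute $p_M\mathsf{F}_M\lan_{j_\U}j_\U^\ast(\LLL_M)$ with $p_M\mathsf{F}_M(\LLL_M)$ on the right-hand object of the parallel pair computing $j_{\U\,!}j_\U^\ast(\AAA_M)$. Tracing through \eqref{eqn:jpulljastgenreldiagram}, the resulting parallel pair becomes precisely the one appearing in \eqref{eqn:descentcriterionHK}, and the counit $(\epsilon_\U)_{\AAA_M}$ is identified with the canonical comparison map from the coequalizer of this pair to $\AAA_M$ induced by the identity on $p_M\mathsf{F}_M(\LLL_M)$. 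Since both diagrams have the same right-hand object $p_M\mathsf{F}_M(\LLL_M)$ and the comparison respects this, $(\epsilon_\U)_{\AAA_M}$ is an isomorphism if and only if $\AAA_M$ already realizes the coequalizer of the pair in \eqref{eqn:descentcriterionHK}, proving the if-and-only-if statement.

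For the sufficient condition, I would argue that if \eqref{eqn:descentcriterion} is a coequalizer in $\Fun(\CC(M),\Alg_{\mathsf{uAs}}(\TT))$, then applying the left adjoint $p_M$ preserves this colimit. Using $p_M(\AAA_M) \cong \AAA_M$ (which follows because $\AAA_M$ already satisfies the $\perp$-commutativity axiom), the result is a coequalizer of the form \eqref{eqn:descentcriterionHK} in $\HK_{\ovr{\CC}}(M)$, and the previous step concludes that $(\epsilon_\U)_{\AAA_M}$ is an isomorphism.

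The main obstacle I foresee is bookkeeping the various adjunctions and making sure the identification of $(\epsilon_\U)_{\AAA_M}$ with the canonical map between the two coequalizer presentations is compatible on the nose, rather than only up to an auxiliary isomorphism; this reduces to an unwinding of the triangle identities for the adjunctions $\Lan_{j_\U}\dashv j_\U^\ast$ and $\lan_{j_\U}\dashv j_\U^\ast$ and the commutativity isomorphism $\Lan_{j_\U}\mathsf{F}_\U\cong\mathsf{F}_M\lan_{j_\U}$, but should pose no genuine conceptual difficulty.
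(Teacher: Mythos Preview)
Your proposal is correct and follows essentially the same route as the paper: rewrite $j_{\U\,!}j_\U^\ast(\AAA_M)$ as a coequalizer in $\HK_{\ovr{\CC}}(M)$ via the reflector $p_M$, use the hypothesis on $(\epsilon^{\lan}_\U)_{\LLL_M}$ to replace the target of the parallel pair by $p_M\mathsf{F}_M(\LLL_M)$, and then read off the if-and-only-if; for the sufficient condition, apply $p_M$ to \eqref{eqn:descentcriterion} and use that $p_M$ is idempotent on objects of the reflective subcategory (so $p_Mp_M\mathsf{F}_M(\LLL_M)\cong p_M\mathsf{F}_M(\LLL_M)$ and $p_M(\AAA_M)\cong\AAA_M$). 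The bookkeeping concern you flag is real but, as you say, purely clerical.
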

\begin{proof}
Using the hypothesis that $(\epsilon^{\lan}_\U)_{\LLL_M}$ is an isomorphism
and the diagrams \eqref{eqn:jpulljastgenreldiagram}, we obtain an isomorphic
presentation for \eqref{eqn:jpulljastgenrel} which is given by
\begin{flalign}
\nn j_{\U\,!}j_\U^\ast(\AAA_M)\,&\cong\, p_M\Big(\colim_{\Alg}^{}\Big(\xymatrix@C=6em{
\mathsf{F}_M\lan_{j_\U} j_\U^{\ast}(\RRR_M)
\ar@{=>}@<0.75ex>[r]^-{r_1^M\circ \mathsf{F}_M(\epsilon^{\lan}_\U)_{\RRR_M}}
\ar@{=>}@<-0.75ex>[r]_-{r_2^M\circ \mathsf{F}_M(\epsilon^{\lan}_\U)_{\RRR_M}}
~&~ \mathsf{F}_M(\LLL_M)
}\Big)\Big)\\
\,&\cong\, \colim_{\HK}^{}\Big(\xymatrix@C=7.5em{
p_M\mathsf{F}_M\lan_{j_\U} j_\U^{\ast}(\RRR_M)
\ar@{=>}@<0.75ex>[r]^-{p_M(r_1^M\circ \mathsf{F}_M(\epsilon^{\lan}_\U)_{\RRR_M})}
\ar@{=>}@<-0.75ex>[r]_-{p_M(r_2^M\circ \mathsf{F}_M(\epsilon^{\lan}_\U)_{\RRR_M})}
~&~ p_M\mathsf{F}_M(\LLL_M)
}\Big)\quad,
\end{flalign}
where $\colim_{\HK}$ denotes the colimit in $\HK_{\ovr{\CC}}(M)$.
Then $(\epsilon_{\U})_{\AAA_M} $ is an isomorphism if and only if 
the diagram \eqref{eqn:descentcriterionHK} is a coequalizer in $\HK_{\ovr{\CC}}(M)$.
To prove the second statement, we apply the left adjoint functor $p_M$
to the coequalizer \eqref{eqn:descentcriterion}, which yields a coequalizer in
$\HK_{\ovr{\CC}}(M)$ that is isomorphic to \eqref{eqn:descentcriterionHK} 
since $p_M\dashv {\subseteq}$ exhibits a reflective full subcategory, 
hence $p_Mp_M\mathsf{F}_M(\LLL_M)\cong p_M\mathsf{F}_M(\LLL_M)$ and 
$p_M(\AAA_M)\cong \AAA_M$.
\end{proof}

\begin{ex}\label{ex:KGdescent}
We will show that the free Klein-Gordon quantum field on $M\in\Loc$,
formulated in terms of a relatively compact Haag-Kastler-style AQFT 
$\AAA_M^{\KG}\in\HK^{\mathrm{rc}}(M)$ neglecting the time-slice axiom, 
satisfies the descent conditions from Definition \ref{def:improvedHK}
for every causally convex open cover $\U=\{U_i\subseteq M\}$.
For this we shall test the criterion from Proposition \ref{prop:descentcriterion}.
In this example we choose as target $\TT=\Vec_\bbC$ the closed symmetric monoidal category of complex vector spaces.

\sk
Recalling the standard construction of the free Klein-Gordon quantum field, 
see e.g.\ \cite{BDHreview,BDreview} for reviews, the object $\AAA_M^{\KG}\in\HK^{\mathrm{rc}}(M)$ 
may be presented by the generators
\begin{flalign}\label{eqn:KGgenerators}
 \LLL^{\KG}_M\,:=\,\tfrac{C^\infty_\cc(-)}{P_MC^\infty_\cc(-)} \,:\, \RC(M)~&\longrightarrow~\Vec_\bbC\quad,
\end{flalign}
where $P_M:= -\Box_M+m^2$ denotes the Klein-Gordon operator. More explicitly,
this functor assigns to an object $U\in \RC(M)$ the quotient vector space 
$\LLL^{\KG}_M(U)=\tfrac{C^\infty_\cc(U)}{P_MC^\infty_\cc(U)}$ and to a morphism $U\subseteq U^\prime$
in $\RC(M)$ the pushforward (extension by zero) map, 
which we shall simply write as $\LLL^{\KG}_M(U)\to \LLL^{\KG}_M(U^\prime)\,,~[\varphi]\mapsto[\varphi]$.
(Recall that these pushforward maps are injective, for all morphisms $U\subseteq U^\prime$ in $\RC(M)$.)
The relations are given by the canonical commutation relations (CCR), i.e.\
\begin{subequations} \label{eqn:CCR}
\begin{flalign}
\RRR^{\KG}_M \,:=\,  \LLL^{\KG}_M\otimes  \LLL^{\KG}_M \,:\, \RC(M)~\longrightarrow~ \Vec_\bbC
\end{flalign} 
is the object-wise tensor product of generators,
\begin{flalign}
\tilde{r}^M_1 \,:=\ [-,-]\,:\, \RRR^{\KG}_M ~\Longrightarrow~\mathsf{U}_M\mathsf{F}_M(\LLL^\KG_M)
\end{flalign}
is the commutator in the free algebra, and
\begin{flalign}
\tilde{r}^M_2 \,:=\ \ii\hbar\,\tau_M(-,-)\,\oone\,:\, \RRR^{\KG}_M ~\Longrightarrow~\mathsf{U}_M\mathsf{F}_M(\LLL^\KG_M)
\end{flalign}
\end{subequations}
is given by weighting the unit $\oone$ of the free algebra with $\ii\hbar $ times the usual Poisson structure
$\tau_M(-,-):= \int_M (-)\,G_M(-)\,\vol_M$ which is constructed out of the retarded-minus-advanced
Green's operator $G_M:= G_M^{+}- G_M^-$ for the Klein-Gordon operator $P_M$. 
It is well-known \cite{BDHreview,BDreview} that the AQFT $\AAA_M^{\KG}$ constructed above satisfies the time-slice axiom, 
i.e.\ it sends Cauchy morphisms $U \subseteq U^\prime$ in $\RC(M)$ to isomorphisms.
This fact will be helpful for some of the computations performed below.
\sk

Computing explicitly the left Kan extension $\lan_{j_\U}\,j_{\U}^\ast(\LLL^\KG_M)$
for any causally convex open cover $\U= \{U_i\subseteq M\}$ 
by the usual object-wise colimit formula \cite[Chapter 6.2]{Riehl},
one finds that $(\epsilon^{\lan}_\U)_{\LLL^\KG_M}$ is an isomorphism
if and only if
\begin{flalign}\label{eqn:KGcoequalizer}
\xymatrix{
\bigoplus\limits_{i,j} \LLL^\KG_M(U_{ij}\cap U) \ar@<0.5ex>[r]\ar@<-0.5ex>[r]~&~ 
\bigoplus\limits_{i} \LLL^\KG_M(U_i\cap U) \ar[r]~&~\LLL^\KG_M(U)
}
\end{flalign}
is a coequalizer in $\Vec_{\bbC}$, for all $U\in \RC(M)$.
The unlabeled maps are given by applying 
the functor $\LLL^\KG_M$ to the inclusions $U_{ij}\cap U\subseteq U_i\cap U$,
$U_{ij}\cap U\subseteq U_j\cap U$ and $U_i\cap U\subseteq U$. Picking a partition of unity
$\{\chi_i\}$ subordinate to the open cover $\{U_i\cap U\subseteq U\}$ of $U$,
one shows that the map $\bigoplus_{i} \LLL^\KG_M(U_i\cap U)\to \LLL^\KG_M(U)$
in \eqref{eqn:KGcoequalizer} is surjective. Indeed, given any element $[\varphi]\in 
\LLL^\KG_M(U)$, then $\bigoplus_i[\chi_i\varphi]\in \bigoplus_{i} \LLL^\KG_M(U_i\cap U)$
maps to $\sum_i [\chi_i \varphi] = \big[\sum_i\chi_i \varphi\big] = [\varphi]$. It remains
to show that every element $\bigoplus_i[\varphi_i]\in \bigoplus_{i} \LLL^\KG_M(U_i\cap U)$
which is mapped to zero, i.e.\ $\sum_i[\varphi_i] = \big[\sum_i\varphi_i\big]=0$
or equivalently $\sum_i\varphi_i = P_M(\psi)$ for some $\psi\in C^\infty_\cc(U)$,
is of the form $\bigoplus_i[\varphi_i]=\bigoplus_{i}\sum_j\big([\varphi_{ji}] - [\varphi_{ij}]\big)$
for some $\bigoplus_{i,j}[\varphi_{ij}]\in\bigoplus_{i,j} \LLL^\KG_M(U_{ij}\cap U)$.
This can be achieved by setting $\varphi_{ij}:= \chi_i\varphi_j - \chi_i\,P_M(\chi_j\psi)$,
for all $i,j$, since
\begin{flalign}
\nn \sum_j\big([\varphi_{ji}] - [\varphi_{ij}]\big)\,&=\,\sum_j\big[ \chi_j\varphi_i - \chi_j\,P_M(\chi_i\psi)
- \chi_i\varphi_j + \chi_i\,P_M(\chi_j\psi)\big]\\
\,&=\,\big[\varphi_i-P_M(\chi_i\psi)\big] - \Big[\chi_i\sum_j\varphi_j - \chi_i P_{M}(\psi)\Big] \,=\,[\varphi_i]\quad,
\end{flalign}
for all $i$. Hence, \eqref{eqn:KGcoequalizer} is a coequalizer and the hypothesis
of Proposition \ref{prop:descentcriterion} holds true.
\sk

It remains to investigate the diagram 
\eqref{eqn:descentcriterion} from Proposition \ref{prop:descentcriterion}.
Computing the left Kan extension
$\lan_{j_\U}\,j_{\U}^\ast(\RRR^\KG_M)$
again via the object-wise colimit formula,
one finds that \eqref{eqn:descentcriterion} is a coequalizer if and only if the object-wise diagrams
\begin{flalign}\label{eqn:objectwiseKGcoeq}
\begin{gathered}
\xymatrix@R=3em@C=2em{ 
\mathsf{F}_{\mathsf{uAs}}\Big(\bigoplus\limits_{i}\big(\LLL_{M}^{\KG}(U_i\cap U)
\otimes \LLL_{M}^{\KG}(U_i\cap U)\big)\Big)
\ar@<0.75ex>[r]^-{(r_1^M)_U}
\ar@<-0.75ex>[r]_-{(r_2^M)_U}~&~ 
\big(p_M\mathsf{F}_M(\LLL^\KG_M)\big)(U)\ar[r]~&~ \AAA_M^{\KG}(U)
}
\end{gathered}
\end{flalign}
are coequalizers in $\Alg_{\mathsf{uAs}}(\Vec_\bbC)$, for all $U\in\RC(M)$, where the relations are obtained by
restricting the CCR relations \eqref{eqn:CCR} along the map
\begin{flalign}
\bigoplus\limits_i \Big(\LLL^\KG_M(U_i\cap U)\otimes \LLL^\KG_M(U_i\cap U) \Big)~\longrightarrow~
\LLL^\KG_M(U)\otimes \LLL^\KG_M(U) \,=\,\RRR_M^\KG(U)
\end{flalign}
determined by the applying the functor $\LLL^\KG_M$ to the inclusions $U_i\cap U\subseteq U$.
The algebra $\big(p_M\mathsf{F}_M(\LLL^\KG_M)\big)(U)\in \Alg_{\mathsf{uAs}}(\Vec_\bbC)$ 
in this expression is generated freely by all $[\varphi]\in \LLL^\KG_M(U)$, modulo the 
minimal $\perp$-commutativity relations demanding vanishing of the commutator
\begin{flalign}\label{eqn:pMrelations}
\big[[\varphi_1^\perp], [\varphi_2^\perp]\big]\,=\,0\quad,
\end{flalign}
for all $[\varphi_1^\perp],[\varphi_2^\perp]\in \LLL^\KG_M(U)$ such that 
$[\varphi_a^\perp]$ comes from extension by zero along morphisms 
$V_a \subseteq U$ in $\RC(M)$ with $V_1 \perp V_2$ causally disjoint. 
It thus remains to show that \eqref{eqn:pMrelations} and the restricted
CCR relations
\begin{flalign}\label{eqn:restrictedCCRrelations}
\big[[\varphi_1^i],[\varphi_2^i]\big] \,=\,\ii\hbar\,\tau_M\big([\varphi^i_1],[\varphi^i_2]\big)\,\oone\quad,
\end{flalign}
for all $[\varphi^i_1],[\varphi^i_2]\in \LLL^\KG_M(U_i\cap U)$ and all $i$,
imply together the general CCR relations in $\AAA_M^{\KG}(U)$ for all 
$[\varphi_1],[\varphi_2]\in \LLL^\KG_M(U)$. This can be done by an argument
which is similar to the one in \cite[Lemma 3.2 and Proposition 3.2]{DappiaggiLang}.\footnote{We 
refer to the published version, which differs from the preprint available on arXiv.}
For this we pick any spacelike Cauchy surface $\Sigma$ of $U$ and a family of causally convex open subsets
$\{V_\alpha\subseteq U\}$ in $U$ that covers $\Sigma$, i.e.\ $\Sigma\subseteq \bigcup_\alpha V_\alpha \subseteq U$,
but does not necessarily cover $U$,
and satisfies the following property:
If $V_\alpha$ and $V_\beta$ are not causally disjoint then there 
exists an $i$ such that $V_\alpha \cup V_\beta \subseteq U_i$.
(Such a cover $\{V_\alpha\}$ may be found as follows:
For each point $p$ in the Riemannian manifold $\Sigma$,
pick an index $i_p$ and a radius $r_p$
such that the open ball in $\Sigma$ centered at $p$ with radius $3 r_p$ is contained in $U_{i_p}$. (In particular, $p \in U_{i_p}$.) 
The collection of smaller open balls of radius $r_p$ around each $p$ covers $\Sigma$ and has the property that,
whenever two such balls around $p, q \in \Sigma$ intersect,
their union is contained in $U_{i_p}$, where without loss of generality we assume $r_q \leq r_p$.
The family $\{V_\alpha\}$ can be taken as Cauchy developments
of this cover of $\Sigma$ by small balls,
after possibly shrinking the balls further to ensure $V_\alpha \subseteq U_i \cap U$.)
Let us also choose any partition of unity $\{\chi_\alpha\}$ subordinate to the cover $\{V_\alpha\}$.
Using the time-slice axiom and the Cauchy morphism $V \subseteq U$ given by the inclusion in $U$ 
of a causally convex open neighborhood $V \subseteq \bigcup_\alpha V_\alpha$
of $\Sigma$,  we can and will choose representatives 
for $[\varphi_1],[\varphi_2]\in \LLL^\KG_M(U)$ with support in $V\subseteq U$, 
and hence also in $\bigcup_\alpha V_\alpha\subseteq U$. This allows us to compute
\begin{flalign}
\nn \big[[\varphi_1],[\varphi_2]\big]
\,&=\, \sum_{\alpha,\beta} \big[[\chi_\alpha\varphi_1],[\chi_\beta\varphi_2]\big]
\,=\,\sum_{\substack{\alpha,\beta\text{~s.t.}\\ \text{not } V_\alpha \perp V_\beta}} \big[[\chi_\alpha\varphi_1],[\chi_\beta\varphi_2]\big]+ \sum_{\substack{\alpha,\beta\text{~s.t.}\\ V_\alpha \perp V_\beta}} \big[[\chi_\alpha\varphi_1],[\chi_\beta\varphi_2]\big]\\[4pt]
\nn \,&=\,
\sum_{\substack{\alpha,\beta\text{~s.t.}\\ \text{not } V_\alpha \perp V_\beta}}\,\ii\hbar\,\tau_M\big([\chi_\alpha\varphi_1],[\chi_\beta\varphi_2]\big) \,\oone+ 0\,=\,
\sum_{\alpha,\beta}\,\ii\hbar\,\tau_M\big([\chi_\alpha\varphi_1],[\chi_\beta\varphi_2]\big)\,\oone\\[4pt]
\,&=\,\ii \hbar\, \tau_M\big([\varphi_1],[\varphi_2]\big)\,\oone\quad,
\end{flalign}
where in the third step we used the relations \eqref{eqn:restrictedCCRrelations} and \eqref{eqn:pMrelations},
together with the specified property of the cover $\{V_\alpha\}$, and in the fourth step
we used that the Poisson structure vanishes between causally disjoint generators.
\end{ex}

\begin{ex}\label{ex:KGdescentW}
We will now show that the free Klein-Gordon quantum field on $M\in\Loc$,
formulated in terms of a time-sliced relatively compact Haag-Kastler-style AQFT 
$\AAA_M^{\KG,W}\in\HK^{\mathrm{rc},W}(M)$,
satisfies the descent conditions from Definition \ref{def:WimprovedHK}
for every $D$-stable causally convex open cover $\U=\{U_i\subseteq M\}$.
For this we shall test the criterion from Proposition \ref{prop:descentcriterion}.
In this example we choose again as target $\TT=\Vec_\bbC$ the closed symmetric monoidal category of complex vector spaces.
\sk

To streamline our calculations, it will be convenient to 
describe $\AAA_M^{\KG,W}\in\HK^{\mathrm{rc},W}(M)$ as the 
pullback along the full orthogonal subcategory inclusion 
$\ovr{\RC(M)}[W_{\mathrm{rc},M}^{-1}] \subseteq \ovr{\COpen(M)}[W_M^{-1}]$ from Example \ref{ex:localizations} 
of a time-sliced Haag-Kastler-style AQFT $\AAA_M^{\KG,W}\in\HK^{W}(M)$ 
(denoted with abuse of notation by the same symbol) defined on all causally convex open subsets, including those that are not relatively compact.
For the generators of $\AAA_M^{\KG,W}\in\HK^{W}(M)$ we take
\begin{subequations}\label{eqn:KGgeneratorsW}
\begin{flalign}
\LLL^{\KG,W}_M\,:=\,\tfrac{C^\infty_\cc(-)}{P_MC^\infty_\cc(-)} \,:\, 
\COpen(M)[W_{M}^{-1}]~&\longrightarrow~\Vec_\bbC
\end{flalign}
with the functorial structure on morphisms $U\to U^\prime$
in the localized category $\COpen(M)[W_M^{-1}]$ 
from Example \ref{ex:localizations} given by
\begin{flalign}
\LLL^{\KG,W}_M(U)~\longrightarrow~\LLL^{\KG,W}_M(U^\prime)~~,\quad
[\varphi]~\longmapsto~
\pm \big[P_M\big(\chi_\pm\,G_M(\varphi)\big)\big]
\quad,
\end{flalign}
\end{subequations}
where $\{\chi_+,\chi_-\}$ is any choice of partition of unity
subordinate to the open cover $\{I^+_M(\Sigma^-),I^-_{M}(\Sigma^+)\}$
of $J_M(D_M(U^\prime))= J^+_M(D_M(U^\prime))\cup J^-_M(D_M(U^\prime))\subseteq M$
which is associated with any choice of two spacelike Cauchy surfaces $\Sigma^+,\Sigma^-$
of $U^\prime$ such that $\Sigma^+\subseteq I_M^+(\Sigma^-)$ lies in the chronological future of $\Sigma^-$.
This concrete description of the functorial structure follows by 
applying\footnote{Our opposite overall sign compared to \cite{BMS} is a consequence
of the fact that the latter considers shifted cochain complexes.}
\cite[Theorem 3.13]{BMS}
to construct inverses for morphisms which are 
obtained by applying the functor \eqref{eqn:KGgenerators} to Cauchy morphisms.
Note that for morphisms $U\to U^\prime$ which correspond to subset inclusions $U\subseteq U^\prime$,
this functorial structure simplifies to the extension by zero map 
$\LLL^{\KG,W}_M(U)\to \LLL^{\KG,W}_M(U^\prime)\,,~[\varphi]\mapsto 
\pm \big[P_M\big(\chi_\pm\,G_M(\varphi)\big)\big] = [\varphi]$
because $\pm P_M\big(\chi_\pm\, G_M (\varphi)\big) = \varphi + P_M (\psi)$
with $\psi := - \chi_- \, G_M^+ (\varphi) - \chi_+ \, G_M^-(\varphi) \in C^\infty_\cc(U^\prime)$
for all $\varphi\in C^\infty_\cc(U)$ with support in $U\subseteq U^\prime$.
This observation will be very useful in our calculations below.
\sk

The relations for $\AAA_M^{\KG,W}\in\HK^{W}(M)$ are given again by the CCR relations, i.e.\
\begin{subequations}\label{eqn:CCRW}
\begin{flalign}
\RRR^{\KG,W}_M \,:=\,  \LLL^{\KG,W}_M\otimes  \LLL^{\KG,W}_M \,:\, \COpen(M)[W_{M}^{-1}]
~\longrightarrow~ \Vec_\bbC
\end{flalign} 
is the tensor product of generators,
\begin{flalign}
\tilde{r}^{M,W}_1 \,:=\ [-,-]\,:\, \RRR^{\KG,W}_M ~\Longrightarrow~\mathsf{U}_M\mathsf{F}_M(\LLL^{\KG,W}_M)
\end{flalign}
is the commutator in the free algebra, and
\begin{flalign}
\tilde{r}^{M,W}_2 \,:=\ \ii\hbar\,\tau_M(-,-)\,\oone \,:\, \RRR^{{\KG,W}}_M ~\Longrightarrow~
\mathsf{U}_M\mathsf{F}_M(\LLL^{\KG,W}_M)
\end{flalign}
\end{subequations}
is given by weighting the unit $\oone$ of the free algebra with $\ii\hbar $ times the usual Poisson structure
$\tau_M(-,-):= \int_M (-)\,G_M(-)\,\vol_M$.
\sk

Let us consider now the pullback of the above generators 
and relations along $\ovr{\RC(M)}[W_{\mathrm{rc},M}^{-1}] \subseteq \ovr{\COpen(M)}[W_M^{-1}]$.
Computing explicitly the left Kan extension $\lan_{j_\U}\,j_{\U}^\ast(\LLL^{\KG,W}_M)$
for any $D$-stable causally convex open cover $\U= \{U_i\subseteq M\}$ 
by the usual object-wise colimit formula \cite[Chapter 6.2]{Riehl}, one
finds that $(\epsilon^{\lan}_\U)_{\LLL^{\KG,W}_M}$ is an isomorphism if and only if
\begin{flalign}\label{eqn:KGcoequalizerW}
\xymatrix{
\bigoplus\limits_{i,j} \LLL^{\KG,W}_M\big(U_{ij}\cap D_M(U)\big) \ar@<0.5ex>[r]\ar@<-0.5ex>[r]~&~ 
\bigoplus\limits_{i} \LLL^{\KG,W}_M\big(U_i\cap D_M(U)\big) \ar[r]~&~\LLL^{\KG,W}_M(U)
}
\end{flalign}
is a coequalizer in $\Vec_{\bbC}$, for all $U\in \RC(M)[W_{\mathrm{rc},M}^{-1}]$.
Note that $U_{i}\cap D_M(U)\subseteq M$ and $U_{ij}\cap D_M(U)\subseteq M$ are not necessarily relatively compact,
which is why it was convenient to introduce the generators \eqref{eqn:KGgeneratorsW}
on all of $\COpen(M)[W_M^{-1}]$ instead of only on the full subcategory 
$\RC(M)[W_{\mathrm{rc},M}^{-1}]\subseteq \COpen(M)[W_M^{-1}]$. We can postcompose
\eqref{eqn:KGcoequalizerW} with the isomorphism $\LLL^{\KG,W}_M(U)\stackrel{\cong}{\longrightarrow} 
\LLL^{\KG,W}_M\big(D_M(U)\big)$, which yields an equivalent diagram where all maps
are determined by subset inclusions, hence the functorial structure \eqref{eqn:KGgeneratorsW}
simplifies to the usual extension by zero maps. Then the same calculations
as in Example \ref{ex:KGdescent} show that the diagram \eqref{eqn:KGcoequalizerW} is indeed a coequalizer for all
$U\in \RC(M)[W_{\mathrm{rc},M}^{-1}]$,
so the hypothesis of Proposition \ref{prop:descentcriterion} is satisfied.

\sk
It remains to investigate the diagram \eqref{eqn:descentcriterion} from Proposition \ref{prop:descentcriterion}.
Again via the colimit formula for the left Kan extension $\lan_{j_\U}\,j_{\U}^\ast(\RRR^{\KG,W}_M)$,
one finds that \eqref{eqn:descentcriterion} is a coequalizer if and only if the object-wise diagrams
\begin{flalign}\label{eqn:objectwiseKGcoeqW}
\begin{gathered}
\resizebox{.9\hsize}{!}{$
\xymatrix@R=3em@C=2em{ 
\mathsf{F}_{\mathsf{uAs}}\Big(\bigoplus\limits_{i}\big(\LLL_{M}^{{\KG,W}}(U_i\cap U)
\otimes \LLL_{M}^{\KG,W}(U_i\cap U)\big)\Big)
\ar@<0.75ex>[r]^-{(r_1^M)_U}
\ar@<-0.75ex>[r]_-{(r_2^M)_U}~&~ 
\big(p_M\mathsf{F}_M(\LLL^{\KG,W}_M)\big)(U)\ar[r]~&~ \AAA_M^{\KG,W}(U)
}
$}
\end{gathered}
\end{flalign}
are coequalizers in $\Alg_{\mathsf{uAs}}(\Vec_\bbC)$, for all $U\in\RC(M)[W_{\mathrm{rc},M}^{-1}]$.
To understand this claim, it is crucial to observe that the restriction of the CCR relations
\eqref{eqn:CCRW} along the map
\begin{flalign}
\resizebox{.9\hsize}{!}{$
\bigoplus\limits_i \Big(\LLL^{\KG,W}_M\big(U_i\cap D_M(U)\big)\otimes 
\LLL^{\KG,W}_M\big(U_i\cap D_M(U)\big) \Big)~\longrightarrow~
\LLL^{\KG,W}_M(U)\otimes \LLL^{\KG,W}_M(U) \,=\,\RRR_M^{\KG,W}(U)
$}
\end{flalign}
can be restricted further along the isomorphisms $\LLL^{\KG,W}_M(U_i\cap U)\stackrel{\cong}{\longrightarrow}
\LLL^{\KG,W}_M\big(U_i\cap D_M(U)\big)$, leading to an equivalent description of the relations.
Then the same calculations as in Example \ref{ex:KGdescent} show that the diagram \eqref{eqn:objectwiseKGcoeqW}
is indeed a coequalizer, for all $U\in \RC(M)[W_{\mathrm{rc},M}^{-1}]$.
\end{ex}

To state and prove the main result of this subsection, let us recall
that the Klein-Gordon quantum field can be constructed also
as a locally covariant AQFT $\AAA^\KG\in \AQFT(\ovr{\Loc})$, see e.g.\ \cite{BDHreview,BDreview}.
It is well-known and easy to verify that this locally covariant AQFT
satisfies both the time-slice axiom and the additivity property \cite{BPS}, i.e.\ we even 
obtain an object $\AAA^\KG\in \AQFT(\ovr{\Loc})^{\mathrm{add},W}$.
The restriction of $\AAA^\KG\in \AQFT(\ovr{\Loc})$ along the orthogonal functor
$k_M : \ovr{\RC(M)}\to\ovr{\Loc}$ from item (3) of Example \ref{ex:OCat} then
coincides with the relatively compact Haag-Kastler-style AQFT $\AAA^\KG_M\in\HK^{\mathrm{rc}}(M)$
from Example \ref{ex:KGdescent}, for all $M\in\Loc$. Furthermore, passing also to
the orthogonal localization $L_{\mathrm{rc},M} : \ovr{\RC(M)}\to \ovr{\RC(M)}[W_{\mathrm{rc},M}^{-1}]$ 
from Example \ref{ex:localizations},
one obtains the time-sliced relatively compact Haag-Kastler-style AQFT 
$\AAA^{\KG,W}_M\in\HK^{\mathrm{rc},W}(M)$ from Example \ref{ex:KGdescentW}, for all $M\in\Loc$.
\begin{theo}\label{theo:KGpoints}
Let $\AAA^\KG\in \AQFT(\ovr{\Loc})^{\mathrm{add},W}\subseteq \AQFT(\ovr{\Loc})$ 
be the free Klein-Gordon quantum field formulated as a locally covariant AQFT.
\begin{itemize}
\item[(1)] The point of the relatively compact Haag-Kastler $2$-functor $\HK^{\mathrm{rc}}$
which is obtained by forgetting the time-slice axiom $\AAA^\KG\in \AQFT(\ovr{\Loc})^{\mathrm{add}}$ 
and applying the fully faithful 
functor from Corollary \ref{cor:additiveLCQFT} defines via Proposition \ref{prop:calHKpoints}
a point of the relatively compact Haag-Kastler stack $\calHK^{\mathrm{rc}}$ from Theorem \ref{theo:RCHKstack}.

\item[(2)] The point of the time-sliced relatively compact Haag-Kastler $2$-functor $\HK^{\mathrm{rc},W}$
which is obtained by applying the fully faithful functor from Corollary \ref{cor:WadditiveLCQFT} to 
$\AAA^\KG\in \AQFT(\ovr{\Loc})^{\mathrm{add},W}$ defines via Proposition \ref{prop:WcalHKpoints}
a point of the time-sliced relatively compact Haag-Kastler stack $\calHK^{\mathrm{rc},W}$ 
from Theorem \ref{theo:WRCHKstack}.
\end{itemize}
\end{theo}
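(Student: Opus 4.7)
The proof proposal is essentially an assembly of the ingredients already in place. The heavy technical work, namely verifying the descent conditions for the Klein-Gordon quantum field, has already been carried out in Examples \ref{ex:KGdescent} and \ref{ex:KGdescentW}; what remains is to identify these pointwise verifications with the conditions appearing in Propositions \ref{prop:calHKpoints} and \ref{prop:WcalHKpoints}.

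For part (1), the plan is as follows. First, I would recall that well-known arguments (see e.g.\ \cite{BPS}) show that the Klein-Gordon locally covariant AQFT is additive, hence $\AAA^\KG \in \AQFT(\ovr{\Loc})^{\mathrm{add}}$. Applying the fully faithful functor from Corollary \ref{cor:additiveLCQFT} yields an object in $\HK^{\mathrm{rc}}(\mathrm{pt})$, which by the proof of Theorem \ref{theo:LCQFTvsHK} (and its restriction to the relatively compact setting via Corollary \ref{cor:additiveLCQFT}) has as underlying family of relatively compact Haag-Kastler-style AQFTs the pullbacks $k_M^\ast(\AAA^\KG) \in \HK^{\mathrm{rc}}(M)$ along the orthogonal functors $k_M : \ovr{\RC(M)} \to \ovr{\Loc}$ from item (3) of Example \ref{ex:OCat}. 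By direct inspection of the construction of the Klein-Gordon AQFT, these pullbacks coincide with the objects $\AAA^{\KG}_M \in \HK^{\mathrm{rc}}(M)$ specified by the generators-and-relations presentation in Example \ref{ex:KGdescent}. The calculations in that example then verify that each $\AAA^{\KG}_M$ satisfies the descent conditions of Definition \ref{def:improvedHK}, so $\AAA^{\KG}_M \in \calHK^{\mathrm{rc}}(M) \subseteq \HK^{\mathrm{rc}}(M)$ for all $M \in \Loc$. Hence the point of $\HK^{\mathrm{rc}}$ lies in the full subcategory $\HK^{\mathrm{rc}}(\mathrm{pt})^{\mathrm{desc}}$, and Proposition \ref{prop:calHKpoints} (whose hypotheses are met by virtue of Theorem \ref{theo:RCHKstack}) produces the desired point of $\calHK^{\mathrm{rc}}$.

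For part (2), the argument is essentially parallel, but uses the time-sliced versions. One starts from $\AAA^\KG \in \AQFT(\ovr{\Loc})^{\mathrm{add},W}$, applies the fully faithful functor from Corollary \ref{cor:WadditiveLCQFT} to obtain a point of $\HK^{\mathrm{rc},W}$, and identifies its underlying family with the objects $\AAA^{\KG,W}_M \in \HK^{\mathrm{rc},W}(M)$ described in Example \ref{ex:KGdescentW} (after passing through the orthogonal localization functor $L_{\mathrm{rc},M}$ from Example \ref{ex:localizations}). Example \ref{ex:KGdescentW} verifies the descent conditions of Definition \ref{def:WimprovedHK} on every $D$-stable causally convex open cover of every $M \in \Loc$, so the point lies in $\HK^{\mathrm{rc},W}(\mathrm{pt})^{\mathrm{desc}}$. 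Applying the fully faithful functor of Proposition \ref{prop:WcalHKpoints} then yields a point of $\calHK^{\mathrm{rc},W}$.

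The step that requires the most care is the identification of the pointwise data produced by the Corollaries \ref{cor:additiveLCQFT} and \ref{cor:WadditiveLCQFT} with the AQFTs described in Examples \ref{ex:KGdescent} and \ref{ex:KGdescentW}. In principle, this follows from unpacking the quasi-inverse in Construction \ref{constr:LCQFTtoHK} on $k_M^\ast(\AAA^\KG)$ (and its time-sliced analogue), but one must also check naturality in the structure maps $\alpha_f$ coming from the local covariance of $\AAA^\KG$. This is essentially bookkeeping once the compatibility between extension by zero of compactly supported functions and the functorial structure of $\AAA^\KG$ along $\Loc$-morphisms is spelled out. The genuine analytic content of the theorem, namely the partition-of-unity and time-slice arguments establishing descent, is already contained in Examples \ref{ex:KGdescent} and \ref{ex:KGdescentW}, so no further hard work is required at this stage.
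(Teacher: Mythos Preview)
Your proposal is correct and follows essentially the same approach as the paper: the proof there simply observes that the descent conditions required by Propositions \ref{prop:calHKpoints} and \ref{prop:WcalHKpoints} have already been verified in Examples \ref{ex:KGdescent} and \ref{ex:KGdescentW}. The identification of the pointwise data with the objects $\AAA^{\KG}_M$ and $\AAA^{\KG,W}_M$ that you spell out carefully is handled in the paper by the short paragraph immediately preceding the theorem statement, so your more detailed unpacking is a faithful elaboration of the same argument.
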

\begin{proof}
These claims hold true because the
descent conditions required for Propositions \ref{prop:calHKpoints}
and \ref{prop:WcalHKpoints} have already been verified in Examples \ref{ex:KGdescent}
and \ref{ex:KGdescentW} above.
\end{proof}


\section*{Acknowledgments}
We would like to thank the anonymous referee
for their useful comments and questions which
helped us to improve this paper.
We would like to thank Angelos Anastopoulos for agreeing to share 
a draft of \cite{AB}. 
The work of M.B.\ is supported in part by the MUR Excellence 
Department Project awarded to Dipartimento di Matematica, 
Universit{\`a} di Genova (CUP D33C23001110001) and it is fostered by 
the National Group of Mathematical Physics (GNFM-INdAM (IT)). 
A.G-S.\ was supported by Royal Society Enhancement Grants (RF\textbackslash ERE\textbackslash 210053
and RF\textbackslash ERE\textbackslash 231077).
A.S.\ gratefully acknowledges the support of 
the Royal Society (UK) through a Royal Society University 
Research Fellowship (URF\textbackslash R\textbackslash 211015)
and Enhancement Grants (RF\textbackslash ERE\textbackslash 210053 and 
RF\textbackslash ERE\textbackslash 231077).

%
%


\appendix

\section{\label{app:operadicLKE}Operadic left Kan extensions}
In this appendix we present a concrete model
for the operadic left Kan extensions from Proposition \ref{propo:operadicLKE}.
For this we have to recall the explicit description of the AQFT operads
$\O_{\ovr{\CC}}^{}$ from \cite{BSWoperad}, see also \cite{BSreview}. 
\begin{defi}\label{def:AQFToperad}
Let $\ovr{\CC} = (\CC,\perp_{\CC}^{})$ be an orthogonal category.
The associated {\em AQFT operad} $\O_{\ovr{\CC}}$ is the colored operad
which is defined by the following data:
\begin{itemize}
\item[(i)] the objects are the objects of $\CC$;

\item[(ii)] the set of operations from $\und{M} := (M_1,\dots, M_n)$
to $N$ is the quotient set
\begin{flalign}
\O_{\ovr{\CC}}\big(\substack{N \\ \und{M}}\big)\,:=\,\bigg(\Sigma_n \times \prod_{i=1}^n \Hom_{\CC}(M_i,N)\bigg)\Big/\!\sim_{\perp_{\CC}^{}}^{}\quad,
\end{flalign}
where $\Hom_{\CC}(M_i,N)$ denotes the set of $\CC$-morphisms from $M_i$ to $N$, 
$\Sigma_n$ denotes the permutation group on $n$ letters,
and the equivalence relation is defined as follows: $(\sigma,\und{f}) 
\sim_{\perp_{\CC}^{}}^{} (\sigma^\prime,\und{f}^\prime)$ if and only if 
$\und{f} := (f_1, \ldots , f_n) = (f_1^\prime , \ldots, f_n^\prime) =: \und{f}^\prime$ and the right permutation
$\sigma\sigma^{\prime -1} :
\und{f}\sigma^{-1} := \left( f_{\sigma^{-1}(1)}, \ldots , f_{\sigma^{-1}(n)} \right)
\to \und{f}\sigma^{\prime -1} := \left( f_{\sigma^{\prime -1}(1)}, \ldots , f_{\sigma^{\prime -1}(n)} \right)$
is generated by transpositions of adjacent orthogonal pairs;

\item[(iii)] the composition of $[\sigma,\und{f}] : \und{M}\to N$ 
with $[\sigma_i,\und{g}_i] : \und{K}_i\to M_i$, for $i=1,\dots,n$, is
\begin{subequations}
\begin{flalign}
[\sigma , \und{f}]\, [ \und{\sigma}, \und{\und{g}}]\,:=\,
\big[\sigma(\sigma_1,\dots,\sigma_n), \und{f}\,\und{\und{g}}\,\big]\,:\,\und{\und{K}}\,\longrightarrow\,N\quad,
\end{flalign}
where $\sigma(\sigma_1,\dots,\sigma_n)$ denotes
the composition in the unital associative operad and
\begin{flalign}
 \und{f}\,\und{\und{g}} \,:=\, \big(f_1\,g_{11},\dots, f_1\,g_{1k_1},\dots, f_n\,g_{n1},\dots,f_{n}\,g_{n k_n}\big)
\end{flalign}
\end{subequations}
is given by compositions in the category $\CC$;

\item[(iv)] the unit elements are $\oone := [e,\id_N^{}] : N\to N$, where $e\in\Sigma_1$ is the identity permutation;

\item[(v)] the permutation action of $\sigma^\prime\in\Sigma_n$ on $[\sigma,\und{f}] : \und{M}\to N$ is
\begin{flalign}
[\sigma,\und{f}]\cdot \sigma^\prime \,:=\, [\sigma\sigma^\prime,\und{f}\sigma^\prime] : \und{M}\sigma^{\prime}~\longrightarrow~
 N\quad,
\end{flalign}
where $\und{f}\sigma^\prime = (f_{\sigma^\prime(1)},\dots,f_{\sigma^\prime(n)})$ 
and $\und{M}\sigma^{\prime}= (M_{\sigma^\prime(1)},\dots, M_{\sigma^{\prime}(n)})$ denote 
the permuted tuples and $\sigma\sigma^\prime$ is given by the composition of permutations in $\Sigma_n$.
\end{itemize}
The key property of the AQFT operad $\O_{\ovr{\CC}}$ is that the category of
$\TT$-valued AQFTs over $\ovr{\CC}$ from Definition \ref{def:AQFT} is isomorphic to
the category of $\O_{\ovr{\CC}}$-algebras with values in $\TT$, 
i.e.\ $\AQFT(\ovr{\CC})\cong \Alg_{\O_{\ovr{\CC}}}(\TT)$.
\end{defi}

Given any orthogonal functor $F : \ovr{\CC}\to\ovr{\DD}$, one defines
an operad morphism $\O_F : \O_{\ovr{\CC}}\to \O_{\ovr{\DD}}$
by sending each object $M\in\CC$ to $F(M)\in \DD$
and each operation $[\sigma,\und{f}] : \und{M}\to N$ in $\O_{\ovr{\CC}}$
to the operation $\O_F([\sigma,\und{f}]) := [\sigma,F(\und{f})] : F(\und{M})\to F(N)$ in $\O_{\ovr{\DD}}$,
where $F(\und{f}) := (F(f_1),\dots, F(f_n))$ and $F(\und{M}) := (F(M_1),\dots,F(M_n))$
denote the actions of $F$ on tuples.
The pullback functor $F^\ast = \O_{F}^\ast$ 
from \eqref{eqn:Fpullbackfunctor} then coincides with the pullback functor of operad algebras.
\sk

A model for the operadic left Kan extension $F_! : \AQFT(\ovr{\CC})\to \AQFT(\ovr{\DD})$ 
can be given in terms of an ordinary left Kan extension along the 
induced functor $\O^\otimes_{F} : \O_{\ovr{\CC}}^\otimes\to \O_{\ovr{\DD}}^\otimes$
between the monoidal envelopes of the AQFT operads. For details about these concepts,
we refer the reader to \cite[Section 1.1]{Horel} and also to
\cite[Section 6]{BPSW} where the notations are closer to our present ones.
More explicitly, given any $\AAA\in\AQFT(\ovr{\CC})$, the operadic
left Kan extension $F_!(\AAA)\in \AQFT(\ovr{\DD})$ is defined 
object-wise by the colimit
\begin{flalign}
F_!(\AAA)(K)\,:=\,\colim\Big(\O_{F}^\otimes/K\longrightarrow \O^\otimes_{\ovr{\CC}}\stackrel{\AAA^\otimes}{\longrightarrow}\TT\Big)\quad,
\end{flalign}
for all $K\in\DD$, together with its canonically defined $\O_{\ovr{\DD}}$-algebra structure.
Let us briefly describe the building blocks of this colimit in more detail:
\begin{itemize}
\item The monoidal envelope $\O^\otimes_{\ovr{\CC}}$ of the AQFT operad $\O_{\ovr{\CC}}$ 
from Definition \ref{def:AQFToperad} is the category whose objects are all (possibly empty)
tuples $\und{M} := (M_1,\dots,M_n)$ of objects in $\O_{\ovr{\CC}}$. 
A morphism $\und{M}\to \und{N}$ in $\O^\otimes_{\ovr{\CC}}$ 
from $\und{M}=(M_1,\dots,M_n)$ to $\und{N} = (N_1,\dots,N_p)$
is a pair $\big(\alpha,\und{[\sigma,\und{f}]}\big)$ consisting of a map of sets
$\alpha : \{1,\dots,n\}\to \{1,\dots, p\}$ and a tuple
$\und{[\sigma,\und{f}]} := \big([\sigma_1,\und{f}_1],\dots,[\sigma_p,\und{f}_p]\big)$
of operations $[\sigma_j,\und{f}_j] : \und{M}_{\alpha^{-1}(j)}\to N_j$ in $\O_{\ovr{\CC}}$,
for all $j\in\{1,\dots,p\}$. Concatenation of tuples endows the category $\O^\otimes_{\ovr{\CC}}$ with
a symmetric monoidal structure.

\item The symmetric monoidal functor $\AAA^\otimes : \O^\otimes_{\ovr{\CC}}\to \TT$
is canonically defined from the operad algebra $\AAA\in\AQFT(\ovr{\CC}) \cong \Alg_{\O_{\ovr{\CC}}}(\TT)$
as follows: To each object $\und{M} = (M_1,\dots,M_n)\in\O^\otimes_{\ovr{\CC}}$ it assigns
the tensor product $\AAA^\otimes(\und{M}) := \bigotimes_{i=1}^n\AAA(M_i)\in\TT$, and to each morphism
$\big(\alpha,\und{[\sigma,\und{f}]}\big) : \und{M}\to\und{N}$ in $\O^\otimes_{\ovr{\CC}}$ it assigns the $\TT$-morphism
\begin{flalign}
\xymatrix{
\AAA^\otimes\big(\alpha,\und{[\sigma,\und{f}]}\big)\,:\,
\AAA^\otimes(\und{M}) \ar[r]^-{\text{permute}}~&~\bigotimes_{j=1}^p \AAA^\otimes\big(\und{M}_{\alpha^{-1}(j)}\big)
\ar[rr]^-{\Motimes_{j=1}^p \AAA([\sigma_j,\und{f}_j])}~&~~&~ \AAA^\otimes(\und{N})
}\quad,
\end{flalign}
where the permutation of tensor factors is via the symmetric braiding of $\TT$.

\item The category $\O_{F}^\otimes/K$ is the comma category of the functor 
$\O^\otimes_{F} : \O_{\ovr{\CC}}^\otimes\to \O_{\ovr{\DD}}^\otimes$ over the length one
tuple $K\in \O_{\ovr{\DD}}^\otimes$ consisting of the given object $K\in\DD$.
Hence, an object in $\O_{F}^\otimes/K$ is a pair $\big(\und{M},[\rho,\und{g}]\big)$ consisting
of an object $\und{M}\in \O_{\ovr{\CC}}^\otimes$ and an operation
$[\rho,\und{g}] : F(\und{M}) \to K$ in $\O_{\ovr{\DD}}$. A morphism
$\big(\und{M},[\rho,\und{g}]\big)\to \big(\und{N},[\tau,\und{h}]\big)$
in $\O_{F}^\otimes/K$ is a morphism $\big(\alpha,\und{[\sigma,\und{f}]}\big) : \und{M}\to \und{N}$
in $\O_{\ovr{\CC}}^\otimes$ such that the triangle 
\begin{flalign}
\begin{gathered}
\xymatrix{
\ar[dr]_-{[\rho,\und{g}]} F(\und{M}) \ar[rr]^-{(\alpha,\und{[\sigma,F(\und{f})]})}~&~ ~&~ F(\und{N}) \ar[dl]^-{[\tau,\und{h}]}\\
~&~ K ~&~
}
\end{gathered}
\end{flalign}
in $\O_{\ovr{\DD}}^\otimes$ commutes.
\end{itemize}

\section{\label{app:localization}Orthogonal localizations}
We develop via the calculus of fractions \cite{GabrielZisman}
the explicit model from Example \ref{ex:localizations}
for the orthogonal localization $\ovr{\COpen(M)}[W_{M}^{-1}]$ 
of $\ovr{\COpen(M)}$ at all Cauchy morphisms $W_M$.
The same construction will apply to give the 
explicit model from Example \ref{ex:localizations} for the orthogonal localization 
$\ovr{\RC(M)}[W_{\mathrm{rc},M}^{-1}]$.
Our conventions are those of \cite[Chapter 7]{KashiwaraSchapira}.
\begin{lem} \label{lem:localized:rightmultiplicative}
The set of Cauchy morphisms $W_M$ in the category $\COpen(M)$ is a right multiplicative system.
\end{lem}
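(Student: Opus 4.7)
The plan is to verify the three defining axioms of a right multiplicative system in the sense of \cite[Ch.~7]{KashiwaraSchapira}: closure under identities and composition, an Ore-type span completion axiom, and a cancellation axiom. The cancellation axiom is vacuous here because $\COpen(M)$ is thin, so any parallel pair of subset inclusions already coincides. Identities trivially contain every Cauchy surface of their source and so lie in $W_M$. For composition, given Cauchy inclusions $U \subseteq V \subseteq W$, I will pick a Cauchy surface $\Sigma_V$ of $V$ with $\Sigma_V \subseteq U$ and verify, using causal convexity of $V$ in $W$, that $\Sigma_V$ is simultaneously a Cauchy surface of $W$: every inextensible causal curve $\gamma$ in $W$ meets $V$ (because $V \subseteq W$ is Cauchy), and $\gamma \cap V$ is connected and inextensible as a causal curve in $V$, hence meets $\Sigma_V$ exactly once; achronality of $\Sigma_V$ in $W$ is likewise inherited from causal convexity of $V$ in $W$.

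For the Ore axiom, given inclusions $f: C \hookrightarrow A$ and $s: C \hookrightarrow B$ with $s$ Cauchy, I will complete the span to a commutative square by producing a common target $Z \in \COpen(M)$ together with $A \hookrightarrow Z$ in $W_M$ and $B \hookrightarrow Z$, commutativity being automatic by thinness. The guiding candidate is $Z := D_M(A)$, the Cauchy development of $A$ in $M$: by construction $A \subseteq D_M(A)$ is a Cauchy inclusion, and the Cauchy hypothesis on $C \hookrightarrow B$ yields the chain
\[
B \,=\, D_B(C) \,\subseteq\, D_M(C) \,\subseteq\, D_M(A) \,=\, Z \quad,
\]
where the first inclusion uses causal convexity of $B$ in $M$ and the last uses monotonicity of the Cauchy development along $C \subseteq A$.

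The main obstacle is verifying that $Z$ is simultaneously a causally convex open subset of $M$, so that it truly belongs to $\COpen(M)$; in general the Cauchy development of a causally convex open set in a globally hyperbolic spacetime need not be causally convex, and the written proof will therefore either import the standard Lorentzian-geometric facts that secure this property under the paper's hypotheses, or else refine the candidate $Z$ to an appropriate causally convex open sub-development that still contains $A$ as a Cauchy subspace and still contains $B$. This Lorentzian-geometric step is the principal technical input of the proof, and the parallel statement for $\ovr{\RC(M)}$ with $W_{\mathrm{rc},M}$ in place of $\ovr{\COpen(M)}$ with $W_M$ will follow by an identical argument, since all three axioms are inherited through the full subcategory inclusion.
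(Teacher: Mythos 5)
Your verification of (S1), (S2) and (S4) matches the paper's. For the Ore axiom (S3) you take a genuinely different route: the paper completes the span $U' \leftarrow U \rightarrow V$ (with $U \subseteq U'$ Cauchy) by the \emph{union} $U' \cup V$, proving directly that the union is causally convex and that $V \subseteq U' \cup V$ is Cauchy; you instead use the Cauchy development $Z = D_M(A)$ of the target of the arbitrary leg, with the chain $B = D_B(C) \subseteq D_M(C) \subseteq D_M(A)$. Your chain is sound (the first inclusion is essentially Lemma \ref{lem:Cauchy_development:Loc-morphisms} applied to the inclusion $B \hookrightarrow M$), and the facts you defer --- that $D_M(A)$ is open and causally convex for $A \in \COpen(M)$ with $M$ globally hyperbolic, so that $Z$ is genuinely an object of $\COpen(M)$ and $A \subseteq Z$ is Cauchy --- are true and are implicitly used elsewhere in the paper (e.g.\ causal convexity of $D_M(V)$ in the proof of Proposition \ref{prop:localized:thin}). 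So for the lemma as stated your plan works, at the cost of importing more Lorentzian geometry than the paper's union construction, which only needs causal convexity of $U' \cup V$ and an elementary push-up-style computation.

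The concrete error is your closing claim that the same argument transfers ``identically'' to $\ovr{\RC(M)}$ with $W_{\mathrm{rc},M}$. It does not: $D_M(A)$ need not be relatively compact even when $A$ is (on a spacetime with compact Cauchy surfaces, a thin relatively compact slab around a Cauchy surface has $D_M(A) = M$), so your $Z$ can fail to be an object of $\RC(M)$. The paper flags exactly this failure in a footnote to Proposition \ref{prop:localized:thin} and chooses its constructions ($U' \cup V$ here, $J_M^{+\cap-}(U \cup V)$ there) precisely because unions and causally convex hulls of relatively compact subsets remain relatively compact (Lemma \ref{lem:hullofrelativelycompact}); this is what makes Corollary C.4 a one-line remark. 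Since the relatively compact case is needed downstream (for $\HK^{\mathrm{rc},W}$), you would have to replace $D_M(A)$ by a relatively compact sub-development, or simply adopt the union construction.
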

\begin{proof}
We must verify that $W_M$ satisfies the four properties
(S1)--(S4) listed in \cite[Definition 7.1.5]{KashiwaraSchapira}.
\begin{enumerate}[(S1)]
\item All isomorphisms in $\COpen(M)$ are identities, hence Cauchy morphisms.

\item Any Cauchy morphism $U \subseteq U^\prime$ exhibits Cauchy surfaces in $U$ as Cauchy surfaces in $U^\prime$.
It follows that Cauchy morphisms are closed under composition.

\item Given any Cauchy morphism $U \subseteq U^\prime$ and any morphism $U \subseteq V$,
we argue that the union $U^\prime \cup V\subseteq M$ is causally convex,
and moreover that the inclusion $V \subseteq U^\prime \cup V$ is a Cauchy morphism.
This gives a (necessarily commuting) square
\begin{flalign}\label{eqn:localized:rightmultiplicative:S3}
\begin{gathered}
\xymatrix{
\ar[d]_-{\mathrm{Cauchy}}U \ar[r]~&~ V\ar[d]^-{\mathrm{Cauchy}}\\
U^\prime \ar[r]~&~ U^\prime\cup V
}
\end{gathered}
\end{flalign}
in $\COpen(M)$. To show causal convexity of $U^\prime \cup V\subseteq M$,
one verifies first that $J_M^\pm(U^\prime) \subseteq U^\prime \cup J_M^\pm(V)$ 
because $U \subseteq U^\prime$ is a Cauchy morphism.
It then suffices to consider any causal curve $\gamma : [0,1] \to M$ with $\gamma(0) \in U^\prime$ 
and $\gamma(1) \in V$. If $\gamma$ is future-directed, then for any $t \in [0,1]$ it holds that
\begin{flalign}
\nn \gamma(t) \,&\in\, J_M^+(U^\prime) \cap J_M^-(V)\\
\nn \,& \subseteq\, \big( U^\prime \cup J_M^+ (V)\big) \cap J_M^-(V)\\
\nn \,& =\, \big(U^\prime \cap J_M^-(V)\big) \cup \big( J_M^+ (V) \cap J_M^-(V) \big) \\
\, & \subseteq\, U^\prime \cup V \quad ,
\end{flalign}
where the last step involves also the causal convexity of $V$.
Thus $\gamma$ does not exit $U^\prime \cup V$.
A similar argument holds when $\gamma$ is past-directed.
To see that $V \subseteq U^\prime \cup V$ is a Cauchy morphism,
observe that any inextendable causal curve in $M$
which intersects $U^\prime \cup V$ must necessarily intersect $V$.
Indeed, if it intersects $U^\prime$ then, because $U \subseteq U^\prime$ is a Cauchy morphism, 
it also intersects $U$, which is contained in $V$.

\item The property (S4) is trivially satisfied since $\COpen(M)$ is thin.
\qedhere
\end{enumerate}
\end{proof}

As a consequence of this lemma, the calculus of fractions applies
to give a model for the localization
$L_M : \COpen(M) \to \COpen(M)[W_M^{-1}]$, see \cite[Theorem 7.1.16]{KashiwaraSchapira}.
Furthermore, orthogonal localization endows the localized category $\COpen(M)[W_M^{-1}]$ with
the orthogonality relation pushed forward along the localization functor $L_M$, see 
\cite[Proposition 2.11]{BCStimeslice}.
The resulting model for the orthogonal localization 
$L_M : \ovr{\COpen(M)} \to \ovr{\COpen(M)}[W_M^{-1}]$ is then given as follows:
\begin{itemize}
\item The category $\COpen(M)[W_M^{-1}]$ has the same objects as $\COpen(M)$,
and	its morphisms $[X] : U \to V$ are equivalence classes of objects $X \in \COpen(M)$
with $U \subseteq X \supseteq V$ such that $(V \subseteq X) \in W_M$ is a Cauchy morphism.
Two such $X, X^\prime \in \COpen(M)$ are equivalent if there exists
a third $X^{\prime\prime} \in \COpen(M)$ with $X \subseteq X^{\prime\prime} \supseteq X^\prime$
such that $(V \subseteq X^{\prime\prime})\in W_M$ is a Cauchy morphism.
The composite of $[X] : U \to V$ and $[Y] : V \to W$ is given by $[Y] \circ [X] = [X \cup Y]$,
using \eqref{eqn:localized:rightmultiplicative:S3}.

\item The orthogonality relation on $\ovr{\COpen(M)}[W_M^{-1}]$ is characterized as follows:
$([X_1] : U_1 \rightarrow V) \perp ([X_2] : U_2 \to V)$ if and only if
$(U_1 \subseteq X_1 \cup X_2) \perp (U_2 \subseteq X_1 \cup X_2)$ in $\ovr{\COpen(M)}$,
or equivalently $U_1$ and $U_2$ are causally disjoint in $M$.

\item The orthogonal localization functor
$L_M : \overline{\COpen(M)} \to \ovr{\COpen(M)}[W_M^{-1}]$
acts as identity on objects 
and sends a morphism $U \subseteq V$ in $\COpen(M)$ to $[V] : U \to V$.
\end{itemize}

This model can be simplified significantly.
Let us recall that the Cauchy development $D_M(S)\subseteq M$ of a subset
$S \subseteq M$ is the set of points $p \in M$ such that every inextendable causal curve 
through $p$ also intersects $S$.
An inclusion $U \subseteq V$ of causally convex open subsets of $M$ is a Cauchy morphism
if and only if $D_M(U) = D_M(V)$. Other useful properties of Cauchy
developments include that $D_M(D_M(S)) = D_M(S)$
and $f (D_M(S)) \subseteq D_N(f(S))$, for every subset $S \subseteq M$ and 
every $\Loc$-morphism $f : M \to N$, see also Lemma \ref{lem:Cauchy_development:Loc-morphisms}.
\begin{propo} \label{prop:localized:thin}
The category $\COpen(M)[W_M^{-1}]$ is thin,
i.e.\ there exists at most one morphism between every two objects.
Moreover, the unique morphism $U \to V$ exists if and only if
$U\subseteq D_M(V)$ is contained in the Cauchy development of $V$ in $M$.
\end{propo}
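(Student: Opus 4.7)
The plan is to leverage the calculus-of-fractions model for $\COpen(M)[W_M^{-1}]$ recalled just above the proposition, by identifying a canonical representative for every hom-set, namely the equivalence class $[D_M(V)]$ of the Cauchy development of the codomain. To make this work, I would first collect the following standard geometric preliminaries for $V \in \COpen(M)$: (i) $D_M(V) \subseteq M$ is a non-empty causally convex open subset of $M$; (ii) $V \subseteq D_M(V)$; and (iii) $D_M\bigl(D_M(V)\bigr) = D_M(V)$, so that the inclusion $V \subseteq D_M(V)$ belongs to $W_M$. Openness is a consequence of strong causality, while causal convexity is obtained by taking any causal curve with endpoints in $D_M(V)$, extending it to an inextendable causal curve, and noting that every inextendable causal curve through an intermediate point still meets $V$.

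With these preliminaries in hand, the necessity of the condition $U \subseteq D_M(V)$ is immediate: given any morphism $[X] : U \to V$, the defining relations $U \subseteq X \supseteq V$ together with $(V \subseteq X) \in W_M$ yield $D_M(X) = D_M(V)$, and combining with the trivial inclusion $X \subseteq D_M(X)$ gives $U \subseteq X \subseteq D_M(V)$. Conversely, if $U \subseteq D_M(V)$, then the triple $U \subseteq D_M(V) \supseteq V$ together with item (iii) above shows that $[D_M(V)]$ is a well-defined morphism $U \to V$ in $\COpen(M)[W_M^{-1}]$.

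For thinness, I would argue that every morphism equals this canonical representative. Given any $[X] : U \to V$, the necessity argument shows $X \subseteq D_M(V)$, so the chain of inclusions $X \subseteq D_M(V) \supseteq X$ together with $(V \subseteq D_M(V)) \in W_M$ exhibits $[X] = [D_M(V)]$ via the equivalence relation defining morphisms in the localization. In particular, any two morphisms $[X_1], [X_2] : U \to V$ both coincide with $[D_M(V)]$, which yields thinness.

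The only non-routine step is the geometric preliminary establishing that $D_M(V)$ is causally convex open in $M$, and this will be the main obstacle; all the remaining steps are formal manipulations in the calculus of fractions. The same argument would then apply verbatim to establish the analogous description of $\ovr{\RC(M)}[W_{\mathrm{rc},M}^{-1}]$, once one additionally observes that $D_M(V)$ is relatively compact whenever $V$ is, which follows from the fact that $D_M(V) \subseteq J_M\bigl(\mathrm{cl}(V)\bigr) \cap J_M^{-1}\bigl(\mathrm{cl}(V)\bigr)$ is contained in a compact subset of $M$ by global hyperbolicity.
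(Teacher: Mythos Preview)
Your argument for the stated proposition about $\COpen(M)[W_M^{-1}]$ is correct, and it is close in spirit to the paper's. The paper proves thinness via $[X] = [X \cup X'] = [X']$ using the union construction from (S3), and for the converse direction it represents the morphism $U \to V$ by the causally convex hull $J_M^{+\cap -}(U \cup V)$ rather than by $D_M(V)$. The paper in fact remarks in a footnote that $[D_M(V)]$ would also serve as a representative in the $\COpen(M)$ case---exactly your approach---but deliberately avoids it so that the same construction carries over to the relatively compact case.

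This brings us to the gap in your final paragraph: your proposed extension to $\ovr{\RC(M)}[W_{\mathrm{rc},M}^{-1}]$ does not work. The claimed inclusion $D_M(V) \subseteq J_M^+(\mathrm{cl}(V)) \cap J_M^-(\mathrm{cl}(V))$ is false. For instance, if $V$ is a small spacelike ball on a Cauchy surface in Minkowski space, then the apex of the diamond $D_M(V)$ lies in $J_M^+(V)$ but not in $J_M^-(V)$. More to the point, $D_M(V)$ need not be relatively compact even when $V$ is: take $M$ with compact Cauchy surfaces and let $V$ be a relatively compact time-slab containing one such surface, so that $D_M(V) = M$. Hence $[D_M(V)]$ is not a valid morphism in $\RC(M)[W_{\mathrm{rc},M}^{-1}]$ in general. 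The paper's choice of representative $J_M^{+\cap -}(U \cup V)$, being the causally convex hull of a relatively compact set, remains relatively compact (Lemma~\ref{lem:hullofrelativelycompact}), and this is precisely what allows the argument to transfer uniformly to the relatively compact setting.
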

\begin{proof}
For any two parallel morphisms $[X], [X^\prime] : U \to V$,
one verifies using \eqref{eqn:localized:rightmultiplicative:S3} 
that $[X] = [X \cup X^\prime] = [X^\prime]$ since both $V \subseteq X$ 
and $V \subseteq X^\prime$ are Cauchy morphisms.
\sk

For the second statement, suppose that a morphism $[X] : U \to V$ exists.
Then we have a morphism $U \subseteq X$ and a Cauchy morphism $V \subseteq X$ in $\COpen(M)$,
hence $U \subseteq X \subseteq D_M(X) = D_M(V)$.
Conversely, suppose that $U \subseteq D_M(V)$.
We show that $[J_M^{+\cap -} (U \cup V)] : U \to V$ is a valid
morphism\footnote{It is also true that $[D_M(V)] : U \to V$ is
a valid morphism of $\COpen(M)[W_M^{-1}]$.
However, for $U, V  \in \RC(M)$, $D_M(V) \in \COpen(M)$ may fail to be relatively compact
and hence may not define a morphism $U \to V$ in $\RC(M)[W_{\mathrm{rc},M}^{-1}]$.
We prefer to present a proof that adapts straightforwardly to the relatively compact case.}
in $\COpen(M)[W_M^{-1}]$,
where $J_M^{+\cap -}(S) := J_M^+(S) \cap J_M^-(S)$ is the causally convex hull of $S \subseteq M$.
There are clearly inclusions
$U \subseteq J_M^{+\cap -}(U \cup V)$ and $V \subseteq J_M^{+\cap -}(U \cup V)$.
Using the hypothesis $U \subseteq D_M(V)$, one has
that $J_M^\pm (U) \subseteq J_M^\pm(D_M(V))$.
Then also
$J_M^\pm(U \cup V) = J_M^\pm(U) \cup J_M^\pm (V) \subseteq J_M^\pm (D_M(V))$,
so that $J_M^{+ \cap -} (U \cup V) \subseteq J_M^{+ \cap -} (D_M (V)) = D_M(V)$,
where the last equality expresses causal convexity of $D_M(V)$.
It follows that $D_M (J_M^{+ \cap -}(U \cup V)) = D_M(V)$,
i.e.\ the inclusion $V \subseteq J_M^{+ \cap -}(U \cup V)$ is a Cauchy morphism.
\end{proof}

For every $\Loc$-morphism $f : M \to N$,
the orthogonal functor 
$f : \ovr{\COpen(M)} \to \ovr{\COpen(N)}$ from \eqref{eqn:fpullback} 
preserves Cauchy morphisms, i.e.\ it maps $W_M$ to $W_N$, 
hence it induces an orthogonal functor
$f_W : \ovr{\COpen(M)}[W_M^{-1}] \to \ovr{\COpen(N)}[W_N^{-1}]$
such that $L_N \circ f = f_W \circ L_M$.
On objects, $f_W(U) := f(U)$ takes images under $f$.
Note that, by Proposition \ref{prop:localized:thin}, this defines a valid functor since
$U \subseteq D_M(V)$ implies $f(U) \subseteq f(D_M(V)) \subseteq D_N(f(V))$.
\begin{propo} \label{prop:localized:functor}
The orthogonal functor
$f_W : \ovr{\COpen(M)}[W_M^{-1}] \to \ovr{\COpen(N)}[W_N^{-1}]$
associated to any $\Loc$-morphism $f : M \to N$
is fully faithful and reflects orthogonality.
\end{propo}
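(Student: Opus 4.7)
The plan is to exploit the thinness and the explicit characterization of morphisms in the localized categories from Proposition \ref{prop:localized:thin}: there exists a (unique) morphism $U \to V$ in $\ovr{\COpen(M)}[W_M^{-1}]$ precisely when $U \subseteq D_M(V)$, and analogously in $\ovr{\COpen(N)}[W_N^{-1}]$. Faithfulness is then automatic, since both categories are thin and the $\Hom$-sets are either singletons or empty.

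For fullness, I would argue that the existence of a morphism $f_W(U) \to f_W(V)$, equivalently $f(U) \subseteq D_N(f(V))$, implies the existence of a morphism $U \to V$, equivalently $U \subseteq D_M(V)$. The key ingredient is the compatibility identity
\begin{equation*}
f(D_M(V))\,=\,D_N(f(V))\cap f(M)
\end{equation*}
relating Cauchy developments on the two sides of the causally convex open embedding $f : M \to N$. Granted this identity, the inclusion $f(U) \subseteq f(M) \cap D_N(f(V)) = f(D_M(V))$ together with injectivity of $f$ yields $U \subseteq D_M(V)$. I note that this proof adapts verbatim to the relatively compact case, since Proposition \ref{prop:localized:thin} and its characterization of morphisms carry over to $\ovr{\RC(M)}[W_{\mathrm{rc},M}^{-1}]$ without change, and the image $f(U)$ of a relatively compact $U \in \RC(M)$ is still relatively compact in $N$ (its closure in $N$ is contained in the compact set $f(\mathrm{cl}_M(U))$, which is closed in the Hausdorff space $N$).

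Reflection of orthogonality is then a direct consequence of the description of the orthogonality relation on the localized categories: a pair $([X_1] : U_1 \to V)$, $([X_2] : U_2 \to V)$ is orthogonal precisely when $U_1$ and $U_2$ are causally disjoint in $M$, and similarly in $N$. Because $f$ is an isometric embedding with causally convex image, every causal curve in $M$ from $U_1$ to $U_2$ would yield a causal curve in $N$ from $f(U_1)$ to $f(U_2)$, so causal disjointness in $N$ of $f(U_1)$ and $f(U_2)$ forces causal disjointness in $M$ of $U_1$ and $U_2$.

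The main obstacle is establishing the Cauchy-development identity. I would reduce it to the general Lorentz-geometric lemma: for any causally convex open subset $O \subseteq N$ and any subset $S \subseteq O$, one has $D_O(S) = D_N(S) \cap O$. The inclusion $\supseteq$ uses that an inextendable causal curve in $O$ through a point $p \in D_N(S) \cap O$ can be extended to an inextendable causal curve in $N$, which must meet $S \subseteq O$; causal convexity of $O$ then confines the portion of the extended curve between $p$ and the meeting point to $O$, and a maximal-interval argument together with inextendability in $O$ forces the meeting point to lie on the original $O$-curve. The inclusion $\subseteq$ is dual: given $p \in D_O(S)$ and an inextendable causal curve $\tilde\gamma$ in $N$ through $p$, its maximal connected portion lying in $O$ is inextendable in $O$ (again by causal convexity), hence meets $S$, so $\tilde\gamma$ does too. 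Applying this lemma to $O = f(M) \subseteq N$ and $S = f(V)$, and using that $f$ corestricts to an isomorphism $M \stackrel{\cong}{\to} f(M)$ of globally hyperbolic spacetimes (which gives $f(D_M(V)) = D_{f(M)}(f(V))$), yields the required identity.
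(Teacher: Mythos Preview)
Your proof is correct and rests on the same curve-extension idea as the paper, but you organize it differently: you isolate the identity $f(D_M(V)) = D_N(f(V)) \cap f(M)$ as a standalone Lorentz-geometric lemma and then read off fullness from it, whereas the paper proves the implication $f(U) \subseteq D_N(f(V)) \Rightarrow U \subseteq D_M(V)$ directly by extending an inextendable causal curve $\gamma$ in $M$ to one $\widetilde{\gamma}$ in $N$, observing that $\widetilde{\gamma}$ hits $f(V)$, and using causal convexity of $f(M)$ together with inextendability of $\gamma$ to force the hitting parameter into the domain of $\gamma$. Your identity is in fact proved later in the paper (Lemma~\ref{lem:Cauchy_development:Loc-morphisms}), so your route amounts to invoking that lemma early; this makes the argument cleaner and reusable, at the cost of requiring the slightly more careful two-sided proof of $D_O(S) = D_N(S) \cap O$ that you sketch. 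Both arguments are equivalent in substance.
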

\begin{proof}
The functor $f_W$ both preserves and reflects orthogonality because
$U_1, U_2 \subseteq M$ are causally disjoint if and only if
$f(U_1), f(U_2) \subseteq N$ are causally disjoint.
From Proposition \ref{prop:localized:thin}, fully faithfulness of $f_W$ means
equivalently that $f(U) \subseteq D_N(f(V))$ if and only if $U \subseteq D_M(V)$,
for all causally convex opens $U,V \in \COpen(M)$.
Suppose that $U \subseteq D_M(V)$.
Then $f(U) \subseteq f(D_M(V))\subseteq D_N(f(V))$
by the properties of Cauchy development noted above Proposition \ref{prop:localized:thin}.
Now suppose
that $f(U) \subseteq D_N(f(V))$ and take any point $p \in U$ and any
inextendable causal curve $\gamma : (-1,1) \to M$ with $\gamma(0) = p$.
Pick an extension $\widetilde{\gamma}$ of $f\circ \gamma : (-1,1)\to N$,
i.e.\ an inextendable causal curve $\widetilde{\gamma} : (a,b) \to N$
with $(-1,1) \subseteq (a,b)$ such that $\widetilde{\gamma}\vert_{(-1,1)} = f \circ \gamma$.
Because $\widetilde{\gamma}(0) = f(\gamma(0)) \in f(U) \subseteq D_N(f(V))$,
there exists $t \in (a,b)$ with $\widetilde{\gamma}(t) \in f(V)$.
Since $\gamma$ is inextendable in $M$,
$\widetilde{\gamma}(-1)$ and $\widetilde{\gamma}(1)$ (if defined) do not lie in $f(M)$.
Thus $t \in (-1,1)$ because $f(M)$ is causally convex.
We therefore have $f(\gamma(t)) = \widetilde{\gamma}(t) \in f(V)$,
so $\gamma (t) \in V$ because $f$ is injective.
This demonstrates that $p \in D_M(V)$.
\end{proof}

The above results specialize to the full orthogonal subcategory $\ovr{\RC(M)}\subseteq \ovr{\COpen(M)}$
of all relatively compact causally convex open subsets of $M$
because the constructions used in their proofs preserve relative compactness.
\begin{lem} \label{lem:hullofrelativelycompact}
Let $M\in \Loc$ be any object and $S \subseteq M$ a relatively compact subset.
Then the causally convex hull $J_M^{+ \cap -}(S)\subseteq M$ is also relatively compact.
\end{lem}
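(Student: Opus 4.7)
The plan is to exploit the standard fact that in any globally hyperbolic Lorentzian manifold $M$, the causal emission and absorption of a compact subset is again compact. More precisely, for any compact $K \subseteq M$, the sets $J_M^\pm(K) \subseteq M$ are closed, and $J_M^+(K) \cap J_M^-(K) \subseteq M$ is compact. This is a standard consequence of global hyperbolicity (see, e.g., the usual accounts of causal structure on globally hyperbolic spacetimes).

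Concretely, I would proceed as follows. First, set $K := \mathrm{cl}(S) \subseteq M$, which is compact by the hypothesis that $S$ is relatively compact. Since $S \subseteq K$, monotonicity of the causal future/past gives the inclusions $J_M^\pm(S) \subseteq J_M^\pm(K)$, and hence
\begin{equation*}
J_M^{+\cap -}(S) \,=\, J_M^+(S) \cap J_M^-(S) \,\subseteq\, J_M^+(K) \cap J_M^-(K) \quad.
\end{equation*}
Next, invoke the aforementioned fact from Lorentzian geometry that the right-hand side is a compact subset of $M$ because $K$ is compact and $M$ is globally hyperbolic.

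Finally, taking closures on both sides and using that $J_M^+(K) \cap J_M^-(K)$ is already closed (being compact in the Hausdorff space $M$), we obtain
\begin{equation*}
\mathrm{cl}\big(J_M^{+\cap -}(S)\big) \,\subseteq\, J_M^+(K) \cap J_M^-(K)\quad.
\end{equation*}
Since the right-hand side is compact and the left-hand side is a closed subset of $M$ contained in it, the closure $\mathrm{cl}(J_M^{+\cap -}(S))$ is itself compact, proving that $J_M^{+\cap -}(S) \subseteq M$ is relatively compact. There is no real obstacle here since the entire argument is a direct consequence of the standard compactness properties of causal diamonds in globally hyperbolic spacetimes; the only subtlety is remembering to pass to the compact set $K = \mathrm{cl}(S)$ in order to apply those properties.
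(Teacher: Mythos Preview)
Your proof is correct and follows essentially the same approach as the paper: both pass to the compact closure $K=\mathrm{cl}(S)$, invoke global hyperbolicity to get that $J_M^+(K)\cap J_M^-(K)$ is compact, and conclude that $\mathrm{cl}\big(J_M^{+\cap -}(S)\big)$ is a closed subset of this compact set. The paper additionally records the intermediate inclusion $\mathrm{cl}\big(J_M^{+\cap -}(S)\big)\subseteq \mathrm{cl}\big(J_M^+(S)\big)\cap\mathrm{cl}\big(J_M^-(S)\big)\subseteq J_M^+(K)\cap J_M^-(K)$ using that $J_M^\pm(K)$ are closed, but this is a minor variation on the same argument.
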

\begin{proof}
Because the closure $\mathrm{cl}(S)\subseteq M$ is by hypothesis compact,
it is a consequence of global hyperbolicity of $M$ \cite[Definition 4.117 and Theorem 4.12]{Minguzzi}
that $J_M^{+ \cap -}(\mathrm{cl}(S))\subseteq M$ is compact
and that $J_M^+(\mathrm{cl}(S))\subseteq M$ and $J_M^-(\mathrm{cl}(S))\subseteq M$ are closed.
It follows that $\mathrm{cl}\big(J_M^{+ \cap -} (S)\big) \subseteq
\mathrm{cl}\big(J_M^+(S)\big) \cap \mathrm{cl}\big(J_M^-(S)\big) 
\subseteq J_M^+(\mathrm{cl}(S)) \cap J_M^-(\mathrm{cl}(S))$
is a closed subset of a compact set, and hence is compact.
\end{proof}

\begin{cor}
The orthogonal localization $\ovr{\RC(M)}[W_{\mathrm{rc},M}^{-1}]$
of\, $\ovr{\RC(M)}$ at all Cauchy morphisms $W_{\mathrm{rc},M}$
is thin and admits the following explicit description:
Its objects are all objects $U\in \ovr{\RC(M)}$, a morphism $U \to V$
exists uniquely if and only if $U \subseteq D_M(V)$,
and $(U_1 \to V) \perp (U_2 \to V)$ are orthogonal
if and only if $U_1$ and $U_2$ are causally disjoint in $M$.
The orthogonal localization functor
$L_{\mathrm{rc},M}:\ovr{\RC(M)}\to \ovr{\RC(M)}[W_{\mathrm{rc},M}^{-1}]$
acts as the identity on objects and sends the morphism $U\subseteq V$ in $\RC(M)$
to $U\to V$.
Furthermore, the orthogonal functor
$f_{W} : \ovr{\RC(M)}[W_{\mathrm{rc},M}^{-1}] \to \ovr{\RC(N)}[W_{\mathrm{rc},N}^{-1}]$
associated to any $\Loc$-morphism $f : M \to N$
is fully faithful and reflects orthogonality.
\end{cor}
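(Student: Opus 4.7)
The plan is to mimic the arguments developed earlier in this appendix for $\ovr{\COpen(M)}$, verifying at each step that relative compactness is preserved so that everything stays inside $\ovr{\RC(M)}$. The essential input that makes this transfer work is Lemma \ref{lem:hullofrelativelycompact}, which ensures that the causally convex hull $J_M^{+\cap -}(S)$ of a relatively compact subset $S \subseteq M$ remains relatively compact.

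First, I would verify that $W_{\mathrm{rc},M}$ is a right multiplicative system in $\RC(M)$, following the four-point check in Lemma \ref{lem:localized:rightmultiplicative}. Properties (S1), (S2) and (S4) are immediate (thinness, identity morphisms, and closure under composition of Cauchy morphisms). For (S3), given a Cauchy morphism $U \subseteq U^\prime$ and a morphism $U \subseteq V$ in $\RC(M)$, the paper's original construction took $U^\prime \cup V$. In the relatively compact case, $U^\prime \cup V$ need not be causally convex, but $U^\prime$ and $V$ are both relatively compact so $U^\prime \cup V$ is relatively compact, and hence by Lemma \ref{lem:hullofrelativelycompact} its causally convex hull $J_M^{+\cap -}(U^\prime \cup V) \in \RC(M)$ lies in $\RC(M)$. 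One then checks, as in Lemma \ref{lem:localized:rightmultiplicative}, that $V \subseteq J_M^{+\cap -}(U^\prime \cup V)$ is a Cauchy morphism, so $W_{\mathrm{rc},M}$ admits a calculus of right fractions by \cite[Theorem 7.1.16]{KashiwaraSchapira}, and orthogonal localization proceeds as in \cite[Proposition 2.11]{BCStimeslice}.

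Next, I would establish the analogue of Proposition \ref{prop:localized:thin}: $\RC(M)[W_{\mathrm{rc},M}^{-1}]$ is thin, and a unique morphism $U\to V$ exists if and only if $U \subseteq D_M(V)$. Thinness follows verbatim from the argument in Proposition \ref{prop:localized:thin}, once one knows $W_{\mathrm{rc},M}$ is a right multiplicative system. The necessity ``morphism exists $\Rightarrow U \subseteq D_M(V)$'' is the same as before. The main obstacle, flagged in the footnote to Proposition \ref{prop:localized:thin}, is sufficiency: one cannot simply use $[D_M(V)]$ as witness because $D_M(V)$ need not be relatively compact. Here the proof in Proposition \ref{prop:localized:thin} was already written in anticipation of this, constructing the witness as $[J_M^{+\cap -}(U \cup V)]$. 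Since $U, V \in \RC(M)$ are relatively compact, Lemma \ref{lem:hullofrelativelycompact} gives $J_M^{+\cap -}(U \cup V) \in \RC(M)$, and the inclusion argument $J_M^{+\cap -}(U \cup V) \subseteq D_M(V)$ carrying over shows that $V \subseteq J_M^{+\cap -}(U \cup V)$ is a Cauchy morphism in $\RC(M)$.

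Finally, for the assertions about $L_{\mathrm{rc},M}$ and $f_W$, the description of $L_{\mathrm{rc},M}$ on objects and morphisms is immediate from the calculus of fractions, and the orthogonality description is the pushed-forward relation along $L_{\mathrm{rc},M}$. For $f_W$, since a $\Loc$-morphism $f:M\to N$ sends relatively compact causally convex opens to relatively compact causally convex opens and preserves Cauchy morphisms, one obtains a well-defined orthogonal functor $f_W : \ovr{\RC(M)}[W_{\mathrm{rc},M}^{-1}] \to \ovr{\RC(N)}[W_{\mathrm{rc},N}^{-1}]$. Fully faithfulness reduces, as in Proposition \ref{prop:localized:functor}, to the equivalence $U \subseteq D_M(V) \Leftrightarrow f(U) \subseteq D_N(f(V))$ for $U, V \in \RC(M)$, whose proof in Proposition \ref{prop:localized:functor} uses only general properties of $\Loc$-morphisms, causal convexity of $f(M) \subseteq N$ and injectivity of $f$, all of which are insensitive to relative compactness. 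Reflection of orthogonality is, as before, immediate from the fact that $U_1, U_2$ are causally disjoint in $M$ if and only if $f(U_1), f(U_2)$ are causally disjoint in $N$.
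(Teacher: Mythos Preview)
Your approach is correct and essentially identical to the paper's: both argue that the proofs of Lemma \ref{lem:localized:rightmultiplicative} and Propositions \ref{prop:localized:thin} and \ref{prop:localized:functor} carry over to $\RC(M)$ because the relevant constructions (unions and causally convex hulls) preserve relative compactness. One small slip: in verifying (S3) you write that $U^\prime \cup V$ ``need not be causally convex'' and pass to its hull, but the argument in Lemma \ref{lem:localized:rightmultiplicative} already establishes that $U^\prime \cup V$ \emph{is} causally convex whenever $U \subseteq U^\prime$ is Cauchy; since the union of two relatively compact sets is relatively compact, $U^\prime \cup V \in \RC(M)$ directly and no hull is needed at that step.
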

\begin{proof}
The proofs of Lemma \ref{lem:localized:rightmultiplicative}
and Propositions \ref{prop:localized:thin} and \ref{prop:localized:functor}
hold without alteration for $\RC(M)$ in place of $\COpen(M)$,
since unions and causally convex hulls of relatively compact subsets are relatively compact.
\end{proof}

\section{\label{app:bilimit}Technical details for Proposition \ref{propo:bilimovercover}}
In this appendix we supply the technical details which are needed to prove
Proposition \ref{propo:bilimovercover}. To simplify notation,
let us denote the pseudo-functor whose bilimit we wish to compute by
\begin{flalign}
X\,:\,\DD~\longrightarrow~\Pr^L\quad,
\end{flalign}
where $\DD$ is a small $1$-category with terminal object $\ast\in\DD$.
As in the context of Proposition \ref{propo:bilimovercover}, we assume
that the left adjoint functor $X(g): X(d)\to X(d^\prime)$ is fully faithful,
for all $\DD$-morphisms $g:d\to d^\prime$, and we denote its right adjoint (coreflector)
by $X^\dagger (g): X(d^\prime)\to X(d)$.
\sk

Following Construction \ref{constr:computingbilimits}, we compute the bilimit of $X$
by starting from the explicit model
\begin{flalign}
\bilim(X) \,=\, \Hom(\Delta\mathbf{1},X)
\end{flalign}
given by the category of pseudo-natural transformations and modifications.
Spelling out these data, one finds the following explicit description:
\begin{itemize}
\item An object in $\bilim(X)$ is a tuple $(\{x_d\},\{\varphi_g \})$
consisting of a family of objects $x_d\in X(d)$, for all $d\in\DD$,
and a family of isomorphisms $\varphi_g : X(g)(x_d)\to x_{d^\prime}$ in $X(d^\prime)$,
for all $\DD$-morphisms $g:d\to d^\prime$. These data have to satisfy the
conditions that
\begin{subequations}\label{eqn:bilimcocycleAPP}
\begin{flalign}
\begin{gathered}
\xymatrix@C=4em{
\ar[d]_-{\cong}X(g^\prime)X(g)(x_d)\ar[r]^-{X(g^\prime)(\varphi_g)}~&~X(g^\prime)(x_{d^\prime})\ar[d]^-{\varphi_{g^\prime}}\\
X(g^\prime g)(x_d)\ar[r]_-{\varphi_{g^\prime g}}~&~x_{d^{\prime\prime}}
}
\end{gathered}
\end{flalign}
commutes in $X(d^{\prime\prime})$, for all composable $\DD$-morphisms $g: d\to d^\prime$ and 
$g^\prime : d^\prime\to d^{\prime\prime}$, and that
\begin{flalign}
\begin{gathered}
\xymatrix@C=4em{
X(\id_d)(x_d)\ar[r]^-{\varphi_{\id_d}}~&~x_d\\
\ar[u]^-{\cong}x_d\ar[ru]_-{\id_{x_d}}~&~
}
\end{gathered}
\end{flalign}
\end{subequations}
commutes in $X(d)$, for all $d\in\DD$.

\item A morphism in $\bilim(X)$ is a tuple $\{\psi_d\} : (\{x_d\},\{\varphi_g \})\to
\big(\{x^\prime_d\},\{\varphi^\prime_g \}\big)$ consisting of a family
of morphisms $\psi_d : x_d\to x^\prime_d$ in $X(d)$, for all $d\in \DD$, such that
\begin{flalign}\label{eqn:bilimmorAPP}
\begin{gathered}
\xymatrix@C=4em{
\ar[d]_-{X(g)(\psi_d)} X(g)(x_d) \ar[r]^-{\varphi_g} ~&~ x_{d^\prime} \ar[d]^-{\psi_{d^\prime}}\\
X(g)(x^\prime_d) \ar[r]_-{\varphi^\prime_{g}}~&~ x^{\prime}_{d^\prime}
}
\end{gathered}
\end{flalign}
commutes in $X(d^{\prime})$, for all $\DD$-morphisms $g:d\to d^\prime$.
\end{itemize}

We will now simplify this description by using our hypotheses on the category $\DD$ and the 
pseudo-functor $X$. Consider the canonical projection functor
\begin{flalign}\label{eqn:bilimtoterminalAPP}
\bilim(X)~\longrightarrow~X(\ast)
\end{flalign}
from the bilimit to the value of $X$ on the terminal object.
Explicitly, this functor assigns to an object $(\{x_d\},\{\varphi_g \})\in\bilim(X)$
the component $x_\ast\in X(\ast)$ at the terminal object,
and to a morphism $\{\psi_d\}: (\{x_d\},\{\varphi_g \}) \to (\{x^\prime_d\},\{\varphi^\prime_g \})$ 
in $\bilim(X)$ the component $\psi_\ast : x_\ast\to x^\prime_\ast$ in $X(\ast)$ at the terminal object.
\begin{lem}\label{lem:FffAllAPP}
The functor $\bilim (X) \to X(\ast)$ of \eqref{eqn:bilimtoterminalAPP} is fully faithful.
\end{lem}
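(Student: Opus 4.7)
The plan is to exploit the fact that $\ast \in \DD$ is terminal, which provides a unique morphism $!_d : d \to \ast$ for every object $d \in \DD$. This morphism controls the structure of pseudo-natural transformations to a large extent: since by hypothesis $X(!_d) : X(d) \to X(\ast)$ is fully faithful and the isomorphism $\varphi_{!_d} : X(!_d)(x_d) \to x_\ast$ is part of the data, each component $x_d$ is determined up to canonical isomorphism by $x_\ast$ via $x_d \cong X^\dagger(!_d)(x_\ast)$. The analogous rigidity for morphisms will give fully faithfulness.

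For faithfulness, suppose $\{\psi_d\}, \{\psi_d^\prime\} : (\{x_d\},\{\varphi_g\}) \to (\{x_d^\prime\},\{\varphi_g^\prime\})$ are two morphisms in $\bilim(X)$ with $\psi_\ast = \psi_\ast^\prime$. Evaluating the compatibility square \eqref{eqn:bilimmorAPP} at $g = !_d$ forces
\begin{flalign*}
X(!_d)(\psi_d) \,=\, (\varphi_{!_d}^\prime)^{-1} \circ \psi_\ast \circ \varphi_{!_d} \,=\, (\varphi_{!_d}^\prime)^{-1} \circ \psi_\ast^\prime \circ \varphi_{!_d} \,=\, X(!_d)(\psi_d^\prime)\quad,
\end{flalign*}
and faithfulness of $X(!_d)$ then gives $\psi_d = \psi_d^\prime$ for every $d$.

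For fullness, given any $\psi_\ast : x_\ast \to x_\ast^\prime$ in $X(\ast)$, I would define, for each $d \in \DD$, the morphism $\psi_d : x_d \to x_d^\prime$ as the unique morphism in $X(d)$ (using fully faithfulness of $X(!_d)$) which satisfies $X(!_d)(\psi_d) = (\varphi_{!_d}^\prime)^{-1} \circ \psi_\ast \circ \varphi_{!_d}$. Using that $\varphi_{\id_\ast}$ equals the unit coherence isomorphism by the second part of \eqref{eqn:bilimcocycleAPP}, and naturality of this coherence, one verifies that $\psi_\ast^{\mathrm{constructed}} = \psi_\ast$. The main task is then to check that \eqref{eqn:bilimmorAPP} holds for an arbitrary $\DD$-morphism $g : d \to d^\prime$, not only for the $!_d$. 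Since $X(!_{d^\prime})$ is faithful, it suffices to check the equality after applying $X(!_{d^\prime})$, and here the cocycle condition \eqref{eqn:bilimcocycleAPP} for the composite $!_{d^\prime} \circ g = !_d$ allows one to rewrite both $X(!_{d^\prime})(\varphi_g)$ in terms of $\varphi_{!_{d^\prime}}$ and $\varphi_{!_d}$ (and coherence isomorphisms), and similarly for the primed version, reducing the identity to the defining equations of $\psi_d$ and $\psi_{d^\prime}$.

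The main obstacle I anticipate is bookkeeping with the pseudo-functor coherence isomorphisms $X_{!_{d^\prime},g} : X(!_{d^\prime}) X(g) \Rightarrow X(!_d)$ in the final verification; these must be inserted and cancelled carefully so that the two expressions for $X(!_{d^\prime})(\varphi_g^\prime \circ X(g)(\psi_d))$ and $X(!_{d^\prime})(\psi_{d^\prime} \circ \varphi_g)$ genuinely coincide. Everything else is formal and follows from the terminality of $\ast$ together with fully faithfulness of each $X(!_d)$.
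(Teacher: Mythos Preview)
Your proposal is correct and follows essentially the same argument as the paper: both use the terminal morphisms $!_d$ (denoted $t_d$ in the paper) to determine each $\psi_d$ uniquely from $\psi_\ast$ via fully faithfulness of $X(!_d)$, and both verify the general compatibility square \eqref{eqn:bilimmorAPP} for an arbitrary $g$ by applying $X(!_{d^\prime})$ and invoking the cocycle condition \eqref{eqn:bilimcocycleAPP} for the factorization $!_{d^\prime}\circ g = !_d$. The paper carries out the final verification by drawing one large commutative diagram in which the inner square is the target identity and the surrounding squares commute by the cocycle conditions, pseudo-functoriality of $X$, and the defining equations of $\psi_d$, $\psi_{d^\prime}$---exactly the bookkeeping with coherence isomorphisms you anticipated.
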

\begin{proof}
Specializing the commutative diagrams \eqref{eqn:bilimmorAPP}
to the terminal $\DD$-morphisms $t_d : d\to\ast$, we obtain commutative diagrams
\begin{flalign}\label{eqn:APPtmp}
\begin{gathered}
\xymatrix@C=3em{
\ar[d]_-{X(t_d)(\psi_d)} X(t_d)(x_d) \ar[r]^-{\varphi_{t_d}}_-{\cong} ~&~ x_{\ast} \ar[d]^-{\psi_{\ast}}\\
X(t_d)(x^\prime_d) \ar[r]_-{\varphi^\prime_{t_d}}^-{\cong}~&~ x^{\prime}_{\ast}
}
\end{gathered}
\end{flalign}
which, together with the fact that $X(t_d)$ is fully faithful by our hypotheses, 
imply that all components $\psi_d$ of a morphism
$\{\psi_d\}: (\{x_d\},\{\varphi_g \}) \to (\{x^\prime_d\},\{\varphi^\prime_g \})$ in $\bilim(X)$
are uniquely determined by $\psi_\ast$. This shows faithfulness.
To prove fullness, we have to show that, given any $\psi_\ast$, the components
$\psi_d$ defined uniquely by \eqref{eqn:APPtmp} make the diagrams \eqref{eqn:bilimmorAPP} commute,
for all $\DD$-morphisms $g:d\to d^\prime$. Note that this is equivalent to verifying that the diagrams
obtained by applying the fully faithful functor $X(t_{d^\prime})$ to \eqref{eqn:bilimmorAPP} commute.
The latter can be expanded as
\begin{flalign}
\begin{gathered}
\xymatrix@C=3.5em@R=2.5em{
\ar[ddd]_-{X(t_d)(\psi_d)}X(t_{d})(x_d) \ar[rrr]_-{\cong}^-{\varphi_{t_d}}~&~ ~&~ ~&~ \ar[ddd]^-{\psi_\ast} x_\ast\\
~&~ \ar[ul]_-{\cong}\ar[d]_-{X(t_{d^\prime})X(g)(\psi_d)} X(t_{d^\prime})X(g)(x_d) \ar[r]_-{\cong}^-{X(t_{d^\prime})(\varphi_g)} ~&~ X(t_{d^\prime})(x_{d^\prime}) \ar[d]^-{X(t_{d^\prime})(\psi_{d^\prime})}\ar[ru]_-{\cong}^-{\varphi_{t_{d^\prime}}}~&~\\
~&~ \ar[dl]^-{\cong}X(t_{d^\prime})X(g)(x^\prime_d) \ar[r]^-{\cong}_-{X(t_{d^\prime})(\varphi^\prime_{g})}~&~ X(t_{d^\prime})(x^{\prime}_{d^\prime})\ar[rd]^-{\cong}_-{\varphi^\prime_{t_{d^\prime}}}~&~\\
X(t_{d})(x^\prime_d)\ar[rrr]^-{\cong}_-{\varphi^\prime_{t_{d}}}~&~ ~& ~&~ x^\prime_\ast
}
\end{gathered} \qquad .
\end{flalign}
The bottom and top square commute as a consequence of \eqref{eqn:bilimcocycleAPP},
the left square commutes because $X$ is a pseudo-functor,
and the right and outer squares commute as a consequence of \eqref{eqn:APPtmp}.
Hence, the middle square commutes too, which completes the proof.
\end{proof}

In order to characterize the bilimit of our pseudo-functor $X$, it remains to
compute the essential image $\mathcal{X}\subseteq X(\ast)$ 
of the fully faithful functor \eqref{eqn:bilimtoterminalAPP}.
This then yields a factorization
\begin{flalign}
\bilim(X)~\stackrel{\simeq}{\longrightarrow}~\mathcal{X}~ \stackrel{\subseteq}{\longrightarrow} X(\ast)
\end{flalign}
which provides a model for the bilimit as a coreflective full subcategory of $X(\ast)$, i.e.\ there exists
a right adjoint coreflector $X(\ast)\to \mathcal{X}$.
\begin{lem}
The essential image of the functor \eqref{eqn:bilimtoterminalAPP}
is the full subcategory $\mathcal{X}\subseteq X(\ast)$  
consisting of all objects $x_\ast\in X(\ast)$ which satisfy the following conditions:
For every object $d\in\DD$, the $x_\ast$-component of the counit
\begin{flalign}
(\epsilon_{t_d})_{x_\ast}\,:\, X(t_d)\,X^\dagger(t_d)(x_\ast)~\stackrel{\cong}{\longrightarrow}~ x_\ast
\end{flalign}
of the adjunction $X(t_d)\dashv X^\dagger(t_d)$ associated with the terminal $\DD$-morphism
$t_d:d\to \ast$ is an isomorphism.
\end{lem}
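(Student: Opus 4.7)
The plan is to establish the two inclusions characterizing the essential image separately, leveraging the fact that each $X(t_d) \dashv X^\dagger(t_d)$ is a coreflection (the unit is a natural isomorphism because $X(t_d)$ is fully faithful by hypothesis) and that fully faithful functors reflect isomorphisms.

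For the necessity direction, suppose $x_\ast \in X(\ast)$ lies in the essential image of the projection \eqref{eqn:bilimtoterminalAPP}. Then there exists an object $(\{x_d\},\{\varphi_g\}) \in \bilim(X)$ together with an isomorphism $x_\ast \cong x_\ast^{\prime}$, where $x_\ast^{\prime}$ denotes the $\ast$-component of that object. The isomorphism $\varphi_{t_d}^{} : X(t_d)(x_d) \stackrel{\cong}{\to} x_\ast^{\prime}$ from the pseudo-cone data exhibits $x_\ast$ (up to isomorphism) as lying in the essential image of $X(t_d)$. By the standard characterization of essential images for coreflective full subcategory inclusions, this is equivalent to the counit component $(\epsilon_{t_d})_{x_\ast}$ being an isomorphism, for every $d\in\DD$.

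For the sufficiency direction, I would construct an explicit pseudo-cone out of any $x_\ast \in \mathcal{X}$. Define $x_d := X^\dagger(t_d)(x_\ast) \in X(d)$, for all $d\in\DD$. Given any $\DD$-morphism $g : d\to d^\prime$, terminality of $\ast\in\DD$ forces $t_{d^\prime} \, g = t_d$, so the pseudo-functor coherences for $X^\dagger$ provide a natural isomorphism $x_d = X^\dagger(t_d)(x_\ast) \cong X^\dagger(g)\,X^\dagger(t_{d^\prime})(x_\ast) = X^\dagger(g)(x_{d^\prime})$. Define $\varphi_g : X(g)(x_d) \to x_{d^\prime}$ as the composite of this isomorphism with the counit $(\epsilon_g)_{x_{d^\prime}}$ of $X(g)\dashv X^\dagger(g)$. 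The cocycle conditions \eqref{eqn:bilimcocycleAPP} then reduce to a routine check combining the pseudo-functor coherences of $X$ and $X^\dagger$, the naturality of the counits, and the triangle identities for the adjunctions $X(g)\dashv X^\dagger(g)$.

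The main obstacle is showing that each $\varphi_g$ is in fact an isomorphism, because the hypothesis on $x_\ast$ only asserts invertibility of the counits $(\epsilon_{t_d})_{x_\ast}$ for the terminal morphisms, not for arbitrary $g$. My plan is to apply the fully faithful (hence conservative) functor $X(t_{d^\prime})$ to $\varphi_g$. Using the pseudo-functorial isomorphism $X(t_{d^\prime})\,X(g) \cong X(t_d)$ and the triangle identity $X(t_{d^\prime})\,(\epsilon_{t_{d^\prime}})_{x_\ast} \circ (\eta_{t_{d^\prime}})_{X(t_{d^\prime})(x_{d^\prime})}^{} = \id$, one rewrites $X(t_{d^\prime})(\varphi_g)$ as a composite involving $(\epsilon_{t_d})_{x_\ast}$ and the inverse of $(\epsilon_{t_{d^\prime}})_{x_\ast}$, both of which are isomorphisms by hypothesis. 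Since $X(t_{d^\prime})$ reflects isomorphisms, $\varphi_g$ is itself an isomorphism. Finally, the pseudo-cone $(\{x_d\},\{\varphi_g\})$ so constructed has its $\ast$-component canonically isomorphic to $x_\ast$ via $(\epsilon_{\id_\ast})_{x_\ast}$ and the normalization coherence of $X^\dagger$, which establishes essential surjectivity onto $\mathcal{X}$ and concludes the proof.
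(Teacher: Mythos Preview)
Your proposal is correct and follows essentially the same approach as the paper's proof: both directions use the same constructions (the coreflection characterization of essential images for necessity, and the definition $x_d := X^\dagger(t_d)(x_\ast)$ with $\varphi_g$ built from pseudo-functor coherences and the counit $(\epsilon_g)$ for sufficiency), and both establish invertibility of $\varphi_g$ by applying the conservative functor $X(t_{d^\prime})$ and reducing to the invertibility of $(\epsilon_{t_d})_{x_\ast}$ and $(\epsilon_{t_{d^\prime}})_{x_\ast}$. The paper spells out the relevant commutative diagrams explicitly where you appeal to standard facts and routine checks, but the underlying argument is the same.
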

\begin{proof}
To show that the essential image is contained in $\mathcal{X}$,
consider any object $(\{x_d\},\{\varphi_g\})\in\bilim(X)$ and observe that 
we have isomorphisms $\varphi_{t_d} : X(t_d)(x_d) \to x_\ast$,
for every object $d\in\DD$. The counit condition then follows from the commutative diagram
\begin{flalign}
\begin{gathered}
\xymatrix@C=4em{
X(t_d)\,X^\dagger(t_d)(x_\ast) \ar[r]^-{(\epsilon_{t_d})_{x_\ast}}~&~x_\ast\\
\ar[u]_-{\cong}^-{X(t_d)\,X^\dagger(t_d)(\varphi_{t_d} )}X(t_d)\,X^\dagger(t_d) X(t_d)(x_d)\ar[r]^-{(\epsilon_{t_d})_{X(t_d)(x_d)}}~&~X(t_d)(x_d)\ar[u]^-{\cong}_-{\varphi_{t_d}}\\
\ar[u]_-{\cong}^-{X(t_d)(\eta_{t_d})_{x_d}}X(t_d)(x_d) \ar[ru]^-{\cong}_-{~~~~\id_{X(t_d)(x_d)}}~&~
}
\end{gathered}
\end{flalign}
where we recall that the unit $\eta_{t_d}$ is a natural isomorphism
because $X(t_d)$ is by hypothesis fully faithful.
\sk

To show that each object $x_\ast\in\mathcal{X}$ lies in the essential image,
let us define the tuple of objects
\begin{subequations}
\begin{flalign}
\big\{x_d \,:=\, X^\dagger(t_d)(x_\ast)\,\in\,X(d) \big\}
\end{flalign}
and the tuple of morphisms
\begin{flalign}
\left\{\begin{gathered}
\xymatrix@C=2em{
\ar@{=}[d] X(g)x_d \ar[rrr]^-{\varphi_g}~&~ ~&~ ~&~ x_{d^\prime}\ar@{=}[d]\\
 X(g)X^\dagger(t_d)(x_\ast) \ar[r]_-{\cong}~&~  X(g)X^\dagger(g)X^\dagger(t_{d^\prime})(x_\ast)\ar[rr]_-{(\epsilon_g)_{X^\dagger(t_{d^\prime})(x_\ast)}}~&~~&~X^\dagger(t_{d^\prime})(x_\ast)
}
\end{gathered}
\right\}\quad,
\end{flalign}
\end{subequations}
where in the unnamed isomorphism we used the unique 
factorization $t_d = t_{d^\prime}\, g$ of the terminal morphism. 
Note that the morphisms $\varphi_g$ are isomorphisms because, applying the fully faithful
functor $X(t_{d^\prime})$
to the relevant component of $\epsilon_g$,
one obtains the commutative diagram
\begin{flalign}
\begin{gathered}
\xymatrix@C=8em{
\ar[d]_-{\cong} X(t_{d^\prime})X(g)X^\dagger(g)X^\dagger(t_{d^\prime})(x_\ast)\ar[r]^-{X(t_{d^\prime})(\epsilon_g)_{X^\dagger(t_{d^\prime})(x_\ast)}}~&~X(t_{d^\prime})X^\dagger(t_{d^\prime})(x_\ast)\ar[d]_-{\cong}^-{(\epsilon_{t_{d^\prime}})_{x_\ast}}\\
X(t_{d})X^\dagger(t_d)(x_\ast)\ar[r]^-{\cong}_-{(\epsilon_{t_d})_{x_\ast}}~&~x_\ast
}
\end{gathered}
\end{flalign}
where the bottom horizontal and right vertical morphisms are isomorphisms by definition of the full subcategory
$\mathcal{X}\subseteq X(\ast)$. One directly checks that the tuple $(\{x_d\},\{\varphi_g\})$
introduced above defines an object in $\bilim(X)$. This object maps
under the functor \eqref{eqn:bilimtoterminalAPP} to $X^\dagger(t_\ast)(x_\ast) \cong x_\ast\in \mathcal{X}$, which 
completes the proof of essential surjectivity.
\end{proof}

\section{\label{app:Lorentz_geometry_details}Technical details for Theorem \ref{theo:WRCHKstack}}
Our proof of Theorem \ref{theo:WRCHKstack} requires the following facts 
about Lorentzian geometry, which we prove in this appendix.
\begin{propo} \label{prop:Cauchy_development:small_D-stable_neighbourhoods}
Let $p \in M$ be any point in any object $M \in \Loc$ 
and $U \subseteq M$ any neighborhood of $p$.
Then there exists a $D$-stable causally convex open subset $V \subseteq M$ such that $p \in V \subseteq U$.
\end{propo}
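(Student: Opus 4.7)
My plan is to construct $V$ as the Cauchy development in $M$ of a sufficiently small open neighborhood of $p$ inside a spacelike Cauchy surface of $M$ through $p$. By the refined Bernal--S\'anchez splitting theorem, there exists a smooth Cauchy temporal function $\tau$ on $M$, and then $\tilde\Sigma := \tau^{-1}(\tau(p))$ is a smooth spacelike Cauchy surface of $M$ containing $p$. Pick a nested sequence of open neighborhoods $S_n \subseteq \tilde\Sigma$ of $p$ with compact closures $\cl(S_n)$ in $\tilde\Sigma$ and $\bigcap_n \cl(S_n) = \{p\}$, for instance open geodesic balls with respect to the Riemannian metric that $\tilde\Sigma$ inherits as a spacelike submanifold of $M$. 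My candidate neighborhood is $V := D_M(S_n)$ for $n$ chosen large enough.

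The required properties of $V$ all follow from standard facts about Cauchy developments in globally hyperbolic spacetimes. Since $S_n$ is an open acausal subset of the Cauchy surface $\tilde\Sigma$, $D_M(S_n)$ is open and contains $S_n$, so $p \in V$. The $D$-stability $D_M(V) = V$ is immediate from the idempotency $D_M(D_M(S_n)) = D_M(S_n)$: given $q \in D_M(D_M(S_n))$, every inextendable causal curve through $q$ meets $D_M(S_n)$ at some $v$, but the same curve is also inextendable through $v$, so it must meet $S_n$ by definition of $D_M(S_n)$. For causal convexity of $V$ in $M$, I would use that, because $\tilde\Sigma$ is a Cauchy surface of $M$, one has $q \in D_M^{+}(S_n)$ if and only if $J_M^{-}(q) \cap \tilde\Sigma \subseteq S_n$, and analogously for $D_M^{-}(S_n)$ in terms of $J_M^{+}(q) \cap \tilde\Sigma$. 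Given a future-directed causal curve $\gamma$ from $q_1$ to $q_2$ in $V$ and any intermediate point $r$, monotonicity of $J_M^{\pm}$ along causal curves gives $J_M^{-}(r) \cap \tilde\Sigma \subseteq J_M^{-}(q_2) \cap \tilde\Sigma$ and $J_M^{+}(r) \cap \tilde\Sigma \subseteq J_M^{+}(q_1) \cap \tilde\Sigma$, placing $r$ in $D_M^{+}(S_n)$ if $q_2 \in D_M^{+}(S_n)$, and in $D_M^{-}(S_n)$ if $q_1 \in D_M^{-}(S_n)$; the remaining mixed cases are handled analogously or reduce to the observation that $S_n$ is achronal.

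It remains to ensure $V \subseteq U$ by choosing $n$ large. The key input is the standard fact that the Cauchy development of a compact achronal subset of a Cauchy surface in a globally hyperbolic spacetime is itself compact; applied to each $\cl(S_n)$ this yields a decreasing sequence of compacta. An argument based on the closedness of images of inextendable causal curves in globally hyperbolic spacetimes, together with subsequential compactness inside $D_M(\cl(S_1))$, shows that $\bigcap_n D_M(\cl(S_n)) = D_M\big(\bigcap_n \cl(S_n)\big) = D_M(\{p\}) = \{p\}$. Because $U$ is a neighborhood of $p$, the compact sets $D_M(\cl(S_n)) \setminus U$ then form a decreasing nested sequence with empty intersection, hence $D_M(\cl(S_n)) \subseteq U$ for some $n$, so $V = D_M(S_n) \subseteq D_M(\cl(S_n)) \subseteq U$. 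The main obstacle is the careful handling of these Lorentzian-geometric building blocks---existence of a Cauchy surface through a prescribed point, openness and causal convexity of the Cauchy development of an open subset of a Cauchy surface, and compactness of the Cauchy development of a compact achronal subset---which are classical but must be set up against the precise conventions fixed in the paper.
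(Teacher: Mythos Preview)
Your overall strategy---taking $V = D_M(S_n)$ for a small open patch $S_n$ of a spacelike Cauchy surface through $p$---is natural and genuinely different from the paper's approach, which instead builds $V$ as a chronological diamond $I_{V_k}^+(p_1)\cap I_{V_k}^-(p_2)$ inside a strictly convex normal, globally hyperbolic, causally convex neighborhood $V_k\subseteq U$ and appeals to Lemma~\ref{lem:Cauchy_development:diamonds_in_convex_normal} to obtain $D$-stability. Your verification that $D_M(S_n)$ is open, $D$-stable and causally convex is essentially fine (if sketchy in the mixed cases).

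There is, however, a genuine gap in the step ensuring $V\subseteq U$. The claimed ``standard fact'' that the Cauchy development of a compact achronal subset of a Cauchy surface is compact in a globally hyperbolic spacetime is \emph{false}. A counterexample is $M=\bbR^2$ with metric $g=-dt^2+(1+t^2)^4\,dx^2$: this is globally hyperbolic with Cauchy surface $\tilde\Sigma=\{t=0\}$ (null geodesics satisfy $\vert dx/dt\vert=(1+t^2)^{-2}$ and $\int_{\bbR}(1+t^2)^{-2}\,dt<\infty$, so every inextendable causal curve meets each level set of $t$), yet for $K=[-1,1]\times\{0\}$ the future Cauchy development $D_M^+(K)=\{(t,x):t\geq 0,\ \vert x\vert\leq 1-\int_0^t(1+s^2)^{-2}\,ds\}$ extends to all $t\geq 0$ because $\int_0^\infty(1+s^2)^{-2}\,ds=\pi/4<1$. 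Thus $D_M(K)$ is non-compact, and your nested-compacta argument for $D_M(\cl(S_n))\subseteq U$ breaks down. The same example shows that $D_M(S_n)$ can fail to be small even when $S_n$ is: one needs $\epsilon<\pi/4$ before $D_M((-\epsilon,\epsilon))$ becomes bounded, so there is no a priori size control on $D_M(S_n)$ in terms of $S_n$ alone. To make your route work you would need an independent way to confine $D_M(S_n)$---for instance by first passing to a relatively compact causally convex neighborhood $W\subseteq U$ (from strong causality) and arguing that $S_n$ small enough forces $D_M(S_n)\subseteq W$, but establishing that amounts to a local analysis close in spirit to the paper's convex-normal argument.
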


\begin{propo} \label{prop:Cauchy_development:closure_in_D_stable_image}
Let $f : M \to N$ be any $\Loc$-morphism with $D$-stable image,
i.e.\ $D_N(f(M)) = f(M)$, and $U \subseteq M$ any relatively compact subset.
Then one has $\cl(D_N(f(U))) \subseteq f(M)$, where the closure is taken inside $N$.
\end{propo}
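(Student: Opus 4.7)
The plan is to reduce the claim to the standard Lorentzian-geometric fact that, in a globally hyperbolic spacetime, the Cauchy development of a compact set is closed. First I would observe that relative compactness of $U\subseteq M$ makes $\cl_M(U)\subseteq M$ compact, whence continuity of the $\Loc$-morphism $f$ gives that $K := f(\cl_M(U))\subseteq N$ is a compact subset of $N$ containing $f(U)$. Monotonicity of the Cauchy development then yields $D_N(f(U))\subseteq D_N(K)$, so the closures in $N$ satisfy $\cl(D_N(f(U)))\subseteq \cl(D_N(K))$.

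The crux of the argument is to invoke the standard fact that $D_N(K)$ is closed in $N$ for $K$ compact. Applied separately to the future and past Cauchy developments $D_N^\pm(K)$, this rests on a limit curve argument: given $p_n\to p$ with $p_n\in D_N^+(K)$, any past-inextendible causal curve $\gamma$ from $p$ can be approximated by past-inextendible causal curves $\gamma_n$ from $p_n$; each $\gamma_n$ meets $K$ at some point $r_n$, and compactness of $K$ together with limit curve control (parametrizing by a global time function on $N$) forces a subsequence $r_n\to r\in K\cap \gamma$, so $p\in D_N^+(K)$. Accepting this closedness yields $\cl(D_N(K)) = D_N(K)$.

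Combining the two previous steps with one more application of monotonicity and the $D$-stability hypothesis $D_N(f(M)) = f(M)$, I would conclude $\cl(D_N(f(U)))\subseteq D_N(K) = D_N(f(\cl_M(U)))\subseteq D_N(f(M)) = f(M)$, which is the required inclusion. The main obstacle is the closedness of $D_N(K)$ for compact $K$: while this is a well-known consequence of the limit curve theorem in globally hyperbolic spacetimes, it requires care to ensure both that the approximating past-inextendible causal curves $\gamma_n$ from $p_n$ can be chosen to limit onto the specified $\gamma$ (rather than merely onto some past-inextendible causal curve through $p$), and that the intersection points $r_n$ accumulate at a point on $\gamma$ rather than elsewhere in $K$; both are routinely handled by the standard limit curve machinery in the global hyperbolicity literature.
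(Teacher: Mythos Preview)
Your route is different from the paper's and would be shorter if it worked: the paper never asserts that $D_N(K)$ is closed for compact $K$. Instead it bounds $D_N(f(\cl_M(U)))$ above by the double causal complement $f(\cl_M(U))''$, which is easily seen to be closed because global hyperbolicity makes $J_N(\text{compact})$ closed, hence the single causal complement open and the double complement closed. The paper then constructs an auxiliary relatively compact causally convex open $V$ with $f(\cl_M(U)) \subseteq V \subseteq f(M)$ and invokes the identity $V'' = D_N(V)$ (valid for relatively compact \emph{causally convex} open sets, Corollary~\ref{cor:D=''}) together with $D$-stability to conclude $f(\cl_M(U))'' \subseteq V'' = D_N(V) \subseteq D_N(f(M)) = f(M)$. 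So the paper trades your single closedness claim for the weaker fact that the double causal complement is closed, at the cost of extra work showing that this larger set still sits inside $f(M)$.

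The gap in your proposal is precisely the closedness of $D_N(K)$ for compact $K$, which you label a ``standard fact'' but do not substantiate. The limit-curve argument you sketch runs the limit curve theorem in the wrong direction: the standard statement takes a \emph{given} sequence $\gamma_n$ and extracts a limit curve, whereas you need to \emph{manufacture} approximants $\gamma_n$ from the nearby basepoints $p_n$ that converge to a \emph{prescribed} $\gamma$. When $\gamma$ has null segments this is genuinely delicate---if $p_n$ lies just outside $J^+(\gamma(s))$ for all $s$, there is no obvious causal curve from $p_n$ tracking $\gamma$---and it is not handled by the usual limit-curve machinery. I am not aware of a clean reference for this closedness claim for arbitrary compact $K$ (note $D_N(K) \subsetneq K''$ in general, so you cannot simply borrow the closedness of the double complement). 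Your final chain $\cl(D_N(f(U))) \subseteq D_N(K) \subseteq D_N(f(M)) = f(M)$ is correct \emph{granted} that closedness, but as written the argument has a genuine hole at exactly the step you flag as the main obstacle.
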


\begin{rem} \label{rem:Cauchy_development:fine_D-stable_covers}
In Subsections \ref{subsubsec:precostack:timeslice} and \ref{subsubsec:descent:timeslice}
we make use of the Grothendieck topology on $\Loc$
given by all $D$-stable causally convex open covers.
Proposition \ref{prop:Cauchy_development:small_D-stable_neighbourhoods}
shows that any $M \in \Loc$ has arbitrarily small $D$-stable causally convex 
open neighborhoods around each of its points.
In other words, the aforementioned Grothendieck topology 
contains arbitrarily fine refinements.
\end{rem}

The proofs of Propositions \ref{prop:Cauchy_development:small_D-stable_neighbourhoods}
and \ref{prop:Cauchy_development:closure_in_D_stable_image} below 
will require statements of a Lorentz geometric nature concerning the properties of Cauchy developments.
We refer to \cite{Minguzzi} for a comprehensive review of Lorentzian causality theory,
and cite this review rather than original sources to give a unified resource for the reader.
For the remainder of this appendix
a \textit{spacetime} will mean a time-oriented Lorentzian manifold.
(Objects of $\Loc$ are thus the oriented globally hyperbolic spacetimes of a chosen dimension.)
\begin{lem} \label{lem:Cauchy_development:Loc-morphisms}
Let $M$ and $N$ be any spacetimes of the same dimension,
$f : M \to N$ a time-orientation preserving isometric embedding with causally convex image,
and $U \subseteq M$ any subset. Then $f(D_M (U)) = D_N(f(U)) \cap f(M)$.
Moreover, if $M$ is globally hyperbolic and
$D_M(U) \subseteq M$ is a relatively compact subset, then $D_N(f(U)) \subseteq f(M)$.
\end{lem}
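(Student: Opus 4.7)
The plan is to prove the equality and the moreover assertion separately, exploiting causal convexity of $f(M) \subseteq N$ throughout. The main transfer mechanism between inextendable causal curves in $M$ and $N$ is the following: given any inextendable causal curve $\gamma$ in $N$ meeting $f(M)$, the connected components of $\gamma^{-1}(f(M))$ are open intervals (by openness of $f(M)$) along which $\gamma$ stays inside $f(M)$ (by causal convexity), and the pullback $f^{-1} \circ \gamma$ restricted to such an interval is a causal curve in $M$ that must be $M$-inextendable: any $M$-extension would force $\gamma$ back into the open $f(M)$ past the boundary of the interval, contradicting maximality. For the inclusion $f(D_M(U)) \subseteq D_N(f(U)) \cap f(M)$, this produces from any inextendable causal curve in $N$ through $f(p)$, with $p \in D_M(U)$, an inextendable causal curve in $M$ through $p$ which must meet $U$, so $\gamma$ meets $f(U)$. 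For the reverse inclusion, given $q = f(p) \in D_N(f(U)) \cap f(M)$ and an inextendable causal curve $\tilde\gamma$ in $M$ through $p$, I extend $f \circ \tilde\gamma$ to an inextendable causal curve $\gamma$ in $N$; its meeting with $f(U) \subseteq f(M)$ is connected to $f(p) \in f(M)$ by a sub-arc which, by causal convexity, stays in $f(M)$, pulls back to $M$, and must lie on $\tilde\gamma$ itself (otherwise it would extend $\tilde\gamma$), so $\tilde\gamma$ meets $U$.

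For the moreover assertion, I argue by contradiction: assume $q \in D_N(f(U)) \setminus f(M)$. From $U \subseteq D_M(U)$ being relatively compact and $f$ being a topological embedding onto the open $f(M)$, the closure $\mathrm{cl}_N(f(U))$ is a compact subset of $f(M)$. Pick any inextendable causal curve $\gamma$ in $N$ through $q = \gamma(t_0)$; by assumption it meets $f(U)$. A crucial observation is that $\gamma$ cannot meet $f(U)$ at both a past and a future parameter: meetings at $s^\prime < t_0 < s$ would, by causal convexity of $f(M)$, force $\gamma|_{[s^\prime,s]} \subseteq f(M)$, hence $q \in f(M)$ — contradiction. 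Assume (the past case being symmetric) that $\gamma$ meets $f(U)$ only at future parameters, and let $s$ be the first parameter with $\gamma(s) \in \mathrm{cl}_N(f(U)) \subseteq f(M)$, so $\gamma|_{[t_0,s)}$ misses $f(U)$. Let $r$ be the last parameter in $[t_0,s]$ with $\gamma(r) \notin f(M)$; then $\gamma(r) \in \partial f(M)$, $\gamma|_{(r,s]} \subseteq f(M)$, and the pullback $\tilde\gamma := f^{-1} \circ \gamma|_{(r,s]}$ is past-inextendable in $M$.

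Since $M$ is globally hyperbolic, hence strongly causal, the past-inextendable curve $\tilde\gamma$ cannot remain inside the compact set $\mathrm{cl}_M(D_M(U))$, so there exists $t^\ast \in (r,s)$ with $p^\ast := \tilde\gamma(t^\ast) \notin D_M(U)$. Pick an inextendable causal curve $\delta$ in $M$ through $p^\ast$ which misses $U$, and form $\gamma^{\mathrm{new}}$ by concatenating, at $\gamma(t^\ast) = f(p^\ast)$, the portion $\gamma|_{(a,\,t^\ast]}$ (past-inextendable and passing through $q$) with a future-inextendable causal continuation in $N$ of the future part of $f \circ \delta$; once this continuation exits $f(M)$ through $\partial f(M)$, any future extension inside $N \setminus f(M)$ suffices to reach inextendability and automatically avoids $f(U) \subseteq f(M)$. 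Then $\gamma^{\mathrm{new}}$ is an inextendable causal curve through $q$ missing $f(U)$ — the past portion $\gamma|_{(a,\,t_0]}$ by the one-sided meeting property, the middle $\gamma|_{[t_0,\,t^\ast]}$ by minimality of $s$, and the future portion by construction — contradicting $q \in D_N(f(U))$. The main obstacle I anticipate lies in this final parameter choreography: ensuring that the cut-and-paste curve is genuinely causal and inextendable in $N$ while provably avoiding $f(U)$ on every segment; the one-sided meeting argument via causal convexity of $f(M)$ is the essential device that makes the construction unobstructed.
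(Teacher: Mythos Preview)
Your proof is correct and follows essentially the same strategy as the paper's. For the equality $f(D_M(U)) = D_N(f(U)) \cap f(M)$, both you and the paper use the transfer mechanism between inextendable causal curves in $M$ and $N$ afforded by causal convexity of $f(M)$; for the moreover statement, both build a contradiction via a cut-and-paste construction that produces an inextendable causal curve through the offending point which misses $f(U)$, using non-partial imprisonment of $M$ to find a parameter outside $D_M(U)$ at which to splice in an avoiding curve.

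The only notable difference is organizational: the paper parametrizes $\gamma^{-1}(f(M))$ as a single open interval $(a,b)$ (by causal convexity) and shows $0 \in (a,b)$ by ruling out $b \le 0$ and $a \ge 0$, splicing the avoiding curve on the \emph{past} side and keeping $\gamma\vert_{[t_0,\infty)}$ on the future. Your setup assumes $q \notin f(M)$ directly, establishes the one-sided meeting property, introduces the first-contact parameter $s$ with $\mathrm{cl}_N(f(U))$ and the last-exit parameter $r$ from $f(M)$, and then splices on the \emph{future} side. The paper's single-interval bookkeeping is somewhat cleaner, but your version is equally valid and the underlying causal-convexity and non-imprisonment ingredients are identical.
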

\begin{proof}
The inclusion $f(D_M(U)) \subseteq D_N(f(U)) \cap f(M)$ 
follows from the fact that, under the above assumptions, 
each inextendable causal curve in $N$ admits a unique restriction 
(along $f$) to an inextendable causal curve in $M$. 
For the inclusion $f(D_M(U)) \supseteq D_N(f(U)) \cap f(M)$, 
take $p \in M$ such that $f(p) \in D_N(f(U))$ and consider 
an inextendable causal curve $\gamma$ in $M$ through $p$. 
Under the above assumptions, there exists an inextendable causal 
curve $\hat{\gamma}$ in $N$ that restricts (along $f$) to $\gamma$. 
By construction, $\hat{\gamma}$ goes through $f(p) \in D_N(f(U))$. 
This entails that $\hat{\gamma}$ hits $f(U)$ and therefore $\gamma$ hits $U$.
This shows that $p \in D_M(U)$. 
\sk

It remains to show that $D_N(f(U)) \subseteq f(M)$ when $D_M(U) \subseteq M$ 
is a relatively compact subset and $M$ is globally hyperbolic.
We will in fact not need global hyperbolicity of $M$,
but merely a weaker causal property which is implied by it:
Every globally hyperbolic spacetime $M$ is also non-partially imprisoning \cite[Definition 4.68]{Minguzzi},
i.e.\ inextendable causal curves in $M$ are proper maps.
Take $p \in D_N(f(U))$,
and let $\gamma : \mathbb{R} \to N$ be any future-directed inextendable causal curve with $\gamma(0) = p$.
Define $(a,b) := \gamma^{-1}(f(M))$, which is an interval because $f(M)\subseteq N$ is causally convex.
We show that $a < 0 < b$,
from which it follows that $p = \gamma(0) \in f(M)$.
\sk

Assume that $b \leq 0$.
Let $\hat{\gamma} : (a,b) \to M$ be the unique restriction of $\gamma$ along $f$,
i.e.\ $f \circ \hat{\gamma} = \gamma\vert_{(a,b)}$.
Then $\hat{\gamma}$ is a future-directed inextendable causal curve in $M$.
Our hypotheses imply that $\cl(D_M(U)) \subseteq M$ is compact and $\hat{\gamma}$ is a proper map,
so $\hat{\gamma}^{-1}(\cl(D_M(U))) \subseteq (a,b)$ is compact.
Thus there exists some $t_0 \in (a,b)$ with $\hat{\gamma}(t) \not \in D_M(U)$ for all $t_0 \leq t < b$.
It follows that there exists a future-directed past-inextendable causal curve
$\eta$ in $M$ with future-endpoint $\hat{\gamma}(t_0)$ which does not intersect $U$.
The concatenation of $\gamma\vert_{[t_0, \infty)}$ after $f\circ \eta$
is thus a future-inextendable causal curve in $N$ which does not intersect $f(U)$.
Let $\delta$ be any future-directed inextendable causal curve in $N$
extending the above concatenated curve.
It follows by past-inextendability of $\eta$ in $M$ and causal convexity of $f(M) \subseteq N$ 
that no past extension of $f \circ \eta$ in $N$ intersects $f(U) \subseteq f(M)$.
So, $\delta$ never intersects $f(U)$.
But our assumption $b \leq 0$ gives that
$\delta$ passes through $p = \delta(0) = \gamma\vert_{[t_0, \infty)}(0)$,
which is a contradiction with $p \in D_N(f(U))$.
We conclude that $b > 0$.
A similar argument shows that $a < 0$.
\end{proof}

Our proof of Proposition \ref{prop:Cauchy_development:small_D-stable_neighbourhoods}
will make key use of strictly convex normal neighborhoods \cite[Definition 2.3]{Minguzzi},
via the Lemma \ref{lem:Cauchy_development:diamonds_in_convex_normal} below.
It is generically true that, under the exponential map
$\exp_p : U_p \subseteq T_p M \to U \subseteq M$ at a point $p \in M$,
the image of the future (past) light-cone in $U_p \subseteq T_p M$ is included in
the causal future $J_U^+(p)$ (causal past $J_U^-(p)$, respectively) of $p$.
Similar inclusions hold for the time-cone and chronological future/past $I_U^\pm(p)$.
Strictly convex normal neighborhoods $U \subseteq M$ have the special property
that these inclusions are equalities \cite[Corollary 2.10]{Minguzzi}.
\begin{lem} \label{lem:Cauchy_development:diamonds_in_convex_normal}
Let $M$ be a spacetime,
$U \subseteq M$ a strictly convex normal, globally hyperbolic, 
causally convex open subset and $p_1, p_2 \in U$.
Then $I_U^+(p_1) \cap I_U^-(p_2) \subseteq M$ is a $D$-stable relatively compact 
open subset.
\end{lem}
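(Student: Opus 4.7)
My approach splits the claim into three parts: (i) $A := I_U^+(p_1) \cap I_U^-(p_2)$ is open, (ii) $A$ is relatively compact in $M$, and (iii) $D_M(A) = A$. Claim (i) is immediate since $A$ is the intersection of two open subsets of $U \subseteq M$. For (ii), observe $A \subseteq J_U^+(p_1) \cap J_U^-(p_2)$; the latter is compact in $U$ by global hyperbolicity of $U$, and since $M$ is Hausdorff, this compact set is closed in $M$, so the closure of $A$ in $M$ lies inside it.

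For (iii), my plan is to first establish $D_U(A) = A$ inside the globally hyperbolic spacetime $U$, and then lift to $D_M(A) = A$ via Lemma \ref{lem:Cauchy_development:Loc-morphisms} applied to the inclusion $\iota : U \hookrightarrow M$, which is a time-orientation preserving isometric embedding with causally convex image. The first part of the lemma yields $D_U(A) = D_M(A) \cap U$, so $A = D_M(A) \cap U$. Since $U$ is globally hyperbolic and $D_U(A) = A$ is relatively compact in $U$ by (ii), the second part provides $D_M(A) \subseteq U$, whence $D_M(A) = D_M(A) \cap U = A$.

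To prove $D_U(A) = A$, I would use the elementary inclusion $D_U(X \cap Y) \subseteq D_U(X) \cap D_U(Y)$, valid for any subsets $X, Y \subseteq U$ since an inextendable causal curve meeting $X \cap Y$ at a point meets both $X$ and $Y$ individually. Applied to $A$, this reduces the problem to showing $D_U(I_U^+(p_1)) \subseteq I_U^+(p_1)$ and the symmetric statement for $I_U^-(p_2)$. Decomposing $D_U = D_U^+ \cup D_U^-$, the inclusion $D_U^+(I_U^+(p_1)) \subseteq I_U^+(p_1)$ is immediate: any $q \in D_U^+(I_U^+(p_1))$ lies in $J_U^+(I_U^+(p_1))$, and concatenating a timelike curve from $p_1$ into $I_U^+(p_1)$ with a causal curve onward to $q$ yields a timelike curve from $p_1$ to $q$.

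The main obstacle is the reverse-time inclusion $D_U^-(I_U^+(p_1)) \subseteq I_U^+(p_1)$: for each $q \notin I_U^+(p_1)$, one must construct a future-inextendable causal curve through $q$ in $U$ that avoids $I_U^+(p_1)$. Here the strict convex normality of $U$ enters crucially: by \cite[Corollary 2.10]{Minguzzi}, $I_U^+(p_1)$ coincides with the image under the exponential map at $p_1$ of the future time-cone in $T_{p_1}U$. The strategy is to select a future-directed null vector at $q$ pointing suitably away from $p_1$ in this exponential chart, so that the corresponding null geodesic remains outside $I_U^+(p_1)$; since $U$ is globally hyperbolic, this null geodesic extends future-inextendably within $U$, supplying the desired curve. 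The delicate point is verifying that such a direction always exists when $q \notin I_U^+(p_1)$, which amounts to a geometric argument on the positioning of the future null cone at $q$ relative to that at $p_1$. The symmetric argument handles $I_U^-(p_2)$, completing the proof.
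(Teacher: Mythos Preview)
Your outline matches the paper's proof almost exactly: openness and relative compactness are handled as you say, the reduction $D_U(I_U^+(p_1)\cap I_U^-(p_2))\subseteq D_U(I_U^+(p_1))\cap D_U(I_U^-(p_2))$ is the same, and the lift from $D_U$ to $D_M$ via Lemma~\ref{lem:Cauchy_development:Loc-morphisms} is exactly the paper's route. The $D_U^+$ half of your argument is fine (it is just the push-up lemma).

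The gap is in your $D_U^-$ step. You propose to pick a future-directed null geodesic from $q\notin I_U^+(p_1)$ ``pointing suitably away from $p_1$ in the exponential chart'' so that it never enters $I_U^+(p_1)$. The problem is that only geodesics \emph{through $p_1$} are straight lines in the chart $\exp_{p_1}$; a null geodesic emanating from $q$ has no reason to remain straight in that chart, so an initial direction ``away from $p_1$'' gives no control over whether the curve later bends into $I_U^+(p_1)$. The ``geometric argument on the positioning of the future null cone at $q$ relative to that at $p_1$'' you allude to does not have an obvious formulation in a general convex normal neighbourhood.

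The paper sidesteps this by using a null geodesic from $p_1$ rather than from $q$. Concretely: start with \emph{any} inextendable future-directed causal curve $\gamma$ through $q\in D_U(I_U^+(p_1))$, and let $t_0$ be the infimum of parameter values after which $\gamma$ lies in $I_U^+(p_1)$. Then $\gamma(t_0)\in\partial I_U^+(p_1)=J_U^+(p_1)\setminus I_U^+(p_1)$, so by convex normality \cite[Corollary 2.10]{Minguzzi} either $\gamma(t_0)=p_1$ or $\gamma(t_0)=\exp_{p_1}(v)$ for a unique future null $v$. The future continuation $t\mapsto\exp_{p_1}((1+t)v)$ of this radial null geodesic stays in $J_U^+(p_1)\setminus I_U^+(p_1)$ precisely because it \emph{is} straight in the $\exp_{p_1}$ chart. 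Concatenating $\gamma\vert_{(-\infty,t_0]}$ with this null ray yields an inextendable causal curve through $q$ missing $I_U^+(p_1)$; if $t_0\ge 0$ this contradicts $q\in D_U(I_U^+(p_1))$, forcing $t_0<0$ and hence $q=\gamma(0)\in I_U^+(p_1)$. Replacing your null-geodesic-from-$q$ idea with this construction closes the gap.
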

\begin{proof}
Relative compactness both in $U$ and in $M$ follows from the inclusion
$I_U^+(p_1) \cap I_U^-(p_2) \subseteq J_U^+(p_1) \cap J_U^-(p_2)$ 
into a compact (by global hyperbolicity of $U$) subset.
We will show that, for $p \in U$, $I_U^\pm(p) \subseteq U$ is $D$-stable.
This entails $D$-stability of $I_U^+(p_1) \cap I_U^-(p_2) \subseteq U$ via
\begin{flalign}
D_U\left(I_U^+(p_1) \cap I_U^-(p_2)\right) \, \subseteq\, 
D_U\left(I_U^+(p_1)\right) \cap D_U\left(I_U^-(p_2)\right) \,=\, I_U^+(p_1) \cap I_U^-(p_2)\quad .
\end{flalign}
Then, because $U$ is globally hyperbolic,
Lemma \ref{lem:Cauchy_development:Loc-morphisms} applies to
the inclusion $U \hookrightarrow M$ and the subset 
$I_U^+(p_1) \cap I_U^-(p_2) \subseteq U$ to give that
$D_U(I_U^+(p_1) \cap I_U^-(p_2))
= D_M(I_U^+(p_1) \cap I_U^-(p_2)) \cap U$ 
and $D_M(I_U^+(p_1) \cap I_U^-(p_2)) \subseteq U$, hence 
\begin{flalign}
D_M\left(I_U^+(p_1) \cap I_U^-(p_2)\right) \,=\, D_U\left(I_U^+(p_1) \cap I_U^-(p_2)\right) \,=\, 
I_U^+(p_1) \cap I_U^-(p_2)\quad .
\end{flalign}

Given $p \in U$, let us now show that $I_U^+(p) \subseteq U$ is $D$-stable.
The proof for $I_U^-(p)$ is similar.
Let $q \in D_U(I_U^+(p))$, 
so any inextendable future-directed causal curve $\gamma : \mathbb{R} \to U$
with $\gamma(0) = q$ intersects $I_U^+(p)$ at some $t \in \mathbb{R}$.
Then $\gamma$ itself exhibits $\gamma(t^\prime) \in J_U^+(I_U^+ (p)) = I_U^+ (p)$
for all $t^\prime \geq t$, where we have used the
``push-up lemma'' \cite[Theorem 2.24]{Minguzzi} in the last equality.
Because $I_U^+(p) \subseteq U$ is open, it follows that
there exists $t_0 \in \mathbb{R}$ such that
$\gamma (t) \in I_U^+(p)$ for all $t > t_0$
and $\gamma(t) \not \in I_U^+(p)$ for all $t \leq t_0$.
In particular, $\gamma(t_0) \in \partial I_U^+(p)$ is a boundary point.
We show that $t_0 < 0$, from which $q = \gamma(0) \in I_U^+(p)$ follows.
\sk

Assume $t_0 \geq 0$.
Because $U$ is globally hyperbolic and hence $J_U^+(p) \subset U$ is closed,
we have $\gamma(t_0) \in \partial I_U^+(p) = \partial J_U^+(p) = J_U^+(p) \setminus I_U^+(p)$.
It follows from convex normality of $U \subseteq M$
\cite[Corollary 2.10]{Minguzzi} that either $\gamma(t_0) = p$
or there exists a unique future-directed null vector $v \in T_pM$
such that $\gamma(t_0) = \exp_p(v)$.
In the latter case, $t \mapsto \eta(t) := \exp_p((t + 1) v)$
describes an inextendible future-directed null geodesic in $U$ through $p$,
with $\eta(0) = \gamma(t_0)$ and
$\eta(t) \in J_U^+(p) \setminus I_U^+(p)$ for all $t \geq 0$.
In the former case of $\gamma(t_0) = p$, pick any future-directed null vector $v \in T_p M$.
Then $t \mapsto \eta(t) := \exp_p(t v)$ gives again an
inextendable future-directed null geodesic with $\eta(0) = p = \gamma(t_0)$
and $\eta(t) \in J_U^+(p) \setminus I_U^+(p)$ for all $t \geq 0$.
In either case, concatenating $\eta\vert_{[0,\infty)}$ after $\gamma\vert_{(-\infty,t_0]}$
gives an inextendable causal curve in $U$ through
$q = \gamma(0) \in D_U(I_U^+(p))$ that does not hit $I_U^+(p)$, a contradiction.
\end{proof}

The preceding Lemma \ref{lem:Cauchy_development:diamonds_in_convex_normal} 
allows us to construct the arbitrarily small $D$-stable neighborhoods 
stipulated in Proposition \ref{prop:Cauchy_development:small_D-stable_neighbourhoods}.
\begin{proof}[Proof of Proposition \ref{prop:Cauchy_development:small_D-stable_neighbourhoods}]
It is a standard result \cite[Theorems 1.35 and 2.7]{Minguzzi} that,
because the spacetime $M$ is globally hyperbolic and hence strongly causal,
each point $p \in M$ has a nested neighborhood basis $\{V_k \}_{k \in \mathbb{N}}$ consisting of
strictly convex normal, globally hyperbolic, causally convex,
relatively compact and open neighborhoods $V_k \subseteq M$.
Thus for sufficiently large $k$, $V_k \subseteq U$.
Pick $p_1 \in I_{V_k}^-(p)$ and $p_2 \in I_{V_k}^+(p)$.
By Lemma \ref{lem:Cauchy_development:diamonds_in_convex_normal},
$V := I_{V_k}^+(p_1) \cap I_{V_k}^-(p_2) \subseteq M$ is a $D$-stable
relatively compact open subset. 
Furthermore, by construction $p \in V \subseteq V_k \subseteq U$.
That the chronological diamond is also causally convex in $V_k$ 
and hence in $M$ is a straightforward consequence of the
``push-up lemma'' \cite[Theorem 2.24]{Minguzzi}.
\end{proof}

Our proof of Proposition \ref{prop:Cauchy_development:closure_in_D_stable_image}
will make use of certain subsets of a spacetime $M$ which bound above or below the 
Cauchy development of $U \subseteq M$.
These subsets are constructed by taking the double causal complement
(i.e.\ the causal complement of the causal complement) of $U$.
Recall that, for $U \subseteq M$,
the \textit{causal complement} $U^\prime$ of $U$ in $M$ is the
maximal subset of $M$ that is causally disjoint from $U$, i.e.\
\begin{flalign}
U^\prime \,:=\,  M \setminus J_M(U) \,=\, M \setminus \big(J_M^+(U) \cup J_M^-(U)\big)\,\subseteq\, M\quad.
\end{flalign}
\begin{lem} \label{lem:Cauchy_development:in_double_complement}
Let $M$ be any spacetime. Then $D_M(U) \subseteq U^{\prime\prime}$ 
for any subset $U \subseteq M$.
\end{lem}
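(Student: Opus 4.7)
The plan is to unfold both defined notions and then use the defining property of the Cauchy development. Writing out the double complement, $U^{\prime\prime} = M \setminus J_M(U^\prime)$, so the claim $D_M(U) \subseteq U^{\prime\prime}$ is equivalent to the contrapositive statement: every point of $J_M(U^\prime)$ lies outside $D_M(U)$. Equivalently, I will show that if $p \in D_M(U)$ and $r \in M$ is causally related to $p$ (or equal to $p$), then $r \in J_M(U)$; this shows that no $r \in U^\prime$ is causally related to $p$, which is exactly $p \in U^{\prime\prime}$.

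First I would fix $p \in D_M(U)$ and $r \in J_M(p) \cup \{p\}$, and, without loss of generality (replacing past by future otherwise), assume $r \in J_M^-(p)$. Choose a future-directed causal curve from $r$ to $p$ and extend it past-inextendibly before $r$ and future-inextendibly after $p$ to obtain an inextendable causal curve $\gamma$ in $M$ through both points. By the very definition of the Cauchy development, $\gamma$ meets $U$ at some point $s$.

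Next comes a short case analysis on the position of $s$ along $\gamma$. If $s$ precedes $r$, then $r \in J_M^+(s) \subseteq J_M^+(U) \subseteq J_M(U)$. If $s$ lies on the portion from $r$ to $p$ (including $s = r$ or $s = p$) or in the future extension beyond $p$, then $r \in J_M^-(s) \subseteq J_M^-(U) \subseteq J_M(U)$ (using transitivity of the causal relation, since $r \in J_M^-(p)$ and $s \in J_M^+(p)$ in the last subcase). In every case $r \in J_M(U)$, as desired.

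I don't expect any serious obstacle here: the argument is a direct manipulation of inextendable causal curves and the defining properties of $D_M$, $J_M$, and the causal complement; it does not require global hyperbolicity of $M$ or any of the more delicate causality conditions. The only mild subtlety is keeping the sign cases (past vs.\ future) straight, which is handled cleanly by the symmetric use of $J_M = J_M^+ \cup J_M^-$.
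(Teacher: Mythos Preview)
Your proof is correct and takes essentially the same approach as the paper: both construct an inextendable causal curve through the two given points and invoke the defining property of $D_M(U)$. The paper's version is phrased via the complementary inclusion $J_M(U^\prime) \subseteq M \setminus D_M(U)$, which lets it avoid your case split on the position of $s$ (if the curve passes through some $q \in U^\prime$, it cannot meet $U$ at all, else $q \in J_M(U)$), but the logical content is the same.
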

\begin{proof}
Let us check the equivalent complementary inclusion
$M \setminus D_M(U) \supseteq J_M(M \setminus J_M(U))$.
Take $p \in J_M(M \setminus J_M(U))$. Then there exists an inextendible
future-directed causal curve $\gamma$ through $p$
that hits some $q \in M \setminus J_M(U)$. Such $\gamma$ does not meet $U$
(otherwise $q \in J_M(U)$, a contradiction), hence
$p \in M \setminus D_M(U)$ is not in the Cauchy development of $U$.
\end{proof}

\begin{lem} \label{lem:Cauchy_development:double_complement_in_hull}
Let $M$ be a globally hyperbolic spacetime.
Then $U^{\prime\prime} \subseteq D_M (J^{+\cap -}_M(U))$
for any relatively compact open subset $U \subseteq M$.
\end{lem}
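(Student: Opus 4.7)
The plan is to show that for every $p \in U^{\prime\prime}$, every inextendible causal curve $\gamma$ through $p$ meets $J_M^{+\cap -}(U)$, which establishes $p \in D_M(J_M^{+\cap -}(U))$.

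First I would reformulate the hypothesis exactly as in Lemma \ref{lem:Cauchy_development:in_double_complement}: $p \in U^{\prime\prime}$ is equivalent to $J_M(p) \subseteq J_M(U) = J_M^+(U) \cup J_M^-(U)$, so the entire image of $\gamma$ lies in $J_M^+(U) \cup J_M^-(U)$. Next I would exploit relative compactness of $U$ to control the ends of $\gamma$. Parameterize $\gamma$ as future-directed with $\gamma(0) = p$; by global hyperbolicity, $J_M^+(p) \cap J_M^-(\cl_M(U))$ is compact, and since $\gamma$ is proper (the non-partial-imprisoning property used in the proof of Lemma \ref{lem:Cauchy_development:Loc-morphisms}), $\gamma(t)$ eventually leaves this compact set. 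Combined with the inclusion $\gamma \subseteq J_M^+(U) \cup J_M^-(U)$ from the previous sentence, this forces $\gamma(t) \in J_M^+(U) \setminus J_M^-(U)$ for $t$ sufficiently large and symmetrically $\gamma(t) \in J_M^-(U) \setminus J_M^+(U)$ for $t$ sufficiently negative.

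The core step is a connectedness argument. Setting $T_\pm := \gamma^{-1}(J_M^\pm(U))$, one sees that $T_+$ is upward-closed and $T_-$ is downward-closed in $\mathbb{R}$ (from $u \leq \gamma(t)$ with $u \in U$ one deduces $u \leq \gamma(t^\prime)$ for all $t^\prime \geq t$), that $T_+ \cup T_- = \mathbb{R}$, and that neither set equals $\mathbb{R}$. Writing $t_+ := \inf T_+$ and $t_- := \sup T_-$, the cover condition precludes $t_+ > t_-$, so $t_+ \leq t_-$. Whenever $t_+ < t_-$ strictly, any $t \in (t_+, t_-)$ lies in $T_+ \cap T_-$ by the monotonicity properties, giving $\gamma(t) \in J_M^+(U) \cap J_M^-(U) = J_M^{+\cap -}(U)$, as required.

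The main obstacle is the degenerate edge case $t_+ = t_- =: t_0$. Relative compactness of $U$ and closedness of the causal relation on compact subsets let me extract from the witness sequences $u_n^+ \leq \gamma(t_0 + 1/n)$ and $\gamma(t_0 - 1/n) \leq u_n^-$ limits $u^\pm \in \cl_M(U)$ satisfying $u^+ \leq \gamma(t_0) \leq u^-$. If $u^\pm$ happen to lie in $U$ one concludes immediately, so the real work lies in handling $u^\pm \in \partial U$: the tool I expect to deploy is the push-up lemma (\cite[Theorem 2.24]{Minguzzi}), which together with openness of $U$ upgrades $u^\pm$ to genuine $u^\prime \in U$ realizing $u^\prime \leq \gamma(t_0) \leq u^{\prime\prime}$ whenever at least one of the original inequalities is chronological. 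The residual sub-case in which $u^\pm$ both sit precisely on the null cones through $\gamma(t_0)$ is where I expect to need the sharpest use of $p \in U^{\prime\prime}$: a suitable null extension of such a grazing inextendible causal curve would produce a point of $J_M(p)$ lying outside $J_M(U)$, contradicting the hypothesis, so this configuration must in fact be excluded.
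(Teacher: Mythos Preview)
Your overall strategy is exactly the paper's: show that any inextendable causal curve through $p\in U^{\prime\prime}$ stays inside $J_M^+(U)\cup J_M^-(U)$, find points of the curve in each piece separately, and use a connectedness argument to produce a point in the intersection $J_M^{+\cap -}(U)$.  Your use of the non-partially-imprisoning property to show that $\gamma(t)\notin J_M^-(U)$ for large $t$ (and symmetrically) is correct; the paper achieves the same end by picking Cauchy surfaces $\Sigma^\pm$ of $M$ lying strictly to the future and past of the compact set $\cl(U)$ and looking at the unique crossing points $\gamma(t^\pm)\in\Sigma^\pm$.  Either device works.

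The real gap in your proposal is the ``degenerate edge case'' $t_+=t_-$: you treat it as a genuine obstacle requiring limit extraction in $\cl(U)$, push-up, and a delicate null-geodesic analysis.  In fact this case never arises, because for $U$ open the sets $J_M^\pm(U)$ are themselves open.  Indeed, if $q\in J_M^+(U)$ then $u\le q$ for some $u\in U$; since $U$ is open it meets $I_M^-(u)$, so there is $u'\in U$ with $u'\ll u\le q$, and push-up gives $u'\ll q$.  Thus $J_M^+(U)=I_M^+(U)$ is open.  Consequently $T_\pm\subseteq\mathbb{R}$ are open and, by your monotonicity observation, $T_+=(t_+,\infty)$ and $T_-=(-\infty,t_-)$ with strict endpoints; the cover condition $T_+\cup T_-=\mathbb{R}$ then forces $t_+<t_-$ strictly.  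This is exactly how the paper closes the argument (``because $J_M^\pm(U)\subseteq M$ are open subsets, the curve $\gamma$ is continuous, \dots''), and it replaces your entire final paragraph.  Your proposed treatment of the degenerate case is not fully justified as written---for instance, $u^+\in\partial U$ does not by itself guarantee $U\cap I_M^-(u^+)\neq\emptyset$---but the point is moot once you note openness.
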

\begin{proof}
Take $p \in U^{\prime\prime}$ and any inextendable future-directed causal curve $\gamma$ in $M$ through $p$.
Observe first that $\gamma$ lies inside $J_M(U) = M \setminus U^\prime$ because $U^\prime$ and $U^{\prime\prime}$ are causally disjoint.
Using that $U\subseteq M$ is relatively compact and $M$ is globally hyperbolic,
there exist Cauchy surfaces $\Sigma^+$ 
and $\Sigma^- \subseteq I_M^-(\Sigma^+)$ of $M$ 
which lie in the future and past of $U$ respectively, i.e.\
$\Sigma^\pm \cap J_M^\mp(U) = \emptyset$.
Then $\gamma$ intersects $\Sigma^\pm$ at a unique point, call it $\gamma(t^\pm) \in \Sigma^\pm$. 
It follows that
$\gamma(t^\pm) \in \Sigma^\pm \cap J_M(U) = \Sigma^\pm \cap J_M^\pm(U)$.
Note that $t^+ > t^-$ because $\gamma$ is future-directed and $\Sigma^- \subseteq I_M^-(\Sigma^+)$
and recall that $\gamma(t) \in J_M(U)$ for all $t \in [t^-, t^+]$.
Because $J_M^\pm(U) \subseteq M$ are open subsets,
the curve $\gamma$ is continuous,
and the endpoints of the interval are such that $\gamma(t^\pm) \in J_M^\pm(U)$,
there exists $t \in (t^-, t^+)$ with $\gamma(t) \in J_M^{+ \cap -}(U)$.
This shows that $p \in D_M(J_M^{+ \cap -}(U))$. 
\end{proof}

The next statement is an immediate consequence 
of Lemmas \ref{lem:Cauchy_development:in_double_complement}
and \ref{lem:Cauchy_development:double_complement_in_hull}.
\begin{cor}\label{cor:D=''}
Let $M$ be a globally hyperbolic spacetime.
Then $U^{\prime\prime} = D_M(U)$ for any relatively compact causally convex open subset $U \subseteq M$.
\end{cor}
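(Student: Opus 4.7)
The plan is to chain together the two preceding lemmas, which already do almost all the work, and then exploit the hypothesis that $U$ is causally convex to cancel the causally convex hull appearing in Lemma \ref{lem:Cauchy_development:double_complement_in_hull}.

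First, Lemma \ref{lem:Cauchy_development:in_double_complement} directly gives the inclusion $D_M(U)\subseteq U''$, without any hypothesis on $U$ beyond being a subset of $M$. For the reverse inclusion $U''\subseteq D_M(U)$, I would invoke Lemma \ref{lem:Cauchy_development:double_complement_in_hull}, which yields $U''\subseteq D_M(J_M^{+\cap -}(U))$ since $U$ is by hypothesis a relatively compact open subset of the globally hyperbolic spacetime $M$.

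The remaining task is therefore to identify $J_M^{+\cap -}(U)$ with $U$ under the causal convexity hypothesis. The inclusion $U\subseteq J_M^{+\cap -}(U)$ is tautological. Conversely, a point $p\in J_M^+(U)\cap J_M^-(U)$ admits points $q_1,q_2\in U$ together with a future-directed causal curve from $q_1$ to $p$ and one from $p$ to $q_2$. Concatenating these produces a future-directed causal curve from $q_1\in U$ through $p$ to $q_2\in U$, and causal convexity of $U$ then forces $p\in U$. Hence $J_M^{+\cap -}(U)=U$ and the chain of inclusions collapses to $U''\subseteq D_M(U)$.

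Combining the two inclusions gives the claimed equality $U''=D_M(U)$. There is no real obstacle here, since the two lemmas were tailored precisely for this purpose; the only small point to verify is the causal convexity argument above, which is an elementary use of the definition of causal convexity together with the composition of causal curves.
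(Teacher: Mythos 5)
Your proof is correct and follows exactly the paper's route: the paper states the corollary as an immediate consequence of Lemmas \ref{lem:Cauchy_development:in_double_complement} and \ref{lem:Cauchy_development:double_complement_in_hull}, with the identification $J_M^{+\cap -}(U)=U$ for causally convex $U$ left implicit. Your explicit concatenation-of-curves argument for that identification is the standard one and fills in the only detail the paper omits.
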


\begin{proof}[Proof of Proposition \ref{prop:Cauchy_development:closure_in_D_stable_image}]
From Lemma \ref{lem:Cauchy_development:in_double_complement}, we have
$D_N(f(U)) \subseteq D_N(f(\cl(U))) \subseteq f(\cl(U))^{\prime\prime}$.
Because $\cl(U) \subseteq M$ is a compact subset 
and $N$ is a globally hyperbolic spacetime,
it follows that $J_N(f(\cl(U))) \subseteq N$ is closed
\cite[Theorem 4.12]{Minguzzi}, i.e.\ $f(\cl(U))^{\prime} \subseteq N$ is open.
Therefore $f(\cl(U))^{\prime\prime} \subseteq N$ is closed, and hence
\begin{flalign} \label{eq:Cauchy_development:closure_in_D_stable_image:1}
\cl (D_N(f(U))) \,\subseteq\, f(\cl(U))^{\prime\prime} \quad .
\end{flalign}

We claim that there exists a relatively compact causally convex open subset 
$V \subseteq N$ such that $f(\cl(U)) \subseteq V \subseteq f(M)$. 
We have
\begin{flalign} \label{eq:Cauchy_development:closure_in_D_stable_image:2}
V^{\prime\prime} \,=\, D_N(V) \,\subseteq\,  D_N(f(M)) \,=\, f(M) \quad,
\end{flalign}
where the first equality follows from Corollary \ref{cor:D=''},
while the last equality uses the hypothesis that $f(M) \subseteq N$ is $D$-stable.
Then the inclusion $\cl (D_N(f(U))) \subseteq f(M)$ is obtained 
combining \eqref{eq:Cauchy_development:closure_in_D_stable_image:1}
and \eqref{eq:Cauchy_development:closure_in_D_stable_image:2}
with the inclusion $f(\cl(U))^{\prime\prime} \subseteq V^{\prime\prime}$, 
which follows from $f(\cl(U)) \subseteq V$.
\sk

To conclude the proof, let us construct a relatively compact causally convex open subset 
$V \subseteq N$ such that $f(\cl(U)) \subseteq V \subseteq f(M)$. 
For each $p \in f(\cl(U))$,
take a relatively compact, causally convex and open neighborhood 
$V_p \subseteq f(M)$ of $p$, which exists 
because $N$ is globally hyperbolic and hence strongly causal \cite[Theorem 1.35]{Minguzzi}.
Since $\{V_p\}$ is an open cover of the compact subset $f(\cl(U)) \subseteq N$, 
one finds a finite subcover $\{V_{p_1}, \ldots , V_{p_n} \}$.
Recalling also Lemma \ref{lem:hullofrelativelycompact},  
$V = J_N^{+\cap -} \left(\cup_{i=1}^n V_{p_i}\right) \subseteq N$ 
is a relatively compact causally convex open subset. Furthermore, 
$f(\cl(U)) \subseteq V \subseteq f(M)$ holds by construction 
and because $f(M) \subseteq N$ is causally convex. 
\end{proof}



\begin{thebibliography}{10}


\bibitem[AR94]{Adamek}
J.~Ad{\'a}mek and J.~Rosick{\'y},
{\it Locally presentable and accessible categories},
London Math.\ Soc.\ Lecture Note Ser.\ \textbf{189},
Cambridge University Press, Cambridge (1994).


\bibitem[AR15]{Adamek2}
J.~Ad{\'a}mek and J.~Rosick{\'y},
``On reflective subcategories of locally presentable categories,''
Theory Appl.\ Categ.\ \textbf{30}, 1306--1318 (2015)
[arXiv:1504.03886 [math.CT]].


\bibitem[AB25]{AB}
A.~Anastopoulos and M.~Benini,
``Gluing algebraic quantum field theories on manifolds,''
Annales Henri Poincar{\'e} \textbf{26}, no.\ 11, 4205--4240 (2025)
[arXiv:2404.09638 [math-ph]].


\bibitem[BBS19]{BBS}
M.~Benini, S.~Bruinsma and A.~Schenkel,
``Linear Yang-Mills theory as a homotopy AQFT,''
Commun.\ Math.\ Phys.\ \textbf{378}, no.\ 1, 185--218 (2019)
[arXiv:1906.00999 [math-ph]].


\bibitem[BCGS24]{tPFAAQFTinfty}
M.~Benini, V.~Carmona, A.~Grant-Stuart and A.~Schenkel,
``On the equivalence of AQFTs and prefactorization algebras,''
arXiv:2412.07318 [math-ph].


\bibitem[BCS23]{BCStimeslice}
M.~Benini, V.~Carmona and A.~Schenkel,
``Strictification theorems for the homotopy time-slice axiom,''
Lett.\ Math.\ Phys.\ \textbf{113}, no.\ 1, 20 (2023)
[arXiv:2208.04344 [math-ph]].


\bibitem[BD15]{BDreview}
M.~Benini and C.~Dappiaggi,
``Models of free quantum field theories on curved backgrounds,'' 
in: R.~Brunetti, C.~Dappiaggi, K.~Fredenhagen and J.~Yngvason (eds.),
{\it Advances in algebraic quantum field theory},
Springer Verlag, Heidelberg (2015) 
[arXiv:1505.04298 [math-ph]].


\bibitem[BDH13]{BDHreview}
M.~Benini, C.~Dappiaggi and T.~P.~Hack,
``Quantum field theory on curved backgrounds -- A primer,''
Int.\ J.\ Mod.\ Phys.\ A \textbf{28}, 1330023 (2013)
[arXiv:1306.0527 [gr-qc]].


\bibitem[BDHS14]{BDHS}
M.~Benini, C.~Dappiaggi, T.~P.~Hack and A.~Schenkel,
``A $C^\ast$-algebra for quantized principal $U(1)$-connections on globally hyperbolic Lorentzian manifolds,''
Commun.\ Math.\ Phys.\ \textbf{332}, 477--504 (2014)
[arXiv:1307.3052 [math-ph]].


\bibitem[BDS18]{BDSboundary}
M.~Benini, C.~Dappiaggi and A.~Schenkel,
``Algebraic quantum field theory on spacetimes with timelike boundary,''
Annales Henri Poincar{\'e} \textbf{19}, no.\ 8, 2401--2433 (2018)
[arXiv:1712.06686 [math-ph]].


\bibitem[BGS22]{BGS}
M.~Benini, L.~Giorgetti and A.~Schenkel,
``A skeletal model for $2d$ conformal AQFTs,''
Commun.\ Math.\ Phys.\ \textbf{395}, no.\ 1, 269--298 (2022)
[arXiv:2111.01837 [math-ph]].


\bibitem[BMS24]{BMS}
M.~Benini, G.~Musante and A.~Schenkel,
``Quantization of Lorentzian free BV theories: factorization algebra vs algebraic quantum field theory,''
Lett.\ Math.\ Phys.\ \textbf{114}, no.\ 1, 36 (2024)
[arXiv:2212.02546 [math-ph]].


\bibitem[BPS19]{BPS}
M.~Benini, M.~Perin and A.~Schenkel,
``Model-independent comparison between factorization algebras and 
algebraic quantum field theory on Lorentzian manifolds,''
Commun.\ Math.\ Phys.\ \textbf{377}, no.\ 2, 971--997 (2019)
[arXiv:1903.03396 [math-ph]].


\bibitem[BPSW21]{BPSW}
M.~Benini, M.~Perin, A.~Schenkel and L.~Woike,
``Categorification of algebraic quantum field theories,''
Lett.\ Math.\ Phys.\ \textbf{111}, no.\ 2, 35 (2021)
[arXiv:2003.13713 [math-ph]].


\bibitem[BS19]{BSreview} 
M.~Benini and A.~Schenkel,
``Higher structures in algebraic quantum field theory,''
Fortsch.\ Phys.\ \textbf{67}, no.\ 8--9, 1910015 (2019)
[arXiv:1903.02878 [hep-th]].


\bibitem[BS25]{BSchapter}
M.~Benini and A.~Schenkel,
``Operads, homotopy theory and higher categories in algebraic quantum field theory,''
in: R.~Szabo and M.~Bojowald (eds.),
{\it Encyclopedia of Mathematical Physics},
Second Edition, Volume 5, 556--568 (2025)
[arXiv:2305.03372 [math-ph]].


\bibitem[BSW19]{BSWhomotopy}
M.~Benini, A.~Schenkel and L.~Woike,
``Homotopy theory of algebraic quantum field theories,''
Lett.\ Math.\ Phys.\ \textbf{109}, no.\ 7, 1487--1532 (2019)
[arXiv:1805.08795 [math-ph]].


\bibitem[BSW21]{BSWoperad}
M.~Benini, A.~Schenkel and L.~Woike,
``Operads for algebraic quantum field theory,''
Commun.\ Contemp.\ Math.\ \textbf{23}, no.\ 02, 2050007 (2021)
[arXiv:1709.08657 [math-ph]].


\bibitem[Bir84]{Bird}
G.~J.~Bird, 
{\it Limits in $2$-categories of locally presentable categories},
PhD thesis, University of Sydney (1984). See
\url{https://hdl.handle.net/2123/28961} and
\url{http://maths.mq.edu.au/~street/BirdPhD.pdf}.


\bibitem[Bor94]{Borceux}
F.~Borceux, 
{\it Handbook of categorical algebra 2: Categories and structures}, 
Encyclopedia of Mathematics and its Applications \textbf{51}, 
Cambridge University Press, Cambridge (1994).


\bibitem[BCJF15]{BCJF}
M.~Brandenburg, A.~Chirvasitu and T.~Johnson-Freyd,
``Reflexivity and dualizability in categorified linear algebra,''
Theory Appl.\ Categ.\ \textbf{30}, 808--835 (2015)
[arXiv:1409.5934 [math.CT]]. 


\bibitem[BFV03]{BFV}
R.~Brunetti, K.~Fredenhagen and R.~Verch,
``The generally covariant locality principle: A new paradigm for local quantum field theory,''
Commun.\ Math.\ Phys.\ \textbf{237}, 31--68 (2003)
[arXiv:math-ph/0112041].


\bibitem[Car23]{Carmona}
V.~Carmona,
``New model category structures for algebraic quantum field theory,''
Lett.\ Math.\ Phys.\ \textbf{113}, no.\ 2, 33 (2023)
[arXiv:2107.14176 [math-ph]].


\bibitem[CK23]{Castpresentable}
A.~Chirvasitu and J.~Ko, 
``Monadic forgetful functors and (non-)presentability for $C^\ast$- and $W^\ast$-algebras,'' 
J.\ Pure Appl.\ Algebra \textbf{227}, 107209 (2023)
[arXiv:2203.12087 [math.OA]].


\bibitem[CG17]{CG1}
K.~Costello and O.~Gwilliam,
{\it Factorization algebras in quantum field theory: Volume 1},
New Mathematical Monographs \textbf{31}, 
Cambridge University Press, Cambridge (2017).


\bibitem[CG21]{CG2}
K.~Costello and O.~Gwilliam,
{\it Factorization algebras in quantum field theory: Volume 2},
New Mathematical Monographs \textbf{41}, 
Cambridge University Press, Cambridge (2021).


\bibitem[DL12]{DappiaggiLang}
C.~Dappiaggi and B.~Lang,
``Quantization of Maxwell's equations on curved backgrounds and general local covariance,''
Lett.\ Math.\ Phys.\ \textbf{101}, 265--287 (2012)
[arXiv:1104.1374 [gr-qc]].


\bibitem[Few17]{FewsterSpinStatistics}
C.~J.~Fewster,
``Locally covariant quantum field theory and the spin-statistics connection,''
Int.\ J.\ Mod.\ Phys.\ \textbf{1}, no.\ 06, 950--967 (2017)
[arXiv:1603.01036 [gr-qc]].


\bibitem[Few18]{Fewster}
C.~J.~Fewster,
``An analogue of the Coleman-Mandula theorem for quantum field theory in curved spacetimes,''
Commun.\ Math.\ Phys.\ \textbf{357}, no.\ 1, 353--378 (2018)
[arXiv:1609.02705 [math-ph]].


\bibitem[FV12]{FewsterVerchSPASS}
C.~J.~Fewster and R.~Verch,
``Dynamical locality and covariance: What makes a physical theory the same in all spacetimes?,''
Annales Henri Poincar{\'e} \textbf{13}, 1613--1674 (2012)
[arXiv:1106.4785 [math-ph]].


\bibitem[FV15]{FewsterVerch}
C.~J.~Fewster and R.~Verch,
``Algebraic quantum field theory in curved spacetimes,''
in: R.~Brunetti, C.~Dappiaggi, K.~Fredenhagen and J.~Yngvason (eds.),
{\it Advances in algebraic quantum field theory}, 
Springer Verlag, Heidelberg (2015)
[arXiv:1504.00586 [math-ph]].


\bibitem[FV20]{FewsterVerchMeasurement}
C.~J.~Fewster and R.~Verch,
``Quantum fields and local measurements,''
Commun.\ Math.\ Phys.\ \textbf{378}, no.\ 2, 851--889 (2020)
[arXiv:1810.06512 [math-ph]].


\bibitem[Fre90]{Fredenhagen1}
K.~Fredenhagen,
``Generalizations of the theory of superselection sectors,'' 
in: D.~Kastler (ed.),
{\it The algebraic theory of superselection sectors: Introduction and recent results}, 
World Scientific Publishing Company (1990).


\bibitem[Fre93]{Fredenhagen2}
K.~Fredenhagen,
``Global observables in local quantum physics,'' 
in: H.~Araki, K.~R.~Ito, A.~Kishimoto and I.~Ojima (eds.),
{\it Quantum and non-commutative analysis: Past, present and future perspectives}, 
Kluwer Academic Publishers (1993).


\bibitem[FRS92]{Fredenhagen3}
K.~Fredenhagen, K.-H.~Rehren and B.~Schroer,
``Superselection sectors with braid group statistics and exchange algebras II: 
Geometric aspects and conformal covariance,'' 
Rev.\ Math.\ Phys.\ \textbf{4}, 113 (1992).


\bibitem[GZ67]{GabrielZisman}
P.~Gabriel and M.~Zisman,
{\it Calculus of fractions and homotopy theory},
Springer Verlag, Heidelberg (1967).


\bibitem[G-S23]{Grant-Stuart}
A.~Grant-Stuart,
``Spacetimes categories and disjointness for algebraic quantum field theory,''
Commun.\ Math.\ Phys.\ \textbf{398}, no.\ 2, 573--625 (2023)
[arXiv:2201.09166 [math-ph]].


\bibitem[Haa96]{Haag}
R.~Haag,
{\it Local quantum physics: Fields, particles, algebras},
Springer Berlin, Heidelberg (1996)


\bibitem[HK64]{HaagKastler}
R.~Haag and D.~Kastler, 
``An algebraic approach to quantum field theory,'' 
J.\ Math.\ Phys.\ \textbf{5}, 848 (1964).


\bibitem[Hac16]{Hack}
T.~P.~Hack,
{\it Cosmological applications of algebraic quantum field theory in curved spacetimes},
SpringerBriefs in Mathematical Physics, Springer Cham (2016)
[arXiv:1506.01869 [gr-qc]].


\bibitem[Hol08]{Hollands3}
S.~Hollands,
``Renormalized quantum Yang-Mills fields in curved spacetime,''
Rev.\ Math.\ Phys.\ \textbf{20}, 1033--1172 (2008)
[arXiv:0705.3340 [gr-qc]].


\bibitem[HW01]{Hollands1}
S.~Hollands and R.~M.~Wald,
``Local Wick polynomials and time ordered products of quantum fields in curved space-time,''
Commun.\ Math.\ Phys.\ \textbf{223}, 289--326 (2001)
[arXiv:gr-qc/0103074 [gr-qc]].


\bibitem[HW02]{Hollands2}
S.~Hollands and R.~M.~Wald,
``Existence of local covariant time ordered products of quantum fields in curved space-time,''
Commun.\ Math.\ Phys.\ \textbf{231}, 309--345 (2002)
[arXiv:gr-qc/0111108 [gr-qc]].


\bibitem[Hor17]{Horel}
G.~Horel,
``Factorization homology and calculus \`a la Kontsevich-Soibelman,'' 
J.\ Noncommut.\ Geom.\ {\bf 11}, 703 (2017) 
[arXiv:1307.0322 [math.AT]].


\bibitem[JY21]{Yau2Cats}
N.~Johnson and D.~Yau,
{\it $2$-dimensional categories},
Oxford University Press, Oxford (2021)
[arXiv:2002.06055 [math.CT]].


\bibitem[KSW24]{Scheimbauer}
E.~Karlsson, C.~I.~Scheimbauer and T.~Walde,
``Assembly of constructible factorization algebras,''
arXiv:2403.19472 [math.AT].


\bibitem[KS06]{KashiwaraSchapira}
M.~Kashiwara and P.~Schapira,
{\it Categories and sheaves},
Grundlehren der mathematischen Wissenschaften \textbf{332},
Springer Berlin, Heidelberg (2006).


\bibitem[KM16]{Khavkine}
I.~Khavkine and V.~Moretti,
``Analytic dependence is an unnecessary requirement in renormalization of locally covariant QFT,''
Commun.\ Math.\ Phys.\ \textbf{344}, no.\ 2, 581--620 (2016)
[arXiv:1411.1302 [gr-qc]].


\bibitem[Lac10]{Lack}
S.~Lack,
``A $2$-categories companion,'' 
in:  J.~C.~Baez and J.~P.~May (eds.), 
{\it Towards higher categories}, 
IMA Vol.\ Math.\ Appl.\ \textbf{152}, 105--191, 
Springer Verlag, New York (2010)
[arXiv:math/0702535 [math.CT]]. 


\bibitem[Lan14]{Lang}
B.~Lang,
{\it Universal constructions in algebraic and locally covariant quantum field theory}, 
PhD thesis, University of York (2014).
See \url{https://etheses.whiterose.ac.uk/8019/}.


\bibitem[Mat17]{Matsuoka}
T.~Matsuoka,
``Descent properties of topological chiral homology,''
M\"unster J.\ of Math.\ \textbf{10}, 83--118 (2017)
[arXiv:1409.6944 [math.AT]].


\bibitem[Min19]{Minguzzi}
E.~Minguzzi,
``Lorentzian causality theory,''
Living Rev.\ Relativ.\ \textbf{22}, 3 (2019).


\bibitem[Rej16]{Rejzner}
K.~Rejzner,
{\it Perturbative algebraic quantum field theory: An introduction for mathematicians},
Mathematical Physics Studies, Springer Cham (2016).


\bibitem[Rie16]{Riehl}
E.~Riehl,
{\it Category theory in context},
Aurora: Dover Modern Math Originals, Dover Publications (2016).


\bibitem[Str82]{Street}
R.~Street,
``Two-dimensional sheaf theory,''
J.\ Pure Appl.\ Algebra \textbf{23}, no.\ 3, 251--270  (1982).


\bibitem[Vis05]{Vistoli}
A.~Vistoli, 
``Grothendieck topologies, fibered categories and descent theory,'' 
in: B.~Fantechi, L.~G\"ottsche, L.~Illusie, S.~L.~Kleiman, N.~Nitsure and A.~Vistoli (eds.),
{\it Fundamental algebraic geometry}, Math.\ Surveys Monogr.\ \textbf{123}, 
1--104, Amer.\ Math.\ Soc., Providence, RI (2005)
[arXiv:math/0412512 [math.AG]].


\bibitem[Yau20]{Yau}
D.~Yau,
{\it Homotopical quantum field theory},
World Scientific Publishing Company (2020)
[arXiv:1802.08101 [math-ph]].


\end{thebibliography}
\end{document}